\tikzset{
    aux/.style={dashed},
    dottedline/.style={dotted},
    gluon/.style={
        decorate, 
        draw=black,
        decoration={
            snake,
            post=lineto,
            post length=0pt,
            segment length=4,
            amplitude=0.9
        }
    }
}
\let\oldblacksquare\blacksquare
\newcommand{\BBox}{{\textcolor{gray}{\mathop\oldblacksquare\nolimits}}} 
\newcommand{\BVbox}[1][]{%
    \ifthenelse{\equal{#1}{}}{%
        \mathrm{BV}^\BBox%
    }{%
        \mathrm{BV}^{\BBox,\,\mathrm{#1}}
    }%
}
\begin{document}
    
    \hypersetup{
        pdftitle = {Double copy from tensor products of metric BV◼-algebras},
        pdfauthor = {Leron Borsten, Branislav Jurco, Hyungrok Kim, Tommaso Macrelli, Christian Saemann, Martin Wolf},
        pdfkeywords = {double copy, color-kinematics duality, colour-kinematics duality, BV◼-algebras, Batalin-Vilkovisky algebras, Hopf algebras, kinematic L∞-algebra, kinematic Lie algebra, syngamy},
    }
    
    \date{\today}
    
    \email{l.borsten@herts.ac.uk,h.kim2@herts.ac.uk,branislav.jurco@gmail.com,tmacrelli @phys.ethz.ch,c.saemann@hw.ac.uk,m.wolf@surrey.ac.uk}
    
    \preprint{EMPG--23--14,DMUS--MP--23/11}
    
    \title{Double Copy from Tensor Products of Metric $\BVbox$-algebras} 
    
    \author[a]{Leron~Borsten\,\orcidlink{0000-0001-9008-7725}\,}
    \author[b]{Branislav~Jur{\v c}o\,\orcidlink{0000-0001-7782-2326}\,}
    \author[a]{Hyungrok~Kim\,\orcidlink{0000-0001-7909-4510}\,}
    \author[d]{Tommaso~Macrelli\,\orcidlink{0000-0002-4207-0846}\,}
    \author[c]{Christian~Saemann\,\orcidlink{0000-0002-5273-3359}\,}
    \author[e]{Martin~Wolf\,\orcidlink{0009-0002-8192-3124}\,}
    
    \affil[a]{Department of Physics, Astronomy, and Mathematics,\\University of Hertfordshire, Hatfield AL10 9AB, United Kingdom}
    \affil[b]{Mathematical Institute, Faculty of Mathematics and Physics,\\ Charles University, Prague 186 75, Czech Republic}
    \affil[c]{Maxwell Institute for Mathematical Sciences, Department of Mathematics,\\ Heriot--Watt University, Edinburgh EH14 4AS, United Kingdom}
    \affil[d]{Institute for Theoretical Physics, ETH Zurich, 8093 Zurich, Switzerland}
    \affil[e]{School of Mathematics and Physics,\\ University of Surrey, Guildford GU2 7XH, United Kingdom}

    \abstract{Field theories with kinematic Lie algebras, such as field theories featuring colour--kinematics duality, possess an underlying algebraic structure known as $\BVbox$-algebra. If, additionally, matter fields are present, this structure is supplemented by a module for the $\BVbox$-algebra. We explain this perspective, expanding on our previous work and providing many additional mathematical details. We also show how the tensor product of two metric $\BVbox$-algebras yields the action of a new syngamy field theory, a construction which comprises the familiar double copy construction. As examples, we discuss various scalar field theories, Chern--Simons theory, self-dual Yang--Mills theory, and the pure spinor formulations of both M2-brane models and supersymmetric Yang--Mills theory. The latter leads to a new cubic pure spinor action for ten-dimensional supergravity. We also give a homotopy-algebraic perspective on colour--flavour-stripping, obtain a new restricted tensor product over a wide class of bialgebras, and we show that any field theory (even one without colour--kinematics duality) comes with a kinematic $L_\infty$-algebra.}
    
    \acknowledgements{We are particularly grateful to Martin Cederwall and Pietro Antonio Grassi for detailed discussion on pure spinor formulations of gauge theories and supergravity. We also benefited from discussion with Maor Ben-Shahar, Silvia Nagy, Anton Zeitlin, and an anonymous user on MathOverflow.}

    \declarations{
        \textbf{Funding.}
        H.K. and C.S.~were supported by the Leverhulme Research Project Grant RPG-2018-329. B.J.~was supported by the GA\v{C}R Grant EXPRO 19-28628X.\\[5pt]
        \textbf{Conflict of interest.}
        The authors have no relevant financial or non-financial interests to disclose.\\[5pt]
        \textbf{Data statement.}
        No additional research data beyond the data presented and cited in this work are needed to validate the research findings in this work.\\[5pt]
        \textbf{Licence statement.}
        For the purpose of open access, the authors have applied a Creative Commons Attribution (CC-BY) license to any author-accepted manuscript version arising.
    }
    
    \begin{body}
        \section{Introduction and results}\label{sec:intro}
        
        \paragraph{Background.}
        The space of observables of a classical field theory is a rather complicated object. In order to obtain it, one needs to quotient the classical field space by gauge transformations and then divide the ring of functions on this quotient space by the ideal generated by the equations of motion. At the classical level, the Batalin--Vilkovisky (BV) formalism~\cite{Batalin:1977pb,Batalin:1981jr,Batalin:1984jr,Batalin:1984ss,Batalin:1985qj,Schwarz:1992nx} turns this space into a differential complex, called the BV complex, in which the observables are encoded in the cohomology of the classical BV differential. We work at the purely classical level throughout.
        
        The BV complex forms, in fact, a differential graded commutative algebra, which is the Chevalley--Eilenberg algebra, or the dual description, of an $L_\infty$-algebra, see e.g.~\cite{Jurco:2018sby} for a detailed review as well as~\cite{Hohm:2017pnh} for the discussion of equations of motion. Such an $L_\infty$-algebra is a generalisation of a differential graded Lie algebra, in which the Jacobi identity holds only up to homotopy. Moreover, the anti-bracket on the BV complex encodes a metric on the $L_\infty$-algebra. Altogether, this leads to the \uline{homotopy algebraic perspective} on perturbative quantum field theory, which implies a dictionary between physical concepts and algorithms and mathematical notions and constructions; we list some elements of this dictionary in \cref{tab:1}.
        
        \begin{table}[ht]
            \vspace*{10pt}
            \begin{center}
                \resizebox{\columnwidth}{!}{
                    \begin{tabular}{ll}
                        \toprule
                        Perturbative quantum field theory & Homotopy algebraic perspective
                        \\
                        \midrule
                        classical action $S$ & metric $L_\infty$-algebra $\frL_S$
                        \\
                        tree-level scattering amplitude for $S$ & minimal model for $\frL_S$
                        \\
                        choice of gauge fixing & embedding of the minimal model into $\frL_S$
                        \\
                        integrating out fields & homotopy transfer from $\frL_S$ to smaller $L_\infty$-algebra
                        \\
                        semi-classical equivalence $S\sim\tilde S$ & quasi-isomorphism $\frL_S\cong\frL_{\tilde S}$
                        \\
                        Feynman diagram expansion & homological perturbation lemma
                        \\
                        Berends--Giele recursion relation & geometric series via homological perturbation lemma
                        \\
                        colour-stripping of amplitudes & factorisation $\frL_S\cong\frg\otimes\frC$ with $\frC$ a $C_\infty$-algebra
                        \\
                        special properties of amplitudes & homotopy algebraic refinement of $L_\infty$-algebra $\frL_S$
                        \\
                        colour--kinematics duality & $\frL_S\cong\frg\otimes\frB$ with $\frB$ a homotopy $\BVbox$-algebra
                        \\
                        manifest colour--kinematics duality & $\frL_S\cong\frg\otimes\frB$ with $\frB$ a $\BVbox$-algebra
                        \\
                        \midrule
                        loop level considerations & extend the above to loop homotopy algebras
                        \\
                        \bottomrule
                    \end{tabular}
                }
            \end{center}
            \caption{Some entries in the dictionary describing the homotopy algebraic perspective on perturbative quantum field theory.}
            \label{tab:1}
        \end{table}
        
        Particularly noteworthy is the fact that the homotopy algebraic perspective on quantum field theory puts action principles and scattering amplitudes on equal footing: both are particular forms of $L_\infty$-algebras~\cite{Jurco:2018sby,Jurco:2019bvp,Arvanitakis:2019ald,Macrelli:2019afx,Jurco:2019yfd,Jurco:2020yyu}. Closely related to this perspective is also the work by Costello~\cite{Costello:2011aa} and Costello and Gwilliam~\cite{Costello:2016vjw,Costello:2021jvx}.
        
        In this paper, our goal is to explain the connection between colour--kinematics duality in much more detail and to add the following further line to \cref{tab:1}:

        \begin{table}[ht]
            \vspace*{5pt}
            \begin{center}
                \resizebox{\columnwidth}{!}{
                \begin{tabular}{ll}
                    \toprule
                    Perturbative quantum field theory & Homotopy algebraic perspective
                    \\
                    \midrule
                    double copy & kinematic Lie algebra in tensor product of metric $\BVbox$-algebras
                    \\ 
                    \bottomrule
                \end{tabular}
                }
            \end{center}
            \vspace*{-10pt}
        \end{table}
        
        Recall that colour--kinematics (CK) duality~\cite{Bern:2008qj,Bern:2010ue,Bern:2010yg} is a surprising and non-evident feature of perturbative quantum field theories, first observed in tree-level scattering amplitudes of Yang--Mills theories. Concretely, the scattering amplitudes of a CK-dual field theory can be decomposed into sums of cubic graphs with each diagram having a contribution $\frac{1}{p_\ell^2}$ from the propagator along each internal line $\ell$, a colour contribution, and a remaining kinematic contribution. CK duality is now the statement that the algebraic properties of the colour contributions induced by the anti-symmetry and Jacobi identity of the Lie bracket are precisely mirrored in the kinematic contributions. 
        
        It is natural to assume, and indeed is the case in many examples, that the  interaction  vertices are cubic and  decompose into products of the structure constants of a colour Lie algebra and the structure constants of a second Lie algebra, usually called the kinematic Lie algebra~\cite{Monteiro:2011pc,Bjerrum-Bohr:2012kaa,Monteiro:2013rya}. It is further natural to assume that the cubic graphs exhibiting CK duality are indeed the Feynman diagrams of the tree-level perturbative expansion of a field theory given by an action principle. In this case, the kinematic Lie algebra is manifested in the action itself, and a number of action-based approaches to CK duality and the double copy have been presented in the literature~\cite{Bern:2010yg,Hohm:2011dz,Tolotti:2013caa,Cardoso:2016ngt,Luna:2016hge, Borsten:2020xbt,Borsten:2020zgj,Borsten:2021hua,Borsten:2021gyl,Ben-Shahar:2021doh,Ben-Shahar:2021zww, Escudero:2022zdz,Borsten:2022vtg,Ben-Shahar:2022ixa, Lipstein:2023pih}.
        
        Interestingly, the homotopy algebraic perspective has an elegant description of this situation. Since there are only cubic vertices, the $L_\infty$-algebra $\frL$ encoding the action is simply a differential graded Lie algebra. The fact that we have a kinematic Lie algebra amounts to a factorisation $\frL\cong\frg\otimes\frB$, where $\frB$ is a differential graded commutative algebra refined to a $\BVbox$-algebra\footnote{in most cases; in the body of the paper, we will explain that a kinematic Lie algebra merely implies a pseudo-$\BVbox$-algebra structure}. This fact was first noted by Reiterer~\cite{Reiterer:2019dys} in the context of Yang--Mills theory in a first order formulation. In this picture, the kinematic Lie algebra appears in a degree-shifted form as the Gerstenhaber bracket that each $\BVbox$-algebra naturally possesses. This homotopy algebraic perspective on CK duality allowed us to produce a number of new and interesting results with comparatively little effort, cf.~\cite{Borsten:2022vtg,Borsten:2023reb}.
        
        CK duality has many implications and applications; see~\cite{Carrasco:2015iwa,Borsten:2020bgv, Bern:2019prr,Adamo:2022dcm,Bern:2022wqg} for reviews. For this paper, it is important to recall that CK duality is the key ingredient to the famous double copy prescription~\cite{Bern:2008qj,Bern:2010ue,Bern:2010yg} summarised by the slogan that `gravity is the square of Yang--Mills theory'. More precisely, the kinematic contribution to the CK-dual parametrisation of the Yang--Mills scattering amplitudes can be used to replace the colour contribution, leading to the scattering amplitudes of $\caN=0$ supergravity. The latter theory is a string-theoretically natural extension of Einstein--Hilbert gravity by a scalar dilaton field and a Kalb--Ramond 2-form field. 
        
        To arrive at a homotopy algebraic perspective on the double copy, it is natural to start from the $\BVbox$-algebras $\frB$ encoding the kinematic Lie algebra of Yang--Mills theory and to consider the tensor product with itself, $\hat \frB\coloneqq\frB\otimes\frB$. Recall that the tensor product of differential graded commutative algebras is again a differential graded commutative algebra, and this tensor product extends to $\BVbox$-algebras. 
        
        The field content of Yang--Mills theory is contained in $\frB_1$, the linear subspace of $\frB$ containing the homogeneous elements of degree~$1$. Correspondingly, the double-copied field content sits in $\frB_1\otimes\frB_1\subseteq\frB_2$. We expect the double copy to be described by a differential graded Lie algebra with the double-copied fields in degree~$1$, so it is evident that we will have to degree-shift $\frB$. There is now an evident candidate for this Lie algebra, namely the grade-shifted kinematic Lie algebra contained in the $\BVbox$-algebra $\hat\frB$ in the form of a Gerstenhaber bracket. 
        
        This suggestive answer has to be corrected in two ways. First of all, the domain of all fields in $\hat\frB$ is formed by two copies of the original space-time, somewhat akin to what happens in double field theory. This can be taken into account by introducing a cocommutative Hopf algebra $\frH$ whose elements correspond to the momenta labels of the field theory and act on $\frB$ and, thus, naturally on $\hat\frB$. We can then restrict to the invariants under this action, leading to fields taking values on the original space-time.\footnote{Another possibility is to take a double field theory-like approach and to impose a section condition, as done in~\cite{Bonezzi:2022bse}. A third possibility, suggested by~\cite{Anastasiou:2014qba, Anastasiou:2018rdx, Monteiro:2021ztt, Luna:2022dxo}, is to replace the pointwise  product with a convolution, as described in \cref{app:compactification-setting}.} 
        
        Secondly, the BV field space turns out to be twice the expected size of the usual BV field space for the double-copied field content. This can be corrected by restricting to the kernel of a naturally defined operator on $\hat \frB$. This kernel is closely related to level-matching in string theory and was also used for the double copy in~\cite{Bonezzi:2022bse,Bonezzi:2023pox,Bonezzi:2023ciu}. The result is indeed the differential graded Lie algebra of the double-copied field theory. 
        
        To demonstrate our mathematical constructions in detail, we consider a number of explicit examples in \cref{sec:examples}. In particular, we discuss our formalism for both CK duality and the double copy for the biadjoint scalar field theory (as well as the instructive extension to a biadjoint-bifundamental scalar field theory) and pure Chern--Simons theory. In the latter case, the double copy produces the complete BV triangle for an interesting biform field theory, whose physical part was previously derived in~\cite{Ben-Shahar:2021zww}. We also sketch our description of CK duality of~\cite{Borsten:2022vtg} and explain the relation to the recent work of~\cite{Bonezzi:2023pox}. Our most important examples are the pure spinor descriptions of Yang--Mills theory and M2-brane models. We review our description of full tree-level CK duality from~\cite{Borsten:2023reb}, but then also develop the corresponding picture for the double copy. In the case of Yang--Mills theory, we obtain the first cubic pure spinor action for ten-dimensional supergravity, which may also shed some light on questions in previous pure spinor actions for supergravity. In the case of M2-brane models, we obtain the again a cubic biform action which is an extension of the one obtained for Chern--Simons theory. This action is a candidate for either a supergravity or a Born--Infeld like action. We also consider the interesting example of a sesquiadjoint scalar field theory, a deformation of a biadjoint scalar field theory in which one of the two Lie algebras is replaced by a more general algebraic structure. In this case, the kinematic Lie algebra is lifted to a kinematic $L_\infty$-algebra, an object that any classical field theory possesses. 
        
        \paragraph{Results.}
        Altogether, our results can be summarised as follows. We show that any field theory that exhibits a kinematic Lie algebra has an underlying pseudo-$\BVbox$-algebra, a mild generalisation of a $\BVbox$-algebra. In these pseudo-$\BVbox$-algebras, the kinematic Lie algebra appears in a grade-shifted form, and the Lie bracket is given by a derived bracket\footnote{Such constructions are common in homotopical algebra.}. If $\BBox=\wave$, the Minkowski d'Alembertian, we have the usual form of CK duality.
        We also show that this kinematic Lie algebra is a special case of a more general kinematic $L_\infty$-algebra that any classical field theory possesses, but does not, however, imply CK duality (see \cref{ssec:kinematic_L_infty_algebras}). We then give a construction of the action of a syngamy field theory of two field theories with metric $\BVbox$-algebras. The familiar double copy is a special case of this construction, and using pure spinors, we find a new cubic action for ten-dimensional supergravity. Finally, this contribution provides a mathematically complete foundation for elements of our previous work, where the detailed mathematical tools were only sketched and developed as far as absolutely required.
        
        Byproducts of our constructions include the homotopy algebraic perspective on colour--flavour-stripping, see \cref{ssec:colour-flavour-stripping}, as well as a restricted tensor product of modules over a wide class of bialgebras, see \cref{app:restricted_tensor_product}, which appears to be a new mathematical construction.
        
        \paragraph{Literature overview.} 
        There have been a number of important developments in recent years closely related to this work, some in quick succession and happening in parallel, so it may be useful to give a brief contextual overview of the literature that uses an action-based approach to CK duality and the double copy, particularly from the homotopy algebraic perspective. 
        
        The idea that CK duality and the double copy can be approached from the perspective of the action is rather old and dates back to~\cite{Bern:2010yg}; see~\cite{Tolotti:2013caa,Cardoso:2016ngt,Luna:2016hge,Borsten:2020xbt,Borsten:2020zgj,Borsten:2021hua,Borsten:2021gyl,Ben-Shahar:2021doh,Ben-Shahar:2021zww,Escudero:2022zdz,Borsten:2022vtg,Ben-Shahar:2022ixa,Lipstein:2023pih} for work along the same lines. In the context of the double copy, homotopy algebras were first used\footnote{There is earlier work by Zeitlin~\cite{Zeitlin:2014xma}, in which a particular set of $\caN=0$ supergravity equations are reproduced within a tensor product of the homotopy commutative algebras underlying Yang--Mills theories, at least for Hermitian manifolds and at least to first order in homotopification. This paper does not link this observation to the double copy or the KLT relations.} in~\cite{Borsten:2020zgj,Borsten:2021hua}, where the double copy construction was given by a twisted tensor product;  recent applications of this technology include  homotopy double copies for Navier--Stokes equations~\cite{Escudero:2022zdz} and  non-commutative gauge theories~\cite{Szabo:2023cmv}. In this work, and in particular in~\cite{Borsten:2021gyl}, we demonstrated that CK duality could be realised at the level of the complete off-shell BV action up to counterterms that may be required to ensure manifest unitarity. In particular, we provided an algorithm to construct the CK duality manifesting BV action to any order in perturbation theory. This picture involved adding a tower of higher-order interaction terms to the BV action while preserving the S-matrix, building on the results of~\cite{Bern:2010yg,Tolotti:2013caa} by including ghost, longitudinal and off-shell states. The latter may induce counterterms required for unitarity that could break CK duality at the loop-level post-regularisation and prior to renormalisation  unitarity may not be manifest, as explained in [30].
        
        In~\cite{Reiterer:2019dys}, Reiterer made a seminal contribution to our understanding of CK duality.  In particular, it was shown that Zeitlin's differential graded commutative algebra of the (colour-stripped) first-order formulation of pure four-dimensional Yang--Mills theory~\cite{Zeitlin:2008cc} carries a homotopy $\BVbox_\infty$-algebra structure (also defined in~\cite{Reiterer:2019dys}). The central and immediate corollary is that the corresponding Feynman diagram expansion of the S-matrix satisfies CK duality up to homotopies given by the $\BVbox_\infty$-algebra. As for all homotopy algebras, there is a corresponding strict form of the $\BVbox_\infty$-algebra. Indeed, Reiterer provided a strictification (or rectification) relating the $\BVbox_\infty$-algebra to a $\BVbox$-algebra, making CK duality of the tree-level S-matrix exact and manifest. An interesting precursor to~\cite{Reiterer:2019dys} is found in the work of Zeitlin. In~\cite{Zeitlin:2008cc}, he speculates that there is a homotopy Gerstenhaber algebra in Yang--Mills theory, anticipating parts of a $\BVbox_\infty$-algebra structure. He also linked the homotopy commutative algebra arising in Yang--Mills theory in a particular limit to a homotopy commutative algebra arising for the Courant algebroid~\cite{Zeitlin:2009hc,Zeitlin:2009tj}, for which there exists a sketch of an argument that this algebra extends to a $\text{BV}_\infty$-algebra\footnote{We thank Anton Zeitlin for pointing this out.}.
            
        In~\cite{Borsten:2021gyl,Kim:2021amp, Borsten:2022srni,Borsten:2022young,Macrelli:2022bay,Macrelli2022sm,Macrelli:2022ich} it was explained that the  higher-order interaction terms, introduced in~\cite{Borsten:2020zgj, Borsten:2021hua, Borsten:2021gyl} to render the  BV action CK-dual, correspond (after colour-stripping) precisely to the higher products of a $\BVbox_\infty$-algebra\footnote{The non-trivial higher-products of the $\BVbox_\infty$-algebra roughly split into three classes corresponding to interactions generated by Tolotti--Weinzierl-type terms, gauge-fixing and field  redefinitions. With hind-sight, the algorithms  of~\cite{Tolotti:2013caa, Borsten:2020zgj, Borsten:2021hua, Borsten:2021gyl} can be understood as uncovering fragments of a $\BVbox_\infty$-algebra.}. By introducing auxiliary fields, the tower of higher-order interactions can be made cubic and we arrive at a strict $\BVbox$-algebra with manifest CK duality~\cite{Borsten:2021gyl}. The conclusion (roughly) is that any theory with a CK duality manifesting BV action  has an $L_\infty$-algebra carrying a $\BVbox$-algebra structure~\cite{Borsten:2021gyl, Borsten:2022vtg}. This gives rise to the penultimate entry in~\cref{tab:1}. Implicit in this statement, is a cyclic structure for the $\BVbox_\infty$-algebra, inherited from the anti-bracket, answering one of the open problems identified in~\cite{Reiterer:2019dys}. We make this precise in the present contribution.
        
        In light of these developments, CK duality is a (possibly anomalous\footnote{In the sense described above; CK duality violating counter-terms may be required to ensure manifest unitarity~\cite{Borsten:2021gyl}.}) symmetry of the action itself; as such, it is natural to expect that there is an underlying organising principle manifesting this symmetry. In~\cite{Ben-Shahar:2021doh}, the authors realised that pure spinor space can provide such a principle, and using it, they could establish CK duality for the tree-level currents of ten dimensional supersymmetric Yang--Mills theory. In~\cite{Borsten:2022vtg}, we then identified twistor spaces as a second, and closely related, organising principle. This should come as no surprise; besides the even simpler biadjoint scalar field theory~\cite{Hodges:2011wm,Vaman:2010ez,Cachazo:2013iea,Monteiro:2013rya,Cachazo:2014xea,Monteiro:2014cda,Chiodaroli:2014xia,Naculich:2014naa,Luna:2015paa,Naculich:2015zha,Chiodaroli:2015rdg,Luna:2016due,White:2016jzc,Cheung:2016prv,Chiodaroli:2017ngp,Brown:2018wss}, Chern--Simons theory is a prime example of a CK-dual field theory, cf.~\cite{Ben-Shahar:2021zww}, and both pure spinors and twistor space allow for a reformulation of Yang--Mills theories as Chern--Simons-type theories.
        
        Using twistor space, it is possible to concretely identify the kinematic Lie algebras of self-dual and full supersymmetric Yang--Mills theories. In the case of self-dual Yang--Mills theory, the resulting kinematic Lie algebra comes in a form that implies conventional CK duality even at the loop level. Having become aware of the work~\cite{Ben-Shahar:2021doh}, we also studied pure spinor space actions of ten dimensional supersymmetric Yang--Mills theory in~\cite{Borsten:2023reb}, where by using a different choice of gauge, we could lift the result of~\cite{Ben-Shahar:2021doh} to the tree-level amplitudes. This implied a new proof of tree-level CK duality for Yang--Mills theories in arbitrary dimensions $d\leq 10$ with an arbitrary amount of supersymmetry, which is simpler than existing ones in that it uses directly the action and does not rely on any concrete computations. In the same paper, we also extended Reiterer's perspective on CK duality to gauge--matter theories, which come with additional $\BVbox$-modules from the homotopy algebraic perspective. This, together with the pure spinor actions for M2-brane models of~\cite{Cederwall:2008vd,Cederwall:2008xu}, allowed us to give the first proof of full, tree-level CK duality for M2-brane models.
        
        Given Reiterer's interpretation of CK duality as a $\BVbox$-algebra, it is natural to look for an interpretation of the double copy in the tensor product of two $\BVbox$-algebras, as originally suggested in~\cite{Reiterer:2019dys}. We presented initial ideas for such a construction in~\cite{Borsten:2021gyl,Macrelli:2022bay,Macrelli:2022ich}. Independently, a double-field-theory-inspired version of this interpretation was then given in~\cite{Bonezzi:2022bse}, drawing on ideas in earlier work~\cite{Hohm:2011dz, Diaz-Jaramillo:2021wtl,Bonezzi:2022yuh} relating the double copy to double field theory; see also~\cite{Bonezzi:2023ced,Bonezzi:2023pox,Bonezzi:2023ciu} for recent work building on this, for example constructing weakly constrained double field theory to quartic order and elucidating the case of self-dual gravity, as well as~\cite{Lee:2018gxc,Cho:2019ype,Kim:2019jwm,Angus:2021zhy,Cho:2021nim} for further double-field-theory-inspired work on the double copy. Our present contribution mostly agrees with the constructions of~\cite{Bonezzi:2022bse}\footnote{which, in turn, have some similarity to those of~\cite{Zeitlin:2014xma}}, except that we use a Hopf algebra\footnote{This is in line with Reiterer's original construction, and very helpful for the homotopification of this picture to be presented in~\cite{Borsten:2022aa}.} to control momentum dependence, while~\cite{Bonezzi:2022bse} employs a double-field-theory-like section condition. However, we would like to stress that our constructions go beyond those of~\cite{Bonezzi:2022bse} in a number of ways. First of all, all our construction applies to metric\footnote{Homotopy algebraists may prefer the term `cyclic'.} $\BVbox$-algebras, and we give an explicit prescription for double copying the field-space metric. This is important for considering amplitudes and action principles; in particular, a $\BVbox$-algebra implies CK-duality on currents, but not on amplitudes, as explained in~\cref{ssec:kinematic_Lie_algebras}. Secondly, we discuss gauge--matter theories by allowing for modules over $\BVbox$-algebras. Thirdly, since we focus on $\BVbox$-algebras, and all our constructions are exact; in~\cite{Bonezzi:2022bse}, the authors use $\BVbox_\infty$-algebras, for which the precise definition of tensor product is unclear, forcing one to work order by order in the double copy.\footnote{We note that in the conclusions of~\cite{Bonezzi:2022bse}, the authors identify a complete form of the $\BVbox_\infty$-algebra of Yang--Mills theory as the most important outstanding problem. Our twistor space descriptions of self-dual and full Yang--Mills theories~\cite{Borsten:2022vtg} provide such a complete form. To turn it into a plain space-time expression, all one has to do is  perform a mode expansion and integrate over the auxiliary spectral parameters in twistor space. A similar construction exists for the pure spinor actions. From our perspective, an order-by-order computation is possible (as explained already in~\cite{Borsten:2020zgj,Borsten:2021hua}), but we believe that just as for supersymmetry, using an auxiliary space providing an organising principle is much more useful.}
        
        Most recently, $\BVbox$-algebras were also used in~\cite{Bonezzi:2023pox} to study self-dual Yang--Mills theory, but contrary to~\cite{Borsten:2022vtg}, where the exact $\BVbox$-algebra was given using an auxiliary space, a gauge-invariant formulation of self-dual Yang--Mills theory on space-time was studied directly, leading to a $\BVbox_\infty$-algebra deduced up to cubic order; we comment in detail on the relation between this work and our perspective in~\cref{sec:twistorSpaceSDYM}.
        
        \section{Basics of colour--kinematics duality}\label{ssec:CK duality}
        
        \subsection{Colour--kinematics duality and the double copy}
        
        We begin with a concise review of colour--kinematics (CK) duality. For general reviews on CK duality and the double copy, see~\cite{Carrasco:2015iwa,Borsten:2020bgv,Bern:2019prr,Adamo:2022dcm,Bern:2022wqg}.
        
        \paragraph{Colour--kinematics duality.}
        A gauge field theory is said to possess colour--kinematics duality if its scattering amplitude integrands can be parametrised in terms of cubic graphs (i.e.~diagrams with vertices that all have degree~$3$) such that at vertices and connected pairs of vertices, the gauge Lie algebra contribution to these diagrams has the same algebraic properties as the kinematic contribution. More specifically, the $n$-point, $L$-loop scattering amplitude integrands $\scA_{n,L}$ can be parametrised as 
        \begin{equation}\label{eq:CK_amplitudes_parameterization}
            \scA_{n,L}\ \sim\ \sum_{\gamma\in\Gamma_{n,L}}\frac{\sfc_\gamma\sfn_\gamma}{|\sfAut(\gamma)|d_\gamma}~,
        \end{equation}
        where $\Gamma_{n,L}$ is the set of $n$-point, $L$-loop cubic diagrams; $\sfc_\gamma$ is the \uline{colour numerator}, that is, the contribution to the diagram $\gamma$ due to the metric and the structure constants of the gauge Lie algebra; $d_\gamma$ is the product of the denominators of the propagators (without colour component) for $\gamma$, usually $\frac{1}{p_\ell^2}$ for each propagator line $\ell\in \gamma$; $|\sfAut(\gamma)|$ is the symmetry factor of the diagram $\gamma$, i.e.~the order of its automorphism group; and $\sfn_\gamma$ is the \uline{kinematic numerator} containing the remaining contributions of $\gamma$ to $\scA_{n,L}$. The anti-symmetry of the Lie algebra structure constants and the Jacobi identity induce certain sums of colour numerators to vanish, i.e.
        \begin{equation}\label{eq:CKcolor}
            \sfc_{\gamma_{a1}}+\sfc_{\gamma_{a2}}\ =\ 0
            \eand
            \sfc_{\gamma_{J1}}+\sfc_{\gamma_{J2}}+\sfc_{\gamma_{J3}}\ =\ 0
        \end{equation}
        for certain pairs $(\gamma_{a1},\gamma_{a2})$ and triples $(\gamma_{J1},\gamma_{J2},\gamma_{J3})$. A theory is said to be colour--kinematics (CK) dual if the same relations hold for the corresponding kinematic numerators:
        \begin{equation}\label{eq:CKkinematic}
            \sfn_{\gamma_{a1}}+\sfn_{\gamma_{a2}}\ =\ 0
            \eand
            \sfn_{\gamma_{J1}}+\sfn_{\gamma_{J2}}+\sfn_{\gamma_{J3}}\ =\ 0~.
        \end{equation}
        
        Full CK duality has been established for very few field theories; in particular, it is found for the archetypal cases of biadjoint scalar field theory and Chern--Simons theory\footnote{As Chern--Simons theory is trivial on Minkowski space, one considers `scattering amplitudes' of harmonic differential forms.}~\cite{Ben-Shahar:2021zww}. For Yang--Mills theory and supersymmetric generalisations, CK duality has been established at the tree level using a variety of approaches~\cite{BjerrumBohr:2010hn,BjerrumBohr:2009rd,Stieberger:2009hq,Mafra:2011kj,Broedel:2013tta,Du:2016tbc,Mizera:2019blq,Reiterer:2019dys,Borsten:2023reb}. It is known, however, that loop--level CK duality for pure Yang--Mills theory is not possible if one assumes that the kinematic numerators could have been derived from the Feynman diagrams of a local action with manifest unitarity~\cite{Bern:2015ooa}. This conclusion is also confirmed by observations regarding possible CK-dual action principles in~\cite{Borsten:2021gyl,Borsten:2022vtg}. A lift up to anomalies, however, does exist~\cite{Borsten:2021gyl}.
        
        \paragraph{Colour--kinematics duality for currents.}
        Note that we can also study CK duality for currents as e.g.~the famous Berends--Giele gluon currents~\cite{Berends:1987me}. These are essentially amplitudes, but with one external leg kept off-shell and a propagator attached to this leg. They can be computed recursively, and sometimes possess a more evident form of CK duality, cf.~e.g.~\cite{Monteiro:2011pc,Ben-Shahar:2021doh}. Explicitly, we have a similar parametrisation to~\eqref{eq:CK_amplitudes_parameterization}, namely
        \begin{equation}\label{eq:CK_currents_parameterization}
            \scC_{n,L}\ \sim\ \sum_{\gamma\in\Gamma_{n,L}}\frac{\sfc_\gamma\sfn_\gamma}{|\sfAut(\gamma)|d_\gamma}
        \end{equation}
        such that~\eqref{eq:CKcolor} implies~\eqref{eq:CKkinematic} in the evident fashion, but $d_\gamma$ here contains an additional factor arising from the propagator on the single external leg with propagator, and the $\sfn_\gamma$ now also may involve off-shell polarisations.        
        
        \paragraph{Double copy.}
        CK duality is the crucial ingredient in the double copy construction: the kinematic numerators $\sfn_\gamma$ of a CK-dual field theory can be doubled to construct consistent scattering amplitude integrands of a new field theory,
        \begin{equation}
            \tilde\scA_{n,L}\ \sim\ \sum_{\gamma\in\Gamma_{n,L}}\frac{\sfn_\gamma\sfn_\gamma}{|\sfAut(\gamma)|d_\gamma}~.
        \end{equation}
        It has been shown that starting from tree-level pure Yang--Mills scattering amplitudes, the double copy construction yields the tree-level scattering amplitudes of $\caN=0$ supergravity~\cite{Bern:2008qj,Bern:2010ue,Bern:2010yg}, and this generalises to supersymmetric gauge and gravity theories, see again~\cite{Bern:2008qj,Bern:2010ue,Bern:2010yg}.
        
        More generally, one can take the kinematic numerators $\sfn^{(1)}_\gamma$ and $\sfn^{(2)}_\gamma$ of two CK-dual field theories and form their \uline{syngamy}\footnote{We follow again our nomenclature of~\cite{Borsten:2021gyl}.} theory, i.e.  
        \begin{equation}
            \tilde\scA_{n,L}\ \sim\ \sum_{\gamma\in \Gamma_{n,L}}\frac{\sfn^{(1)}_\gamma\sfn^{(2)}_\gamma}{|\sfAut(\gamma)|d_\gamma}~.
        \end{equation}
        
        In this paper, we shall focus on the Lagrangian perspective on CK duality and the double copy~\cite{Bern:2010yg,Tolotti:2013caa,Borsten:2020zgj,Borsten:2021hua,Borsten:2021gyl,Borsten:2022vtg}. Our aim is then to explain the relevant mathematical structures underlying the double copy prescription from this perspective.
        
        \paragraph{Gauge--matter colour--kinematics duality.}
        The above form of colour--kinematics duality can be extended from  gauge theories to gauge--matter theories~\cite{Johansson:2014zca,Johansson:2015oia}. See~\cite{Chiodaroli:2015wal,Anastasiou:2016csv,Chiodaroli:2017ehv, Anastasiou:2017nsz,Chiodaroli:2018dbu,Ben-Shahar:2018uie, Johansson:2019dnu} for a variety of gauge--matter  colour--kinematics duality and double copy examples. By  gauge theory, we mean any theory where all the fields are valued in the adjoint representation of the gauge Lie algebra $\frg$, such as Yang--Mills and maximally supersymmetric Yang--Mills theories.\footnote{Thus, theories without gauge symmetry such as the biadjoint scalar or the non-linear sigma model on a principal homogeneous space are nevertheless `gauge theories' in our sense.} Gauge--matter theories, on the  other hand, include (possibly integer spin) `matter' fields carrying some other representation $R$ of $\frg$.  The colour--stripped amplitudes are constructed in the same manner as the case of purely adjoint fields, although the colour decomposition may be more involved~\cite{Johansson:2015oia}, essentially due to the particular representation theoretic properties of the matter. See \cref{ssec:colour-flavour-stripping} for the details relevant to our discussion. 
        
        CK duality proceeds much as before. The only structural difference from the case of gauge theories is that now~\eqref{eq:CKcolor} can hold either due to the Jacobi identity of the gauge Lie algebra, as before, due to the commutation relations in the (not necessarily irreducible) representation $R$~\cite{Johansson:2014zca,Johansson:2015oia}, or due to some combination of the two. Correspondingly, the sum over cubic Feynman diagrams~\eqref{eq:CK_amplitudes_parameterization} is enlarged to include all possible decorations of the edges by matter field representations $R$:
        \begin{equation}\label{eq:CK_amplitudes_parameterization_matter}
            \scA_{n,L}\ \sim\ \sum_{\gamma\in \Gamma^{R}_{n,L}}\frac{\sfc_\gamma\sfn_\gamma}{|\sfAut(\gamma)|d_\gamma}~.
        \end{equation}
        Here, $\Gamma^{R}_{n,L}$ denotes the set of $n$-point, $L$-loop cubic graphs with all consistent decorations of the edges by $R$, including the subset $\Gamma_{n,L}\subseteq\Gamma^{R}_{n,L}$ without decorations (the pure adjoint graphs). Note that $R$ may include several copies of the same irreducible representation of the gauge Lie algebra to incorporate flavours.      
        
        \paragraph{Double copy with gauge--matter theories.}
        The double copy is usually generalised to $\sfc^{(1)}_{\gamma^{(1)}} \sfn^{(1)}_{\gamma^{(1)}} \rightarrow \sfn^{(2)}_{\gamma^{(2)}} \sfn^{(1)}_{\gamma^{(1)}}$, where  $\gamma^{(1)}$ and ${\gamma^{(2)}}$ either both belong to $\Gamma_{n,L}$ or belong to $\Gamma^{R^{(1)}}_{n,L}\setminus \Gamma_{n,L}$ and $\Gamma^{R^{(2)}}_{n,L}\setminus\Gamma_{n,L}$, respectively\footnote{Note, this is  in the spirit of~\cite{Bern:2019prr} and more general than the working rule~4 adopted in~\cite{Anastasiou:2016csv}. It is consistent nonetheless, at least when there is an underlying action.}. This restriction reflects the fact that only field couplings corresponding to $R\times R\rightarrow\frg$ and, dually, $\frg\times R\rightarrow R$ do not require any properties of the representations beyond the universal Jacobi identities, commutation relations, and  existence of conjugates\footnote{We are implicitly assuming here that $R$ contains all required conjugate representations.}. While more elaborate coupling are in principle possible, we explicitly restrict to these cases, as described in \cref{ssec:colour-flavour-stripping}. This is mathematically natural, see \cref{sec:syngamiesMatterTheories}, and appears to be physically necessary. Allowing, say, $\gamma\in\Gamma_{n,L}$ and $\gamma'\in\Gamma^{R^{(2)}}_{n,L}\setminus\Gamma_{n,L}$ could be used to produce arbitrary numbers of gravitini, which would be inconsistent with the accompanying local supersymmetry~\cite{Anastasiou:2016csv}. 
        
        \subsection{Field theories and homotopy algebras}
        
        Our discussion will be based on the \uline{homotopy algebraic perspective} on classical field theories, cf.~e.g.~\cite{Jurco:2018sby,Borsten:2021hua} or~\cite{Hohm:2017pnh}. 
        
        \paragraph{Metric differential graded Lie algebras.}
        The classical Batalin--Vilkovisky (BV) action\footnote{Note that the BV algebras and $\BVbox$-algebras that form an essential ingredient in our picture are not obtained from a BV formulation of the theories we consider.} of a field theory with cubic vertices is dual to a metric \uline{differential graded (dg) Lie algebra} $(\frL,\mu_1,\mu_2)$ with the underlying graded vector space $\frL\cong\bigoplus_{i\in\IZ}\frL_i$ and cochain complex
        \begin{equation}
            \sfCh(\frL)\ \coloneqq\ \big(
            \begin{tikzcd}
                \cdots\arrow[r,"\mu_1"] & \frL_0\arrow[r,"\mu_1"] & \frL_1\arrow[r,"\mu_1"] & \frL_2\arrow[r,"\mu_1"] & \frL_3\arrow[r,"\mu_1"] & \cdots
            \end{tikzcd}
            \big)\,.
        \end{equation}
        Here, $\frL_0$ contains the ghosts, $\frL_1$ the fields, $\frL_2$ the anti-fields, and $\frL_3$ the anti-fields of the ghosts,\footnote{not to be confused with the anti-ghost fields} respectively. Hence, the degree~$|\phi|$ of a field $\phi\in\frL$ is given by 
        \begin{equation}
            |\phi|\ \coloneqq\ 1-|\phi|_\text{gh}~,
        \end{equation}
        where $|\phi|_\text{gh}$ is the ghost degree of $\phi$. Correspondingly, in a gauge-fixed BV formulation of an ordinary gauge theory, $\frL_1$ will also contain the Nakanishi--Lautrup field and the anti-field of the anti-ghost and $\frL_2$ will also contain the anti-field of the Nakanishi--Lautrup field and the anti-ghost. The differential $\mu_1$ encodes all linear features of the theory, such as kinematic terms, linearised gauge transformations, and their duals. Interactions, non-linear parts of gauge transformations, and their duals are encoded in a \uline{graded Lie bracket}
        \begin{equation}
            \mu_2\,:\,\frL\times\frL\ \rightarrow\ \frL~,
        \end{equation}
        which is of degree~$0$, bilinear, graded anti-symmetric, compatible with the differential, and satisfies the graded Jacobi identity. The \uline{metric} (or \uline{cyclic structure})
        \begin{equation}
            \inner{-}{-}\,:\,\frL\times\frL\ \rightarrow\ \IR
        \end{equation}
        is a non-degenerate, bilinear, and graded symmetric map of a fixed degree, which is compatible with the differential $\mu_1$ and the Lie bracket $\mu_2$ in the sense that
        \begin{equation}
            \begin{aligned}
                \inner{\mu_1(\phi_1)}{\phi_2}+(-1)^{|\phi_1|}\inner{\phi_1}{\mu_1(\phi_2)}\ &=\ 0~,
                \\
                \inner{\mu_2(\phi_1,\phi_2)}{\phi_3}+(-1)^{|\phi_1|\,|\phi_2|}\inner{\phi_2}{\mu_2(\phi_1,\phi_3)}\ &=\ 0
            \end{aligned}
        \end{equation}
        for all $\phi_{1,2,3}\in\frL$. If the metric is of degree~$-3$, we can use it to write down an action principle
        \begin{equation}
            S\ \coloneqq\ \tfrac12\inner{\phi}{\mu_1(\phi)}+\tfrac{1}{3!}\inner{\phi}{\mu_2(\phi,\phi)}
        \end{equation}
        for the fields $\phi\in\frL_1$. In this way, any action with exclusively cubic interaction vertices can be encoded in a metric dg Lie algebra.
        
        \paragraph{Homotopy transfer.}
        We can obtain an equivalent field theory by `integrating out' parts of the field content. This is done by an appropriate tree-level Feynman diagram expansion, and mathematically, this corresponds to a \uline{homotopy transfer} from the cochain complex $(\frL,\mu_1)$ to a quasi-isomorphic cochain complex $(\tilde\frL,\tilde\mu_1)$ consisting of the modes that have not been integrated out, cf.~\cite{Doubek:2017naz}.\footnote{The fact that homotopy transfer amounts to integrating out fields is a general folklore in BV quantisation; see also~\cite{Arvanitakis:2020rrk} and~\cite{Farahani:2023ptb} for recent applications.} In particular, we have the diagram
        \begin{subequations}
            \begin{equation}\label{eq:deformation_retract}
                \begin{tikzcd}
                    \ar[loop,out=160,in=200,distance=20,"\sfh" left] (\frL,\mu_1)\arrow[r,shift left]{}{\sfp} & (\tilde \frL,\tilde \mu_1) \arrow[l,shift left]{}{\sfe}~,
                \end{tikzcd}
            \end{equation}
            where $\sfp$ and $\sfe$ are cochain maps, denoting a projection and an embedding, such that
            \begin{equation}
                \sfp\circ\sfe\ =\ \sfid_{\tilde\frL}~,
            \end{equation}
            which implies that 
            \begin{equation}\label{eq:projector_on_onshell_modes}
                \Pi\ \coloneqq\ \sfe\circ\sfp
            \end{equation}
            is a projector. There is usually some ambiguity in choosing $\sfe$, which involves a choice of gauge. The \uline{contracting homotopy} $\sfh\colon\frL\rightarrow\frL$ is a map of degree~$-1$ satisfying 
            \begin{equation}\label{eq:contracting_homotopy}
                \id_\frL-\Pi\ =\ \mu_1\circ\sfh+\sfh\circ\mu_1
            \end{equation}        
            as well as the \uline{annihilation} or \uline{side conditions}
            \begin{equation}\label{eq:HT_side_conditions}
                \sfp\circ\sfh\ =\ 0~,~~~
                \sfh\circ\sfe\ =\ 0~,~~~
                \sfh\circ\sfh\ =\ 0~.
            \end{equation}
            Even if the side conditions do not hold, one can redefine $\sfh$ such that they do, cf.~\cite{Markl:0002130}.
        \end{subequations}        
        Note that equation~\eqref{eq:contracting_homotopy} implies that $\sfh$ is the inverse of $\mu_1$ on the modes that are being integrated out. 
        
        In other words, $\sfh$ can be regarded as a \uline{propagator}, and the homotopy transfer indeed reproduces the usual tree-level Feynman diagram expansion with propagator $\sfh$. The result of this homotopy transfer generically contains $n$-point vertices, which are encoded in algebraic operations with $n-1$ inputs and one output. Therefore, the result of the homotopy transfer is no longer a dg Lie algebra but a generalisation known as an \uline{$L_\infty$-algebra}. The notion of a dg Lie algebra is equivalent to that of a \uline{strict} $L_\infty$-algebra. Further details are again found, e.g., in~\cite{Jurco:2018sby,Borsten:2021hua}, but they will be irrelevant to our discussion.
        
        The smallest permissible cochain complex $(\tilde\frL,\tilde \mu_1)$ yields the \uline{minimal model} $(\frL^\circ,0)$, and it is given by the cohomology $\frL^\circ\coloneqq H^\bullet_{\mu_1}(\frL)$ of $(\frL,\mu_1)$. The minimal model is unique up to (strict) isomorphisms, and its $L_\infty$-algebra structure encodes the tree-level scattering amplitudes of the theory~\cite{Kajiura:2001ng,Kajiura:2003ax,Nutzi:2018vkl,Macrelli:2019afx,Arvanitakis:2019ald,Jurco:2019yfd}. Indeed, physical fields in the cohomology satisfy the free or linearised equations of motion, and linear gauge transformations have been quotiented out. We thus see that the physical fields in the cohomology correspond to the asymptotically free fields, labelling the open legs of scattering amplitudes. Altogether, there is now a dictionary between physical features and operations with scattering amplitudes and amputated correlators as well as (homotopy) algebraic operations, as indicated in \cref{tab:1}.
        
        \paragraph{Factorisation.}
        For example, we can factor out the colour or gauge Lie algebra $(\frg,[-,-]_\frg)$ by writing 
        \begin{equation}\label{eq:factorisationLGB}
            \frL\ \cong\ \frg\otimes\frB~,
        \end{equation}
        where $(\frB,\sfd,\sfm_2)$ is the \uline{differential graded (dg) commutative algebra} with
        \begin{equation}
            \begin{aligned}
                \mu_1(\tau_1\otimes\phi_1)\ &=\ \tau_1\otimes\sfd\phi_1~,
                \\
                \mu_2(\tau_1\otimes\phi_1,\tau_2\otimes\phi_2)\ &=\ [\tau_1,\tau_2]_\frg\otimes\sfm_2(\phi_1,\phi_2)
            \end{aligned}
        \end{equation}
        for all $\tau_{1,2}\in\frg$ and $\phi_{1,2}\in\frB$. This is the mathematical formulation of what physicists would call \uline{colour-stripping}, cf.~\cite{Zeitlin:2008cc,Borsten:2021hua}.
        
        In this paper, we will always regard a field theory as a metric dg Lie algebra, and we collect many examples in~\cref{sec:examples}.
        
        \subsection{Colour--flavour-stripping}\label{ssec:colour-flavour-stripping}
        
        We saw above that, mathematically, colour-stripping a cubic field theory amounts to a factorisation of the theory's dg Lie algebra into a colour Lie algebra and a dg commutative algebra. We are not aware of a discussion of the extension to colour--flavour-stripping in the literature, so we give a more detailed account here. This will become important when discussing CK duality of gauge--matter theories.
        
        \paragraph{Factorisation and Lie algebra representations.}
        Consider a gauge field theory with only cubic interaction vertices and gauge Lie algebra $\frg$. Then, the space of fields $\frF$ decomposes into irreducible representations of $\frg$ as
        \begin{equation}\label{eq:restricted_TP_module}
            \frF\ \cong\ (\frg\otimes\frC)\oplus(R^{(1)}\otimes V^{(1)})\oplus(R^{(2)}\otimes V^{(2)})\oplus\cdots~,
        \end{equation}
        in which $\frC$ is the graded vector space of fields transforming in the adjoint representation (such as the gauge potential or other components of the gauge supermultiplet in supersymmetric gauge theories), and $V^{(i)}$ for $i=1,2,\ldots$ is the graded vector space of fields transforming in the representation $R^{(i)}$. Since there are no invariant pairings between distinct irreducible representations, there are no kinetic terms that mix fields of different representations. Thus, $\frC$ and $V^{(i)}$ are dg vector spaces (i.e.~cochain complexes), each endowed with invariant metrics.
        
        To simplify the discussion, we combine $R\coloneqq\bigoplus_{i\in\IN}R^{(i)}$ and $V\coloneqq\bigoplus_{i\in\IN}V^{(i)}$, such that we can write
        \begin{equation}
            \frF\ \subseteq\ \frL\ \coloneqq\ (\frg\otimes\frC)\oplus(R\otimes V)
        \end{equation}
        for some cochain complexes $\frC$ and $V$ endowed with invariant metrics. The right-hand side is generically larger than~\eqref{eq:restricted_TP_module} since we also get summands $R^{(i)}\otimes V^{(j)}$ for $i\neq j$. We can, however, restrict to the subspace~\eqref{eq:restricted_TP_module} if necessary or desired.\footnote{This is a technical simplification. One can either regard the extra fields in $\frL\setminus\frF$ as free fields that decouple from the rest of the theory, or one can choose to keep track of different kinds of matter, which would technically amount to working with operads (i.e.~convenient tools for encoding algebras, cf.~\cite{0821843621,Loday:2012aa}) with more than two sorts.} The potential cubic interaction vertices encoded in the product $\mu_2$ can then be of a number of types,
        \begin{subequations}
            \begin{eqnarray}
                \mu_2\!\!&:&\!\!(\frg\otimes\frC)\times(\frg\otimes\frC)\ \rightarrow\ (\frg\otimes\frC)~,\label{eq:cubicInteractionI}
                \\
                \mu_2\!\!&:&\!\!(\frg\otimes\frC)\times(R\otimes V)\ \rightarrow\ (R\otimes V)~,\label{eq:cubicInteractionII}
                \\
                \mu_2\!\!&:&\!\!(R\otimes V)\times(R\otimes V)\ \rightarrow\ (\frg\otimes \frC)~,\label{eq:cubicInteractionIII}
                \\
                \mu_2\!\!&:&\!\!(R\otimes V)\times(R\otimes V)\ \rightarrow\ (R\otimes V)~,\label{eq:cubicInteractionIV}
                \\
                \mu_2\!\!&:&\!\!(\frg\otimes\frC)\times(R\otimes V)\ \rightarrow\ (\frg\otimes \frC)~,\label{eq:cubicInteractionV}
                \\
                \mu_2\!\!&:&\!\!(\frg\otimes\frC)\times(\frg\otimes\frC)\ \rightarrow\ (R\otimes V)~.\label{eq:cubicInteractionVI}
            \end{eqnarray}
        \end{subequations}
        Whilst the last three types of products~\eqref{eq:cubicInteractionIV}--\eqref{eq:cubicInteractionVI} are possible, they require additional algebraic structures on $\frg$ and $R$ that go beyond an ordinary Lie algebra representation. The products~\eqref{eq:cubicInteractionIV} still appear in familiar field theories, but~\eqref{eq:cubicInteractionV} and~\eqref{eq:cubicInteractionVI} are very uncommon. We therefore restrict ourselves to the case in which only the first three types~\eqref{eq:cubicInteractionI}--\eqref{eq:cubicInteractionIII} of maps are non-trivial; this certainly covers all field theories in which we are interested.\footnote{It is also mathematically natural. For example, it is reminiscent of the Lie algebra decomposition for symmetric spaces.} We note that cyclicity of the metric on $\frL$ implies in particular
        \begin{equation}\label{eq:special_cyclicity}
            \inner{\chi_1}{\mu_2(\phi,\chi_2)}\ =\ (-1)^{|\phi|\,|\chi_1|+1}\inner{\phi}{\mu_2(\chi_1,\chi_2)}
        \end{equation}
        for all $\chi_{1,2}\in R\otimes V$ and $\phi\in\frg\otimes\frC$, so that the product~\eqref{eq:cubicInteractionIII} is fixed by the product~\eqref{eq:cubicInteractionII}. 
        
        The first two types of product are captured by the Lie bracket on $\frg$, the action of $\frg$ on $R$, a structure of a dg commutative algebra on $\frC$, and an action of $\frC$ on the dg vector space $V$. 
        
        Putting all relevant structures together, we have the following mathematical description of colour--flavour-stripping.
        
        \begin{definition}\label{def:colour-flavour-stripping}
            Given a metric\footnote{sometimes called \uline{quadratic} or \uline{cyclic} instead} Lie algebra $(\frg,[-,-]_\frg,\inner{-}{-}_\frg)$ with a metric representation $(R,\acton_R,\inner{-}{-}_\rmR)$ together with a metric dg commutative algebra $(\frC,\sfd_\frC,\sfm_2,\inner{-}{-}_\frC)$ and a metric $\frC$-module $(V,\sfd_V,\acton_V,\inner{-}{-}_V)$, we define the \uline{tensor product}
            \begin{subequations}\label{eq:colour-flavour-stripping}
                \begin{equation}
                    \frL\ \coloneqq\ (\frg\otimes\frC)\oplus(R\otimes V)
                \end{equation}
                endowed with maps
                \begin{equation}
                    \begin{gathered}
                        \mu_1(\tau_1\otimes\phi_1+r_1\otimes v_1)\ \coloneqq\ \tau_1\otimes\sfd_\frC\phi_1+r_1\otimes\sfd_Vv_1~,
                        \\
                        \kern-6cm\mu_2(\tau_1\otimes\phi_1+r_1\otimes v_1,\tau_2\otimes\phi_2+r_2\otimes v_2)
                        \\
                        \kern-4cm\coloneqq\ [\tau_1,\tau_2]_\frg\otimes\sfm_2(\phi_1,\phi_2)+\mu_2(r_1\otimes v_1,r_2\otimes v_2)
                        \\
                        \kern1cm+(\tau_1\acton_Rr_2)\otimes(\phi_1\acton_Vv_2)-(-1)^{|v_1|\,|\phi_2|}(\tau_2\acton_Rr_1)\otimes(\phi_2\acton_Vv_1)
                    \end{gathered}
                \end{equation}
                with $\mu_2(r_1\otimes v_1,r_2\otimes v_2)$ defined by~\eqref{eq:special_cyclicity} as well as
                \begin{equation}
                    \inner{\tau_1\otimes\phi_1+r_1\otimes v_1}{\tau_2\otimes\phi_2+r_2\otimes v_2}_\frL\ \coloneqq\ \inner{\tau_1}{\tau_2}_\frg\,\inner{\phi_1}{\phi_2}_\frC+\inner{r_1}{r_2}_R\,\inner{v_1}{v_2}_V
                \end{equation}
            \end{subequations}
            for all $\tau_{1,2}\in\frg$, $r_{1,2}\in R$, $\phi_{1,2}\in\frC$, and $v_{1,2}\in V$.
        \end{definition}
        
        \begin{proposition}\label{prop:colour-flavour-stripping}
            The tuple $(\frL,\mu_1,\mu_2,\inner{-}{-})$ defined in~\eqref{eq:colour-flavour-stripping} forms a metric dg Lie algebra.
        \end{proposition}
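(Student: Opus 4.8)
The plan is to verify directly the four defining properties of a metric dg Lie algebra for $(\frL,\mu_1,\mu_2,\inner{-}{-})$: that $\mu_1$ is a differential, that $\mu_2$ is a degree-$0$ graded-antisymmetric bracket compatible with $\mu_1$, that the graded Jacobi identity holds, and that the metric is graded symmetric, non-degenerate, and cyclic. The organising idea is that the adjoint/matter split carries a second $\IZ_2$-grading (a \emph{colour parity}, with $\frg\otimes\frC$ even and $R\otimes V$ odd) and that $\frL$ is the colour-even part of a tensor product of the pair $\frg\oplus R$ with the pair $\frC\oplus V$. Accordingly I would first extract from the metrics the two transpose maps $\beta\colon R\otimes R\to\frg$ and $\gamma\colon V\otimes V\to\frC$ defined by $\inner{\beta(r_1,r_2)}{\tau}_\frg\coloneqq\inner{r_1}{\tau\acton_R r_2}_R$ and $\inner{\gamma(v_1,v_2)}{\psi}_\frC\coloneqq\inner{v_1}{\psi\acton_V v_2}_V$; non-degeneracy of $\inner{-}{-}_\frg$ and $\inner{-}{-}_\frC$ makes these well defined, and unwinding~\eqref{eq:special_cyclicity} together with the product form of the metric on $\frg\otimes\frC$ shows that the matter--matter bracket is $\mu_2(r_1\otimes v_1,r_2\otimes v_2)=\pm\,\beta(r_1,r_2)\otimes\gamma(v_1,v_2)$, so that all structure maps become explicit.

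The routine part comes first. Here $\mu_1^2=0$ is immediate from $\sfd_\frC^2=0$ and $\sfd_V^2=0$; graded antisymmetry and degree-$0$-ness of $\mu_2$ follow from antisymmetry of $[-,-]_\frg$ with graded commutativity of $\sfm_2$ on the gauge block, and from antisymmetry of $\beta$ and symmetry of $\gamma$ (themselves consequences of invariance of $\inner{-}{-}_R$ and $\inner{-}{-}_V$) on the matter block. Compatibility of $\mu_1$ with $\mu_2$ reduces to $\sfd_\frC$ being a derivation of $\sfm_2$, to the compatibility of $\sfd_V$ with $\acton_V$, and to the induced compatibility of the differentials with $\gamma$. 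Graded symmetry and non-degeneracy of $\inner{-}{-}$ are inherited block-wise from the four factor metrics, and its cyclicity is built in on the gauge sector and imposed by construction on the matter sector through~\eqref{eq:special_cyclicity}.

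The substance is the graded Jacobi identity, which I would organise by the number of matter inputs. With no matter input it is the tensor of the Jacobi identity of $\frg$ with associativity and commutativity of $\frC$; with one matter input it is the representation property of $\acton_R$ tensored with the module property of $\acton_V$; with two matter inputs it reduces to the $\frg$-equivariance of $\beta$, namely $[\tau,\beta(r_1,r_2)]_\frg=\beta(\tau\acton_R r_1,r_2)+\beta(r_1,\tau\acton_R r_2)$, together with the companion identity $\sfm_2(\psi,\gamma(v_1,v_2))=\gamma(\psi\acton_V v_1,v_2)=\gamma(v_1,\psi\acton_V v_2)$ for $\gamma$. Both of these I would prove by pairing with an arbitrary element and using invariance of the four metrics together with non-degeneracy.

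The hard part, and the one I would spend the most care on, is the Jacobi identity on three matter inputs. Its Jacobiator lands in $R\otimes V$ and equals $\sum_{\mathrm{cyc}}\bigl(\beta(r_1,r_2)\acton_R r_3\bigr)\otimes\bigl(\gamma(v_1,v_2)\acton_V v_3\bigr)$ up to Koszul signs; unlike the previous cases, it is not controlled by the representation and module axioms alone. Pairing with a fourth matter element and using cyclicity turns it into $\sum_{\mathrm{cyc}}\inner{\beta(r_1,r_2)}{\beta(r_3,r_4)}_\frg\,\inner{\gamma(v_1,v_2)}{\gamma(v_3,v_4)}_\frC$, which makes transparent that it vanishes exactly when $\gamma(v_1,v_2)\acton_V v_3$ is totally symmetric in its three arguments (so that $\frC\oplus V$ is an associative graded-commutative algebra) and when $\sum_{\mathrm{cyc}}\beta(r_1,r_2)\acton_R r_3=0$ (so that $\frg\oplus R$ is a metric symmetric pair). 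I would therefore treat these two identities as the real content of the statement: the cleanest route is to prove that the given data assembles $\frg\oplus R$ into a $\IZ_2$-graded metric Lie algebra and $\frC\oplus V$ into a $\IZ_2$-graded metric dg commutative algebra, whereupon $\frL$ is the colour-even part of their tensor product and is automatically a metric dg Lie algebra. I expect the symmetric-pair Jacobi for $\beta$ to be the main obstacle, since it is the one identity not manifestly implied by invariance of the metrics and must be secured either from extra structure of the representation or, as in the field theories of interest, from the fact that the data descends from an honest dg Lie algebra.
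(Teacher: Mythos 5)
Your overall strategy is sound and, in its explicitness, goes well beyond what the paper's own proof does: the paper merely observes that $\frg\otimes\frC$ is a dg Lie algebra and $R\otimes V$ a dg module over it, invokes the semi-direct product to assemble the two, and then checks cyclicity of the metric case by case. But a semi-direct product has vanishing bracket on the module summand, whereas the $\mu_2$ of \eqref{eq:colour-flavour-stripping} contains the non-trivial component $(R\otimes V)\times(R\otimes V)\to\frg\otimes\frC$ fixed by \eqref{eq:special_cyclicity}; the paper's proof never revisits the Jacobi identities involving this component. Your case-by-case analysis handles the instances with zero, one, and two matter inputs correctly — in particular you are right that the two-matter case reduces to equivariance of the transpose maps $\beta$ and $\gamma$, which does follow from invariance of the metrics together with associativity, commutativity, and the module axioms.

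You are also right that the three-matter Jacobiator is not a consequence of the hypotheses of \cref{def:colour-flavour-stripping}, and this is a genuine gap — but it is a gap in the proposition as literally stated (and in the paper's own proof), not a defect peculiar to your method. Concretely, take $\frC=V=\IR$ concentrated in a single degree and let $(\frg,R)$ be the Faulkner pair $(\frsu(2)\oplus\frsu(2),\IR^4)$ underlying the BLG model from \cref{ssec:M2branes}: then $\beta(\sfe_{k_1},\sfe_{k_2})\acton_R\sfe_{k_3}\propto\eps_{k_1k_2k_3k_4}\sfe_{k_4}$ is totally antisymmetric, so its cyclic sum is three times itself and does not vanish. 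What rescues every application in the paper is a degree count that you should make explicit rather than defer to ``extra structure'': when $\frC$ and $V$ are concentrated in degrees $1$ and $2$ with pairings of degree $-3$, one finds $\gamma(V_1,V_1)\subseteq\frC_2$ and hence $\gamma(v_1,v_2)\acton_V v_3\in V_{\geq 3}=0$, so the offending term vanishes identically; this is a (degenerate) special case of the symmetry conditions you isolate, and a third mechanism besides the two you list. Adding such a hypothesis — or the equivalent requirement that $\frg\oplus R$ and $\frC\oplus V$ assemble into a quadratic graded Lie algebra and a graded commutative algebra, as in your colour-parity picture — is exactly what is needed to close the argument; without some such assumption the statement is false, and your proposal correctly locates where and why.
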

        
        \begin{proof}
            By direct computation, cf.~\cref{app:postponed_proofs}.
        \end{proof}
        
        Clearly, the tensor product~\eqref{eq:colour-flavour-stripping} can possess metric dg Lie subalgebras of the form~\eqref{eq:restricted_TP_module}. Contrary to the colour-stripping, colour--flavour-stripping hence requires additional information about the desired branching of $R\otimes V$ into the summands $R^{(i)}\otimes V^{(i)}$. 
        
        Altogether, colour--flavour-stripping is a decomposition of the form~\eqref{eq:colour-flavour-stripping} such that the original metric dg Lie algebra $\frF$ is a subalgebra of the full tensor product $\frL$.
        
        We specialise this factorisation further to CK-dual ones in~\cref{ssec:BVbox-modules}, and physical examples are found in \cref{ssec:biadj_bifund_theory,ssec:M2branes}.
        
        \subsection{Kinematic Lie algebras from actions}\label{ssec:kinematic_Lie_algebras}
        
        \paragraph{Motivation.}
        For the action perspective on CK duality and the double copy, we will always assume that the diagrams $\gamma\in \Gamma_{n,L}$ in the expansions~\eqref{eq:CK_amplitudes_parameterization} and~\eqref{eq:CK_currents_parameterization} are indeed the Feynman diagrams of scattering amplitudes, as obtained from the rules derived from an action principle in the usual way. In this case, CK duality implies the existence of a \uline{kinematic Lie algebra}, from which the kinematic numerators $\sfn_\gamma$ are constructed in full analogy with the construction of the colour numerators $\sfc_\gamma$ from the gauge or colour Lie algebra. Put differently, each cubic vertex of the Feynman diagram $\gamma\in\Gamma_{n,L}$ contributes a structure constant to both $\sfc_\gamma$ and $\sfn_\gamma$, and propagators joining vertices amount to index contractions. The kinematic Lie algebra is the vital ingredient in the action perspective on CK duality, and we are not aware of an example of a CK-dual field theory without a kinematic Lie algebra. Moreover, the concept of a kinematic Lie algebra generalises far beyond theories with conventional CK duality, as we shall see. We will therefore always consider CK-dual field theories as a subset of theories with kinematic Lie algebras.
        
        As a fairly general and simple example of such a situation, consider the action
        \begin{equation}\label{eq:ex_CKD_action}
            S\ \coloneqq\ \tfrac12\sfg_{\sfi\sfj}\bar\sfg_{\bar\sfa\bar\sfb}\Phi^{\sfi\bar\sfa}\wave\Phi^{\sfj\bar\sfb}+\tfrac1{3!}\sfg_{\sfi\sfj}\bar\sfg_{\bar\sfa\bar\sfb}\sff^\sfj_{\sfk\sfl}\bar\sff^{\bar \sfb}_{\bar\sfc\bar\sfd}\Phi^{\sfi\bar\sfa}\Phi^{\sfk\bar\sfc}\Phi^{\sfl\bar\sfd}~,
        \end{equation}
        cf.~\cite{Borsten:2020zgj,Borsten:2021hua,Borsten:2021gyl}. Here, $\wave$ is the d'Alembertian, the $\sff^{\sfj}_{\sfk\sfl}$ and $\bar\sff^{\bar\sfb}_{\bar\sfc\bar\sfd}$ are structure constants of the gauge and kinematic Lie algebras, and the $\sfg_{\sfi\sfj}$ and $\sfg_{\bar\sfa\bar\sfb}$ are invariant metrics on each of the two Lie algebras, which are required for writing down an action principle. Note that $\sfi,\sfj,\ldots$ are DeWitt indices combining momentum, species, polarisation, and spinor labels. Among the field theories featuring tree-level CK duality that can be brought into this form are the biadjoint scalar field theory, the non-linear sigma-model, Chern--Simons theory, and Yang--Mills theory. 
        
        \paragraph{Feynman diagram expansion.}
        We will always be concerned with kinematic Lie algebras relative to a Feynman diagram expansion, or, equivalently, relative to a propagator $\sfh$, i.e.~a contracting homotopy in a deformation retract~\eqref{eq:deformation_retract}. The kinematic Lie algebras usually discussed in the literature are obtained when $\sfh$ is the ordinary Feynman propagator, giving a contracting homotopy to the minimal model of the underlying $L_\infty$-algebra, because this Feynman diagram expansion yields the scattering amplitudes. In the case of Chern--Simons theory, the tree-level scattering amplitudes are trivial, and we consider generalised amplitudes of harmonic differential forms.
        
        In particular, we shall follow an idea of Reiterer~\cite{Reiterer:2019dys} which assumes that the contracting homotopy or propagator $\sfh$ can be written as 
        \begin{equation}
            \sfh\ =\ \sfid_\frg\otimes\,\BBox^{-1}\sfb
            \ewith
            [\BBox,\sfb]\ =\ 0
        \end{equation}
        under the factorisation~\eqref{eq:factorisationLGB} such that $\sfb$ is a differential of degree~$-1$, which maps e.g.~physical anti-fields to physical fields, $\BBox$ is a second-order differential operator of degree~$0$ (e.g.~the d'Alembertian) with $\BBox^{-1}$ its inverse defined to vanish on $\ker(\BBox)$, and $\BBox\Pi=0=[\mu_1,\BBox^{-1}]$ for the projector~\eqref{eq:projector_on_onshell_modes}. Then,~\eqref{eq:contracting_homotopy} can be rewritten as  
        \begin{equation}
            \BBox\ \coloneqq\ [\sfd,\sfb]\ =\ \sfd\circ\sfb+\sfb\circ\sfd~.
        \end{equation}
        
        \paragraph{Derived bracket.}
        The operator $\sfb$ now allows us to define the \uline{derived bracket}
        \begin{equation}\label{eq:kinematic_bracket}
            \{\phi_1,\phi_2\}\ \coloneqq\ \sfb(\sfm_2(\phi_1,\phi_2))-\sfm_2(\sfb \phi_1,\phi_2)-(-1)^{|\phi_1|}\sfm_2(\phi_1,\sfb \phi_2)
        \end{equation}
        for all $\phi_{1,2}\in\frB$, which measures the failure of $\sfb$ to be a derivation of the product $\sfm_2$. This derived bracket enters into the construction of the kinematic numerators, analogously to the Lie algebra brackets entering into the colour numerators; and, in particular, it yields the Lie bracket of the kinematic Lie algebra.
        
        Returning to the action~\eqref{eq:ex_CKD_action}, the structure constants $\bar\sff^{\bar\sfb}_{\bar\sfc\bar\sfd}$ are those of the Lie algebra defined by the Lie bracket~\eqref{eq:kinematic_bracket}. This kinematic Lie algebra arises when integrating out modes in the Feynman diagram expansion with propagator $\sfid_\frg\otimes\,\BBox^{-1}\sfb$ and cubic vertices encoded in $\mu_2(-,-)=[-,-]_\frg\otimes\sfm_2(-,-)$. 
        
        \paragraph{Kinematic Lie algebra for currents.}
        Concretely, let us look at an example of a field theory current, i.e.~a Feynman diagram with $n$ incoming fields and one outgoing, propagating field $\phi_0$. This clearly demonstrates how the operator $\sfb$ gets assigned to vertices:
        \begin{equation}\label{eq:CK_diagram}
            \begin{tikzcd}[column sep=0.0cm, row sep=0.3cm]
                & & & \phi_0 & & &
                \\
                & & & \BBox^{-1}\sfb \arrow[u,dash]& & &
                \\
                & & & \sfm_2 \arrow[u,dashed,dash] & & &
                \\
                & & \BBox^{-1}\sfb \arrow[ur,dash]& & \BBox^{-1}\sfb \arrow[ul,dash] & &
                \\
                & \sfm_2 \arrow[ur,dashed,dash] & & & & \sfm_2 \arrow[ul,dashed,dash] & 
                \\
                \phi_1 \arrow[ur,dash] & & \phi_2 \arrow[ul,dash] & & \phi_3 \arrow[ur,dash] & & \phi_4 \arrow[ul,dash]
            \end{tikzcd}
            ~~\rightarrow~~
            \begin{tikzcd}[column sep=0.0cm, row sep=0.3cm]
                & & & \phi_0 & & &
                \\
                & & & \BBox^{-1} \arrow[u,dash]& & &
                \\
                & & & \sfb\sfm_2 \arrow[u,dash] & & &
                \\
                & & \BBox^{-1} \arrow[ur,dash]& & \BBox^{-1} \arrow[ul,dash] & &
                \\
                & \sfb\sfm_2 \arrow[ur,dash] & & & & \sfb\sfm_2 \arrow[ul,dash] & 
                \\
                \phi_1 \arrow[ur,dash] & & \phi_2 \arrow[ul,dash] & & \phi_3 \arrow[ur,dash] & & \phi_4 \arrow[ul,dash]
            \end{tikzcd}
        \end{equation}
        Here, a solid line denotes a field and a dashed line denotes an anti-field. The operator $\sfb$ is taken along its unique anti-field line to a vertex and combined with $\sfm_2$ to the kinematic Lie bracket, which maps pairs of fields to fields. Note that $\sfb\sfm_2$ is indeed the kinematic Lie algebra on fields because, as we shall see, these are in the kernel of $\sfb$, at least after gauge fixing.
        
        This prescription clearly extends to currents involving anti-fields, where the outgoing leg can be a field. We thus see that after the re-assignment of the operator $\sfb$, the vertices are turned into the derived bracket~\eqref{eq:kinematic_bracket}, which is therefore the kinematic Lie algebra.
        
        \paragraph{Kinematic Lie algebra for scattering amplitudes.} 
        In the case of scattering amplitudes, the discussion is a bit more subtle. Amplitudes are obtained from the currents by removing the propagator on the outgoing leg of a current and pairing the anti-field coming out of the diagram with the remaining field using the cyclic structure. For example, the amplitude $\scA(\phi_0,\phi_1,\phi_2,\phi_3)$ will receive a contribution from
        \begin{equation}
            \begin{tikzcd}[column sep=0.2cm, row sep=0.3cm]
                & & & \phi_0 & & &
                \\
                & & & \langle-,-\rangle \arrow[u,dash]& & &
                \\
                & & & \sfm_2 \arrow[u,dashed,dash] & & &
                \\
                & & \BBox^{-1} \arrow[ur,dash]& & \BBox^{-1} \arrow[ul,dash] & &
                \\
                & \sfb\sfm_2 \arrow[ur,dashed,dash] & & & & \sfb\sfm_2 \arrow[ul,dashed,dash] & 
                \\
                \phi_1 \arrow[ur,dash] & & \phi_2 \arrow[ul,dash] & & \phi_3 \arrow[ur,dash] & & \phi_4 \arrow[ul,dash]
            \end{tikzcd}
        \end{equation}
        It is then clear that CK duality will hold for any triple of subdiagrams not involving $\phi_0$. For all physically interesting theories, however, the relevant external fields will be $\sfb$-exact, i.e.~in particular $\phi_0=\sfb\psi$. In this case, we can compute the sum of the general $s$-, $t$- and $u$-channels (i.e.~the terms $\sfn_{\gamma_{J1}}$, $\sfn_{\gamma_{J2}}$, and $\sfn_{\gamma_{J3}}$ from~\eqref{eq:CKkinematic}) involving $\phi_0$ as follows:
        \begin{equation}\label{eq:Jacobi_triple}
            \langle \phi_0,\sfm_2(T_1,\sfb\sfm_2(T_2,T_3))\rangle+\langle \phi_0,\sfm_2(T_2,\sfb\sfm_2(T_3,T_1))\rangle+\langle \phi_0,\sfm_2(T_3,\sfb\sfm_2(T_1,T_2))\rangle~,
        \end{equation}
        where $T_1$, $T_2$, and $T_3$ are currents, making up the rest of the diagrams. Again, in all physically interesting examples, $\sfb$ is its own adjoint, and hence we have
        \begin{equation}
            \langle \phi_0,\sfm_2(T_1,\sfb\sfm_2(T_2,T_3))\rangle\ =\ \langle \sfb\psi,\sfm_2(T_1,\sfb\sfm_2(T_2,T_3))\rangle\ =\ \langle \psi,\sfb\sfm_2(T_1,\sfb\sfm_2(T_2,T_3))\rangle~.
        \end{equation}
        If the derived bracket is a Lie bracket, then this reformulation makes it clear that~\eqref{eq:Jacobi_triple} indeed vanishes. We note that, due to cyclic symmetry of the amplitudes, it is sufficient if at least one external field is $\sfb$-exact.
        
        \paragraph{Underlying algebraic structure.}
        Ultimately, the dg commutative algebra $(\frB,\sfd,\sfm_2)$ and the differential $\sfb$ will form the structure of a $\BVbox$-algebra~\cite{Reiterer:2019dys,Borsten:2022vtg}, see also~\cite{Akman:1995tm}. We shall formalise and explore these in the remainder of this paper. Moreover, we shall extend this picture to CK duality involving matter (i.e.~fields taking values in representations of the gauge group that can be different from the adjoint representation). This leads to the notion of $\BVbox$-modules, following the discussion of~\cite{Borsten:2023reb}.
        
        \paragraph{Comment regarding the loop level.}
        Consider now the dg Lie algebra of a cubic BV action $S$ which has been gauge-fixed in the usual manner. Suppose that the dg Lie algebra structure can be colour--stripped and enhanced to a $\BVbox$-algebra with $\BBox$ the d'Alembertian $\wave$ and with a second-order differential $\sfb$. Using the Feynman rules following from $S$, we can write down the loop integrand for a Feynman diagram corresponding to a process by using the propagator $\frac{\sfb}{\wave}$ for each internal edge, the cubic interaction $[-,-]_\frg\otimes\sfm_2(-,-)$ for the vertices, and the cyclic structure to join loops formally. The resulting integrand for a trivalent graph $\Gamma$ is then of the form
        \begin{equation}
            I_\Gamma\ =\ \frac{\mathsf c_\Gamma N_\Gamma}{\prod_{e\in E(\Gamma)}\wave_e}~,
        \end{equation}
        where $N_\Gamma$ is a series of contractions of $\sfm_2$ and $\sfb$.
        
        Note that we can cut all loops open so that the loop diagram $\Gamma$ reduces to a tree. In this tree, we can use the derived bracket~\eqref{eq:kinematic_bracket} to bring all vertices to the form $[-,-]_\frg\otimes\{-,-\}$, cf.~\eqref{eq:CK_diagram}, as long as all fields attached to incoming lines are in $\ker(\sfb)$. Since we are working with a gauge-fixed action, there are no anti-fields running inside loops, so the above condition holds. Altogether, our vertices are described by pairs of Lie algebra structure constants, and CK duality holds at the level of loop integrands. 
        
        We note that the situation regarding the number of $\sfb$-operators that made the transition from currents to amplitudes subtle in the case of tree diagrams is absent for loops: each loop adds a propagator relative to the tree diagrams, increasing the number of $\sfb$-operators by one.
        
        \section{Colour--kinematics duality from \texorpdfstring{$\BVbox$}{BV-box}-algebras and their modules}\label{sec:BVBoxAlgebrasCKDuality}
        
        In this section, we fully develop the mathematical tools for an algebraic description of kinematic Lie algebras and colour--kinematics duality.
        
        \subsection{Pseudo-\texorpdfstring{$\BVbox$}{BV-box}-algebras and kinematic Lie algebras}
        
        \paragraph{Pseudo-$\BVbox$-algebras.}
        We start with the most general definition of an algebra that implies the existence of a kinematic Lie algebra.
        
        \begin{definition}\label{def:pre-BVbox-algebra}
            A \uline{pseudo-$\BVbox$-algebra} is a tuple $(\frB,\sfd,\sfm_2,\sfb)$ such that $(\frB,\sfd,\sfm_2)$ is a dg commutative algebra\footnote{We shall always assume that $\sfm_2$ is associative, that is, $\sfm_2(\sfm_2(\phi_1,\phi_2),\phi_3)=\sfm_2(\phi_1,\sfm_2(\phi_2,\phi_3))$ for all $\phi_{1,2,3}\in\frB$.} endowed with an additional differential $\sfb\colon\frB\rightarrow\frB$ of degree~$-1$ such that the \uline{derived bracket}
            \begin{equation}\label{eq:derived_bracket}
                \{\phi_1,\phi_2\}\ \coloneqq\ \sfb(\sfm_2(\phi_1,\phi_2))-\sfm_2(\sfb\phi_1,\phi_2)-(-1)^{|\phi_1|}\sfm_2(\phi_1,\sfb\phi_2)
            \end{equation}
            for all $\phi_{1,2}\in\frB$ defines a shifted Lie algebra. That is, besides the \uline{shifted anti-symmetry}\footnote{It is shifted graded anti-symmetric since the bracket carries a degree. We choose to work with this convention for shifted algebras, which is operadically natural, in order to simplify later discussion.}
            \begin{subequations}\label{eq:BV_GB1}
                \begin{equation}\label{eq:BV_GB_antisymmetry}
                    \{\phi_1,\phi_2\}\ =\ (-1)^{|\phi_1||\phi_2|}\{\phi_2,\phi_1\}
                \end{equation}
                implied by~\eqref{eq:derived_bracket}, we also have the \uline{shifted Jacobi identity}
                \begin{equation}\label{eq:BV_GB_Jacobi}
                    \begin{aligned}
                        \{\phi_1,\{\phi_2,\phi_3\}\}\ =\ (-1)^{|\phi_1|+1}\{\{\phi_1,\phi_2\},\phi_3\}+(-1)^{(|\phi_1|+1)(|\phi_2|+1)}\{\phi_2,\{\phi_1,\phi_3\}\}
                    \end{aligned}
                \end{equation}
            \end{subequations}
            for all $\phi_{1,2,3}\in\frB$. Furthermore, we set
            \begin{equation}\label{eq:box}
                \BBox\ \coloneqq\ [\sfd,\sfb]\ =\ \sfd\circ\sfb+\sfb\circ\sfd~.
            \end{equation}
        \end{definition}
        
        \noindent
        Hence, the derived bracket measures the failure of $\sfb$ to be a derivation for $\sfm_2$. Note that $[\sfd,\BBox]=0=[\sfb,\BBox]$. 
        
        A pseudo-$\BVbox$-algebra will turn out sufficient for describing CK duality of currents, but in order to extend the picture to amplitudes, we will also need a cyclic structure or metric.
        
        \begin{definition}\label{def:cyclicPreBVBoxAlgebra}
            A \uline{metric pseudo-$\BVbox$-algebra} is a pseudo-$\BVbox$-algebra $(\frB,\sfd,\sfm_2,\sfb)$ endowed with a non-degenerate graded symmetric bilinear map
            \begin{subequations}\label{eq:metricPreBVBox}
                \begin{equation}
                    \inner{-}{-}\,:\,\frB\times\frB\ \rightarrow\ \IR~,
                \end{equation}
                called a \uline{cyclic structure}, \uline{metric}, or \uline{inner product}, which is compatible with the pseudo-$\BVbox$-algebra structure in the sense that
                \begin{equation}\label{eq:axiomsMetric}
                    \begin{aligned}
                        \inner{\sfd \phi_1}{\phi_2}+(-1)^{|\phi_1|}\inner{\phi_1}{\sfd\phi_2}\ &=\ 0~,
                        \\
                        \inner{\sfm_2(\phi_1,\phi_2)}{\phi_3}-(-1)^{|\phi_1||\phi_2|}\inner{\phi_2}{\sfm_2(\phi_1,\phi_3)}\ &=\ 0~,
                        \\
                        \inner{\sfb \phi_1}{\phi_2}-(-1)^{|\phi_1|}\inner{\phi_1}{\sfb\phi_2}\ &=\ 0
                    \end{aligned}
                \end{equation}
            \end{subequations}            
            for all $\phi_{1,2,3}\in\frB$. We say that $\inner{-}{-}$ is \uline{of degree~$n$} if $\inner{\phi_1}{\phi_2}\neq 0$ implies $|\phi_1|+|\phi_2|+n=0$ for all $\phi_{1,2}\in\frB$.
        \end{definition}
        
        \noindent
        Note that combining~\eqref{eq:metricPreBVBox} with~\eqref{eq:derived_bracket} and~\eqref{eq:box}, we see that
        \begin{equation}
            \inner{\{\phi_1,\phi_2\}}{\phi_3}-(-1)^{|\phi_1||\phi_2|}\inner{\phi_2}{\{\phi_1,\phi_3\}}\ =\ 0
            \eand
            \inner{\BBox\phi_1}{\phi_2}\ =\ \inner{\phi_1}{\BBox\phi_2}
        \end{equation}
        for all $\phi_{1,2,3}\in\frB$.
        
        We will want to use the operator $\sfh=\sfid_\frg\otimes \BBox^{-1}\sfb$ for some Lie algebra $\frg$ as the contracting homotopy in a special deformation retract~\eqref{eq:deformation_retract}, and this will produce a Feynman diagram expansion. Among the general choices, the following is particularly relevant.
        \begin{definition}\label{def:b-complete}
            We call the operator $\sfb$ in a $\BVbox$-algebra $(\frB,\sfd,\sfm_2,\sfb)$ \uline{complete} if $\BBox^{-1}\sfb$ is the contracting homotopy in a special deformation retract to the cohomology $H^\bullet_\sfd(\frB)$ of the cochain complex $(\frB,\sfd)$. 
        \end{definition}
        \noindent Note that in this definition, we consider a `colour-stripped' form of the homotopy transfer~\eqref{eq:deformation_retract}. Physically, a $\BVbox$-algebra with complete operator $\sfb$ comes with a natural Feynman diagram expansion in which all non-physical fields are propagating and hence integrated out. 
        
        \paragraph{Kinematic Lie algebras.}
        Importantly, the shifted Jacobi identity~\eqref{eq:BV_GB_Jacobi} allows us to associate a Lie algebra with a pseudo-$\BVbox$-algebra.
        
        \begin{definition}\label{def:kinematicLieAlgebraPreBVBox}
            Given a pseudo-$\BVbox$-algebra $(\frB,\sfd,\sfm_2,\sfb)$ with derived bracket~\eqref{eq:derived_bracket}, we call the associated Lie algebra $\frKin(\frB)$ given by\footnote{We use square brackets $[k]$ with $k\in\IZ$ to denote a degree shift for a graded vector space $V=\bigoplus_{i\in\IZ}V_i$ by $V[k]=\bigoplus_{i\in\IZ}(V[k])_i\coloneqq\bigoplus_{i\in\IZ}V_{i+k}$.}
            \begin{equation}
                \frKin(\frB)\ \coloneqq \ (\frB[1],[-,-]_{\frKin(\frB)})
                \ewith
                \big[\phi_1[1],\phi_2[1]\big]_{\frKin(\frB)}\ \coloneqq\ (-1)^{|\phi_1|}\{\phi_1,\phi_2\}[1]
            \end{equation}
            for all $\phi_{1,2}[1]\in\frKin(\frB)$ the \uline{kinematic Lie algebra}.
        \end{definition}
        
        \noindent
        We note that the map $\frKin$ extends to a functor from the evident category of pseudo-$\BVbox$-algebras to the category of Lie algebras. 
        
        Our discussion in \cref{ssec:kinematic_Lie_algebras}, in particular the argument around~\eqref{eq:CK_diagram}, now yields the following result.
        
        \begin{theorem}\label{thm:preBVBoxImpliesKinematicLieAlgebra}
            A cubic gauge field theory comes with a kinematic Lie algebra if its underlying dg Lie algebra $(\frL,\mu_1,\mu_2)$ factorises into a Lie algebra $(\frg,[-,-]_\frg)$ and a pseudo-$\BVbox$-algebra $(\frB,\sfd,\sfm_2,\sfb)$ such that $\frL\cong\frg\otimes\frB$ and
            \begin{equation}
                \begin{aligned}
                    \mu_1(\tau_1\otimes\phi_1)\ &=\ \tau_1\otimes\sfd\phi_1~,
                    \\
                    \mu_2(\tau_1\otimes\phi_1,\tau_2\otimes\phi_2)\ &=\ [\tau_1,\tau_2]_\frg\otimes\sfm_2(\phi_1,\phi_2)
                \end{aligned}
            \end{equation}
            for all $\tau_{1,2}\in\frg$ and $\phi_{1,2}\in\frB$.
        \end{theorem}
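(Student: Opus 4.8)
The plan is to formalise the diagrammatic argument of \cref{ssec:kinematic_Lie_algebras} around~\eqref{eq:CK_diagram}. By \cref{def:kinematicLieAlgebraPreBVBox} the derived bracket~\eqref{eq:derived_bracket} already equips $\frB[1]$ with an honest Lie algebra $\frKin(\frB)$, so the substance of the theorem is not that this bracket exists, but that it is precisely the structure assembling the kinematic numerators $\sfn_\gamma$ of the currents, in exact parallel to the way $\frg$ assembles the colour numerators $\sfc_\gamma$. Given the factorisation $\frL\cong\frg\otimes\frB$ with the stated $\mu_1,\mu_2$, the Feynman expansion for a current uses the propagator $\sfh=\sfid_\frg\otimes\BBox^{-1}\sfb$ on each internal edge and the vertex $\mu_2=[-,-]_\frg\otimes\sfm_2$ at each node. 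First I would record that a current is a rooted binary tree: $n$ external inputs feed $n-1$ internal $\sfm_2$-vertices, and there is a bijection between these vertices and the $\BBox^{-1}\sfb$-propagators, each propagator sitting on the outgoing edge of a unique vertex.

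The key step is the re-assignment of the differential $\sfb$. Since $[\BBox,\sfb]=0$ implies $[\BBox^{-1},\sfb]=0$, I would commute the $\sfb$ in each propagator downward along its edge and combine it with the $\sfm_2$ of the vertex feeding that edge, replacing $\sfm_2$ followed by $\BBox^{-1}\sfb$ by $\sfb\sfm_2$ followed by $\BBox^{-1}$, exactly as in~\eqref{eq:CK_diagram}. The colour factor $[-,-]_\frg$ is untouched by this move, so it suffices to track the kinematic factor.

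The \emph{main obstacle} is to show that at every vertex the combination $\sfb\sfm_2$ reduces to the derived bracket $\{-,-\}$, which requires both inputs to lie in $\ker\sfb$: comparing $\sfb\sfm_2(\phi_1,\phi_2)$ with~\eqref{eq:derived_bracket}, the two differ by $\sfm_2(\sfb\phi_1,\phi_2)+(-1)^{|\phi_1|}\sfm_2(\phi_1,\sfb\phi_2)$. I would settle this by induction up the tree. At the leaves the external fields lie in $\ker\sfb$ (after gauge fixing, as noted in \cref{ssec:kinematic_Lie_algebras}), so the bottom vertices compute $\sfb\sfm_2=\{-,-\}$ exactly. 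For the inductive step, the output of any lower vertex is $\sfb$-exact, hence in $\ker\sfb$ since $\sfb^2=0$; and the remaining $\BBox^{-1}$ preserves $\ker\sfb$, since $\sfb\BBox^{-1}z=\BBox^{-1}\sfb z=0$ whenever $z$ is $\sfb$-closed. Thus every vertex sees inputs in $\ker\sfb$ and computes the derived bracket, so after re-assignment every node of the current is uniformly $[-,-]_\frg\otimes\{-,-\}$ and every internal line is the colour-blind scalar $\BBox^{-1}$.

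Finally I would conclude. Applying the degree shift of \cref{def:kinematicLieAlgebraPreBVBox}, the uniform vertex becomes $[-,-]_\frg\otimes[-,-]_{\frKin(\frB)}$, so the kinematic numerators are built from the structure constants of $\frKin(\frB)$ in the same combinatorial pattern as the colour numerators are built from those of $\frg$. Since $\{-,-\}$ obeys the shifted anti-symmetry~\eqref{eq:BV_GB_antisymmetry} and shifted Jacobi identity~\eqref{eq:BV_GB_Jacobi} guaranteed by the pseudo-$\BVbox$-algebra axioms, every colour relation in~\eqref{eq:CKcolor} is mirrored by the corresponding kinematic relation in~\eqref{eq:CKkinematic}, exhibiting $\frKin(\frB)$ as the kinematic Lie algebra of the theory.
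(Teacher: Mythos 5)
Your proposal is correct and follows essentially the same route as the paper, which proves this theorem by appeal to the diagrammatic argument around~\eqref{eq:CK_diagram} in \cref{ssec:kinematic_Lie_algebras}: reassigning the $\sfb$ in each propagator $\sfid_\frg\otimes\BBox^{-1}\sfb$ to the adjacent vertex, identifying $\sfb\sfm_2$ with the derived bracket on $\ker(\sfb)$, and invoking the shifted anti-symmetry and Jacobi identity of the pseudo-$\BVbox$-algebra. Your explicit induction up the tree (leaves in $\ker\sfb$ after gauge fixing, vertex outputs $\sfb$-exact hence $\sfb$-closed, $\BBox^{-1}$ preserving $\ker\sfb$) merely makes precise a step the paper leaves implicit, consistent with its corollary $\{\ker(\sfb),\ker(\sfb)\}\subseteq\im(\sfb)\subseteq\ker(\sfb)$.
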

        
        \noindent
        Note that $\frKin(\frB)$ together with $\sfd$ generally fails to be a dg Lie algebra as the following proposition makes clear.
        
        \begin{proposition}\label{prop:algebraRelationsDB}
            For any pseudo-$\BVbox$-algebra $(\frB,\sfd,\sfm_2,\sfb)$ with derived bracket~\eqref{eq:derived_bracket}, we have
            \begin{equation}
                \begin{aligned}
                    \sfd\{\phi_1,\phi_2\}\ &=\ -\{\sfd\phi_1,\phi_2\}-(-1)^{|\phi_1|}\{\phi_1,\sfd\phi_2\}
                    \\
                    &\kern1cm+\BBox(\sfm_2(\phi_1,\phi_2))-\sfm_2(\BBox\phi_1,\phi_2)-\sfm_2(\phi_1,\BBox\phi_2)~,
                    \\
                    \sfb\{\phi_1,\phi_2\}\ &=\ -\{\sfb\phi_1,\phi_2\}-(-1)^{|\phi_1|}\{\phi_1,\sfb\phi_2\}
                \end{aligned}
            \end{equation}
            for all $\phi_{1,2}\in\frB$. 
        \end{proposition}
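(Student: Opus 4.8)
The plan is to establish both identities by direct expansion, since each is a purely formal consequence of three ingredients: that $\sfd$ is a graded derivation of $\sfm_2$, that $\sfb$ squares to zero, and that $\BBox=\sfd\circ\sfb+\sfb\circ\sfd$ by definition~\eqref{eq:box}. Notably, neither associativity nor graded commutativity of $\sfm_2$ is needed, so both statements are really about the interplay of the degree-$(-1)$ operator $\sfb$ with the product $\sfm_2$ through the derived bracket~\eqref{eq:derived_bracket}. I would therefore expand each side from the definition and match term by term.

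For the second identity I would apply $\sfb$ to~\eqref{eq:derived_bracket}. The leading term $\sfb\sfb\,\sfm_2(\phi_1,\phi_2)$ vanishes by $\sfb^2=0$, leaving $\sfb\{\phi_1,\phi_2\}=-\sfb\,\sfm_2(\sfb\phi_1,\phi_2)-(-1)^{|\phi_1|}\sfb\,\sfm_2(\phi_1,\sfb\phi_2)$. Separately expanding $\{\sfb\phi_1,\phi_2\}$ and $\{\phi_1,\sfb\phi_2\}$ from~\eqref{eq:derived_bracket} — again using $\sfb^2=0$ to drop the terms with a doubled $\sfb$, and using $|\sfb\phi_1|=|\phi_1|-1$ for the signs — one finds that the two contributions $\pm(-1)^{|\phi_1|}\sfm_2(\sfb\phi_1,\sfb\phi_2)$ cancel, and the surviving terms reassemble exactly into $-\{\sfb\phi_1,\phi_2\}-(-1)^{|\phi_1|}\{\phi_1,\sfb\phi_2\}$. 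This yields the claim that $\sfb$ acts as a shifted derivation of the derived bracket.

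For the first identity I would apply $\sfd$ to~\eqref{eq:derived_bracket} and push it inward with the Leibniz rule $\sfd\,\sfm_2(\psi_1,\psi_2)=\sfm_2(\sfd\psi_1,\psi_2)+(-1)^{|\psi_1|}\sfm_2(\psi_1,\sfd\psi_2)$, substituting $\sfd\circ\sfb=\BBox-\sfb\circ\sfd$ wherever $\sfd$ lands on a $\sfb$. This produces exactly three terms containing $\BBox$, namely $\BBox(\sfm_2(\phi_1,\phi_2))-\sfm_2(\BBox\phi_1,\phi_2)-\sfm_2(\phi_1,\BBox\phi_2)$, the obstruction measuring the failure of $\BBox$ to be a derivation of $\sfm_2$. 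The remaining terms, all of the shape $\sfb\,\sfm_2(\sfd\phi_i,\cdot)$ or $\sfm_2(\sfb\sfd\phi_i,\cdot)$, are then matched — after expanding $\{\sfd\phi_1,\phi_2\}$ and $\{\phi_1,\sfd\phi_2\}$ directly from~\eqref{eq:derived_bracket} with $|\sfd\phi_1|=|\phi_1|+1$ — against $-\{\sfd\phi_1,\phi_2\}-(-1)^{|\phi_1|}\{\phi_1,\sfd\phi_2\}$, completing the identity.

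The only genuine difficulty is the sign bookkeeping: the degree shifts $|\sfb\phi_i|=|\phi_i|-1$ and $|\sfd\phi_i|=|\phi_i|+1$ feed into the Koszul signs of both the Leibniz rule and the defining formula~\eqref{eq:derived_bracket}, so the factors $(-1)^{|\phi_1|\pm1}$ must be tracked carefully to see the cancellations and matchings described above. There is no conceptual obstacle beyond this accounting, which is why the computation, though routine, must be carried out with care.
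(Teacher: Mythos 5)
Your proposal is correct and is essentially the paper's own proof, which is likewise a direct expansion of the derived bracket using $\sfb^2=0$, the Leibniz rule for $\sfd$, and $\BBox=\sfd\circ\sfb+\sfb\circ\sfd$; your sign bookkeeping (the cancellation of the $\pm(-1)^{|\phi_1|}\sfm_2(\sfb\phi_1,\sfb\phi_2)$ cross terms, and the three $\BBox$-terms arising from substituting $\sfd\circ\sfb=\BBox-\sfb\circ\sfd$) checks out. The observation that neither associativity nor graded commutativity of $\sfm_2$ is needed is accurate but does not change the route.
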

        
        \begin{proof}
            This follows from a straightforward calculation using the definition of the derived bracket~\eqref{eq:derived_bracket} together with the definition~\eqref{eq:box} of $\BBox$ and the fact that both $\sfd$ and $\sfb$ are differentials. The second equation has already been observed in~\cite{koszul1985crochet}, see also~\cite{Akman:1995tm}. 
        \end{proof}
        
        \noindent
        Put differently, this proposition says that, whilst $\sfb$ is a derivation for the derived bracket, $\sfd$ is not. This proposition also implies the following.
        \begin{corollary}
            With respect to the derived bracket~\eqref{eq:derived_bracket}, $\ker(\sfb)$ is closed. In fact, \eqref{eq:derived_bracket} implies that
            \begin{equation}
                \{\ker(\sfb),\ker(\sfb)\}\ \subseteq\ \im(\sfb)\ \subseteq\ \ker(\sfb)~.
            \end{equation}
        \end{corollary}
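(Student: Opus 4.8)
The plan is to read off both inclusions directly from the structural hypotheses, since the statement is a purely formal consequence of $\sfb$ being a differential together with the shape of the derived bracket \eqref{eq:derived_bracket}. First I would dispose of the rightmost inclusion $\im(\sfb)\subseteq\ker(\sfb)$, which is nothing more than a restatement of the defining property of $\sfb$ in \cref{def:pre-BVbox-algebra}: because $\sfb$ is a differential, $\sfb\circ\sfb=0$, so any element of the form $\sfb\chi$ is annihilated by $\sfb$.

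For the leftmost inclusion $\{\ker(\sfb),\ker(\sfb)\}\subseteq\im(\sfb)$, I would take $\phi_{1,2}\in\ker(\sfb)$, so $\sfb\phi_1=\sfb\phi_2=0$, and substitute directly into \eqref{eq:derived_bracket}. The two correction terms $\sfm_2(\sfb\phi_1,\phi_2)$ and $(-1)^{|\phi_1|}\sfm_2(\phi_1,\sfb\phi_2)$ then vanish, leaving $\{\phi_1,\phi_2\}=\sfb(\sfm_2(\phi_1,\phi_2))$, which is manifestly in $\im(\sfb)$. Chaining this with the first inclusion yields $\{\ker(\sfb),\ker(\sfb)\}\subseteq\im(\sfb)\subseteq\ker(\sfb)$, which in particular establishes that $\ker(\sfb)$ is closed under the derived bracket.

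I note that closedness of $\ker(\sfb)$ on its own could alternatively be extracted from the second identity of \cref{prop:algebraRelationsDB}, which exhibits $\sfb$ as a derivation of the derived bracket: with $\sfb\phi_1=\sfb\phi_2=0$ the right-hand side vanishes and hence $\sfb\{\phi_1,\phi_2\}=0$. I would nonetheless present the direct computation, since it produces the sharper conclusion that the bracket of two $\sfb$-closed elements is not merely $\sfb$-closed but in fact $\sfb$-exact. There is no genuine obstacle here; both steps are a single line of algebra, and the only point requiring a little care is precisely that the stronger claim $\subseteq\im(\sfb)$ demands expanding the derived bracket rather than merely invoking the derivation property of \cref{prop:algebraRelationsDB}.
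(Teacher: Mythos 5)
Your argument is correct and is exactly the one the paper intends: the containment $\im(\sfb)\subseteq\ker(\sfb)$ is $\sfb^2=0$, and setting $\sfb\phi_1=\sfb\phi_2=0$ in~\eqref{eq:derived_bracket} leaves $\{\phi_1,\phi_2\}=\sfb(\sfm_2(\phi_1,\phi_2))\in\im(\sfb)$, which is the sharper statement the corollary asserts. Your side remark that mere closedness also follows from the derivation property in \cref{prop:algebraRelationsDB} matches the paper's own framing of the corollary as a consequence of that proposition.
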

        
        Thus, in \cref{def:kinematicLieAlgebraPreBVBox}, we may restrict the kinematic Lie algebra $\frK\coloneqq \frKin(\frB)$ to a shifted Lie subalgebra $\tilde \frK$ with
        \begin{equation}\label{eq:fields_inbetween}
            \im(\sfb)[1]\ \subseteq \ \tilde\frK\ \subseteq \ \ker(\sfb)[1]~.
        \end{equation}
        For most physically interesting field theories, such as e.g.~Yang--Mills theory, we have $\im(\sfb)=\frF=\ker(\sfb)$, where $\frF$ is the space of fields (as opposed to anti-fields), at least after gauge fixing. For other theories, such as e.g.~Chern--Simons theory, $\im(\sfb)$ may be smaller than $\ker(\sfb)$ in general, but after gauge fixing, the space of fields $\frF$ satisfies~\eqref{eq:fields_inbetween}, as we shall see in \cref{ssec:gaugeFixedPreBVBox}. For an explicit example, see \cref{ssec:pure_Chern-Simons}. The kinematic Lie algebra that is usually discussed in the literature is the one restricted to fields, or further to physical fields. We therefore make the following definition:
        
        \begin{definition}\label{def:restricted_kinematic_Lie_algebra}
            The \uline{restricted kinematic Lie algebra} $\frKin^0(\frB)$ of a $\BVbox$-algebra $\frB$ is the Lie subalgebra
            \begin{equation}
                \frKin^0(\frB)\ \coloneqq\ \ker(\sfb)[1]\ \subseteq\ \frKin(\frB)~.
            \end{equation}
        \end{definition}
        
        \paragraph{Colour--kinematics duality.}
        We conclude with a sufficient criterion for CK duality. There are several restrictions for a theory with kinematic Lie algebra or, equivalently, pseudo-$\BVbox$-algebra to exhibit traditional CK duality.
        
        A theory with a pseudo-$\BVbox$-algebra $(\frB,\sfd,\sfm_2,\sfb)$ will produce a Feynman diagram expansion of currents that is naturally of the form~\eqref{eq:CK_currents_parameterization}. The `amputated correlators', i.e.~the currents paired off with the final propagators removed and paired off with fields using the cyclic structure, have a Feynman diagram expansion of the form~\eqref{eq:CK_amplitudes_parameterization} if at least one of the external fields lies in the image of $\sfb$.
        
        If now the operator $\sfb$ is complete, then in the Feynman diagram expansion all non-physical fields are propagating and hence integrated out. The above currents and amputated correlators become `physical currents' and `physical amplitudes' with expansions~\eqref{eq:CK_currents_parameterization} and~\eqref{eq:CK_amplitudes_parameterization}. 
        
        Finally, if the operator $\BBox$ is the d'Alembertian on the underlying space-time, then the amplitude parametrisation~\eqref{eq:CK_amplitudes_parameterization} is of the form conventionally discussed in the literature, i.e.~$d_\gamma$ is the product of $\frac{1}{p^2_\ell}$ ranging over all internal lines $\ell$. \cref{thm:preBVBoxImpliesKinematicLieAlgebra} therefore has the following immediate corollary.
        
        \begin{corollary}\label{cor:pure_gauge_CK_duality}
            Consider a cubic gauge field theory whose underlying dg Lie algebra factorises into a Lie algebra and a pseudo-$\BVbox$-algebra $(\frB,\sfd,\sfm_2,\sfb)$ with complete operator $\sfb$ and $\BBox=\wave$. Then the corresponding Feynman diagram expansion 
            yields a CK-dual parametrisation of the currents~\eqref{eq:CK_currents_parameterization} and a CK-dual parametrisation of the amplitudes~\eqref{eq:CK_amplitudes_parameterization} with at least one external field in the image of $\sfb$.
        \end{corollary}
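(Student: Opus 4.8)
The plan is to assemble the corollary from \cref{thm:preBVBoxImpliesKinematicLieAlgebra} together with the three successive refinements supplied by the hypotheses (completeness of $\sfb$, the Jacobi identity of the derived bracket, and $\BBox=\wave$), following the diagrammatic argument sketched around~\eqref{eq:CK_diagram}. First I would fix the Feynman diagram expansion: since $\sfb$ is complete in the sense of \cref{def:b-complete}, the operator $\sfh=\sfid_\frg\otimes\BBox^{-1}\sfb$ is a genuine contracting homotopy in a special deformation retract onto $H^\bullet_\sfd(\frB)$, so the homological perturbation lemma produces a tree-level expansion in which every internal line carries the propagator $\BBox^{-1}\sfb$ and every vertex carries $\mu_2=[-,-]_\frg\otimes\sfm_2(-,-)$. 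This is manifestly a sum over cubic graphs, and the colour factor attached to each graph is built solely from $[-,-]_\frg$, i.e.\ it is precisely the colour numerator $\sfc_\gamma$ of~\eqref{eq:CK_currents_parameterization}.

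Second I would reassign the $\sfb$-operators. Following~\eqref{eq:CK_diagram}, each $\sfb$ sitting on a propagator is carried along its unique anti-field line into the adjacent vertex and combined with $\sfm_2$. Because $\sfb$ is complete, all fields on incoming legs lie in $\ker(\sfb)$, so the terms $\sfm_2(\sfb\phi_1,\phi_2)$ and $\sfm_2(\phi_1,\sfb\phi_2)$ in~\eqref{eq:kinematic_bracket} that would otherwise arise vanish on those legs; what remains at each vertex is exactly the derived bracket $\{-,-\}$, i.e.\ the bracket of the kinematic Lie algebra $\frKin(\frB)$ of \cref{def:kinematicLieAlgebraPreBVBox}. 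The leftover factors $\BBox^{-1}$ on the internal lines furnish the denominators $d_\gamma$, and the residual kinematic data assemble into the kinematic numerator $\sfn_\gamma$. With $\BBox=\wave$ these denominators are $\prod_{\ell}\tfrac1{p_\ell^2}$, so the expansion takes exactly the conventional form~\eqref{eq:CK_currents_parameterization}.

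Third I would verify CK duality itself. The colour relations~\eqref{eq:CKcolor} for $\sfc_\gamma$ are the graded anti-symmetry and Jacobi identity of $(\frg,[-,-]_\frg)$. By construction the kinematic numerators $\sfn_\gamma$ are built from the derived bracket in precisely the same combinatorial pattern, so the shifted anti-symmetry~\eqref{eq:BV_GB_antisymmetry} and shifted Jacobi identity~\eqref{eq:BV_GB_Jacobi} — which hold by the very definition of a pseudo-$\BVbox$-algebra and make $\frKin(\frB)$ a Lie algebra — translate verbatim into the kinematic relations~\eqref{eq:CKkinematic}. This establishes CK duality for the currents. For the amputated correlators~\eqref{eq:CK_amplitudes_parameterization}, I would remove the final propagator and pair the emerging anti-field with a field through the cyclic structure; as explained around~\eqref{eq:Jacobi_triple}, the one external line $\phi_0$ then carries no $\sfb$, so the numerators of the $s$-, $t$- and $u$-channels do not immediately close under the Jacobi identity. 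Choosing that external field to be $\sfb$-exact, $\phi_0=\sfb\psi$, and using the self-adjointness of $\sfb$ from the metric axioms~\eqref{eq:axiomsMetric}, one moves the missing $\sfb$ onto $\psi$ and recovers the full Jacobi triple, which vanishes by~\eqref{eq:BV_GB_Jacobi}; hence~\eqref{eq:CKkinematic} holds for the amplitudes as well.

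The main obstacle is exactly this bookkeeping of $\sfb$-operators in the passage from currents to amplitudes. For currents the reassignment is clean because every internal propagator supplies precisely one $\sfb$ and the incoming fields are annihilated by $\sfb$; but amputation removes one propagator, and hence one $\sfb$, so the naive kinematic numerator is short one factor of $\sfb$ relative to what the Jacobi identity of $\frKin(\frB)$ requires. The hypothesis that at least one external field lie in $\im(\sfb)$ is what repairs this mismatch, and checking that the self-adjointness of $\sfb$ indeed transfers the operator to the correct leg without spurious signs is the one step that genuinely requires care rather than routine diagrammatics.
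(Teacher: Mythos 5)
Your proposal is correct and follows essentially the same route as the paper, which presents the corollary as an immediate consequence of \cref{thm:preBVBoxImpliesKinematicLieAlgebra} together with the diagrammatic reassignment of $\sfb$ around~\eqref{eq:CK_diagram} and the $\sfb$-exactness argument of~\eqref{eq:Jacobi_triple}. The only minor imprecision is attributing the vanishing of the terms $\sfm_2(\sfb\phi_1,\phi_2)$ on external legs to completeness of $\sfb$: in the paper this follows from gauge fixing, cf.~\eqref{eq:ker-im-constraints-b}, whereas completeness serves only to guarantee that all non-physical modes propagate, so that the currents and amplitudes in question are the physical ones.
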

        
        \noindent
        We note that, when considering physical amplitudes, the physical fields $\phi$ usually satisfy the gauge condition $\sfb\phi=0$, cf.~the examples in~\cref{sec:examples}. Moreover, in most physically interesting cases, the cohomology of $\sfb$ is trivial, so that a pseudo-$\BVbox$-algebra with structure $\BBox=\wave$ and all non-physical modes propagating directly implies CK duality of the amplitudes. 
        
        We also note that a CK-dual field theory does not necessarily have to have a kinematic Lie algebra. In particular, the parametrisation~\eqref{eq:CK_amplitudes_parameterization} does not have to come from the Feynman diagram expansion obtained from a path integral. 
        
        A pseudo-$\BVbox$-algebra structure as in \cref{cor:pure_gauge_CK_duality} with complete $\sfb$ and $\BBox=\wave$ implies full, off-shell CK duality of all tree-level correlators. Given an anomaly-free path-integral measure completing the action to a quantum theory, this is sufficient to obtain full loop level CK duality as we shall see later. In many concrete examples, however, CK duality only exists at the tree level, and this is then visible in various obstacles to obtain the above mentioned situation. For example, we saw that the field redefinitions introduced in~\cite{Borsten:2021gyl} to reformulate the Yang--Mills action such that it has an underlying pseudo-$\BVbox$-algebra introduced Jacobian counterterms leading to anomalies. In another case, the twistor description of supersymmetric Yang--Mills theory that where used to produce pseudo-$\BVbox$-algebra descriptions in~\cite{Borsten:2022vtg} come with a non-standard $\BBox$-operator. Finally, in the case of pure spinors~\cite{Borsten:2023reb}, the tree-level constructions did not lift to the loop level, as there was again a problem with the regularisation, cf.~\cref{ssec:pure_spinors_SYM}. This problem is expected and unavoidable due to the results of~\cite{Bern:2015ooa}.
        
        \subsection{Modules over pseudo-\texorpdfstring{$\BVbox$}{BV-box}-algebras}\label{ssec:BVbox-modules}
        
        \paragraph{Pseudo-$\BVbox$-modules.}
        For CK-dual field theories involving matter fields, that is, fields which do not take values in the gauge Lie algebra $\frg$, we need to extend the concept of a pseudo-$\BVbox$-algebra to a pseudo-$\BVbox$-module.
        
        \begin{definition}\label{def:preBVBoxModule}
            A \uline{module over a pseudo-$\BVbox$-algebra} $(\frB,\sfd_\frB,\sfm_2,\sfb_\frB)$ is a tuple $(V,\sfd_V,\acton_V,\sfb_V)$ such that $(V,\sfd_V,\acton_V)$ is a (left) module over the dg commutative algebra $(\frB,\rmd_\frB,\sfm_2)$ with the action $\acton_V\colon\frB\times V\rightarrow V$ of degree~$0$ and which is endowed with an additional differential $\sfb_V\colon V\rightarrow V$ of degree~$-1$ such that the \uline{derived bracket}
            \begin{equation}\label{eq:derived_bracket_module}
                \{\phi,v\}_V\ \coloneqq\ \sfb_V(\phi\acton_Vv)-(\sfb_\frB\phi)\acton_Vv-(-1)^{|\phi|}\phi\acton_V(\sfb_Vv)
            \end{equation}
            for all $\phi\in\frB$ and $v\in V$ satisfies
            \begin{equation}\label{eq:derivedBracketModuleJacobi}
                \{\phi_1,\{\phi_2,v\}_V\}_V\ =\ (-1)^{|\phi_1|+1}\{\{\phi_1,\phi_2\}_\frB,v\}_V+(-1)^{(|\phi_1|+1)(|\phi_2|+1)}\{\phi_2,\{\phi_1,v\}_V\}_V
            \end{equation}
            for all $\phi_{1,2}\in\frB$ and $v\in V$, where $\{-,-\}_\frB$ is the derived bracket~\eqref{eq:derived_bracket}. Furthermore, we set
            \begin{equation}
                \BBox_V\ \coloneqq\ [\sfd_V,\sfb_V]\ =\ \sfd_V\circ\sfb_V+\sfb_V\circ\sfd_V~.
            \end{equation}
            Finally, in analogy with~\cref{def:b-complete}, we call the operator $\sfb_V$ \uline{complete} if $\BBox_V^{-1}\otimes \sfb_V$ is the contracting homotopy in a special deformation retract to the cohomology $H^\bullet_{\sfd_V}(V)$ of the cochain complex $(V,\sfd_V)$. 
        \end{definition}
        \noindent
        When there is no confusion, we will drop the subscripts $V$ and $\frB$ on all the operations. We also note that, for all physical applications, pseudo-$\BVbox$-modules with $V$ concentrated in degrees $1$ (fields) and $2$ 
        (anti-fields) will turn out to be sufficient. 
        
        Just as for $\BVbox$-algebras, we also need to introduce a metric to talk about action principles and amplitudes.
        \begin{definition}
            A \uline{metric} of degree~$n$ on a module $(V,\sfd_V)$ over a dg Lie algebra $(\frg,\sfd_\frg)$ is a non-degenerate bilinear graded-symmetric map of degree~$n$
            \begin{equation}
                \inner{-}{-}_V\,:\,V\times V\ \rightarrow\ \IR
            \end{equation}
            such that
            \begin{equation}
                \begin{aligned}
                    \inner{v_1}{\sfd_Vv_2}_V+(-1)^{|v_1|}\inner{\sfd_Vv_1}{v_2}_V\ &=\ 0~,
                    \\
                    \inner{\phi\acton_V v_1}{v_2}_V-(-1)^{|\phi||v_1|}\inner{v_2}{\phi\acton_V v_1}_V\ &=\ 0
                \end{aligned}
            \end{equation}
            for all $v_{1,2}\in V$ and $\phi\in\frg$.
            A \uline{metric dg Lie module} is a dg Lie module equipped with a metric.
            
            A \uline{metric} on a pseudo-$\BVbox$ module is defined in the same way, with the evident compatibility condition with $\sfb_V$; a \uline{metric pseudo-$\BVbox$ module} is a pseudo-$\BVbox$ module equipped with a metric.
        \end{definition}
        
        Note that on a cyclic module $V$ over a cyclic dg Lie algebra $\frg$, one can define a graded-anti-symmetric bilinear operation $\wedge_V$ as
        \begin{equation}
            \inner{\phi}{v_1\wedge_V v_2}_\frg\ \coloneqq\ \inner{\phi\acton_V v_1}{v_2}_V
        \end{equation}
        for any $v_{1,2}\in V$ and $\phi\in\frg$. Similarly, on a cyclic module $V$ over a cyclic pseudo-$\BVbox$-algebra $\frB$, one can define a graded-symmetric bilinear operation $\bullet_V$ as
        \begin{equation}
            \inner{\phi}{v_1\bullet_Vv_2}_\frB\ \coloneqq\ \inner{\phi\acton_V v_1}{v_2}
        \end{equation}
        for any $v_{1,2}\in V$ and $\phi\in\frB$.
        
        We now have the following result.
        
        \begin{proposition}\label{prop:kinematic_module}
            Given a module $V=(V,\sfd_V,\acton_V,\sfb_V)$ over a pseudo-$\BVbox$-algebra $\frB=(\frB,\sfd_\frB,\sfm_2,\sfb_\frB)$, we have a graded (left) module $(\frV,\acton_\frV)$ over the kinematic Lie algebra $\frKin(\frB)$ with $\frV\coloneqq V[1]$ and
            \begin{equation}\label{eq:kinematic_module_action}
                \begin{aligned}
                    \acton_\frV\,:\,\frKin(\frB)\times\frV\ &\rightarrow\ \frV~,
                    \\
                    \phi[1]\acton_\frV v[1]\ &\coloneqq\ (-1)^{|\phi|}\{\phi,v\}_V[1]
                \end{aligned}
            \end{equation}
            for all $\phi[1]\in\frKin(\frB)$ and $v[1]\in\frV$ with $\{-,-\}_V$ denoting the derived bracket~\eqref{eq:derived_bracket_module}.
        \end{proposition}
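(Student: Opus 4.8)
The plan is to verify directly that the map $\acton_\frV$ defined in \eqref{eq:kinematic_module_action} satisfies the defining compatibility axiom of a graded left module over the graded Lie algebra $\frKin(\frB)$, that is,
\[
    [\phi_1[1],\phi_2[1]]_{\frKin(\frB)}\acton_\frV v[1]\ =\ \phi_1[1]\acton_\frV(\phi_2[1]\acton_\frV v[1])-(-1)^{(|\phi_1|-1)(|\phi_2|-1)}\phi_2[1]\acton_\frV(\phi_1[1]\acton_\frV v[1])
\]
for all homogeneous $\phi_{1,2}\in\frB$ and $v\in V$, where I use that the degree of $\phi[1]$ in $\frKin(\frB)=\frB[1]$ is $|\phi|-1$. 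Bilinearity and degree-$0$ of $\acton_\frV$ are immediate from the corresponding properties of the module derived bracket $\{-,-\}_V$ of \eqref{eq:derived_bracket_module} (which has degree $-1$), so the entire content of the proposition sits in this single identity. The structure mirrors exactly the de-shifting argument that turns the shifted Jacobi identity \eqref{eq:BV_GB_Jacobi} into the genuine graded Jacobi identity for the bracket of $\frKin(\frB)$ in \cref{def:kinematicLieAlgebraPreBVBox}; the present statement is its module analogue.

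First I would expand the left-hand side: inserting $[\phi_1[1],\phi_2[1]]_{\frKin(\frB)}=(-1)^{|\phi_1|}\{\phi_1,\phi_2\}[1]$ and then applying $\acton_\frV$ produces the sign $(-1)^{|\{\phi_1,\phi_2\}|}=(-1)^{|\phi_1|+|\phi_2|-1}$, so that the left-hand side equals $(-1)^{|\phi_2|+1}\{\{\phi_1,\phi_2\},v\}_V[1]$. Next I would expand the two nested terms on the right-hand side, picking up a factor $(-1)^{|\phi_i|}$ at each application of $\acton_\frV$ together with the Koszul sign $(-1)^{(|\phi_1|-1)(|\phi_2|-1)}$; after collecting these, the right-hand side becomes
\[
    (-1)^{|\phi_1|+|\phi_2|}\{\phi_1,\{\phi_2,v\}_V\}_V[1]+(-1)^{|\phi_1||\phi_2|}\{\phi_2,\{\phi_1,v\}_V\}_V[1]~.
\]

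The final step is to recognise that equality of the two sides is precisely the shifted module Jacobi identity \eqref{eq:derivedBracketModuleJacobi}, which holds by the definition of a pseudo-$\BVbox$-module. Concretely, multiplying \eqref{eq:derivedBracketModuleJacobi} through by $(-1)^{|\phi_1|+|\phi_2|}$ rewrites its left-hand term as $(-1)^{|\phi_1|+|\phi_2|}\{\phi_1,\{\phi_2,v\}_V\}_V$ and produces on the other side the term $(-1)^{|\phi_2|+1}\{\{\phi_1,\phi_2\},v\}_V$ together with a term $(-1)^{|\phi_1||\phi_2|+1}\{\phi_2,\{\phi_1,v\}_V\}_V$ that cancels against the second summand displayed above, leaving exactly the required identity. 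The only genuine subtlety — and the step I expect to be the main obstacle — is the Koszul sign bookkeeping: the shift $[1]$ flips all parities, so every sign must be recomputed with shifted degrees, and the twists $(-1)^{|\phi|}$ appearing in both the kinematic bracket and in $\acton_\frV$ must be threaded consistently through the nested applications. Once the conventions are fixed as in \cref{def:kinematicLieAlgebraPreBVBox}, the computation is routine and \eqref{eq:derivedBracketModuleJacobi} supplies the result with no further input.
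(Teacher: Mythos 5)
Your proposal is correct and follows essentially the same route as the paper's own proof: unwind the definitions of the kinematic bracket and of $\acton_\frV$, track the Koszul signs through the shift, and invoke the module Jacobi identity~\eqref{eq:derivedBracketModuleJacobi} to match the two sides. The sign bookkeeping you carry out (e.g.\ the left-hand side reducing to $(-1)^{|\phi_2|+1}\{\{\phi_1,\phi_2\},v\}_V[1]$ and the cancellation after multiplying~\eqref{eq:derivedBracketModuleJacobi} by $(-1)^{|\phi_1|+|\phi_2|}$) reproduces the paper's computation exactly.
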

        
        \begin{proof}
            By direct calculation, cf.~\cref{app:postponed_proofs}.
        \end{proof}
        
        \paragraph{Gauge--matter colour--kinematics duality.}
        It is now easy to see that these structures are the appropriate ones for capturing gauge--matter CK duality. Firstly, as a direct extension of \cref{thm:preBVBoxImpliesKinematicLieAlgebra}, we have the following result.
        
        \begin{theorem}
            A cubic gauge--matter theory has a kinematic Lie algebra with Lie algebra module if its underlying dg Lie algebra factorises into a Lie algebra representation and a pseudo-$\BVbox$-algebra with pseudo-$\BVbox$-module.
        \end{theorem}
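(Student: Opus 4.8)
The plan is to assemble the three ingredients already in place---the colour--flavour-stripping of \cref{prop:colour-flavour-stripping}, the kinematic Lie algebra of \cref{def:kinematicLieAlgebraPreBVBox}, and the kinematic module of \cref{prop:kinematic_module}---and then to rerun the diagrammatic argument of \cref{ssec:kinematic_Lie_algebras} with matter lines included. First I would make the factorisation explicit. Forgetting the degree~$-1$ differentials $\sfb_\frB$ and $\sfb_V$, the data of a metric Lie algebra $\frg$ with metric representation $R$, a pseudo-$\BVbox$-algebra $\frB$, and a pseudo-$\BVbox$-module $V$ is exactly the input of \cref{def:colour-flavour-stripping}. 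Hence by \cref{prop:colour-flavour-stripping} the tensor product
\[
    \frL\ \coloneqq\ (\frg\otimes\frB)\oplus(R\otimes V)\,,
\]
together with the maps $\mu_1,\mu_2$ and the metric of~\eqref{eq:colour-flavour-stripping}, is a metric dg Lie algebra; this is the underlying dg Lie algebra of the cubic gauge--matter theory. In particular, the vertex~\eqref{eq:cubicInteractionIII} is fixed by the vertex~\eqref{eq:cubicInteractionII} through cyclicity~\eqref{eq:special_cyclicity}, so that no independent data is attached to it.

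Next I would perform the propagator reassignment. The relevant Feynman diagram expansion uses the contracting homotopy $\sfid_\frg\otimes\BBox^{-1}\sfb_\frB$ on the adjoint sector $\frg\otimes\frB$ and $\sfid_R\otimes\BBox_V^{-1}\sfb_V$ on the matter sector $R\otimes V$. Transporting each $\sfb_\frB$ or $\sfb_V$ along its unique anti-field line into the adjacent vertex, exactly as in~\eqref{eq:CK_diagram}, turns an adjoint--adjoint vertex into the kinematic bracket $\{-,-\}_\frB$---this is \cref{thm:preBVBoxImpliesKinematicLieAlgebra}---and turns an adjoint--matter vertex into the module derived bracket $\{-,-\}_V$ of~\eqref{eq:derived_bracket_module}. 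Thus, after stripping off the colour factors carried by $\frg$ and $R$, the purely kinematic structure constants are precisely those of the bracket of $\frKin(\frB)$ and of the action $\acton_\frV$ of~\eqref{eq:kinematic_module_action}.

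To conclude, the shifted Jacobi identity~\eqref{eq:BV_GB_Jacobi} of the pseudo-$\BVbox$-algebra guarantees that $\frKin(\frB)$ of \cref{def:kinematicLieAlgebraPreBVBox} is a genuine Lie algebra, while the module compatibility~\eqref{eq:derivedBracketModuleJacobi} of the pseudo-$\BVbox$-module guarantees, via \cref{prop:kinematic_module}, that $\frV\coloneqq V[1]$ is a genuine left module over it. Since these are exactly the objects whose structure constants were read off the vertices in the previous step, the gauge--matter theory indeed carries the asserted kinematic Lie algebra together with its Lie algebra module.

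I expect the one delicate point to be the bookkeeping at the mixed vertices: one must check that assigning $\sfb_\frB$ along an adjoint anti-field line and $\sfb_V$ along a matter anti-field line is mutually consistent, so that the vertices~\eqref{eq:cubicInteractionII} and~\eqref{eq:cubicInteractionIII} contribute a single, compatible module action rather than two independent brackets. Because \cref{prop:kinematic_module} already establishes the module axioms for $\acton_\frV$, this consistency is the only genuine obstacle, and it is resolved by the cyclicity relation~\eqref{eq:special_cyclicity} in precisely the same way as in the purely adjoint case.
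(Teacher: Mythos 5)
Your proposal is correct and follows essentially the same route as the paper: the paper's own justification is precisely the diagrammatic argument with the combined propagator $\sfid_\frg\otimes\BBox_\frB^{-1}\sfb_\frB+\sfid_V\otimes\BBox_V^{-1}\sfb_V$, the reassignment of $\sfb_\frB$ and $\sfb_V$ to the vertices as in~\eqref{eq:CK_diagram}, and the identification of the resulting vertices with the derived brackets~\eqref{eq:derived_bracket} and~\eqref{eq:derived_bracket_module}, whose algebraic properties are supplied by \cref{prop:kinematic_module}. Your extra care about the mixed vertex~\eqref{eq:cubicInteractionIII} being determined by~\eqref{eq:cubicInteractionII} via~\eqref{eq:special_cyclicity} is a reasonable elaboration of a point the paper leaves implicit.
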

        
        \noindent
        Explicitly, we consider the Feynman diagram expansion induced by the pseudo-$\BVbox$-algebra and its module, which uses the propagator $\sfid_\frg\otimes \BBox_\frB^{-1}\sfb_\frB+\sfid_V\otimes \BBox_V^{-1}\sfb_V$. The operators $\sfb$ are then moved from the propagators to the interaction vertices, as indicated in~\eqref{eq:CK_diagram}. This turns the interaction vertices into derived brackets of the form~\eqref{eq:derived_bracket} or~\eqref{eq:derived_bracket_module}. Hence, the Feynman diagram expansion of currents possesses a kinematic Lie algebra with Lie algebra module, which extends to amplitudes with at least one external leg in the image of $\sfb_\frB$ or $\sfb_V$.
        
        As in the pure gauge case, the above theorem has the following corollary, the analogue of~\cref{cor:pure_gauge_CK_duality}, which provides a sufficient criterion for gauge--matter theories to possess CK duality.
        
        \begin{corollary}
            The Feynman diagram expansion of a cubic gauge--matter theory whose underlying dg Lie algebra factorises into a Lie algebra representation and a pseudo-$\BVbox$-algebra $(\frB,\sfd_\frB,\sfm_2,\sfb_\frB)$ with $\BBox_\frB=\wave$ together with a module $(V,\sfd_V,\acton_V,\sfb_V)$ over a pseudo-$\BVbox$-algebra with $\BBox_V=\wave$ and both $\sfb_\frB$ and $\sfb_V$ complete yields a gauge--matter CK-dual parametrisation of the physical currents and a gauge--matter CK-dual parametrisation of the physical amplitudes with at least one external field in the image of $\sfb_\frB$ or $\sfb_V$. 
        \end{corollary}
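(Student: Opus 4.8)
The plan is to mirror the proof of \cref{cor:pure_gauge_CK_duality} almost verbatim, now carrying along the pseudo-$\BVbox$-module structure on $V$. The preceding theorem already supplies the structural core: the factorisation into a Lie algebra representation together with the pair $(\frB,V)$ guarantees that the Feynman diagram expansion built from the propagator $\sfid_\frg\otimes\BBox_\frB^{-1}\sfb_\frB+\sfid_V\otimes\BBox_V^{-1}\sfb_V$ and the cubic vertices produces current parametrisations of the form \eqref{eq:CK_currents_parameterization}, with the edges now decorated by either the adjoint or the matter representation as in \eqref{eq:CK_amplitudes_parameterization_matter}. First I would recall the sliding move of \eqref{eq:CK_diagram}: each $\sfb$ sitting on an internal line is transported along its unique anti-field leg to the adjacent vertex and absorbed into $\sfm_2$ or $\acton_V$. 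Performing this on adjoint lines converts the corresponding vertices of type \eqref{eq:cubicInteractionI} into the derived bracket \eqref{eq:derived_bracket}, while performing it on matter lines converts the mixed vertices \eqref{eq:cubicInteractionII}--\eqref{eq:cubicInteractionIII} into the module derived bracket \eqref{eq:derived_bracket_module}; the residual scalar factors $\BBox^{-1}$ remain on the internal edges.

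Next I would invoke the algebraic input. By \cref{def:kinematicLieAlgebraPreBVBox} and \cref{prop:kinematic_module}, the shifted Jacobi identities \eqref{eq:BV_GB_Jacobi} and \eqref{eq:derivedBracketModuleJacobi} assemble the reassigned vertices into the kinematic Lie algebra $\frKin(\frB)$ acting on the module $\frV=V[1]$. Consequently the kinematic numerators satisfy exactly the antisymmetry and Jacobi-type relations \eqref{eq:CKkinematic} that the colour numerators satisfy, and crucially these hold regardless of whether the colour relation in \eqref{eq:CKcolor} arises from the gauge Jacobi identity, from the representation commutators in $R$, or from a combination of the two: the module derived bracket reproduces precisely the mixed relations. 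This is what upgrades the parametrisation from merely of the form \eqref{eq:CK_amplitudes_parameterization_matter} to a genuinely gauge--matter CK-dual one.

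The remaining two hypotheses then do exactly what they do in the pure gauge case. Completeness of both $\sfb_\frB$ and $\sfb_V$ in the sense of \cref{def:preBVBoxModule} guarantees that $\BBox_\frB^{-1}\sfb_\frB$ and $\BBox_V^{-1}\sfb_V$ are contracting homotopies in special deformation retracts onto the cohomologies $H^\bullet_{\sfd_\frB}(\frB)$ and $H^\bullet_{\sfd_V}(V)$; physically this means every non-physical mode propagates and is integrated out, so the currents and amputated correlators are genuinely physical. Setting $\BBox_\frB=\BBox_V=\wave$ then makes each $\BBox^{-1}$ contribute the standard denominator $\tfrac{1}{p_\ell^2}$ on each internal line, so that $d_\gamma=\prod_\ell\tfrac{1}{p_\ell^2}$ and the parametrisation takes the conventional form. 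For the amplitude statement I would follow \cref{ssec:kinematic_Lie_algebras}: amputating the outgoing propagator and pairing with the cyclic structure, and using that $\sfb_\frB$ and $\sfb_V$ are their own adjoints (the $\sfb$-compatibility in \cref{def:cyclicPreBVBoxAlgebra} and its module analogue), lets one transport the surplus $\sfb$ onto a remaining external leg exactly as in \eqref{eq:Jacobi_triple}, provided at least one external field lies in $\im(\sfb_\frB)$ or $\im(\sfb_V)$.

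The main obstacle, as in the pure gauge argument, is the $\sfb$-counting bookkeeping at the current-to-amplitude transition: one must check that after amputation and cyclic pairing the derived-bracket structure still closes, and that the hypothesis ``at least one external field in the image of $\sfb$'' is precisely what guarantees the kinematic Jacobi triple \eqref{eq:Jacobi_triple} vanishes. The genuinely new point here is that the slide must be performed consistently on the two distinct sorts of lines, and one must verify that the signs in the module derived bracket \eqref{eq:derived_bracket_module} are exactly those needed so that the mixed colour relations -- those following from the $R$-commutators rather than the $\frg$-Jacobi identity -- are mirrored kinematically by \eqref{eq:derivedBracketModuleJacobi}. Once this sign-and-sorting check is in place, the corollary follows by the same reasoning as \cref{cor:pure_gauge_CK_duality}.
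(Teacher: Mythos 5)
Your proposal is correct and follows essentially the same route as the paper: the paper treats this corollary as an immediate analogue of \cref{cor:pure_gauge_CK_duality}, obtained by running the Feynman expansion with the combined propagator $\sfid_\frg\otimes\BBox_\frB^{-1}\sfb_\frB+\sfid_V\otimes\BBox_V^{-1}\sfb_V$, sliding the $\sfb$-operators onto the vertices to produce the derived brackets \eqref{eq:derived_bracket} and \eqref{eq:derived_bracket_module}, and then using completeness, $\BBox=\wave$, and the $\sfb$-exactness of an external leg exactly as you describe.
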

        
        \subsection{Pseudo-\texorpdfstring{$\BVbox$}{BV-box}-algebras and their modules over Hopf algebras}\label{ssec:BVbox-modules-over-Hopf}
        
        For technical reasons, it is convenient to define and work with the notion of a pseudo-$\BVbox$-algebra over a Hopf algebra, following~\cite{Reiterer:2019dys}. The technical reasons are twofold. Firstly, in future work~\cite{Borsten:2022ouu}, we intend to give the full homotopy algebraic picture, lifting the restriction to cubic actions; in this case, it is convenient to work with the framework of operadic Koszul duality, for which the Hopf algebra (that provides an ambient symmetric monoidal category) will be necessary. Secondly, our discussion of the double copy to ordinary space-time (as opposed to a double field theory on doubled space) is most easily understood using tensor products over Hopf algebras.
        
        \paragraph{Hopf algebras.}
        Let us first recall some relevant definitions.
        
        \begin{definition}\label{def:HopfAlgebra}
            A \uline{bialgebra} over $\IR$ is a tuple $(\frH,\Delta,\epsilon,\unit)$, where $(\frH,\unit)$ is an associative unital algebra over $\IR$ and $\Delta\colon\frH\to\frH\otimes\frH$ (the \uline{coproduct}) and $\epsilon\colon\frH\to\IR$ (the \uline{counit}) are unital homomorphisms of $\IR$-algebras such that $\Delta$ is coassociative,
            \begin{equation}\label{eq:coassociativity}
                (\Delta\otimes \id_\frH)\Delta\ =\ (\id_\frH\otimes \Delta)\Delta~,
            \end{equation}
            and $\epsilon$ is indeed a counit,
            \begin{equation}\label{eq:counitality}
                (\id_\frH\otimes\,\epsilon)\Delta\ =\ \id_\frH\ =\ (\epsilon\otimes\id_\frH)\Delta~.
            \end{equation}
        \end{definition}
        It will be convenient to use the common (sumless) Sweedler notation 
        \begin{equation}\label{eq:Sweedler_notation}
            \chi^{(1)}\otimes\chi^{(2)}\ \coloneqq\ \Delta(\chi)
        \end{equation}
        for $\chi\in \frH$, and in this notation, \eqref{eq:coassociativity} and~\eqref{eq:counitality} read as 
        \begin{equation}
            \begin{aligned}
                \chi^{(1)}\otimes\big((\chi^{(2)})^{(1)}\otimes(\chi^{(2)})^{(2)}\big)\ &=\ \big((\chi^{(1)})^{(1)}\otimes(\chi^{(1)})^{(2)}\big)\otimes\chi^{(2)}~, 
                \\ 
                \epsilon(\chi^{(1)})\chi^{(2)}\ =\ &~\chi \ =\ \chi^{(1)}\epsilon(\chi^{(2)})~. 
            \end{aligned}
        \end{equation}
        
        \begin{definition}
            A bialgebra $(\frH,\Delta,\epsilon)$ is called \uline{commutative} if the algebra $\frH$ is commutative; it is called \uline{cocommutative} if it satisfies the condition
            \begin{equation}
                \chi^{(1)}\otimes\chi^{(2)}\ =\ \chi^{(2)}\otimes\chi^{(1)}
            \end{equation}
            for all $\chi\in\frH$.
            
            A \uline{Hopf algebra} over $\IR$ is a tuple $(\frH,\Delta,\epsilon,S)$ where $(\frH,\Delta,\epsilon)$ is a bialgebra and where $S\colon\frH\rightarrow\frH$ is an $\IR$-linear map (the \uline{antipode}) such that
            \begin{equation}
                S(\chi^{(1)})\chi^{(2)}\ =\ \chi^{(1)}S(\chi^{(2)})\ =\ \epsilon(\chi)\unit
            \end{equation}
            for all $\chi\in\frH$.
        \end{definition}
        
        \noindent
        In the following, we shall always work with restrictedly tensorable (see \cref{def:restrictedly_tensorable}) cocommutative Hopf algebras over $\IR$.\footnote{In this paper, we do not really need the antipode, so it suffices to work with bialgebras. However, the antipode will become important for operadic Koszul duality.} A trivial example of such a Hopf algebra is $\IR$ itself with the ordinary product and all other maps trivial. Another important example to our discussion is the following.
        
        \begin{example}\label{ex:H-Box-Minkowski}
            Let $\IM^d\coloneqq\IR^{1,d-1}$ be $d$-dimensional Minkowski space with metric tensor $\eta=\diag(-1,1\ldots,1)$ and Cartesian coordinates $x^\mu$ with $\mu,\nu,\ldots=0,\ldots,d-1$. The Hopf algebra $\frH_{\IM^d}$ is the Hopf algebra of differential operators with constant coefficients on $\IM^d$ that is generated by the partial derivatives $\parder{x^\mu}$. 
            
            Explicitly, $\frH_{\IM^d}$ is the vector space of power series in the partial derivative $\parder{x^\mu}$ with unit $\unit=1$ and evident product. The coproduct on elements in $\frH_{\IM^d}$ is fully defined by unitality and the Leibniz rule, 
            \begin{equation}
                \Delta(1)\ =\ 1\otimes 1
                \eand
                \Delta\left(\parder{x^\mu}\right)\ =\ \parder{x^\mu}\otimes 1+1\otimes \parder{x^\mu}~,
            \end{equation}
            and the counit is the projection onto the constant part of the power series, i.e.
            \begin{equation}
                \eps(1)\ =\ 1
                \eand 
                \eps\left(\parder{x^\mu}\right)\ =\ 0~.
            \end{equation}
            Finally, the antipode is defined by
            \begin{equation}
                S(1)\ =\ 1~,~~~S(\chi_1\chi_2)=S(\chi_2)S(\chi_1)~,\eand
                S\left(\parder{x^\mu}\right)\ =\ -\parder{x^\mu}~.
            \end{equation}
            This Hopf algebra is evidently commutative (hence restrictedly tensorable) and cocommutative.
        \end{example}
        
        \paragraph{Pseudo-$\BVbox$-algebras and modules over Hopf algebras.}
        We start with the obvious notion of a dg commutative algebra over $\frH$.
        
        \begin{definition}\label{def:cdgaHopfAlgebra}
            A \uline{differential graded (dg) commutative algebra} over a cocommutative Hopf algebra $\frH$ is a tuple $(\frC,\sfd,\sfm_2,\acton)$ such that $(\frC,\sfd,\sfm_2)$ is a dg commutative algebra, $(\frC,\acton)$ is a graded (left) module over $\frH$ with an action $\acton\colon\frH\times\frC\rightarrow\frC$ of degree~$0$, and the differential $\sfd$ and the product $\sfm_2$ are $\frH$-linear in the sense that
            \begin{equation}
                \begin{aligned}
                    \chi\acton\sfd\phi_1\ &=\ \sfd(\chi\acton\phi_1)~,
                    \\
                    \chi\acton\sfm_2(\phi_1,\phi_2)\ &=\ \sfm_2(\chi^{(1)}\acton\phi_1,\chi^{(2)}\acton\phi_2)
                \end{aligned}
            \end{equation}
            for all $\chi\in\frH$ and $\phi_{1,2}\in\frC$, where we use again the Sweedler notation~\eqref{eq:Sweedler_notation}.
        \end{definition}
        
        \noindent
        This notion extends to pseudo-$\BVbox$-algebras over $\frH$, where we additionally demand that $\BBox\in\frH$.
        
        \begin{definition}\label{def:preBVBoxAlgebraHopfAlgebra}
            A \uline{pseudo-$\BVbox$-algebra} over a cocommutative Hopf algebra $\frH$ is a tuple $(\frB,\sfd,\sfm_2,\sfb,\acton)$ such that $(\frB,\sfd,\sfm_2,\sfb)$ is a pseudo-$\BVbox$-algebra, $(\frB,\sfd,\sfm_2,\acton)$ is a dg commutative algebra over $\frH$, the differential $\sfb$ is linear over $\frH$, i.e.\
            \begin{equation}
                \chi\acton(\sfb\phi)\ =\ \sfb(\chi\acton\phi)
            \end{equation}
            for all $\chi\in\frH$ and $\phi\in\frB$, and there is a $\BBox_\frH\in \frH$ such that $\BBox\phi=[\sfd,\sfb]\phi=\BBox_\frH\acton \phi$ for all $\phi\in \frB$. (In the following, we will be sloppy and identify $\BBox_\frH=\BBox$ or even write $[\sfd,\sfb]\in \frH$.)
            
            A \uline{metric pseudo-$\BVbox$-algebra} over a cocommutative Hopf algebra $\frH$ is a pseudo-$\BVbox$-algebra $\frB$ equipped with a metric $\inner{-}{-}:\frB\otimes_\IR\!\frB\rightarrow\IR$ that is an $\frH$-linear map, where $\IR$ is equipped with the trivial $\frH$-module structure and $\frB\otimes_\IR\!\frB$ is equipped with the $\frH$-module structure induced by the coproduct.
        \end{definition}
        
        \noindent 
        It remains to extend the notion of a pseudo-$\BVbox$-module to a pseudo-$\BVbox$-module over $\frH$.
        
        \begin{definition}\label{def:preBVBoxModuleHopfAlgebra}
            A \uline{pseudo-$\BVbox$-module} over a cocommutative Hopf algebra $\frH$ is a module $(V,\sfd_V,\acton_V,\sfb_V)$ over a pseudo-$\BVbox$-algebra $(\frB,\sfd_\frB,\sfm_2,\sfb_\frB,\acton_\frB)$ over $\frH$ such that all maps are $\frH$-linear in the sense that
            \begin{equation}
                \begin{aligned}
                    \chi\acton_V(\sfd_Vv)\ &=\ \sfd_V(\chi\acton_Vv)~,
                    \\
                    \chi\acton_V(\sfb_Vv)\ &=\ \sfb_V(\chi\acton_Vv)~,
                    \\
                    \chi\acton_V\{\phi,v\}_V\ &=\ \{\chi^{(1)}\acton_\frB\phi,\chi^{(2)}\acton_Vv\}_V
                \end{aligned}
            \end{equation}
            for all $h\in\frH$, $\phi\in\frB$, and $v\in V$. Here, $\{-,-\}_V$ is the derived bracket~\eqref{eq:derived_bracket_module} associated with $(V,\sfd_V,\acton_V,\sfb_V)$. In addition, we require that $\BBox=[\sfd_\frB,\sfb_\frB]=[\sfd_V,\sfb_V]$.
            
            A \uline{cyclic module over a cyclic pseudo-$\BVbox$-algebra} $\frB$ over a cocommutative Hopf algebra $\frH$ is a pseudo-$\BVbox$-module $V$ equipped with a metric $\langle-,-\rangle\colon V\otimes_\IR\! V\to V$ that is a $\frH$-linear map, where $\IR$ is equipped with the trivial $\frH$-module structure and $V\otimes_\IR\! V$ is equipped with the $\frH$-module structure induced by the coproduct. 
        \end{definition}
        
        \subsection{\texorpdfstring{$\BVbox$}{BV-box}-algebras and their modules}
        
        As we will see, it is both physically and mathematically natural to specialise our pseudo-$\BVbox$-algebras to the case of $\BVbox$-algebras~\cite{Reiterer:2019dys}. These are pseudo-$\BVbox$-algebras in which the operator $\sfb$ is a second-order differential in the sense of\footnote{This concept was first defined for commutative and associative algebras by Koszul~\cite{koszul1985crochet}. Here, we choose to work with the more flexible definition in~\cite{Akman:1995tm}, which extends to non-commutative and non-associative algebras.} Akman~\cite{Akman:1995tm}. We start by recalling the notion of higher-order differentials.
        
        \paragraph{Higher-order differentials.}
        Consider a graded vector space $\frA$ with a multilinear operation $\sfm$ of arity\footnote{The generalisation from binary $\sfm_2$ to arbitrary arity was not considered in~\cite{Akman:1995tm} but is straightforward, although it is not needed in this paper. In particular, if a theory has a 3-Lie algebra~\cite{Filippov:1985aa} colour structure and a corresponding quartic-vertex CK duality~\cite{Huang:2012wr,Huang:2013kca,Sivaramakrishnan:2014bpa}, then the colour-stripped theory is naturally captured by an analogue of a (pseudo-)$\BVbox$-algebra with a totally graded-symmetric ternary $\sfm_3$ and a second-order differential operator.} $k+1$ and degree~$|\sfm|$ and a differential $\delta\colon\frA\rightarrow\frA$ of degree~$|\delta|$. For all $r\in\IN$, we define recursively the maps $\Phi^{r+1}_\delta$ by
        \begin{equation}\label{eq:def_Phir}
            \begin{aligned}
                \Phi^1_\delta(\phi_1)\ &\coloneqq\ \delta\phi_1~,
                \\
                \Phi^2_\delta(\phi_1,\dotsc,\phi_{k+1})\ &\coloneqq\ \Phi^1_\delta(\sfm(\phi_1,\dotsc,\phi_{k+1}))-(-1)^{|\sfm||\delta|}\sfm(\Phi^1_\delta(\phi_1),\phi_2,\dotsc,\phi_{k+1})-\dotsb\\&\hspace{1cm}-(-1)^{(|\sfm|+|\phi_1|+\dotsb+|\phi_k|)\,|\delta|}\sfm(\phi_1,\dotsc,\phi_k,\Phi^1_\delta(\phi_{k+1}))~,
                \\
                &\kern8pt\vdots
                \\
                \Phi^{r+1}_\delta(\phi_1,\dotsc,\phi_{rk+1})\ &\coloneqq\ \Phi^r_\delta(\phi_1,\dotsc,\phi_{(r-1)k},\sfm(\phi_{(r-1)k+1},\dotsc,\phi_{rk+1}))\\&\hspace{1cm}-(-1)^{|\sfm|\,(|\delta|+|\phi_1|+\dotsb+|\phi_{(r-1)k}|)}\\&\hspace{2cm}\times\sfm(\Phi^r_\delta(\phi_1,\dotsc,\phi_{(r-1)k+1}),\phi_{(r-1)k+2},\dotsc,\phi_{rk+1})\\&\hspace{1cm} - \dotsb
                \\
                &\hspace{1cm}-(-1)^{(|\sfm|+|\phi_{(r-1)k+1}|+\dotsb+|\phi_{rk}|)(|\phi_1|+\dotsb+|\phi_{(r-1)k}|+|\delta|)}\\
                &\hspace{2cm}\times\sfm(\phi_{(r-1)k+1},\dotsc,\phi_{rk},\Phi^r_\delta(\phi_1,\dotsc,\phi_{(r-1)k},\phi_{rk+1}))~,
            \end{aligned}
        \end{equation}
        for all $\phi_{1,\dotsc,rk+1}\in\frA$, which measure the failure of $\Phi^r_\delta(\phi_1,\ldots,\phi_{(r-1)k},-)$ to be a derivation of the $(k+1)$-ary product $\sfm$.
        
        \begin{definition}\label{def:order_differential_operator}
            A differential $\delta$ on $(\frA,\sfm_2)$ is said to be a \underline{differential operator of order $r$} if $\Phi^{r+1}_\delta=0$.
        \end{definition}
        
        \noindent
        Note that a differential of $r$-th order is automatically of order $r+1$.
        
        \begin{example}
            For a pseudo-$\BVbox$-algebra $(\frB,\sfd,\sfm_2,\sfb)$, the condition for $\sfb$ being of second order is
            \begin{equation}\label{eq:b_second_order}
                \begin{aligned}
                    \sfb(\sfm_2(\sfm_2(\phi_1,\phi_2),\phi_3))\ &=\ (-1)^{|\phi_1|}\sfm_2(\phi_1,\sfb(\sfm_2(\phi_2,\phi_3)))
                    \\
                    &\kern1cm+(-1)^{(|\phi_1|+1)|\phi_2|}\sfm_2(\phi_2,\sfb(\sfm_2(\phi_1,\phi_3)))
                    \\
                    &\kern1cm+(-1)^{|\phi_3|(|\phi_1|+|\phi_2|+1)}\sfm_2(\phi_3,\sfb(\sfm_2(\phi_1,\phi_2)))
                    \\
                    &\kern1cm+(-1)^{|\phi_1|+|\phi_3|+|\phi_2||\phi_3|+1}\sfm_2(\sfm_2(\phi_1,\phi_3),\sfb\phi_2)
                    \\
                    &\kern1cm+(-1)^{|\phi_1|+|\phi_2|+1}\sfm_2(\sfm_2(\phi_1,\phi_2),\sfb\phi_3)
                    \\
                    &\kern1cm+(-1)^{(|\phi_1|+1)(|\phi_2|+|\phi_3|)+1}\sfm_2(\sfm_2(\phi_2,\phi_3),\sfb\phi_1)
                \end{aligned}
            \end{equation}
            for all $\phi_{1,2,3}\in\frB$.
        \end{example}
        
        \begin{example}
            For a module $(V,\sfd_V,\acton_V,\sfb_V)$ over a pseudo-$\BVbox$-algebra $(\frB,\sfd_\frB,\sfm_2,\sfb_\frB)$, the condition for $\sfb_V$ being of second order amounts to
            \begin{equation}
                \begin{aligned}
                    \sfb_V(\phi_1\acton_V(\phi_2\acton_Vv))\ &=\ (-1)^{|\phi_1|}\phi_1\acton_V\sfb_V(\phi_2\acton_Vv)
                    \\
                    &\kern1cm+(-1)^{(|\phi_1|+1)|\phi_2|}\phi_2\acton_V\sfb_V(\phi_1\acton_Vv)
                    \\
                    &\kern1cm+\sfb_\frB(\sfm_2(\phi_1,\phi_2))\acton_Vv
                    \\
                    &\kern1cm+(-1)^{|\phi_1|+|\phi_2|+1}\phi_1\acton_V(\phi_2\acton_V(\sfb_Vv))
                    \\
                    &\kern1cm+(-1)^{|\phi_1|+1}\phi_1\acton_V((\sfb_\frB\phi_2)\acton_Vv)
                    \\
                    &\kern1cm-(\sfb_\frB\phi_1)\acton_V(\phi_2\acton v)
                \end{aligned}
            \end{equation}
            for all $\phi_1,\phi_2\in\frB$ and $v\in V$.
        \end{example}
        
        \paragraph{$\BVbox$-algebras.}
        We now refine \cref{def:pre-BVbox-algebra,def:cyclicPreBVBoxAlgebra} as follows. 
        
        \begin{definition}\label{def:BVBoxAlgebra}
            A \uline{(cyclic) $\BVbox$-algebra} is a (cyclic) pseudo-$\BVbox$-algebra $(\frB,\sfd,\sfm_2,\sfb)$ in which $\sfb$ is of second order.
        \end{definition}
        
        \noindent
        We have already seen in \cref{def:pre-BVbox-algebra} that the derived bracket~\eqref{eq:derived_bracket} for a  pseudo-$\BVbox$-algebra automatically satisfies the shifted anti-symmetry~\eqref{eq:BV_GB_antisymmetry}. The operator $\sfb$ being of second order now implies that also shifted Jacobi identity~\eqref{eq:BV_GB_Jacobi} automatically holds as the following proposition shows.
        
        \begin{proposition}\label{prop:BVbox_is_pre_BVbox}
            Let $(\frB,\sfd,\sfm_2,\sfb)$ be a pseudo-$\BVbox$-algebra. The condition that $\sfb$ is of second order is equivalent to the \uline{shifted Poisson identity}
            \begin{equation}\label{eq:BV_GB_Poisson}
                \{\phi_1,\sfm_2(\phi_2,\phi_3)\}\ =\ \sfm_2(\{\phi_1,\phi_2\},\phi_3)+(-1)^{(|\phi_1|+1)|\phi_2|}\sfm_2(\phi_2,\{\phi_1,\phi_3\})
            \end{equation}
            for all $\phi_{1,2,3}\in\frB$ for the derived bracket~\eqref{eq:derived_bracket}. The shifted Poisson identity, in turn, implies the shifted Jacobi identity~\eqref{eq:BV_GB_Jacobi}.
        \end{proposition}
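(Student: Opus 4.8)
The plan is to treat the two claims separately: first the equivalence of the second-order condition with the shifted Poisson identity~\eqref{eq:BV_GB_Poisson}, and then the implication from~\eqref{eq:BV_GB_Poisson} to the shifted Jacobi identity~\eqref{eq:BV_GB_Jacobi}. For the equivalence I would simply unwind the hierarchy of maps $\Phi^r_\sfb$ from~\eqref{eq:def_Phir} specialised to $\sfm=\sfm_2$ with $|\sfm_2|=0$ and $\delta=\sfb$ with $|\sfb|=-1$. A one-line check, using $(-1)^{-|\phi_1|}=(-1)^{|\phi_1|}$, shows that $\Phi^2_\sfb$ is precisely the derived bracket~\eqref{eq:derived_bracket}, that is, $\Phi^2_\sfb(\phi_1,\phi_2)=\{\phi_1,\phi_2\}$. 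Feeding this into the recursion at $r=2$ then gives
\begin{equation}
    \Phi^3_\sfb(\phi_1,\phi_2,\phi_3)\ =\ \{\phi_1,\sfm_2(\phi_2,\phi_3)\}-\sfm_2(\{\phi_1,\phi_2\},\phi_3)-(-1)^{(|\phi_1|+1)|\phi_2|}\sfm_2(\phi_2,\{\phi_1,\phi_3\})~,
\end{equation}
so that the second-order condition $\Phi^3_\sfb=0$ of~\cref{def:order_differential_operator} is, term by term, exactly the shifted Poisson identity~\eqref{eq:BV_GB_Poisson}. This settles the equivalence with essentially no computation once the identification $\Phi^2_\sfb=\{-,-\}$ is in place. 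One could equally expand the explicit condition~\eqref{eq:b_second_order} and reorganise its six terms into the two derived brackets on the right-hand side, but the route through~\eqref{eq:def_Phir} is cleaner.

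For the second claim I would show that the shifted Poisson identity forces $\{\phi_1,-\}$ to act as a graded derivation of the bracket itself, which is precisely~\eqref{eq:BV_GB_Jacobi}. Two ingredients suffice. First, the Poisson identity~\eqref{eq:BV_GB_Poisson} states that $\{\phi_1,-\}$ is a graded derivation of $\sfm_2$. Second, the relation $\sfb\{\phi_1,\phi_2\}=-\{\sfb\phi_1,\phi_2\}-(-1)^{|\phi_1|}\{\phi_1,\sfb\phi_2\}$ of~\cref{prop:algebraRelationsDB}, which holds because $\sfb$ is a differential, can be rearranged to express $\{\phi_1,\sfb\psi\}$ in terms of $\sfb\{\phi_1,\psi\}$ and $\{\sfb\phi_1,\psi\}$. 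The computation then proceeds by expanding the inner bracket $\{\phi_2,\phi_3\}$ via its definition~\eqref{eq:derived_bracket} into one $\sfb$-term and two $\sfm_2$-terms, and applying $\{\phi_1,-\}$ to each summand: the product terms are resolved by the derivation property, and the $\sfb$-term by the rearranged relation. Collecting the output and comparing with the analogous expansion of the right-hand side of~\eqref{eq:BV_GB_Jacobi} produces the identity.

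The genuinely delicate part of the argument is the sign bookkeeping in the shifted grading, where the bracket carries a degree and the Koszul signs are governed by the shifts $|\phi_i|+1$; keeping these consistent across the two sides of~\eqref{eq:BV_GB_Jacobi} is the only real obstacle, since the algebraic skeleton of the proof is entirely dictated by the two ingredients above. I note that this second implication is classical, going back to Koszul~\cite{koszul1985crochet} (see also~\cite{Akman:1995tm}), and that $\sfb^2=0$ enters only through~\cref{prop:algebraRelationsDB}; as this is already part of the data of a pseudo-$\BVbox$-algebra, no further hypotheses are required. Finally, the shifted anti-symmetry~\eqref{eq:BV_GB_antisymmetry} is automatic from~\eqref{eq:derived_bracket} and needs no separate treatment.
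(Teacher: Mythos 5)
Your proposal is correct and follows essentially the same route as the paper: the first claim is the observation that $\Phi^3_\sfb$ is precisely the Poissonator (the paper phrases this as the equivalence of~\eqref{eq:BV_GB_Poisson} with the spelled-out condition~\eqref{eq:b_second_order}, which is the same computation), and the second claim is the same direct calculation using the definition of the derived bracket, the derivation property of $\sfb$ from \cref{prop:algebraRelationsDB}, and the Poisson identity — the paper merely organises it by expanding the outer brackets in the Jacobiator rather than the inner one, arriving at the explicit identity~\eqref{eq:JacobiPoissonRelation} expressing $\sfJac$ through $\sfPoiss$. No gaps; the sign bookkeeping you flag is indeed the only labour involved.
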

        
        \begin{proof}
           By direct computation, cf.~\cref{app:postponed_proofs}. 
        \end{proof}
        
        \begin{proposition}\label{eq:BVBoxSecondOrder}
            For a $\BVbox$-algebra $(\frB,\sfd,\sfm_2,\sfb)$, the operator $\BBox=[\sfd,\sfb]$ is of second order.
        \end{proposition}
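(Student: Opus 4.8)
The plan is to deduce the claim from the general principle that, on a graded commutative associative algebra, the order filtration on differential operators is compatible with the graded commutator bracket, in the sense that the graded commutator of an operator of order~$\leq m$ with one of order~$\leq n$ has order~$\leq m+n-1$. Granting this, the proof is immediate: the dg commutative algebra axiom states precisely that $\sfd$ is a graded derivation of $\sfm_2$, i.e.\ $\Phi^2_\sfd=0$, so $\sfd$ is a first-order operator; and $\sfb$ is of second order by the very definition of a $\BVbox$-algebra (\cref{def:BVBoxAlgebra}). Since $\BBox=[\sfd,\sfb]=\sfd\circ\sfb+\sfb\circ\sfd$ is the graded commutator of two odd operators, it is of order $\leq 1+2-1=2$, as claimed.

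The one nontrivial ingredient is thus the commutator lemma, which I would establish as follows. First, since $\sfm_2$ is graded commutative and associative (as assumed throughout after \cref{def:pre-BVbox-algebra}), Akman's definition of order via the vanishing of $\Phi^{r+1}$ coincides with the Koszul--Grothendieck characterisation: an operator $D$ has order $\leq r$ if and only if, for every homogeneous $\phi$, the graded commutator $[D,L_\phi]$ with the left-multiplication operator $L_\phi=\sfm_2(\phi,-)$ has order $\leq r-1$, with order $\leq-1$ meaning the zero operator. The translation is exactly the rewriting $\Phi^2_\delta(\phi,-)=[\delta,L_\phi]-L_{\delta\phi}$, iterated. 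Working with this characterisation, the lemma follows from the graded Jacobi identity for the commutator bracket of operators, $[[D,E],L_\phi]=[D,[E,L_\phi]]\pm[E,[D,L_\phi]]$, together with an induction on $m+n$: the right-hand side involves operators of orders $\leq m+n-2$ by the inductive hypothesis, whence $[[D,E],L_\phi]$ has order $\leq m+n-2$ and so $[D,E]$ has order $\leq m+n-1$; the base cases are the order-$0$ operators, which are precisely the $\frB$-linear ones. This is essentially the content of~\cite{koszul1985crochet,Akman:1995tm}.

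Alternatively, one can give a self-contained proof that stays entirely inside the $\Phi$-formalism by showing directly that $\Phi^3_\BBox=0$. Expanding $\BBox=\sfd\sfb+\sfb\sfd$ in the definition~\eqref{eq:def_Phir} of $\Phi^3_\BBox$ and repeatedly using that $\sfd$ is a derivation ($\Phi^2_\sfd=0$), that $\sfb$ is of second order (equation~\eqref{eq:b_second_order}, i.e.\ $\Phi^3_\sfb=0$), and that $\sfd^2=\sfb^2=0$, all terms cancel. I expect the main obstacle to lie precisely here, in the sign bookkeeping: whichever route one takes, the real work is in checking that the Koszul signs in the graded commutator match the conventions built into~\eqref{eq:def_Phir}, either to license the clean commutator lemma above or to carry out the direct cancellation of the many terms in $\Phi^3_\BBox$ without error.
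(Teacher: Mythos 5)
Your proposal is correct and follows essentially the same route as the paper, which simply invokes the fact that the graded commutator of an $r$-th order and an $s$-th order differential has order $r+s-1$ (citing~\cite[Eq.~(6.iii)]{Akman:1995tm}) and applies it to $\sfd$ of first order and $\sfb$ of second order. Your additional sketch of how to establish that commutator lemma goes beyond what the paper records, but the core argument is identical.
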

        
        \begin{proof}
            This follows from the fact that the (graded) commutator of an $r$-th order differential and an $s$-th order differential is a differential of order $r+s-1$, cf.~\cite[Eq.~(6.iii)]{Akman:1995tm}.
        \end{proof}
        
        \noindent
        Note that by virtue of \cref{prop:BVbox_is_pre_BVbox}, for $(\frB,\sfd,\sfm_2,\sfb)$ a $\BVbox$-algebra, the tuple $(\frB,\sfm_2,\{-,-\})$ with $\{-,-\}$ the derived bracket~\eqref{eq:derived_bracket} is what is commonly known as a \uline{Gerstenhaber algebra}, that is, a Poisson algebra of degree~$-1$. 
        
        \begin{definition}\label{def:BVAlgebra}
            A $\BVbox$-algebra $(\frB,\sfd,\sfm_2,\sfb)$ with $\BBox=[\sfd,\sfb]=0$ is called a \uline{differential graded (dg) Batalin--Vilkovisky (BV) algebra}.
        \end{definition}
        
        We then have the following immediate corollary to \cref{prop:algebraRelationsDB}:
        
        \begin{corollary}\label{cor:kin_alg_is_dg_Lie_for_BV-algebra}
            Consider a BV algebra $\frB$ with differential $\sfd$.
            Together with $\sfd[1]$, the kinematic Lie algebra $\frKin(\frB)$ and the restricted kinematic Lie algebra $\frKin^0(\frB)$ defined in \cref{def:restricted_kinematic_Lie_algebra} become dg Lie algebras.
        \end{corollary}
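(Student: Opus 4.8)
The plan is to read the statement off directly from \cref{prop:algebraRelationsDB} once the defining condition $\BBox=[\sfd,\sfb]=0$ of a BV algebra is imposed. First I would substitute $\BBox=0$ into the first identity of \cref{prop:algebraRelationsDB}, whose last line consists entirely of $\BBox$-terms and hence drops out, leaving
\begin{equation}\label{eq:plan_collapsed}
    \sfd\{\phi_1,\phi_2\}\ =\ -\{\sfd\phi_1,\phi_2\}-(-1)^{|\phi_1|}\{\phi_1,\sfd\phi_2\}
\end{equation}
for all $\phi_{1,2}\in\frB$. This says that $\sfd$ is a signed derivation of the derived bracket, and it is the single nontrivial input; everything else is bookkeeping with the degree shift.

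Next I would translate \eqref{eq:plan_collapsed} into the graded Leibniz rule on $\frKin(\frB)=\frB[1]$. Recalling that the shift convention yields $|\phi[1]|=|\phi|-1$ and that the bracket reads $[\phi_1[1],\phi_2[1]]_{\frKin(\frB)}=(-1)^{|\phi_1|}\{\phi_1,\phi_2\}[1]$, I would take $\sfd[1](\phi[1])\coloneqq(\sfd\phi)[1]$ (the overall sign of $\sfd[1]$ is immaterial, as it cancels between the two sides) and verify
\begin{equation}\label{eq:plan_leibniz}
    \sfd[1]\big([\phi_1[1],\phi_2[1]]_{\frKin(\frB)}\big)\ =\ [\sfd[1]\phi_1[1],\phi_2[1]]_{\frKin(\frB)}+(-1)^{|\phi_1[1]|}[\phi_1[1],\sfd[1]\phi_2[1]]_{\frKin(\frB)}~.
\end{equation}
Expanding both sides and using $|\sfd\phi_1|=|\phi_1|+1$, the Koszul signs produced by the shift combine precisely with the signs in \eqref{eq:plan_collapsed}, so the two sides agree. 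I expect this sign chase to be the only delicate — though entirely routine — step. Since $\frKin(\frB)$ is already a Lie algebra by \cref{def:kinematicLieAlgebraPreBVBox} and its surrounding discussion, and since $(\sfd[1])^2=0$ is immediate from $\sfd^2=0$, the identity \eqref{eq:plan_leibniz} promotes $\frKin(\frB)$ to a dg Lie algebra.

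Finally, for the restricted kinematic Lie algebra $\frKin^0(\frB)=\ker(\sfb)[1]$ of \cref{def:restricted_kinematic_Lie_algebra}, I would check that $\sfd$ preserves $\ker(\sfb)$. This is where the BV condition enters a second time: from $\BBox=\sfd\circ\sfb+\sfb\circ\sfd=0$ one reads off $\sfb\circ\sfd=-\sfd\circ\sfb$, so $\sfb\phi=0$ forces $\sfb(\sfd\phi)=-\sfd(\sfb\phi)=0$, that is, $\sfd\phi\in\ker(\sfb)$. Combined with the fact that $\ker(\sfb)[1]$ is already a Lie subalgebra (by the corollary preceding \cref{def:restricted_kinematic_Lie_algebra}), this exhibits $\frKin^0(\frB)$ as a sub-dg-Lie-algebra of $\frKin(\frB)$. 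There is no genuine obstacle: the entire content of the corollary is the vanishing of the $\BBox$-dependent anomaly terms in \cref{prop:algebraRelationsDB}, which obstruct the Leibniz rule in the general pseudo-$\BVbox$ case.
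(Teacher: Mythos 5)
Your proof is correct and follows exactly the route the paper intends: the corollary is stated as an immediate consequence of \cref{prop:algebraRelationsDB}, whose $\BBox$-terms vanish for a BV algebra, leaving the signed Leibniz rule for $\sfd$ on the derived bracket; your sign chase through the shift convention and your observation that $\BBox=0$ forces $\sfd$ to preserve $\ker(\sfb)$ supply precisely the details the paper leaves implicit.
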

        
        \paragraph{$\BVbox$-algebra modules.}
        Let us also specialise the notion of modules. Firstly, we define $\BVbox$-modules by refining \cref{def:preBVBoxModule}.
        
        \begin{definition}\label{def:BVbox_module}
            A \uline{module over a $\BVbox$-algebra} $\frB$ is a module $(V,\sfd_V,\acton_V,\sfb_V)$ over $\frB$, regarded as a pseudo-$\BVbox$-algebra, in which $\sfb_V$ is of second order.
        \end{definition}
        
        \noindent
        We now have the analogues of \cref{prop:BVbox_is_pre_BVbox,eq:BVBoxSecondOrder}.
        
        \begin{proposition}
            For $\sfb_V$ of second order, the derived bracket~\eqref{eq:derived_bracket_module} always satisfies~\eqref{eq:derivedBracketModuleJacobi} as well as
            \begin{equation}
                \{\phi_1,\phi_2\acton_Vv\}_V\ =\ \{\phi_1,\phi_2\}_\frB\acton_Vv+(-1)^{(|\phi_1|+1)|\phi_2|}\phi_2\acton_V\{\phi_1,v\}_V
            \end{equation}
            for all $\phi_{1,2}\in\frB$ and $v\in V$.
        \end{proposition}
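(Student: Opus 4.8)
The plan is to deduce both identities from the already-established algebra case, \cref{prop:BVbox_is_pre_BVbox}, by repackaging the module into a square-zero extension. Concretely, I would form the graded vector space $\hat\frB\coloneqq\frB\oplus V$ and make it a dg commutative algebra by declaring $V$ to be a square-zero ideal: the product $\hat\sfm_2$ restricts to $\sfm_2$ on $\frB$, is given on the mixed part by $\hat\sfm_2(\phi,v)\coloneqq\phi\acton_Vv$ and $\hat\sfm_2(v,\phi)\coloneqq(-1)^{|\phi||v|}\phi\acton_Vv$, and vanishes on $V\otimes V$; the differentials are the direct sums $\hat\sfd\coloneqq\sfd_\frB\oplus\sfd_V$ and $\hat\sfb\coloneqq\sfb_\frB\oplus\sfb_V$. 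First I would verify that $\hat\frB$ is a dg commutative algebra: graded commutativity and associativity of $\hat\sfm_2$ follow from the module axioms together with $V\cdot V=0$, the Leibniz rule for $\hat\sfd$ is exactly the compatibility of $\sfd_V$ with $\acton_V$ from \cref{def:preBVBoxModule}, and $\hat\sfb^2=0$ because $\sfb_\frB^2=0$ and $\sfb_V^2=0$.

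The crux is that $\hat\sfb$ is of second order on $\hat\frB$. Since the second-order defect $\Phi^3_{\hat\sfb}$ is graded multilinear, it suffices to evaluate it on homogeneous arguments lying in $\frB$ or $V$, and by its graded symmetry (inherited from the graded commutativity of $\hat\sfm_2$) only the number of $V$-arguments matters. With no $V$-argument the vanishing of $\Phi^3_{\hat\sfb}$ is precisely the second-order condition~\eqref{eq:b_second_order} for $\sfb_\frB$, which holds as $\frB$ is a $\BVbox$-algebra; with exactly one $V$-argument it is precisely the module second-order condition displayed above for $\sfb_V$ (using $\sfm_2(\phi_1,\phi_2)\acton_Vv=\phi_1\acton_V(\phi_2\acton_Vv)$); and with two or three $V$-arguments every surviving term contains a factor $\hat\sfm_2(V,V)=0$, so $\Phi^3_{\hat\sfb}$ vanishes identically. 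Hence $\hat\sfb$ is of second order.

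It then remains to restrict. The derived bracket~\eqref{eq:derived_bracket} of $\hat\sfb$ on $\hat\frB$ restricts to $\{-,-\}_\frB$ on $\frB\times\frB$, and, because $\hat\sfm_2(\phi,v)=\phi\acton_Vv$ while $\hat\sfb$ restricts to $\sfb_\frB$ and $\sfb_V$ on the two summands, it restricts on $\frB\times V$ to the module derived bracket~\eqref{eq:derived_bracket_module}. By \cref{prop:BVbox_is_pre_BVbox}, the second-orderness of $\hat\sfb$ together with $\hat\sfb^2=0$ forces $\{-,-\}_{\hat\frB}$ to satisfy the shifted Poisson identity~\eqref{eq:BV_GB_Poisson} and hence the shifted Jacobi identity~\eqref{eq:BV_GB_Jacobi}. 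Evaluating the former on $\phi_1,\phi_2\in\frB$ and $v\in V$, so that $\hat\sfm_2(\phi_2,v)=\phi_2\acton_Vv$, reproduces the Poisson-type identity of the proposition, while evaluating the latter on $\phi_1,\phi_2\in\frB$ and $v\in V$ reproduces~\eqref{eq:derivedBracketModuleJacobi}.

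I expect the main obstacle to be bookkeeping rather than conceptual: confirming that the graded second-order defect on $\hat\frB$ decomposes cleanly into the $\frB$-condition and the module condition with the correct Koszul signs, and checking graded commutativity of $\hat\sfm_2$ under the chosen symmetrisation of $\acton_V$. A direct calculation, expanding both target identities in terms of $\sfb_V$, $\acton_V$, $\sfb_\frB$, and $\sfm_2$ and invoking the module second-order condition, is equally available and would serve as a cross-check, but the square-zero reduction outsources essentially all the sign arithmetic to \cref{prop:BVbox_is_pre_BVbox}.
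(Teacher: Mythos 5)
Your argument is correct, but it is not the route the paper takes: the paper states this proposition without proof as the ``analogue'' of \cref{prop:BVbox_is_pre_BVbox}, the implicit proof being a direct computation parallel to the one spelled out for that proposition in \cref{app:postponed_proofs} (expanding the Poissonator and Jacobiator in terms of $\sfb_V$, $\acton_V$, $\sfb_\frB$, and $\sfm_2$ and invoking the displayed module second-order condition). Your square-zero extension $\hat\frB=\frB\oplus V$ instead literally reduces the module case to the already-proven algebra case: the trivial-extension product is graded commutative and associative precisely because $V$ is a module over a commutative algebra, $\hat\sfd$ is a derivation by the compatibility axiom of \cref{def:preBVBoxModule}, and the second-order defect $\Phi^3_{\hat\sfb}$ decomposes by multilinearity into the pure-$\frB$ condition~\eqref{eq:b_second_order}, the one-$V$-argument condition (which, after using $\sfm_2(\phi_1,\phi_2)\acton_Vv=\phi_1\acton_V(\phi_2\acton_Vv)$ and the graded symmetry of $\Phi^3$ for commutative associative products, is exactly the displayed second-order condition on $\sfb_V$), and terms that die on $\hat\sfm_2(V,V)=0$. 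Restricting the Poisson and Jacobi identities of \cref{prop:BVbox_is_pre_BVbox} to arguments $(\phi_1,\phi_2,v)$ then yields both claimed identities, since the derived bracket of $\hat\frB$ restricts on $\frB\times V$ to~\eqref{eq:derived_bracket_module}. What your approach buys is the elimination of all new sign arithmetic and a conceptual explanation of why the module identities are the ``analogues''; it also delivers \cref{eq:BVBoxModuleSecondOrder} for free by restriction. What it costs is the small but genuine check that the paper's separately displayed two-sorted second-order condition for $\sfb_V$ agrees with the restriction of the one-sorted condition on the extension (including the orderings with $v$ in the first or second slot), which you correctly flag as bookkeeping; once that is done, the argument is complete.
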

        
        \begin{proposition}\label{eq:BVBoxModuleSecondOrder}
            For a $\BVbox$-module $(V,\sfd_V,\acton_V,\sfb_V)$, the operator $\BBox_V=[\sfd_V,\sfb_V]$ is of second order.
        \end{proposition}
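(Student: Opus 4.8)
The plan is to mimic the proof of \cref{eq:BVBoxSecondOrder} almost verbatim, replacing the single-algebra product $\sfm_2$ by the module action $\acton_V$ and carefully tracking the two sorts $\frB$ and $V$. The essential input is the general principle recorded by Akman in~\cite[Eq.~(6.iii)]{Akman:1995tm}: the graded commutator of a differential operator of order~$r$ and a differential operator of order~$s$ is again a differential operator, of order $r+s-1$. Applied to the two operators $\sfd_V$ and $\sfb_V$, this will give the claim as soon as their individual orders are identified.

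First I would observe that $\sfd_V$ is a differential operator of first order with respect to the action $\acton_V$. Indeed, $(V,\sfd_V,\acton_V)$ being a dg module over the dg commutative algebra $(\frB,\sfd_\frB,\sfm_2)$ means precisely that $\sfd_V$ is a derivation of the action,
\begin{equation}
    \sfd_V(\phi\acton_Vv)\ =\ (\sfd_\frB\phi)\acton_Vv+(-1)^{|\phi|}\phi\acton_V(\sfd_Vv)~,
\end{equation}
which is exactly the vanishing of the arity-raising failure $\Phi^2_{\sfd_V}$ in the two-sorted analogue of~\eqref{eq:def_Phir} (the slot in $\frB$ being acted on by $\sfd_\frB$, the slot in $V$ by $\sfd_V$); hence $\sfd_V$ is of order~$1$. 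Second, $\sfb_V$ is of second order by the very definition of a $\BVbox$-module, \cref{def:BVbox_module}. The commutator lemma with $r=1$ and $s=2$ then gives that $\BBox_V=[\sfd_V,\sfb_V]$ is of order $1+2-1=2$, as claimed.

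The main obstacle is that Akman's statement is formulated for differential operators acting on a single (possibly non-associative, non-commutative) algebra, whereas here the relevant ``product'' is the mixed binary action $\acton_V\colon\frB\times V\to V$, so that the $\Phi$-calculus of~\eqref{eq:def_Phir} becomes two-sorted: some slots carry arguments in $\frB$ (on which $\sfd_\frB,\sfb_\frB$ act) and the remaining slot carries an argument in $V$ (on which $\sfd_V,\sfb_V$ act). One therefore has to check that the inductive identities behind~\cite[Eq.~(6.iii)]{Akman:1995tm} survive this bookkeeping; they do, since that argument is purely combinatorial in the Koszul signs and uses neither associativity nor single-sortedness. Alternatively, and equivalently, one may verify $\Phi^3_{\BBox_V}=0$ directly by expanding $\BBox_V=\sfd_V\sfb_V+\sfb_V\sfd_V$ and repeatedly applying the second-order conditions for $\sfb_\frB$ and $\sfb_V$ (the two Examples following \cref{def:order_differential_operator}) together with the first-order derivation property of $\sfd_\frB$ and $\sfd_V$; this is tedious but routine and exactly parallels the algebra computation underlying \cref{eq:BVBoxSecondOrder}.
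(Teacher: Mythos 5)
Your proof is correct and follows exactly the route the paper intends: the paper proves the algebra version (\cref{eq:BVBoxSecondOrder}) via Akman's commutator-of-orders lemma and states the module version without proof as its analogue, which is precisely the adaptation you carry out. Your explicit attention to the two-sorted bookkeeping (that $\sfd_V$ is order one as a derivation of $\acton_V$, that $\sfb_V$ is order two by \cref{def:BVbox_module}, and that Akman's inductive argument is insensitive to sorting) is the only content the paper leaves implicit, and you handle it correctly; one could alternatively package the module into the square-zero extension $\frB\oplus V$ and apply the single-sorted lemma verbatim.
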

        
        If we specialise to the situation $\BBox_\frB=\BBox_V=0$, we obtain dg Lie modules.
        
        \begin{proposition}\label{prop:restrictedKinematicLieModule}
            Given a module $V=(V,\sfd_V,\acton,\sfb_V)$ over a dg BV algebra $\frB=(\frB,\sfd,\sfm_2,\sfb)$, then $\frMod^0(V)\coloneqq(\ker\sfb_V)[1]$ is a module over the dg Lie algebra
            $\frKin^0(\frB)$.
        \end{proposition}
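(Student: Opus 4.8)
The plan is to exhibit $\frMod^0(V)=(\ker\sfb_V)[1]$ as a dg Lie module over the dg Lie algebra $\frKin^0(\frB)=(\ker\sfb)[1]$, whose differential is $\sfd[1]$ by \cref{cor:kin_alg_is_dg_Lie_for_BV-algebra}. Since \cref{prop:kinematic_module} already makes $V[1]$ a graded module over $\frKin(\frB)$ via $\phi[1]\acton_\frV v[1]=(-1)^{|\phi|}\{\phi,v\}_V[1]$, and since $\frKin^0(\frB)\subseteq\frKin(\frB)$, the module axiom~\eqref{eq:derivedBracketModuleJacobi} is inherited automatically. It therefore remains to verify three points: (i) the action restricts, i.e.~$\{\ker\sfb,\ker\sfb_V\}_V\subseteq\ker\sfb_V$; (ii) $\sfd_V$ restricts to a differential on $\ker\sfb_V$; and (iii) $\sfd_V[1]$ is compatible with $\sfd[1]$ and the action, so that $\frMod^0(V)$ becomes a \emph{dg} module.

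For (i), I would establish the module analogue of the second identity in \cref{prop:algebraRelationsDB}, namely
\begin{equation}
    \sfb_V\{\phi,v\}_V\ =\ -\{\sfb\phi,v\}_V-(-1)^{|\phi|}\{\phi,\sfb_V v\}_V~,
\end{equation}
which follows by applying $\sfb_V$ to the definition~\eqref{eq:derived_bracket_module} of $\{\phi,v\}_V$ and using $\sfb^2=0=\sfb_V^2$ to kill the double-$\sfb_V$ term, after which the surviving contributions reorganise into the two derived brackets on the right-hand side. For $\phi\in\ker\sfb$ and $v\in\ker\sfb_V$ the right-hand side vanishes, so $\{\phi,v\}_V\in\ker\sfb_V$ and the action $\acton_\frV$ of \cref{prop:kinematic_module} indeed maps $\frKin^0(\frB)\times\frMod^0(V)$ into $\frMod^0(V)$.

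For (ii) and (iii) I would use that $\frB$ is a dg BV algebra, so $\BBox=[\sfd,\sfb]=0$ by \cref{def:BVAlgebra}, and correspondingly $\BBox_V=[\sfd_V,\sfb_V]=0$ for the module (this is the compatibility $\BBox_\frB=\BBox_V$ built into \cref{def:preBVBoxModuleHopfAlgebra}). From $\BBox_V=0$ and $\sfb_V v=0$ one gets $\sfb_V(\sfd_V v)=-\sfd_V\sfb_V v=0$, so $\sfd_V$ preserves $\ker\sfb_V$; together with $\sfd_V^2=0$ this gives (ii). For (iii) I would derive the module analogue of the first identity in \cref{prop:algebraRelationsDB},
\begin{equation}
    \sfd_V\{\phi,v\}_V\ =\ -\{\sfd\phi,v\}_V-(-1)^{|\phi|}\{\phi,\sfd_V v\}_V~,
\end{equation}
by applying $\sfd_V$ to~\eqref{eq:derived_bracket_module} and using the dg-module Leibniz rule $\sfd_V(\phi\acton_V v)=(\sfd\phi)\acton_V v+(-1)^{|\phi|}\phi\acton_V(\sfd_V v)$; the $\BBox_\frB$- and $\BBox_V$-dependent corrections that would otherwise appear drop out precisely because $\BBox_\frB=\BBox_V=0$. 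On $\frMod^0(V)$ this says exactly that $\sfd_V[1]$ is a derivation for the $\frKin^0(\frB)$-action relative to $\sfd[1]$, which is the remaining dg-module compatibility.

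The main obstacle is the careful graded sign bookkeeping in the two displayed identities and, in particular, confirming that the $\BBox$-dependent corrections cancel exactly under the dg BV hypotheses; once those identities are in hand, closure of $\ker\sfb_V$ under the action, invariance of $\ker\sfb_V$ under $\sfd_V$, and the dg-module Leibniz rule all follow immediately, while the Jacobi-type module axiom is inherited from \cref{prop:kinematic_module}.
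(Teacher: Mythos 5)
Your proof is correct and its core step is essentially the paper's own argument: the paper also establishes closure by applying $\sfb_V$ to the definition~\eqref{eq:derived_bracket_module} and using $\sfb_V^2=0$ together with $\sfb_\frB\phi=0=\sfb_Vv$, which is exactly the content of your Leibniz identity specialised to the kernel. You additionally verify the compatibility of $\sfd_V$ with the kernel and the action (using $\BBox=0$), which the paper's proof leaves implicit but which is consistent with its setup.
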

        
        \begin{proof}
            By direct calculation, cf.~\cref{app:postponed_proofs}.
        \end{proof}
        
        Finally, we also refine the notions of pseudo-$\BVbox$-algebras and pseudo-$\BVbox$-modules over Hopf algebras introduced in \cref{def:preBVBoxAlgebraHopfAlgebra,def:preBVBoxModuleHopfAlgebra}, see also \cref{def:HopfAlgebra}, as follows.
        
        \begin{definition}\label{def:BVBoxAlgebraHopf}
            A \uline{$\BVbox$-algebra} over a cocommutative Hopf algebra $\frH$ is a tuple $(\frB,\sfd,\sfm_2,\sfb,\acton)$ such that $(\frB,\sfd,\sfm_2,\sfb)$ is a $\BVbox$-algebra, $(\frB,\sfd,\sfm_2,\acton)$ is a dg commutative algebra over $\frH$, the differential $\sfb$ is linear over $\frH$,
            \begin{equation}
                \chi\acton(\sfb\phi)\ =\ \sfb(\chi\acton\phi)
            \end{equation}
            for all $\chi\in\frH$ and $\phi\in\frB$, and we require that $\BBox=[\sfd,\sfb]\in\frH$.
        \end{definition}
        
        \noindent        
        It remains to extend the notion of a $\BVbox$-module to a $\BVbox$-module over $\frH$.
        
        \begin{definition}
            A \uline{$\BVbox$-module} over a cocommutative Hopf algebra $\frH$ is a module $(V,\sfd_V,\acton_V,\sfb_V)$ over a $\BVbox$-algebra $(\frB,\sfd_\frB,\sfm_2,\sfb_\frB,\acton_\frB)$ over $\frH$ such that we have linearity over $\frH$ in the sense of
            \begin{equation}
                \begin{aligned}
                    \chi\acton_V(\sfd_Vv)\ &=\ \sfd_V(\chi\acton_Vv)~,
                    \\
                    \chi\acton_V(\sfb_Vv)\ &=\ \sfb_V(\chi\acton_Vv)~,
                    \\
                    \chi\acton_V\{\phi,v\}_V\ &=\ \{\chi^{(1)}\acton_V\phi,\chi^{(2)}\acton_Vv\}_V
                \end{aligned}
            \end{equation}
            for all $h\in\frH$, $\phi\in\frB$, and $v\in V$. Here, $\{-,-\}_V$ is the derived bracket~\eqref{eq:derived_bracket_module} associated with $(V,\sfd_V,\acton_V,\sfb_V)$, and we have used Sweedler notation~\eqref{eq:Sweedler_notation}. In addition, we require that $\BBox=[\sfd_\frB,\sfb_\frB]=[\sfd_V,\sfb_V]$.
        \end{definition}
        
        \begin{definition}
            A \uline{metric} on a $\BVbox$-module $V$ on a cyclic $\BVbox$-algebra $\frB$ is the same as a metric as a pseudo-$\BVbox$-module. A \uline{cyclic $\BVbox$-module} is a cyclic pseudo-$\BVbox$-module that is a $\BVbox$-module.
        \end{definition}
        
        \paragraph{Comments.}
        Anticipating our upcoming work~\cite{Borsten:2022aa}, we note that the above definitions have a nice operadic formulation, which is crucial for a generalisation to homotopy algebras, providing a generalisation of the present analysis of double copy to theories with interactions beyond cubic terms. Operads are algebraic gadgets that encode the axioms of an algebraic structure. They are formulated inside an ambient setting of symmetric monoidal categories; in the present case, the category is that of cochain complexes of modules over the Hopf algebra $\frH$, with the monoidal operation given by tensor product over $\IR$ (rather than the smaller tensor product over $\frH$). This means that all operations are linear (rather than multi-linear) over $\frH$. Thus, one can construct an operad in the category of cochain complexes of $\frH$-modules such that algebras over this operad are $\BVbox$-algebras over $\frH$. Similarly, one can construct a two-sorted operad over the cochain complexes of $\frH$-modules, with one sort for elements of the $\BVbox$-algebra itself and the other sort for elements of the module; an algebra over this operad is then a $\BVbox$-algebra over $\frH$ together with a $\BVbox$-module over it. 
        
        \subsection{Gauge fixing}\label{ssec:gaugeFixedPreBVBox}
        
        Let us now examine how gauge fixing a BV action of a CK-dual gauge field theory affects the pseudo-$\BVbox$-algebra structure on the colour-stripped dg commutative algebra. We shall focus on ordinary gauge theories; higher gauge theories can be dealt with in a similar fashion.
        
        \paragraph{General gauge-fixing procedure.}
        The traditional gauge-fixing procedure in the BV formalism usually consists of the following three steps~\cite{Batalin:1981jr,Batalin:1984jr}, see also~\cite{Gomis:1994he,Jurco:2018sby} for a detailed review: 
        
        \begin{enumerate}[(i)]\itemsep-4pt
            \item Add trivial pairs of fields to the BV action as needed. For ordinary gauge theories, one such gauge Lie algebra valued pair, consisting of a Nakanishi--Lautrup field and an anti-ghost, is sufficient. For higher gauge theories, one needs a full BV triangle of trivial pairs, cf.~\cite{Batalin:1984jr} and~\cite{Jurco:2018sby} for a review.
            \item Using these fields, define a gauge-fixing fermion $\Psi$, i.e.~a function of ghost degree~$-1$ in the BV fields, which, in turn, defines a symplectomorphism or canonical transformation
            \begin{equation}\label{eq:canonicalTransformation}
                (\phi,\phi^+)\ \mapsto\ (\tilde\phi,\tilde\phi^+)\ \coloneqq\ \left(\phi,\phi^++\delder[\Psi]{\phi}\right)
            \end{equation}
            for the fields $\phi$ and anti-fields $\phi^+$. For simplicity, we always restrict ourselves to the usual quadratic gauge-fixing fermions, for which the canonical transformation becomes a constant rotation.
            \item In most cases of interest to us (ordinary gauge and gauge--matter theories, as well as $\caN=0$ supergravity) the BV action is linear in the anti-fields after this symplectomorphism, and we can simply put to zero all terms containing anti-fields from the gauge-fixed action.
        \end{enumerate}
        
        Even for considering tree-level CK duality, it is helpful (albeit not necessary) to consider the gauge-fixed BV action as the kinematic operator exclusively maps fields to anti-fields.
        
        \paragraph{Step (i): trivial pairs.}
        Consider a cubic gauge field theory with an underlying pseudo-$\BVbox$-algebra $(\frB,\sfd,\sfm_2,\sfb)$. The first step in the gauge-fixing procedure consists of adding trivial pairs which amounts to extending the field space by $V\oplus V[-1]$ for $V$ a graded vector space. 
        
        \begin{proposition}\label{prop:trivialPairs}
            Let $(\frB,\sfd,\sfm_2,\sfb)$ be a pseudo-$\BVbox$-algebra and $V$ a graded vector space with an action of $\BBox$. Then the tuple $(\frB',\sfd',\sfm'_2,\sfb')$ with
            \begin{subequations}
                \begin{equation}
                    \frB'\ \coloneqq\ \frB\oplus V\oplus V[-1]
                \end{equation}
                and\footnote{We use $n_i$ to indicate Nakanishi--Lautrup fields to avoid the notational collision of the usual $b$ with our operator $\sfb$.}
                \begin{equation}
                    \begin{aligned}
                        \sfd'(\phi_1,n_1,\bar c_1)\ &\coloneqq\ (\sfd\phi_1,0,n_1[-1])~,
                        \\
                        \sfm'_2\big((\phi_1,n_1,\bar c_1),(\phi_2,n_2,\bar c_2)\big)\ &\coloneqq\ (\sfm_2(\phi_1,\phi_2),0,0)~,
                        \\
                        \sfb'(\phi_1,n_1,\bar c_1)\ &\coloneqq\ (\sfb\phi_1,(\BBox\bar c_1)[1],0)
                    \end{aligned}
                \end{equation}
            \end{subequations}
            for all $\phi_{1,2}\in\frB$, $n_{1,2}\in V$, and $\bar c_{1,2}\in V[-1]$ is a pseudo-$\BVbox$-algebra with $\BBox'=[\sfd',\sfb']=\BBox$.
        \end{proposition}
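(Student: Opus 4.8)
The plan is to exploit that the summand $V\oplus V[-1]$ enters $\frB'$ as a square-zero ideal: by construction $\sfm'_2$ reads off only the $\frB$-components of its two arguments and deposits the result in the $\frB$-summand, so every product involving an element of $V\oplus V[-1]$ vanishes identically. Consequently the entire multiplicative and derived-bracket structure of $\frB'$ factors through the projection $\pi_\frB\colon\frB'\to\frB$, and the whole verification reduces to the corresponding identities on $\frB$, supplemented by some degree bookkeeping for the shifts.

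First I would check that $(\frB',\sfd',\sfm'_2)$ is a dg commutative algebra. Graded commutativity and associativity of $\sfm'_2$ are immediate from those of $\sfm_2$, since on homogeneous elements both sides vanish unless both arguments lie in the $\frB$-summand. For $\sfd'^2=0$ one notes that the $V\to V[-1]$ piece of $\sfd'$ lands in a slot that $\sfd'$ annihilates, so the only surviving contribution is $\sfd^2=0$ on $\frB$; tracking that $n_1\in V$ of degree~$j$ is sent to $n_1[-1]\in V[-1]$ of degree~$j+1$ confirms that $\sfd'$ has degree~$+1$. The Leibniz rule for $\sfd'$ over $\sfm'_2$ again collapses to the Leibniz rule for $\sfd$ over $\sfm_2$, because $\sfm'_2$ ignores the $V[-1]$-component produced by $\sfd'$. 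An identical argument, now using that $\sfb'$ annihilates the $V$-slot, gives $\sfb'^2=0$; here the convention $(V[-1])[1]=V$ shows that $\bar c_1$ of degree~$j$ is sent to $(\BBox\bar c_1)[1]\in V$ of degree~$j-1$, so $\sfb'$ has degree~$-1$ as required.

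Next I would compute $\BBox'=[\sfd',\sfb']$ summand by summand. On $\frB$ it reproduces $[\sfd,\sfb]=\BBox$. On the $V$-summand, $\sfd'$ sends $n$ to $n[-1]$ and then $\sfb'$ sends $n[-1]$ to $(\BBox\,n[-1])[1]=\BBox n$, whereas $\sfd'\sfb' n=0$ since $\sfb'$ annihilates $V$; hence $\BBox' n=\BBox n$. On the $V[-1]$-summand, $\sfb'$ sends $\bar c$ to $(\BBox\bar c)[1]\in V$ and $\sfd'$ returns this to $\BBox\bar c\in V[-1]$, whereas $\sfb'\sfd'\bar c=0$ since $\sfd'$ annihilates $V[-1]$; hence $\BBox'\bar c=\BBox\bar c$. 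The point is simply that the two shifts $[-1]$ and $[1]$ cancel and $\BBox$ commutes with them, so $\BBox'=\BBox$ on all of $\frB'$.

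Finally, and this is the crux, I would show that the derived bracket of $\frB'$ defines a shifted Lie algebra. Substituting the definitions into~\eqref{eq:derived_bracket} and using that $\sfm'_2$ keeps only $\frB$-components, one finds
\begin{equation}
    \{a,b\}'\ =\ \big(\{\pi_\frB a,\pi_\frB b\}_\frB,\,0,\,0\big)~,
\end{equation}
where $\{-,-\}_\frB$ is the derived bracket of $\frB$; the sign $(-1)^{|a|}$ occurring in the derived bracket matches $(-1)^{|\pi_\frB a|}$ on homogeneous elements because $\pi_\frB$ preserves degree (and when $\pi_\frB a=0$ the bracket vanishes outright, so the discrepancy is immaterial). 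Shifted anti-symmetry~\eqref{eq:BV_GB_antisymmetry} is then automatic from the derived-bracket form, exactly as in \cref{def:pre-BVbox-algebra}, and the shifted Jacobi identity~\eqref{eq:BV_GB_Jacobi} for $\frB'$ follows by iterating the displayed identity: every nested bracket reduces to the corresponding nested bracket of $\pi_\frB$-images in $\frB$, which satisfies~\eqref{eq:BV_GB_Jacobi} by hypothesis. I do not expect a genuine obstacle here; the only step demanding real care is the degree bookkeeping of the shifts, in particular confirming that $\sfb'\sfd'$ restricted to the $V$-slot returns exactly $\BBox$ rather than a shifted copy of it.
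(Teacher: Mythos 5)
Your proposal is correct and follows essentially the same route as the paper's proof: verify the dg commutative algebra axioms and $\sfb'^2=0$ directly, compute $[\sfd',\sfb']=\BBox$ summand by summand, and observe that the derived bracket on $\frB'$ collapses to $(\{\phi_1,\phi_2\},0,0)$, so the shifted anti-symmetry and Jacobi identity are inherited from $\frB$. Your explicit degree bookkeeping on the $V$ and $V[-1]$ slots and the remark about the sign $(-1)^{|a|}$ versus $(-1)^{|\pi_\frB a|}$ are the only details the paper leaves implicit, and both are handled correctly.
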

        
        \begin{proof}
            It is straightforward to see that $\sfd'^2=0$ and that $\sfd'$ is a derivation for $\sfm'_2$. Thus, $(\frB',\sfm'_2,\sfd')$ is a dg commutative algebra. Likewise, $\sfb'^2=0$ and $[\sfd',\sfb']=\BBox$. In addition, the derived bracket~\eqref{eq:derived_bracket} for $(\frB',\sfd',\sfm'_2,\sfb')$ is
            \begin{equation}
                \{(\phi_1,n_1,\bar c_1),(\phi_2,n_2,\bar c_2)\}'\ =\ (\{\phi_1,\phi_2\},0,0)~,
            \end{equation}
            where $\{-,-\}$ is the derived bracket for $(\frB,\sfd,\sfm_2,\sfb)$. Consequently, the conditions~\eqref{eq:BV_GB1} are also satisfied. Altogether, $(\frB',\sfd',\sfm'_2,\sfb')$ is a pseudo-$\BVbox$-algebra.
        \end{proof}
        
        \paragraph{Step (ii): gauge-fixing fermion and canonical transformation.}
        The second step in the gauge-fixing procedure, namely introducing a gauge-fixing fermion $\Psi$ and performing the canonical transformation~\eqref{eq:canonicalTransformation} preserves the $\BVbox$-algebra structure for the usual quadratic\footnote{Recall that we assume that $\Psi$ is quadratic in all BV fields. This includes the usual gauge-fixing fermions, as e.g.~those for $R_\xi$-gauges.} $\Psi$, as in this case the canonical transformation~\eqref{eq:canonicalTransformation} is merely a constant rotation of all the fields and anti-fields.
        
        We will mostly be interested in the gauge-fixing condition $\sfb A=0$, and we can implement this condition by using the usual gauge-fixing fermion for $R_\xi$-gauges. This leads to an interesting phenomenon. For simplicity and concreteness sake, let us consider the differentials for the pseudo-$\BVbox$-algebra $\frB$ in an ordinary gauge theory on $d$-dimensional Minkowski space $\IM^d$. In degree~$0$ and $1$, we have the following structure before applying the symplectomorphism:
        \begin{equation}
            \begin{tikzcd}[row sep=1cm,column sep=2.7cm]
                \stackrel{c}{\Omega^0(\IM^d)} \arrow[r,shift left,"\rmd"] & \stackrel{A}{\Omega^1(\IM^d)} \arrow[l,shift left,"\rmd^\dagger"] \arrow[r,shift left,"K"] & \stackrel{A^+}{\Omega^1(\IM^d)} \arrow[l,shift left,"K^+"] \arrow[r,shift left,"\rmd^\dagger"]  & \stackrel{c^+}{\Omega^0(\IM^d)} \arrow[l,shift left,"\rmd"]
                \\
                & \stackrel{n}{\Omega^0(\IM^d)} \arrow[rd,shift left,"\sfid", pos=0.1] & \stackrel{n^+}{\Omega^0(\IM^d)} \arrow[ld,shift left,"\BBox", pos=0.1]
                \\
                & \stackrel{\bar c^+}{\Omega^0(\IM^d)} \arrow[ru,shift left,"\sfid", pos=0.1] & \stackrel{\bar c}{\Omega^0(\IM^d)} \arrow[lu,shift left,"\BBox", pos=0.1]
            \end{tikzcd}
        \end{equation}
        where $K$ is a kinematic operator, e.g.~$K=\rmd^\dagger \rmd$ for Yang--Mills theory, or $K=\star\rmd$ for Chern--Simons theory, and $K^+$ is the corresponding operator $\sfb$ so that we have $[\sfd,\sfb]=\BBox$, e.g.~$K^+=\sfid$ for Yang--Mills theory and $K^+=\rmd^\dagger\star$ for Chern--Simons theory, with $\BBox=\wave$ in these two examples. After the symplectomorphism induced by the usual $R_\xi$-gauge-fixing fermion for the gauge $\sfb A=\rmd^\dagger A=0$, we have the following:
        \begin{equation}\label{eq:gauge-fixed-bidirectional-complex}
            \begin{tikzcd}[row sep=1cm,column sep=2.7cm]
                \stackrel{c}{\Omega^0(\IM^d)} \arrow[r,shift left,"\rmd"] \arrow[rdd,shift left,"-\BBox"] & \stackrel{A}{\Omega^1(\IM^d)} \arrow[l,shift left,"\rmd^\dagger"] \arrow[rd,shift right,"-\rmd^\dagger"', pos=0.1] \arrow[r,shift left,"K"] & \stackrel{A^+}{\Omega^1(\IM^d)} \arrow[l,shift left,"K^+"]\arrow[r,shift left,"\rmd^\dagger"]  & \stackrel{c^+}{\Omega^0(\IM^d)} \arrow[l,shift left,"\rmd"]
                \\
                & \stackrel{n}{\Omega^0(\IM^d)} \arrow[rd,shift left,"-\sfid", pos=0.1] \arrow[ru,shift left,"\rmd", pos=0.1] & \stackrel{n^+}{\Omega^0(\IM^d)} \arrow[ld,shift left,"\BBox", pos=0.1]
                \\
                & \stackrel{\bar c^+}{\Omega^0(\IM^d)} \arrow[ru,shift left,"\sfid", pos=0.1] & \stackrel{\bar c}{\Omega^0(\IM^d)} \arrow[lu,shift left,"-\BBox", pos=0.1]\arrow[ruu,shift left,"-\BBox"]
            \end{tikzcd}
        \end{equation}
        
        \paragraph{Step (iii): removing anti-fields.}
        The third and final step is now the most subtle one, as we need to truncate the interaction vertices to a subset to remove the anti-fields. The colour-stripped fields form a subspace $\frF$ of $\frB$ with a natural complement $\frA$ of colour-stripped anti-fields, and we have projectors $\Pi_\frF$ and $\Pi_\frA$ such that
        \begin{equation}
            \sfid_\frB\ =\ \Pi_\frF+\Pi_\frA~,~~~\Pi_\frF^2\ =\ \Pi_\frF~.
        \end{equation}
        Removing the anti-fields from the BV action then changes the dg commutative algebra $(\frB,\sfd,\sfm_2)$ of the colour-stripped action to the dg commutative algebra $(\frB',\sfd',\sfm'_2)$. Because the action contains only fields, the differential and product it encodes can only map fields to anti-fields. Hence,
        \begin{equation}
            \begin{aligned}
                \sfd'\ &\coloneqq\ \Pi_\frA\circ\sfd\circ\Pi_\frF~,
                \\
                \sfm_2'\ &\coloneqq\ \Pi_\frA\circ\sfm_2\circ(\Pi_\frF\otimes\Pi_\frF)
            \end{aligned}
        \end{equation}
        with a potential cyclic structure preserved. This directly extends to modules encoding potential matter fields. 
        
        This projection requires to redefine $\sfb$ to preserve $[\sfd, \sfb]=\BBox$. It is, however, clear that there is a redefinition of $\sfb$ to an operator $\sfb'$ such that all fields are in its kernel and that $[\sfd',\sfb']=\BBox$. In our above example, we have
        \begin{equation}
            \begin{tikzcd}[row sep=1cm,column sep=2.7cm]
                \stackrel{c}{\Omega^0(\IM^d)}  \arrow[rdd,shift left,"-\BBox"] & \stackrel{A}{\Omega^1(\IM^d)} \arrow[rd,shift right,"-\rmd^\dagger"', pos=0.1] \arrow[r,shift left,"K"] & \stackrel{A^+}{\Omega^1(\IM^d)} \arrow[l,shift left,"K^+"]  \arrow[ld,shift left,"\rmd^\dagger", pos=0.1] & \stackrel{c^+}{\Omega^0(\IM^d)} \arrow[ldd,shift left,"-\sfid"]
                \\
                & \stackrel{n}{\Omega^0(\IM^d)} \arrow[ru,shift left,"\rmd", pos=0.1] & \stackrel{n^+}{\Omega^0(\IM^d)}  \arrow[lu,shift right,"-\rmd"', pos=0.1]
                \\
                & \stackrel{\bar c^+}{\Omega^0(\IM^d)}  \arrow[luu,shift left,"-\sfid"]& \stackrel{\bar c}{\Omega^0(\IM^d)} \arrow[ruu,shift left,"-\BBox"]
            \end{tikzcd}
        \end{equation}
        We see that the anti-field of the Nakanishi--Lautrup field takes over the role of the ghost, and this is a generic feature of gauge fixing to $\sfb A=0$. It is therefore clear that a redefinition $\sfb\rightarrow \sfb'$ with $[\sfd',\sfb']=\BBox$ always exists. 
        
        Moreover, it follows that the image of $\sfb'$ is now fully contained in the subspace of fields $\frF\subseteq\frB$:
        \begin{subequations}\label{eq:ker-im-constraints}
            \begin{equation}\label{eq:ker-im-constraints-b}
                \im(\sfb')\ \subseteq\ \frF\ \subseteq\ \ker(\sfb')~.
            \end{equation}
            Analogously, we have for the anti-fields $\frA\subseteq\frB$:
            \begin{equation}
                \im(\sfd')\ \subseteq\ \frA\ \subseteq\ \ker(\sfd')~.
            \end{equation}
        \end{subequations}        
        This generalises to arbitrary gauge theories as well as abelian higher gauge theories, such as $\caN=0$ supergravity.
        
        In all cases of interest, it turns out that gauge fixing in this manner ensures that $\sfb$ is of second order, so that we arrive at a gauge-fixed $\BVbox$-algebra $(\frB,\sfd',\sfm_2',\sfb')$. This observation directly extends to $\BVbox$-modules.
        \begin{definition}\label{def:gaugedFixedBVBoxAlgebra}
            A \uline{gauge-fixed $\BVbox$-algebra} is a $\BVbox$-algebra $\frB$ together with a decomposition $\frB=\frF\oplus \frA$ as graded vector spaces into field and anti-field spaces such that~\eqref{eq:ker-im-constraints} are satisfied.
        \end{definition}
        
        \subsection{Koszul hierarchy: kinematic \texorpdfstring{$L_\infty$}{L-infinity}-algebras}\label{ssec:kinematic_L_infty_algebras}
        
        Let us briefly give an outlook on our forthcoming paper~\cite{Borsten:2022aa}, in which we shall discuss the homotopy generalisation of the picture presented here. That is, the algebras with unary and binary operations (i.e.~differentials and binary products) appearing in our discussion will be replaced by operations with arbitrary arity. But we can encounter such homotopy algebras already here.
        
        Derived brackets of the type~\eqref{eq:derived_bracket} are reminiscent of other derived bracket constructions, cf.~\cite{Getzler:1010.5859,Deser:2018oyg,Borsten:2021ljb}, which naturally produce higher brackets of arbitrary arity. A similar phenomenon can be observed here. Consider a theory with colour-stripped dg commutative algebra $(\frB,\sfd,\sfm_2)$ together with a nilpotent operator $\sfb$ of degree~$-1$ which gives rise to the colour-stripped propagator $\frac{\sfb}{[\sfb,\sfd]}$. While the derived bracket $\{-,-\}$ given in~\eqref{eq:derived_bracket}, which is the operator $\Phi^2_\sfb$ in $\frB$ as defined in~\eqref{eq:def_Phir}, is no longer a Lie bracket, one finds that the Jacobi identity is violated only up to homotopy. Generally, we have the following result.
        
        \begin{proposition}[{\cite[Section 2.5]{Bering:1996kw}}]\label{prop:derived_brackets_form_L_infty_algebra}
            Given a graded commutative algebra $\frA$ with a differential $\delta$ of degree~$-1$, the operations $\Phi_\delta^r$ defined in~\eqref{eq:def_Phir} form the grade-shifted higher products of an $L_\infty$-algebra. This $L_\infty$-algebra is known as the \uline{Koszul hierarchy}. It is quasi-isomorphic to the cochain complex defined by $\delta$.
        \end{proposition}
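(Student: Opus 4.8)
The plan is to recognise the Koszul hierarchy as a system of \emph{higher derived brackets} inside the operator Lie algebra of $\frA$, so that the $L_\infty$-relations follow from nothing more than $\delta^2=0$ and the Jacobi identity for the commutator of operators. Concretely, let $L\coloneqq\mathrm{End}_\IR(\frA)$ be the graded Lie algebra under the graded commutator $[-,-]$, and for $\phi\in\frA$ let $L_\phi\in L$ denote left multiplication $L_\phi(\psi)\coloneqq\sfm_2(\phi,\psi)$. Because $\sfm_2$ is graded commutative and associative, the multiplication operators $\mathfrak{a}\coloneqq\{L_\phi\mid\phi\in\frA\}\cong\frA$ form an \emph{abelian} graded Lie subalgebra of $L$. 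Assuming $\frA$ unital with $\delta\unit=0$ (the non-unital case is handled by passing to the unitalisation $\frA\oplus\IR\unit$ and restricting), I define the projection $P\colon L\to\mathfrak{a}$ by $P(T)\coloneqq L_{T\unit}$; then $P|_{\mathfrak{a}}=\id$, its kernel $\{T\mid T\unit=0\}$ is again a Lie subalgebra, and $\delta\in\ker P$ with $[\delta,\delta]=2\delta^2=0$.

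This is exactly the data to which Voronov's higher-derived-bracket construction applies, producing an $L_\infty$-algebra whose underlying graded vector space is $\frA[1]$ and whose grade-shifted higher products are
\[
    \ell_n(\phi_1,\dotsc,\phi_n)\ \coloneqq\ P\big[\dotsb[[\delta,L_{\phi_1}],L_{\phi_2}]\dotsb,L_{\phi_n}\big]~,
\]
the $L_\infty$-relations being the image under $P$ of the graded Jacobi identity and of $[\delta,\delta]=0$ in $L$. The substantive remaining step — and the one I expect to absorb most of the effort — is the purely combinatorial identification of these brackets with the operations of~\eqref{eq:def_Phir}, i.e.\ $\ell_n=\Phi^n_\delta$ up to the standard suspension signs. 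I would prove this by induction on $n$: evaluating the iterated commutator on $\unit$ and using $\delta\unit=0$ together with $P(T)=L_{T\unit}$ reproduces, at each step, precisely the non-derivation defect that defines $\Phi^{n+1}_\delta$ from $\Phi^n_\delta$ in~\eqref{eq:def_Phir}. The base cases are immediate: $\ell_1(\phi)=L_{\delta\phi}$ recovers $\Phi^1_\delta=\delta$, and a short calculation gives $\ell_2(\phi_1,\phi_2)=\Phi^2_\delta(\phi_1,\phi_2)$, the binary derived bracket, cf.~\eqref{eq:derived_bracket}; the graded symmetry of the $\Phi^n_\delta$ demanded by the $L_\infty$ axioms is in turn forced by the commutativity of $\sfm_2$ and matches the Koszul signs produced by the suspension $\frA\to\frA[1]$. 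The one point needing care throughout is the bookkeeping of signs, since $\delta$ and the $L_{\phi_i}$ carry odd and mixed degrees; the algebraic input, however, never exceeds $\delta^2=0$ and the graded commutativity and associativity of $\sfm_2$.

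Finally, the quasi-isomorphism assertion is read off once the brackets are identified: the linear operation of this $L_\infty$-algebra is $\ell_1=\Phi^1_\delta=\delta$, so its underlying cochain complex is $(\frA[1],\delta)$, which up to the degree shift is exactly the cochain complex defined by $\delta$. In particular the construction neither creates nor destroys cohomology — its cohomology is $H^\bullet_\delta(\frA)$ — which is the content of the closing sentence of the proposition.
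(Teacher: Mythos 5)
Your main construction is sound, and it is a genuinely different route from the ones the paper leans on. The paper's proof is by citation: the direct computational proof of Bering et al.\ and, as noted in the surrounding text, Markl's reinterpretation of the Koszul hierarchy as a twisting of the cochain complex $(\frA,\delta)$. You instead package the braces as Voronov higher derived brackets in $\mathrm{End}(\frA)$, with $\mathfrak{a}$ the (abelian, by commutativity and associativity) subalgebra of multiplication operators and $P(T)=L_{T\unit}$. The hypotheses of Voronov's theorem do check out ($\ker P$ is a subalgebra, $P|_{\mathfrak a}=\id$, $\delta\in\ker P$ once you arrange $\delta\unit=0$ by unitalisation), so the $L_\infty$-relations reduce to the operator Jacobi identity and $\delta^2=0$; the remaining combinatorial identification $\ell_n=\Phi^n_\delta$ (evaluate the iterated commutator on $\unit$ and peel off the non-derivation defect) is exactly how Voronov himself recovers the Koszul--Akman braces, so that part of your plan will go through, signs permitting.

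There is, however, a genuine gap in your treatment of the final sentence. ``Quasi-isomorphic to the cochain complex defined by $\delta$'' means quasi-isomorphic \emph{as $L_\infty$-algebras} to the abelian $L_\infty$-algebra carrying only the differential --- this is how the paper later uses the statement (``a dg Lie algebra which is quasi-isomorphic to an ordinary cochain complex''). Your argument only observes that $\ell_1=\delta$, so the two structures share an underlying complex and hence a cohomology. That is strictly weaker: agreement of $\ell_1$ never by itself implies the existence of an $L_\infty$-morphism trivialising the higher brackets (formality is a nontrivial property, and Voronov's construction generically produces non-formal $L_\infty$-algebras --- Courant-type brackets being the standard example), so nothing in your setup singles out the Koszul hierarchy as trivial. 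What is needed is an explicit $L_\infty$-quasi-isomorphism from $(\frA[1],\delta,0,0,\dotsc)$ to the Koszul hierarchy; this exists because the hierarchy is the twist of the trivial structure by the $L_\infty$-isomorphism whose linear part is the identity and whose higher components are the iterated products $\phi_1\otimes\dotsb\otimes\phi_n\mapsto\sfm_2(\phi_1,\sfm_2(\phi_2,\dotsc))$ (Markl's observation, cited in the paragraph following the proposition). Either invoke that twisting description or construct this morphism directly; without it the last clause of the proposition is unproven.
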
    
        
        \noindent
        For examples, see also~\cite{Bering:1996kw,Bering:2006eb}. 
        
        Another important observation was made in~\cite{Markl:2013pca}, where the Koszul hierarchy was interpreted as a twisting of a cochain complex by a specific twist, see also~\cite{Markl:2014kna}. This observation not only gives a surprisingly simple proof of the above proposition but also provides for new examples of hierarchies of higher brackets, called higher braces there. Such braces are referred to as natural ones if they use only the data that are available for any graded associative commutative algebra with a differential $\delta$. As such, they could possibly also be relevant as kinematic $L_\infty$-algebras.
        
        The Koszul hierarchy is singled out by the requirement that the binary bracket measures the failure of $\delta$ being first order, the coefficient of $\delta(\sfm_2(\sfm_2(\phi_1,\phi_2),\phi_3))$ in  $\Phi^3_\delta(\phi_1,\phi_2,\phi_3)$ in~\eqref{eq:def_Phir} is $\pm 1$ and that  $\Phi^k_\delta=0$ implies $\Phi^{k+1}_\delta=0$ (hereditarity), cf.~\cite{Markl:2013pca}.
        
        Hence, we can define pre-$\BVbox$-algebras.
        
        \begin{definition}\label{def:kinematic_infty_algebra}
            A pre-$\BVbox$-algebra is a dg commutative algebra $(\frB,\sfd,\sfm_2)$ together with a differential $\sfb$ of degree~$-1$. The \uline{kinematic $L_\infty$-algebra} of a pre-$\BVbox$-algebra is the (shifted) $L_\infty$-algebra given by the Koszul hierarchy.
        \end{definition}
        \noindent Recall, however, from above, that there are, in fact, a number of possibly relevant kinematic $L_\infty$-algebras (which are, again, isomorphic to the Koszul hierarchy).
        
        We then have the following immediate specialisations of our above notions:
        
        \begin{corollary}
            A pre-$\BVbox$-algebra in which the higher product $\mu_2$ in the kinematic $L_\infty$-algebra satisfies the shifted Jacobi identity~\eqref{eq:BV_GB_Jacobi} is a pseudo-$\BVbox$-algebra. A pre-$\BVbox$-algebra with strict kinematic $L_\infty$-algebra is a $\BVbox$-algebra.
        \end{corollary}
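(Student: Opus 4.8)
The plan is to unwind the relevant definitions and observe that both statements follow almost immediately from results already established, so that the proof is essentially a matter of bookkeeping rather than computation. The one piece of structure I would pin down first is the dictionary between the grade-shifted operations $\Phi^r_\sfb$ of~\eqref{eq:def_Phir} and the honest $L_\infty$-products of the kinematic $L_\infty$-algebra, as supplied by \cref{prop:derived_brackets_form_L_infty_algebra}. In particular, specialising~\eqref{eq:def_Phir} to $\delta=\sfb$ (degree~$-1$) and $\sfm=\sfm_2$ (degree~$0$), one checks that $\Phi^2_\sfb$ reproduces precisely the derived bracket~\eqref{eq:derived_bracket}; thus the binary product $\mu_2$ of the kinematic $L_\infty$-algebra is exactly the derived bracket $\{-,-\}$.

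For the first statement I would then argue as follows. Recall from \cref{def:pre-BVbox-algebra} that the derived bracket automatically satisfies the shifted anti-symmetry~\eqref{eq:BV_GB_antisymmetry}, this being an immediate consequence of the graded commutativity of $\sfm_2$ in~\eqref{eq:derived_bracket}. Hence the only further condition required for $(\frB,\sfd,\sfm_2,\sfb)$ to be a pseudo-$\BVbox$-algebra is that $\{-,-\}$ obey the shifted Jacobi identity~\eqref{eq:BV_GB_Jacobi}. Since the hypothesis is precisely that $\mu_2=\Phi^2_\sfb=\{-,-\}$ satisfies~\eqref{eq:BV_GB_Jacobi} on the nose (as opposed to only up to the homotopy furnished by $\mu_1$ and $\mu_3$), both defining conditions of a pseudo-$\BVbox$-algebra hold, and the first claim follows.

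For the second statement I would use that a strict $L_\infty$-algebra is one whose products of arity $\ge 3$ all vanish. For the kinematic $L_\infty$-algebra these higher products are the operations $\Phi^r_\sfb$ with $r\ge 3$, so strictness requires in particular $\Phi^3_\sfb=0$, which by \cref{def:order_differential_operator} is exactly the assertion that $\sfb$ is a differential operator of second order; conversely, by the hereditarity property recalled in the text, $\Phi^3_\sfb=0$ forces $\Phi^r_\sfb=0$ for all $r\ge 3$, so strictness and second-orderness are equivalent. It then remains to confirm that such a pre-$\BVbox$-algebra is actually a pseudo-$\BVbox$-algebra: this is precisely \cref{prop:BVbox_is_pre_BVbox}, which shows that $\sfb$ of second order implies the shifted Poisson identity and hence the shifted Jacobi identity~\eqref{eq:BV_GB_Jacobi}. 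Combining this with the automatic shifted anti-symmetry yields a pseudo-$\BVbox$-algebra, and since $\sfb$ is of second order, \cref{def:BVBoxAlgebra} identifies it as a $\BVbox$-algebra.

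The argument carries no substantial analytic obstacle, as both implications reduce to the cited definitions together with \cref{prop:BVbox_is_pre_BVbox,prop:derived_brackets_form_L_infty_algebra}. The only point genuinely requiring care is the grade-shift-and-sign dictionary mentioned above: I would verify explicitly that under the conventions of \cref{prop:derived_brackets_form_L_infty_algebra} the $L_\infty$-product $\mu_2$ matches $\Phi^2_\sfb$ and that the vanishing of $\mu_{\ge 3}$ corresponds to the vanishing of $\Phi^{\ge 3}_\sfb$, so that the notions of strictness appearing in the two statements coincide. Once this correspondence is fixed, no further work is needed.
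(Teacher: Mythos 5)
Your argument is correct and is exactly the unwinding the paper intends: the paper states this corollary as immediate, with only the remark that $\sfb$ being of second order is tantamount to $\Phi^3_\sfb=0$, and your identification of $\mu_2$ with the derived bracket together with the appeal to \cref{prop:BVbox_is_pre_BVbox} and the hereditarity of the $\Phi^r_\sfb$ fills in precisely the intended steps.
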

        
        \noindent
        Recall from~\eqref{eq:def_Phir} that $\sfb$ being of second order is tantamount to $\Phi^3_\delta(-,-,-)$ being trivial.
        
        There are two important points to note. First of all, while $L_\infty$-algebras can always be strictified, there is no reason to believe that any pre-$\BVbox$-algebra is quasi-isomorphic to a pseudo-$\BVbox$-algebra. Hence, we cannot expect all field theories to have an underlying pseudo-$\BVbox$-algebra, or, equivalently, exhibit CK duality. So, although every theory has a kinematic $L_\infty$-algebra associated to it, this does not imply every theory has CK duality; this kinematic algebra should not be regarded as that of some CK-duality respecting numerators nor an off-shell CK-duality manifesting action.

        Secondly, it may be surprising that such a physically evidently non-trivial datum as the kinematic Lie algebra extends to a dg Lie algebra which is quasi-isomorphic to an ordinary cochain complex. Again, however, we have to note that this quasi-isomorphism does not amount to a physical equivalence, which would be captured by a quasi-isomorphism of the underlying pseudo-$\BVbox$-algebras. Moreover, we note that for most interesting field theories, the operator $\sfb$ has trivial cohomology, and hence the kinematic $L_\infty$-algebra is quasi-isomorphic to the trivial one.
        
        We plan to investigate the deeper implications of kinematic $L_\infty$-algebras in future work~\cite{Borsten:2022aa}.
        
        \paragraph{Pseudo-$\BVbox$-algebras vs $\BVbox$-algebras.} 
        Let us close this section on CK duality with a comment on the difference between pseudo-$\BVbox$-algebras and $\BVbox$-algebras. As we saw, a pseudo-$\BVbox$-algebra (and a module over it) is the minimal requirement for having a kinematic Lie algebra manifested on the Feynman diagram expansion of the currents of a field theory. We can now conclude that the restriction to $\BVbox$-algebras is certainly natural from a mathematical perspective: the fact that the operator $\sfb$ in the data of a pseudo-$\BVbox$-algebra is of second order is equivalent to the Poisson identity by \cref{prop:BVbox_is_pre_BVbox}, which, in turn, is equivalent to the Koszul hierarchy being a dg Lie algebra.
        
        From a physics perspective, it is natural to ask for the kinematic Lie algebra to lift uniquely to arbitrary local operators constructed by multiplying the fields in the theory. This unique lift is provided by the additional Poisson identity.
        
        \section{Double copy and syngamies for special \texorpdfstring{$\BVbox$}{BV-box}-algebras}\label{sec:doubleCopySyngamies}
        
        In this section, we shall explain how two $\BVbox$-algebras of field theories can be combined into a syngamy. The double copy of gauge theories to supergravity theories is a special case of this construction.
        
        \paragraph{Outline.}
        Given our discussion of CK duality, we are led to looking for an interpretation of the double copy in terms of $\BVbox$-algebras\footnote{In the rest of the paper, we will focus on $\BVbox$-algebras and comment here and there on the problems of generalising the picture to pseudo-$\BVbox$-algebras.}, and the obvious starting point is the tensor product of two $\BVbox$-algebras~\cite{Borsten:2021gyl,Macrelli:2022bay,Macrelli2022sm,Bonezzi:2022bse}. As we will see below, this tensor product exists, extending the tensor product of two dg commutative algebras. 
        
        This direct tensor product, however, does not match our expectations. To see this, let us sketch the simple example of biadjoint scalar field theory, which is fully developed in~\cref{ssec:biadjoint_scalar_field_theory}. The $\BVbox$-algebra of this theory has an underlying cochain complex $\sfCh(\frB)$ which is concentrated in degrees $1$ and $2$, 
        \begin{equation}
            \sfCh(\frB)\ =\ \left(\frB_1\xrightarrow{~\sfd~}\frB_2\right)\ =\ \left(\frg\otimes \scC^\infty(\IM^d)\xrightarrow{~\sfid_\frg\otimes \wave~}\frg\otimes \scC^\infty(\IM^d)\right)
        \end{equation}
        for $\frg$ some Lie algebra. The kinematic Lie algebra is simply the Lie algebra $\frg$, and the double copy of $\frB$ with itself is expected to yield biadjoint scalar field theory with fields taking values in $\frg\otimes \frg\otimes \scC^\infty(\IM^d)$. 
        
        The tensor product $\sfCh(\frB)\otimes \sfCh(\frB)$, however, is given by the cochain complex
        \begin{equation}
            \left(\left(\frg\otimes \scC^\infty(\IM^d)\right)^{\otimes 2}\xrightarrow{~\hat \sfd~}\IR^2\otimes \left(\frg\otimes \scC^\infty(\IM^d)\right)^{\otimes 2}\xrightarrow{~\hat \sfd~}\left(\frg\otimes \scC^\infty(\IM^d)\right)^{\otimes 2}\right)
        \end{equation}
        concentrated in degrees $2$, $3$, and $4$, which has several problems. First of all, there are no BV fields, as all elements of degree~$1$ are trivial. We will show that this problem can be solved by switching from the tensor product $\BVbox$-algebra $\hat \frB$ to its kinematic Lie algebra $\hat \frK\coloneqq\frKin(\hat\frB)$, which involves a degree shift. After this, we end up with a cochain complex concentrated in degrees $1$, $2$, and $3$. The field space, $\hat\frK_1=\left(\frg\otimes \scC^\infty(\IM^d)\right)^{\otimes 2}$, however, is still larger than the expectation $\frg\otimes \frg\otimes \scC^\infty(\IM^d)$. This issue can be addressed by considering $\BVbox$-algebra over the Hopf algebra $\frH_{\IM^d}$ of \cref{ex:H-Box-Minkowski}, which is generated by the differential operators on space-time $\IM^d$ with constant coefficient and hence allows us to control momentum dependence. As shown in~\cref{app:restricted_tensor_product}, there is a natural notion of restricted tensor product of $\BVbox$-algebras which are modules over a restrictedly tensorable cocommutative Hopf algebra such as $\frH_{\IM^d}$, which here amounts to restricting the tensor product to the kernel of the operators
        \begin{equation}
            \parder{x^\mu}\otimes \sfid-\sfid\otimes \parder{x^\mu}
        \end{equation}
        with $x^\mu$ the Cartesian coordinates on $\IM^d$. As a result the tensor product $\scC^\infty(\IM^d)\otimes \scC^\infty(\IM^d)$ is reduced to $\scC^\infty(\IM^d)$, and we have a new, reduced kinematic Lie algebra $\tilde \frK$. Even after this reduction, however, the homogeneous subspaces of $\tilde \frK$ of degrees $2$ and $3$ are still too large and require further reduction. In fact, we note that the space of BV fields is, in a sense, double its expected size.\footnote{A similar problem arises in the pure spinor formulation of supergravity, see the comments in \cref{ssec:pure_spinors_SYM}.} Moreover, we note that for the biadjoint scalar field theory, $\tilde \frK$ is split in half as
        \begin{equation}
            \tilde \frK\ =\ \coker(\sfb\otimes \sfid-\sfid\otimes \sfb)\oplus \ker(\sfb\otimes \sfid-\sfid\otimes \sfb)~,
        \end{equation}
        and $\ker(\sfb\otimes \sfid-\sfid\otimes \sfb)$ is naturally a dg Lie subalgebra. Hence, we restrict further to the kernel $\sfb\otimes \sfid-\sfid\otimes \sfb$, and the resulting dg Lie algebra turns out to be the expected one for biadjoint scalar field theory. This restriction should be seen analogously to the section condition in double field theory (albeit we only double the functions, not the dimensions of the tensors). In this sense, the double copy is closely related to double field theory, cf.~also~\cite{Bonezzi:2022bse}. After a first version of this paper was finished, we became aware of the paper~\cite{Zeitlin:2014xma}, in which essentially the same restriction was used in the context of a special form of $\caN=0$ supergravity on Hermitian manifolds.
        
        In the following, we will develop the construction sketched above in detail.
        
        \subsection{Tensor products of \texorpdfstring{$\BVbox$}{BV-box}-algebras}
        
        \paragraph{Ordinary tensor product.}
        Recall that $\BVbox$-algebras as defined in \cref{def:BVBoxAlgebra} are dg commutative algebras endowed with an additional operation $\sfb$. The tensor product of two dg commutative algebras $\frC_\rmL=(\frC_\rmL,\sfd_\rmL,\sfm_{2\rmL})$ and $\frC_\rmR=(\frC_\rmR,\sfd_\rmR,\sfm_{2\rmR})$ is another dg commutative algebra $\hat\frC=(\hat\frC,\hat\sfd,\hat\sfm_2)$ with $\hat \frC=\frC_\rmL\otimes\frC_\rmR$ and the differential and product defined by 
        \begin{subequations}\label{eq:ordinary_tensor_product}
            \begin{equation}
                \begin{aligned}
                    \hat \sfd(\phi_{1\rmL}\otimes\phi_{1\rmR})\ &\coloneqq\ \sfd_\rmL\phi_{1\rmL}\otimes\phi_{1\rmR}+(-1)^{|\phi_{1\rmL}|}\phi_{1\rmL}\otimes\sfd_\rmR\phi_{1\rmR}~,
                    \\
                    \hat \sfm_2(\phi_{1\rmL}\otimes\phi_{1\rmR},\phi_{2\rmL}\otimes\phi_{2\rmR})\ &\coloneqq\ (-1)^{|\phi_{1\rmR}||\phi_{2\rmL}|}\sfm_{2\rmL}(\phi_{1\rmL},\phi_{2\rmL})\otimes\sfm_{2\rmR}(\phi_{1\rmR},\phi_{2\rmR})~.
                \end{aligned}
            \end{equation}
            If both $\frC_\rmL$ and $\frC_\rmR$ are endowed with metric $\inner{-}{-}_\rmL$ and $\inner{-}{-}_\rmR$ of degrees $n_\rmL$ and $n_\rmR$, respectively, then the tensor product $\hat \frC$ is endowed with a metric of degree~$n_\rmL+n_\rmR$ given by\footnote{For a proof of the cyclicity of this tensor product, see \cref{app:postponed_proofs}.} 
            \begin{equation}
                \inner{\phi_{1\rmL}\otimes\phi_{1\rmR}}{\phi_{2\rmL}\otimes\phi_{2\rmR}}\ \coloneqq\ 
                (-1)^{|\phi_{1\rmR}||\phi_{2\rmL}|+n_\rmR(|\phi_{1\rmL}|+|\phi_{2\rmL}|)}\inner{\phi_{1\rmL}}{\phi_{2\rmL}}_\rmL\inner{\phi_{1\rmR}}{\phi_{2\rmR}}_\rmR~.
            \end{equation}
            
            This definition extends to a tensor product of two (metric) $\BVbox$-algebras $\frB_\rmL$ and $\frB_\rmR$.
            \begin{definition}
                The tensor product of two $\BVbox$-algebras $\frB_\rmL$ and $\frB_\rmR$ is the $\BVbox$-algebra $\hat \frB$, whose underlying dg commutative algebra is the tensor product of $\frB_\rmL$ and $\frB_\rmR$, both regarded as dg commutative algebras, and whose operator $\hat \sfb$ is defined as  
                \begin{equation}\label{eq:tensor_product_b_ordinary}
                    \hat \sfb(\phi_{\rmL}\otimes\phi_{\rmR})\ \coloneqq\ \hat \sfb_+(\phi_{\rmL}\otimes\phi_{\rmR})\ \coloneqq\ \sfb_\rmL\phi_{\rmL}\otimes\phi_{\rmR}+(-1)^{|\phi_{\rmL}|}\phi_{\rmL}\otimes\sfb_\rmR\phi_{\rmR}
                \end{equation}
                for all $\phi_{\rmL}\otimes\phi_{\rmR}\in \hat \frB$. Correspondingly, we have a natural definition of $\BBox$ on the tensor product,
                \begin{equation}
                    \hat \BBox\ \coloneqq\ [\hat \sfd,\hat \sfb]\ =\ \BBox_\rmL\otimes \sfid+\sfid\otimes \BBox_\rmR~.
                \end{equation}
            \end{definition}
            
            We will be particularly interested in the special case that both $\frB_\rmL$ and $\frB_\rmR$ are $\BVbox$-algebras over a Hopf algebra $\frH$ with $\BBox_\rmL=\BBox_\rmR=\BBox\in\frH$. In this case, $\hat \frB=\frB_\rmL\otimes \frB_\rmR$ is also a module over $\frH$ with 
            \begin{equation}
                \chi\acton(\phi_{\rmL}\otimes\phi_{\rmR})\ \coloneqq\ (\chi\acton \phi_{\rmL})\otimes \phi_{\rmR}+\phi_{\rmL}\otimes (\chi\acton\phi_{\rmR})
            \end{equation}
            for all $\chi\in \frH$ and $\phi_{\rmL}\otimes\phi_{\rmR}\in \hat \frB$.
        \end{subequations}        
        
        \paragraph{Restricted tensor product.}
        As explained above, the ordinary tensor product is not directly suitable for an interpretation of the double copy, and we have to use the restricted tensor product introduced in \cref{app:restricted_tensor_product}. 
        
        \begin{proposition}\label{prop:tensorProductBVBoxAlgebrasOverH}
            Let $\frH$ be a restrictedly tensorable cocommutative Hopf algebra. Given two $\BVbox$-algebras $\frB_\rmL=(\frB_\rmL,\sfd_\rmL,\sfm_{2\rmL},\sfb_\rmL)$ and $\frB_\rmR=(\frB_\rmR,\sfd_\rmR,\sfm_{2\rmR},\sfb_\rmR)$ over $\frH$ with $\BBox_\rmL=\BBox_\rmR=\BBox\in\frH$, the tuple $(\hat \frB,\hat \sfd,\hat \sfm_2,\hat \sfb)$ with 
            \begin{subequations}\label{eq:tensorProductBVBoxAlgebrasOverH}
                \begin{equation}
                    \hat \frB\ \coloneqq\ \frB_\rmL\otimes^\frH\frB_\rmR\ \coloneqq\ \bigcap_{\chi\in\frH}\ker\big((\chi\otimes\unit-\unit\otimes\chi)\acton\big)\ \subseteq\ \frB_\rmL\otimes\frB_\rmR
                \end{equation}
                and\footnote{Note the sign flip between the two summands in $\hat\sfb$ relative to~\eqref{eq:tensor_product_b_ordinary}, which will turn out to be convenient.}
                \begin{equation}\label{eq:def_red_tensor_ops}
                    \begin{aligned}
                        \hat \sfd(\phi_{1\rmL}\otimes^\frH\phi_{1\rmR})\ &\coloneqq\ \sfd_\rmL\phi_{1\rmL}\otimes\phi_{1\rmR}+(-1)^{|\phi_{1\rmL}|}\phi_{1\rmL}\otimes\sfd_\rmR\phi_{1\rmR}~,
                        \\
                        \hat \sfm_2(\phi_{1\rmL}\otimes^\frH\phi_{1\rmR},\phi_{2\rmL}\otimes^\frH\phi_{2\rmR})\ &\coloneqq\ (-1)^{|\phi_{1\rmR}||\phi_{2\rmL}|}\sfm_{2\rmL}(\phi_{1\rmL},\phi_{2\rmL})\otimes\sfm_{2\rmR}(\phi_{1\rmR},\phi_{2\rmR})~,
                        \\
                        \hat \sfb_-(\phi_{1\rmL}\otimes^\frH\phi_{1\rmR})\ &\coloneqq\ \sfb_\rmL\phi_{1\rmL}\otimes\phi_{1\rmR}-(-1)^{|\phi_{1\rmL}|}\phi_{1\rmL}\otimes\sfb_\rmR\phi_{1\rmR}
                    \end{aligned}
                \end{equation}
                for all $\phi_{1\rmL,2\rmL}\in\frB_\rmL$ and $\phi_{1\rmR,2\rmR}\in\frB_\rmR$ forms a dg BV algebra; in particular, $\hat \sfb_-$ is of second order with respect to $\hat \sfm_2$.
                
                If both $\frB_\rmL$ and $\frB_\rmR$ come with $\frH$-linear metrics $\inner{-}{-}_{\rmL}$ and $\inner{-}{-}_{\rmR}$ of degrees $n_\rmL$ and $n_\rmR$, respectively, then
                \begin{equation}\label{prop:tensorProductMetric}
                    \inner{\phi_{1\rmL}\otimes^\frH\phi_{1\rmR}}{\phi_{2\rmL}\otimes^\frH\phi_{2\rmR}}\ \coloneqq\ 
                    (-1)^{|\phi_{1\rmR}||\phi_{2\rmL}|+n_\rmR(|\phi_{1\rmL}|+|\phi_{2\rmL}|)}\inner{\phi_{1\rmL}}{\phi_{2\rmL}}_{\rmL}\inner{\phi_{1\rmR}}{\phi_{2\rmR}}_{\rmR}
                \end{equation}
            \end{subequations}
            defines a metric for $(\hat \frB,\hat \sfd,\hat \sfm_2,\hat \sfb_-)$ of degree~$n_\rmL+n_\rmR$ for all $\phi_{1\rmL,2\rmL}\in\frB_\rmL$ and $\phi_{1\rmR,2\rmR}\in\frB_\rmR$.            
        \end{proposition}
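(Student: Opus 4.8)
The plan is to lean on the restricted-tensor-product machinery of \cref{app:restricted_tensor_product}, which already supplies that $(\hat\frB,\hat\sfd,\hat\sfm_2)$ with $\hat\frB=\frB_\rmL\otimes^\frH\frB_\rmR$ is a metric dg commutative algebra over $\frH$: this is precisely the symmetric monoidal structure on $\frH$-modules, and it uses cocommutativity and restricted tensorability of $\frH$. In particular it gives closure of $\hat\frB$ under $\hat\sfd$ and $\hat\sfm_2$, the dg-commutative-algebra axioms, and the non-degeneracy of the metric \eqref{prop:tensorProductMetric} together with its compatibility with $\hat\sfd$ and $\hat\sfm_2$ (the first two lines of \eqref{eq:axiomsMetric}, cf.\ the cyclicity computation in \cref{app:postponed_proofs}). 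What then remains is to install $\hat\sfb_-$ and verify five facts that upgrade this to a \emph{metric dg BV algebra}: (a) $\hat\sfb_-$ restricts to $\hat\frB$; (b) $\hat\sfb_-$ has degree~$-1$ and $\hat\sfb_-^2=0$; (c) $\hat\BBox\coloneqq[\hat\sfd,\hat\sfb_-]$ vanishes on $\hat\frB$; (d) $\hat\sfb_-$ is of second order with respect to $\hat\sfm_2$; and (e) the metric is compatible with $\hat\sfb_-$.

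For (a) I would note that $\sfb_\rmL$ and $\sfb_\rmR$ are $\frH$-linear, so $\hat\sfb_-$ commutes with the action of every $\chi\otimes\unit-\unit\otimes\chi$ and hence preserves each kernel and their intersection $\hat\frB$ (the same remark re-establishes that $\hat\sfd$ preserves $\hat\frB$). Property (b) is a two-line computation on a simple tensor: the two diagonal terms in $\hat\sfb_-^2$ vanish by $\sfb_\rmL^2=\sfb_\rmR^2=0$, and the two cross terms cancel after tracking the Koszul sign $(-1)^{|\phi_{1\rmL}|}$, the degree being $-1$ since both $\sfb_\rmL,\sfb_\rmR$ have degree~$-1$.

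The conceptual heart is (c). Computing the graded commutator on the ambient tensor product gives $[\hat\sfd,\hat\sfb_-]=\BBox_\rmL\otimes\sfid-\sfid\otimes\BBox_\rmR$; here the relative minus sign in $\hat\sfb_-$, as opposed to $\hat\sfb_+$ in \eqref{eq:tensor_product_b_ordinary}, is exactly what turns the usual sum $\BBox_\rmL\otimes\sfid+\sfid\otimes\BBox_\rmR$ into a difference. Because both factors are $\BVbox$-algebras over $\frH$ with the \emph{same} $\BBox\in\frH$, one has $\BBox_\rmL\phi_\rmL=\BBox\acton\phi_\rmL$ and $\BBox_\rmR\phi_\rmR=\BBox\acton\phi_\rmR$, so $\BBox_\rmL\otimes\sfid-\sfid\otimes\BBox_\rmR$ is precisely the action of $\BBox\otimes\unit-\unit\otimes\BBox$. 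Restricting to $\hat\frB$ and invoking the defining condition at $\chi=\BBox$, this operator vanishes identically; hence $\hat\BBox=0$ on $\hat\frB$, and the tuple is a dg BV algebra rather than merely a $\BVbox$-algebra. This single step both motivates the restricted tensor product and explains the sign flip, and it is the one place where the hypotheses $\BBox_\rmL=\BBox_\rmR=\BBox\in\frH$ and the restriction to $\hat\frB$ are indispensable.

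For (d) I would avoid the lengthy order-three identity for $\hat\sfb_-$ and instead use that the order condition $\Phi^{r+1}_\delta=0$ of \cref{def:order_differential_operator} is \emph{linear} in $\delta$ (each $\Phi^r_\delta$ contains exactly one factor of $\delta$), so, at fixed operator degree, the second-order operators form a linear subspace. Each of $\sfb_\rmL\otimes\sfid$ and $\sfid\otimes\sfb_\rmR$ is second order with respect to $\hat\sfm_2$: since $\hat\sfm_2$ multiplies the two tensor legs independently, the third deviation of each partial operator factors through $\Phi^3_{\sfb_\rmL}$ or $\Phi^3_{\sfb_\rmR}$, both of which vanish by hypothesis. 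As $\hat\sfb_-=(\sfb_\rmL\hat\otimes\sfid)-(\sfid\hat\otimes\sfb_\rmR)$ is a signed combination of these two second-order operators, it is second order. Finally, (e) is the compatibility $\inner{\hat\sfb_-\phi_1}{\phi_2}=(-1)^{|\phi_1|}\inner{\phi_1}{\hat\sfb_-\phi_2}$, which follows from the individual $\sfb$-compatibilities of $\inner{-}{-}_\rmL$ and $\inner{-}{-}_\rmR$ and the tensor sign in \eqref{prop:tensorProductMetric}, with the minus sign in $\hat\sfb_-$ again tracking consistently. Assembling (a)--(e): $(\hat\frB,\hat\sfd,\hat\sfm_2,\hat\sfb_-)$ is a dg commutative algebra with a second-order degree-$(-1)$ differential and vanishing $\hat\BBox$, which by \cref{prop:BVbox_is_pre_BVbox} and \cref{def:BVAlgebra} is exactly a metric dg BV algebra. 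The main obstacle is not conceptual but bookkeeping: keeping the Koszul signs in $\hat\sfd$, $\hat\sfm_2$, and $\hat\sfb_-$ mutually consistent through (c)--(e), and confirming that the partial operators are genuinely second order against the sign-twisted product $\hat\sfm_2$.
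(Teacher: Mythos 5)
Your proposal is correct, and for the dg-commutative-algebra structure, the vanishing of $\hat\BBox=[\hat\sfd,\hat\sfb_-]$ on $\hat\frB$, and the metric, it follows the paper's proof essentially verbatim (the paper likewise delegates closure and the dg axioms to \cref{app:restricted_tensor_product} and the metric compatibilities to the cyclicity computation in \cref{app:postponed_proofs}). Where you genuinely diverge is the second-order property of $\hat\sfb_-$. The paper goes through \cref{prop:BVbox_is_pre_BVbox}: it identifies second-orderness with the shifted Poisson identity and then performs the ``lengthy but straightforward'' computation showing that the Poissonator of $\hat\frB$ factors as $\widehat{\sfPoiss}=\pm\,\sfPoiss_\rmL\otimes\sfm_{2\rmR}(\cdot,\sfm_{2\rmR}(\cdot,\cdot))\mp\sfm_{2\rmL}(\cdot,\sfm_{2\rmL}(\cdot,\cdot))\otimes\sfPoiss_\rmR$, so that the two factor identities imply the tensor-product one. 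You instead observe that each $\Phi^r_\delta$ of~\eqref{eq:def_Phir} contains exactly one occurrence of $\delta$, with signs depending only on $|\delta|$, so that order-$r$ operators of a fixed degree form a linear subspace; you then reduce to showing that $\sfb_\rmL\otimes\sfid$ and $\sfid\otimes\sfb_\rmR$ are separately second order against $\hat\sfm_2$, which holds because $\hat\sfm_2$ acts leg-by-leg and hence $\Phi^3_{\sfb_\rmL\otimes\sfid}$ factors through $\Phi^3_{\sfb_\rmL}\otimes\sfm_{2\rmR}(\cdot,\sfm_{2\rmR}(\cdot,\cdot))=0$ (and symmetrically on the right). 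This is a clean and valid alternative: it buys you a structural argument in which the only sign-sensitive computation is the factorisation of the deviations of a one-sided operator, and it makes transparent why the relative sign in $\hat\sfb_-$ versus $\hat\sfb_+$ is irrelevant for second-orderness while being essential for $\hat\BBox=0$. The paper's route, by contrast, produces the explicit Poissonator identity as a by-product, which it then reuses in the subsequent remark to explain why the construction fails for mere pseudo-$\BVbox$-algebras — that diagnostic information is not delivered by your linearity argument. Both arguments implicitly use that the identities are multilinear and may therefore be checked on simple tensors of the ambient product before restricting to $\hat\frB$; it would be worth saying this explicitly.
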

        
        \begin{proof}
            From the discussion in \cref{app:restricted_tensor_product}, it is clear that $(\hat \frB,\hat \sfd,\hat \sfm_2)$ forms a dg commutative algebra, and that $\hat \sfb_-$ is a differential of degree~$-1$. Furthermore, we have $\chi\acton(\phi_\rmL\otimes^\frH\phi_\rmR)=(\chi\acton\phi_\rmL)\otimes^\frH\phi_\rmR=\phi_\rmL\otimes^\frH(\chi\acton\phi_\rmR)$ for all $\chi\in\frH$, $\phi_\rmL\in\frB_\rmL$, and $\phi_\rmR\in\frB_\rmR$. Consequently, 
            \begin{equation}
                [\hat \sfd,\hat \sfb_-](\phi_\rmL\otimes^\frH\phi_\rmR)\ =\ (\BBox\phi_\rmL)\otimes^\frH\phi_\rmR-\phi_\rmL\otimes^\frH(\BBox\phi_\rmR)\ =\ 0~,
            \end{equation}
            because of the assumption $\BBox_\rmL=\BBox_\rmR=\BBox\in\frH$. In addition, the derived bracket~\eqref{eq:derived_bracket} now becomes
            \begin{equation}\label{eq:derivedBracketTensorProduct}
                \begin{aligned}
                    \{\phi_{1\rmL}\otimes^\frH\phi_{1\rmR},\phi_{2\rmL}\otimes^\frH\phi_{\rmR 2}\}\ &=\ (-1)^{|\phi_{1\rmR}||\phi_{2\rmL}|}\{\phi_{1\rmL},\phi_{2\rmL}\}_\rmL\otimes \sfm_{2\rmR}(\phi_{1\rmR},\phi_{2\rmR})
                    \\
                    &\kern1cm-(-1)^{|\phi_{1\rmR}||\phi_{2\rmL}|+|\phi_{1\rmL}|+|\phi_{2\rmL}|}\sfm_{2\rmL}(\phi_{1\rmL},\phi_{2\rmL})\otimes\{\phi_{1\rmR},\phi_{2\rmR}\}_\rmR
                \end{aligned}
            \end{equation}
            for all $\phi_{1\rmL,2\rmL}\in\frB_\rmL$ and $\phi_{1\rmR,2\rmR}\in\frB_\rmR$, and closure on $\hat \frB$ follows from closure of the defining operations on $\hat \frB$. It remains to show that $\hat \sfb_-$ is of second order, which is equivalent to the shifted Poisson identity~\eqref{eq:BV_GB_Poisson}, as we saw in \cref{prop:BVbox_is_pre_BVbox}. Using the Poissonator defined in~\eqref{eq:PoissonatorJacobiator}, some lengthy but straightforward calculation similar to the derivation~\eqref{eq:derivationPoissonImpliesJacobi} shows that
            \begin{equation}
                \begin{aligned}
                    &\widehat{\sfPoiss}(\phi_{1\rmL}\otimes^\frH\phi_{1\rmR}\,,\,\phi_{2\rmL}\otimes^\frH\phi_{2\rmR}\,,\,\phi_{3\rmL}\otimes^\frH\phi_{3\rmR})
                    \\
                    &\kern.5cm=\ (-1)^{|\phi_{2\rmR}||\phi_{3\rmL}|+|\phi_{1\rmR}|(|\phi_{2\rmL}|+|\phi_{3\rmL}|)}\big[\sfPoiss_\rmL(\phi_{1\rmL},\phi_{2\rmL},\phi_{3\rmL})\otimes\sfm_{2\rmR}(\phi_{1\rmR},\sfm_{2\rmR}(\phi_{2\rmR},\phi_{3\rmR}))
                    \\
                    &\kern1.5cm-(-1)^{|\phi_{1\rmL}||\phi_{2\rmL}|+|\phi_{3\rmL}|}\sfm_{2\rmL}(\phi_{1\rmL},\sfm_{2\rmL}(\phi_{2\rmL},\phi_{3\rmL}))\otimes\sfPoiss_\rmR(\phi_{1\rmR},\phi_{2\rmR},\phi_{3\rmR})\big]
                \end{aligned}
            \end{equation}
            for all $\phi_{1\rmL,2\rmL,3\rmL}\in\frB_\rmL$ and $\phi_{1\rmR,2\rmR,3\rmR}\in\frB_\rmR$. Hence, the shifted Poisson identities for $(\frB_\rmL,\sfd_\rmL,\sfm_{2\rmL},\sfb_\rmL)$ and $(\frB_\rmR,\sfd_\rmR,\sfm_{2\rmR},\sfb_\rmR)$ imply that of $(\hat\frB,\hat \sfd,\hat \sfm_2,\hat \sfb_-)$. The properties of the metric follow by restriction of those on the ordinary tensor product, see also \cref{app:postponed_proofs}.
        \end{proof}
        
        \begin{remark}
            When $(\frB_\rmL,\sfd_\rmL,\sfm_{2\rmL},\sfb_\rmL)$ and $(\frB_\rmR,\sfd_\rmR,\sfm_{2\rmR},\sfb_\rmR)$ are mere pseudo-$\BVbox$-algebras, see \cref{def:pre-BVbox-algebra}, then the tuple $(\hat\frB,\hat\sfd,\hat\sfm_2,\hat\sfb_-)$ defined in~\eqref{eq:tensorProductBVBoxAlgebrasOverH} is generally not even a pseudo-$\BVbox$-algebra. This is because the Jacobiator~\eqref{eq:PoissonatorJacobiator} for the derived bracket~\eqref{eq:derivedBracketTensorProduct} does not only involve the Jacobiators for the derived brackets of $(\frB_\rmL,\sfd_\rmL,\sfm_{2\rmL},\sfb_\rmL)$ and $(\frB_\rmR,\sfd_\rmR,\sfm_{2\rmR},\sfb_\rmR)$ but also their Poissonators, defined in~\eqref{eq:PoissonatorJacobiator}. In all physical applications, however, we are dealing exclusively with (gauge-fixed) $\BVbox$-algebras.
        \end{remark}
        
        \begin{remark}
            Consider $\BVbox$-algebras $\frB_\rmL$ and $\frB_\rmR$, which are modules over the Hopf algebra $\frH_{\IM^d}$ defined in \cref{ex:H-Box-Minkowski} and whose homogeneously graded vector spaces are rings over $\scC^\infty(\IM^d)$ and hence fields over space-time $\IM^d$. The above construction of the restricted tensor product will ensure that the homogeneously graded subspaces of $\hat \frB$ are still fields over $\IM^d$, instead of fields over $\IM^d\otimes \IM^d$, as would be the case for the ordinary tensor product.
        \end{remark}
        
        \subsection{Syngamies of pure gauge theories}\label{ssec:syngamy_pure_gauge}
        
        \paragraph{Syngamies.}
        Let us now come to the construction of syngamies, i.e.~the construction of a field theory from two $\BVbox$-algebras. The usual double copy constructions and its variants will turn out to be special cases of this construction. We start with the construction for pure gauge theories, such as pure Yang--Mills or Chern--Simons theory, and theories with a flavour Lie algebra, such as the biadjoint scalar theory; theories with matter, i.e.~fields in general representation of a gauge or flavour Lie algebra, will be discussed in \cref{sec:syngamiesMatterTheories}.
        
        Even after taking the restricted tensor product $\frB_\rmL\otimes^\frH\frB_\rmR$ of two $\BVbox$-algebras $(\frB_\rmL,\sfd_\rmL,\sfm_{2\rmL},\sfb_\rmL)$ and $(\frB_\rmR,\sfd_\rmR,\sfm_{2\rmR},\sfb_\rmR)$ underlying two field theories, we still end up with a BV field space that is twice the expected size. Concretely, each of the factors $\frB_\rmL$ and $\frB_\rmR$ contains subspaces for fields and anti-fields, and hence the tensor product contains the subspaces 
        \begin{equation}\label{eq:quadruple_field_content}
            \text{fields}\otimes\text{fields}~,~~~
            \text{fields}\otimes\text{anti-fields}~,~~~
            \text{anti-fields}\otimes\text{fields}~,~~~
            \text{anti-fields}\otimes\text{anti-fields}~,
        \end{equation}
        which is twice the expected field content of a syngamy.\footnote{Again we note that the same problem arises in the pure spinor formulation of supergravity, see the comments in \cref{ssec:pure_spinors_SYM}.} We therefore have to restrict to the correct subspace, and a convenient choice in the case of gauge-fixed $\BVbox$-algebras is the restriction to 
        \begin{equation}
            \ker(\hat \sfb_-)\ = \ \ker(\sfb_\rmL\otimes \sfid-\sfid\otimes \sfb_\rmR)
        \end{equation}
        with $\hat \sfb_-$ defined in~\eqref{eq:def_red_tensor_ops}, 
        which naturally extends the restriction from the ordinary tensor product to the restricted tensor product over $\frH$. Recall from \cref{def:gaugedFixedBVBoxAlgebra} that for gauge-fixed $\BVbox$-algebras, the kernel of $\sfb$ contains the field space.\footnote{In many examples, $\ker(\sfb)=\frF=\im(\sfb)$.} Considering the kernel of $\sfb$ means to work with a slightly enlarged BV field space $\frF'=\ker(\sfb)\supseteq \frF$, which will turn out to be harmless in all relevant examples. Denoting the cokernel by $\frA'$, the kernel of $\hat\sfb_-$ will consist of the space $\frF'_\rmL\otimes^\frH\frF'_\rmR$ as well as elements of $\frF'_\rmL\otimes^\frH \frA'_\rmR~\oplus~\frA'_\rmL\otimes^\frH\frF'_\rmR$ that are symmetrised such that $\hat \sfb_-$ annihilates them.
        
        The BV algebra structure on $\ker(\hat\sfb_-)$ yields a (metric) dg Lie algebra, which defines the syngamy field theory.
        
        \begin{definition}
            Let $\frH$ be a restrictedly tensorable cocommutative Hopf algebra. Furthermore, let $\frB_\rmL=(\frB_\rmL,\sfd_\rmL,\sfm_{2\rmL},\sfb_\rmL)$ and $\frB_\rmR=(\frB_\rmR,\sfd_\rmR,\sfm_{2\rmR},\sfb_\rmR)$ be two gauge-fixed $\BVbox$-algebras over $\frH$ with $\BBox_\rmL=\BBox_\rmR=\BBox\in\frH$ and let $\hat \frB=(\hat \frB,\hat \sfd,\hat \sfm_2,\hat \sfb_-)$ be the restricted tensor product over $\frH$ as defined in~\eqref{eq:tensorProductBVBoxAlgebrasOverH}. The \uline{syngamy of $\frB_\rmL$ and $\frB_\rmR$} is the restricted kinematic dg Lie algebra $\frKin^0(\hat \frB)$ of \cref{cor:kin_alg_is_dg_Lie_for_BV-algebra}.
        \end{definition}
        
        \paragraph{Inner product.}
        As we shall show now, $\frK^0$ can naturally be endowed with a metric of degree~$-3$, which is necessary for the definition of the action. The following construction may seem a bit abstract and not particularly well-motivated. Nevertheless, it will be the one reproducing all expected features when we will look at concrete examples in \cref{sec:examples}.\footnote{
            There is an analytical subtlety here, which we will largely gloss over outside of this footnote. If $\inner--_{\frB_\rmL}$ and $\inner--_{\frB_\rmR}$ are finite and well-defined, then of course so is the tensor product $\inner--_{\hat\frB}$ and, thus, $\inner--_{\frK^0}$ defined with respect to it. Now, for analytic purposes it is convenient to have $\frB_\rmL$ and $\frB_\rmR$ be nuclear topological vector spaces, e.g.\ spaces of smooth functions $\scC^\infty(\IM^d)$ with the usual Fréchet topology, such that the topological tensor product behaves well. In that case, however, the naïve inner product $\inner fg=\int_{\IM^d}fg$ fails to be finite for general smooth $f$ and $g$, and if one double-copies this, the restricted tensor product will consist of functions that are translation-invariant along $d$ directions in $\IM^d\times\IM^d$, which means that $\inner--_{\hat\frB}$ and therefore $\inner--_{\frK^0}$ have an `infinite volume factor' $\operatorname{vol}(\IM^d)$ that must be cancelled or otherwise regulated away. Working naïvely, one thus runs into such harmless but annoying infinite factors, which are an artefact of the lack of infrared regulators. For more sophisticated approaches to infrared regulators, see~e.g.~\cite{Macrelli:2019afx,Dutsch:2019wxk}.
        }
        
        Firstly, in view of the tensor product~\eqref{eq:tensorProductBVBoxAlgebrasOverH}, let us write
        \begin{equation}\label{eq:dpm}
            \hat \sfd_\pm(\phi_\rmL\otimes^\frH\phi_\rmR)\ \coloneqq\  \sfd_\rmL\phi_\rmL\otimes\phi_\rmR\pm(-1)^{|\phi_\rmL|}\phi_\rmL\otimes\sfd_\rmR\phi_\rmR    
        \end{equation}
        for all $\phi_{\rmL,\rmR}\in\frB_{\rmL,\rmR}$. Evidently, $\hat \sfd_+=\hat \sfd$. It is then easy to check that
        \begin{subequations}\label{eq:propertiesDpm}
            \begin{equation}
                [\hat \sfd_\pm,\hat \sfb_-]\ =\ (1\mp1)\BBox
                \eand
                [\hat \sfd_+,\hat \sfd_-]\ =\ 0~.
            \end{equation}
            and
            \begin{equation}
                \inner{\hat \sfd_\pm\hat \phi_1}{\hat \phi_2}\ =\ -(-1)^{|\hat \phi_1|}\inner{\hat \phi_1}{\hat \sfd_\pm\hat \phi_2}
            \end{equation}
        \end{subequations}
        for all $\hat \phi_{1,2}\in\hat \frB$. 
        
        \begin{definition}\label{def:compatible_metric}
            Let $\hat\frB=(\hat\frB,\hat\sfd,\hat\sfm_2,\hat\sfb_-)$ be the dg BV algebra defined in~\eqref{eq:tensorProductBVBoxAlgebrasOverH} and let $\frK^0=\frKin^0(\hat \frB)$ be the associated syngamy. Suppose that $\hat\frB$ has a metric $\inner{-}{-}_{\hat\frB}$ of degree~$-6$. We say that a metric $\inner{-}{-}_{\frK^0}$ on the syngamy of degree~$-3$ is \uline{compatible} if 
            \begin{equation}
                \inner{\BBox\hat{\phi}_1[1]}{\hat{\phi}_2[1]}_{\frK^0}\ =\ (-1)^{|\hat{\phi}_1|}\inner{\hat \sfd_-\hat{\phi}_1}{\hat{\phi}_2}_{\hat \frB}
            \end{equation}
            for all $\hat{\phi}_{1,2}[1]\in\frK^0$ with $\hat\sfd_-$ defined in~\eqref{eq:dpm}.
        \end{definition}
        
        \noindent 
        If $\BBox$ is invertible, there is a unique compatible metric on $\frK^0$.
        
        \begin{proposition}\label{prop:syngamyCyclicJustification}
            Consider again the situation in \cref{def:compatible_metric} and suppose that the action of $\BBox$ is invertible. Then
            \begin{equation}
                \inner{\hat{\phi}_1[1]}{\hat{\phi}_2[1]}_{\frK^0}\ \coloneqq\ (-1)^{|\hat{\phi}_1|}\inner{ \BBox^{-1}\hat \sfd_-\hat{\phi}_1}{\hat{\phi}_2}_{\hat \frB}
            \end{equation}
            for all $\hat{\phi}_{1,2}[1]\in\frK^0$ is a compatible metric on the syngamy.
        \end{proposition}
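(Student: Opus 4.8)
The plan is to verify, for the bilinear form defined by the proposed formula, that it (i) is well-defined of degree~$-3$, (ii) satisfies the compatibility relation of \cref{def:compatible_metric}, (iii) is graded symmetric, (iv) is cyclic with respect to both the differential $\hat\sfd[1]$ and the kinematic bracket on $\frK^0=\frKin^0(\hat\frB)$, and (v) is non-degenerate; together these show it is a compatible metric. Throughout I would abbreviate $G\coloneqq\BBox^{-1}\hat\sfd_-$ and use that $\BBox$ is central and self-adjoint, that $\hat\sfd$, $\hat\sfb_-$, $\hat\sfd_-$ commute with $\BBox^{-1}$, and the relations $[\hat\sfd_+,\hat\sfd_-]=0$ and $[\hat\sfd_\pm,\hat\sfb_-]=(1\mp1)\BBox$ from~\eqref{eq:propertiesDpm}. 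In particular $\hat\sfd_-\hat\sfb_-+\hat\sfb_-\hat\sfd_-=2\BBox$, so that $G\hat\sfb_-+\hat\sfb_-G=2\,\id$ on all of $\hat\frB$ and $\hat\sfb_-G=2\,\id$ on $\ker(\hat\sfb_-)$.

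Steps (i) and (ii) are immediate: since $\hat\sfd_-$ raises degree by one and $\BBox^{-1}$ preserves it, the degree count against the degree~$-6$ form $\inner{-}{-}_{\hat\frB}$ lands in degree~$-3$ after the shift $|\hat\phi_i[1]|=|\hat\phi_i|-1$; and substituting $\BBox\hat\phi_1$ for $\hat\phi_1$ in the formula, using $[\BBox,\hat\sfd_-]=0$ and $\BBox^{-1}\BBox=\id$, reproduces exactly the defining relation of \cref{def:compatible_metric}. For the graded symmetry (iii) I would move $\BBox^{-1}$ across the pairing by self-adjointness of $\BBox$, apply the anti-self-adjointness of $\hat\sfd_-$ from~\eqref{eq:propertiesDpm} to obtain $\inner{\hat\phi_1[1]}{\hat\phi_2[1]}_{\frK^0}=-\inner{\hat\phi_1}{G\hat\phi_2}_{\hat\frB}$, and then compare with $\inner{\hat\phi_2[1]}{\hat\phi_1[1]}_{\frK^0}$ rewritten via the graded symmetry of $\inner{-}{-}_{\hat\frB}$; the accumulated sign is $(-1)^{2|\hat\phi_1||\hat\phi_2|+1}=-1$, which is exactly what the shift requires, so symmetry holds with no constraint on the degrees.

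The cyclicity conditions (iv) are the heart of the argument. For the differential one uses $\hat\sfd\hat\sfd_-=-\hat\sfd_-\hat\sfd$, the $\hat\sfd$-cyclicity of $\inner{-}{-}_{\hat\frB}$ in~\eqref{eq:axiomsMetric}, and $[\hat\sfd,\BBox^{-1}]=0$ to reduce the two terms to $\pm\inner{G\hat\phi_1}{\hat\sfd\hat\phi_2}_{\hat\frB}$, which cancel. For the bracket, the crucial simplification is that on $\ker(\hat\sfb_-)$ the derived bracket~\eqref{eq:derived_bracket} collapses to $\{\hat\phi_1,\hat\phi_2\}=\hat\sfb_-\hat\sfm_2(\hat\phi_1,\hat\phi_2)$, since the two $\hat\sfb_-$-derivation terms vanish on kernel elements. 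Feeding this into both terms of the invariance identity and using $G\hat\sfb_-=2\,\id-\hat\sfb_-G$ and $\hat\sfb_-G=2\,\id$ on $\ker(\hat\sfb_-)$, every surviving $\hat\sfb_-$-exact contribution is killed by the self-adjointness of $\hat\sfb_-$ (because the remaining slot is a kernel element), so each term reduces to a multiple of $\inner{\hat\sfm_2(\hat\phi_1,\hat\phi_2)}{\hat\phi_3}_{\hat\frB}$; the $\hat\sfm_2$-cyclicity of~\eqref{eq:axiomsMetric} then brings both to a common $\inner{\hat\phi_2}{\hat\sfm_2(\hat\phi_1,\hat\phi_3)}_{\hat\frB}$ with opposite signs $(-1)^{ab+b+1}$ and $(-1)^{ab+b}$ (where $a=|\hat\phi_1|$, $b=|\hat\phi_2|$), which cancel. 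I expect this sign bookkeeping to be the main obstacle, since the factor of two coming from $[\hat\sfd_-,\hat\sfb_-]=2\BBox$ and the shift signs must be tracked simultaneously.

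Finally, for non-degeneracy (v) I would exploit that $G$ is, up to the factor two, a contracting homotopy for $\hat\sfb_-$: from $\hat\sfb_-G+G\hat\sfb_-=2\,\id$ the cohomology of $\hat\sfb_-$ is trivial, yielding a splitting $\hat\frB=\ker(\hat\sfb_-)\oplus G(\ker(\hat\sfb_-))$ in which $\ker(\hat\sfb_-)=\im(\hat\sfb_-)$ is isotropic for $\inner{-}{-}_{\hat\frB}$ (by $\hat\sfb_-^2=0$ and self-adjointness), hence Lagrangian. The metric then pairs $\ker(\hat\sfb_-)$ perfectly with $G(\ker(\hat\sfb_-))$; since $\inner{\hat\phi_1[1]}{\hat\phi_2[1]}_{\frK^0}=(-1)^{|\hat\phi_1|}\inner{G\hat\phi_1}{\hat\phi_2}_{\hat\frB}$ with $G\hat\phi_1\in G(\ker(\hat\sfb_-))$ and $\hat\phi_1=\tfrac12\hat\sfb_-G\hat\phi_1$, vanishing of the left-hand side for all $\hat\phi_2\in\ker(\hat\sfb_-)$ forces $G\hat\phi_1=0$ and hence $\hat\phi_1=0$. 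This closes the argument, modulo the usual analytic caveats about the infinite-volume factor already flagged in the text.
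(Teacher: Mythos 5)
Your proposal is correct and follows essentially the same route as the paper's proof: graded symmetry and $\hat\sfd$-cyclicity via the (anti-)self-adjointness relations~\eqref{eq:propertiesDpm}, and bracket cyclicity by collapsing the derived bracket to $\hat\sfb_-\hat\sfm_2$ on $\ker(\hat\sfb_-)$, shuttling $\hat\sfd_-$ and $\hat\sfb_-$ across the pairing, and using $\hat\sfb_-\hat\sfd_-=2\BBox$ on the kernel. The only substantive difference is non-degeneracy, where the paper merely notes that $\hat\sfd_-$ is injective on $\ker(\hat\sfb_-)$ (since $\ker(\hat\sfd_-)\cap\ker(\hat\sfb_-)\subseteq\ker(\BBox)=0$), whereas your Lagrangian-splitting argument via $\hat\frB=\ker(\hat\sfb_-)\oplus\BBox^{-1}\hat\sfd_-(\ker(\hat\sfb_-))$ gives a more careful justification of why pairing against elements of $\ker(\hat\sfb_-)$ alone already detects everything.
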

        
        \begin{proof}
            Note that as pointed out in~\eqref{eq:propertiesDpm}, we have that $[\hat\sfd_-,\hat\sfb_-]=2\BBox$. Consequently, $\ker(\hat\sfd_-)\cap\im(\hat\sfb_-)\subseteq\ker(\hat\sfd_-)\cap\ker(\hat\sfb_-)\subseteq\ker(\BBox)$ and so, our assumption that $\BBox$ is invertible implies that $\hat \sfd_-$ is injective. Thus, $\inner{-}{-}_{\frK^0}$ is non-degenerate.
            
            Next, we must show that
            \begin{equation}
                \begin{aligned}
                    \inner{\hat{\phi}_1[1]}{\hat{\phi}_2[1]}_{\frK^0}\ &=\ (-1)^{|\hat{\phi}_1[1]||\hat{\phi}_2[1]|}\inner{\hat{\phi}_2[1]}{\hat{\phi}_1[1]}_{\frK^0}~,
                    \\
                    \inner{\sfd_{\frK^0}\hat{\phi}_1[1]}{\hat{\phi}_2[1]}_{\frK^0}\ &=\ -(-1)^{|\hat{\phi}_1[1]|}\inner{\hat{\phi}_1[1]}{\sfd_{\frK^0}\hat{\phi}_2[1]}_{\frK^0}~,
                    \\
                    \inner{[\hat{\phi}_1[1],\hat{\phi}_2[1]]_{\frK^0}}{\hat{\phi}_3[1]}_{\frK^0}\ &=\ -(-1)^{|\hat{\phi}_1[1]||\hat{\phi}_2[1]|}\inner{\hat{\phi}_2[1]}{[\hat{\phi}_1[1],\hat{\phi}_3[1]]_{\frK^0}}_{\frK^0}
                \end{aligned}
            \end{equation}
            for all $\hat{\phi}_{1,2,3}[1]\in\frK^0$.
            
            Firstly, again using~\eqref{eq:propertiesDpm}, we find
            \begin{equation}
                \begin{aligned}
                    \inner{\hat{\phi}_2[1]}{\hat{\phi}_1[1]}_{\frK^0}\ &=\ (-1)^{|\hat{\phi}_2|}\inner{\BBox^{-1}\hat\sfd_-\hat{\phi}_2}{\hat{\phi}_1}_\frB
                    \\
                    &=\ -\inner{\hat{\phi}_2}{\BBox^{-1}\hat\sfd_-\hat{\phi}_1}_\frB
                    \\
                    &=\ -(-1)^{(|\hat{\phi}_1|+1)|\hat{\phi}_2|}\inner{\BBox^{-1}\hat \sfd_-\hat{\phi}_1}{\hat{\phi}_2}_\frB
                    \\
                    &=\ (-1)^{(|\hat{\phi}_1|+1)(|\hat{\phi}_2|+1)}\inner{\hat{\phi}_1[1]}{\hat{\phi}_2[1]}_{\frK^0}
                    \\
                    &=\ (-1)^{(|\hat{\phi}_1[1]||\hat{\phi}_2[1]|}\inner{\hat{\phi}_1[1]}{\hat{\phi}_3[1]}_{\frK^0}~.
                \end{aligned}
            \end{equation}
            Furthermore,~\eqref{eq:propertiesDpm} also yields
            \begin{equation}
                \begin{aligned}
                    \inner{\sfd_{\frK^0}(\hat{\phi}_1[1])}{\hat{\phi}_2[1]}_{\frK^0}\ &=\ (-1)^{|\hat{\phi}_1|+1}\inner{\BBox^{-1}\hat\sfd_-\hat\sfd_+\hat{\phi}_1}{\hat{\phi}_2}_\frB
                    \\
                    &=\ \inner{\BBox^{-1}\hat\sfd_-\hat{\phi}_1}{\hat\sfd_+\hat{\phi}_2}_\frB
                    \\
                    &=\ (-1)^{|\hat{\phi}_1|}\inner{\hat{\phi}_1[1]}{\sfd_{\frK^0}(\hat{\phi}_2[1])}_{\frK^0}
                    \\
                    &=\ -(-1)^{|\hat{\phi}_1[1]|}\inner{\hat{\phi}_1[1]}{\sfd_{\frK^0}(\hat{\phi}_2[1])}_{\frK^0}~.
                \end{aligned}
            \end{equation}
            Finally,
            \begin{equation}
                \begin{aligned}
                    \inner{[\hat{\phi}_1[1],\hat{\phi}_2[1]]_{\frK^0}}{\hat{\phi}_3[1]}_{\frK^0}\ &=\ (-1)^{|\hat{\phi}_1|}\inner{\{\hat{\phi}_1,\hat{\phi}_2\}[1]}{\hat{\phi}_3[1]}_{\frK^0}
                    \\
                    &=\ (-1)^{|\hat{\phi}_2|+1}\inner{\BBox^{-1}\hat\sfd_-\{\hat{\phi}_1,\hat{\phi}_2\}}{\hat{\phi}_3}_\frB
                    \\
                    &=\ (-1)^{|\hat{\phi}_1|+1}\inner{\hat\sfb_-\hat\sfm_2(\hat{\phi}_1,\hat{\phi}_2)}{\BBox^{-1}\hat\sfd_-\hat{\phi}_3}_\frB
                    \\
                    &=\ (-1)^{|\hat{\phi}_2|+1}\inner{\hat\sfm_2(\hat{\phi}_1,\hat{\phi}_2)}{\BBox^{-1}\hat\sfb_-\hat\sfd_-\hat{\phi}_3}_\frB
                    \\
                    &=\ 2(-1)^{|\hat{\phi}_2|+1}\inner{\hat\sfm_2(\hat{\phi}_1,\hat{\phi}_2)}{\hat{\phi}_3}_\frB
                    \\
                    &=\ 2(-1)^{|\hat{\phi}_1||\hat{\phi}_2|+|\hat{\phi}_2|+1}\inner{\hat{\phi}_2}{\hat\sfm_2(\hat{\phi}_1,\hat{\phi}_3)}_\frB
                    \\
                    &=\ 2(-1)^{|\hat{\phi}_2||\hat{\phi}_3|+|\hat{\phi}_2|+1}\inner{\hat\sfm_2(\hat{\phi}_1,\hat{\phi}_3)}{\hat{\phi}_2}_\frB
                    \\
                    &=\ (-1)^{|\hat{\phi}_2||\hat{\phi}_3|+|\hat{\phi}_2|+|\hat{\phi}_3|}\inner{[\hat{\phi}_1[1],\hat{\phi}_3[1]]_{\frK^0}}{\hat{\phi}_2[1]}_{\frK^0}
                    \\
                    &=\ -(-1)^{|\hat{\phi}_1[1]||\hat{\phi}_2[1]|}\inner{\hat{\phi}_2[1]}{[\hat{\phi}_1[1],\hat{\phi}_3[1]]_{\frK^0}}_{\frK^0}~,
                \end{aligned}
            \end{equation}
            where in the third step we have used~\eqref{eq:propertiesDpm}, inserted the definition~\eqref{eq:derived_bracket} of the derived bracket, and used that $\hat{\phi}_{1,2,3}\in\ker(\hat \sfb_-)$, and in the sixth step we have used the cyclicity of $\sfm_2$.
        \end{proof}
        
        We note, however, that in most cases, $\BBox$ is not invertible. Indeed, the kernel of $\BBox$ usually consists of the asymptotically free fields in the perturbative expansion. Nevertheless, this is a set of measure zero in the space of all fields, and the action is expected to be continuous on this space. We can therefore always extend the inner product between the interacting fields to the full field space, up to technical issues of mathematical analysis that are of little consequence for physical computations. Moreover, operator insertions closely analogous to $\BBox^{-1}\hat\sfd_-$ have also been introduced in the context of Kodaira--Spencer theory or Bershadsky--Cecotti--Ooguri--Vafa (BCOV) theory~\cite{Bershadsky:1993cx}.\footnote{We are grateful to Pietro Antonio Grassi for bringing this to our attention.}
        
        We also note that the inner product is non-local, such that the resulting (Maurer--Cartan) action may be also non-local, since it contains a factor of $\BBox^{-1}$; this happens for instance in the double copy of Chern--Simons theory, which agrees with the non-local action found in~\cite{Ben-Shahar:2021zww}. However, in the common case where $\sfb$ is merely a degree shift (as in the biadjoint scalar) and hence $\hat \sfd$ is a degree-shifted version of $\BBox$, so that $\BBox^{-1}\hat \sfd_-$ combine to a degree shift. Hence, the inner product and the resulting action are local if the original left and right theories are local.
        
        \paragraph{Relation to the double copy.}
        Let us briefly explain how the above construction relates to the usual double copy construction. Recall the two perspectives on the CK duality depicted in~\eqref{eq:CK_diagram}. There was a freedom as to whether to assign the operator $\sfb$ in the colour-stripped propagator $\frac{\sfb}{\BBox}$ to the propagator or to the interaction vertex. In the combination of two kinematic Lie algebras as e.g.~in the double copy construction, we double copy everything except for the operator $\BBox$. Correspondingly, if we consider the combination of the kinematic Lie algebras of two $\BVbox$-algebras $\frB_\rmL$ and $\frB_\rmR$, we can either work with the propagator and interaction vertex\footnote{Here and in the following, we are a bit cavalier with the action of $\frac{1}{\BBox}\notin \frH$, but the meaning should be obvious.} very schematically written as follows:
        \begin{equation}
            \tilde P\ =\ \frac{\sfb_\rmL\otimes \sfb_\rmR}{\BBox}\eand \tilde \mu_2\ =\ \sfm_{2\rmL}\otimes \sfm_{2\rmR}
        \end{equation}
        or, and this is the picture emerging from our tensor product construction,
        \begin{equation}
            P\ =\ \frac{\sfb_\rmL\otimes \sfid+\sfid\otimes \sfb_\rmR}{2\BBox}\eand \mu_2\ =\ \{-,-\}=\{-,-\}_\rmL\otimes \sfm_{2\rmR}+\sfm_{2\rmL}\otimes \{-,-\}_\rmR~.
        \end{equation}
        This is the same choice as made in~\cite{Bonezzi:2022bse, Bonezzi:2023pox, Bonezzi:2023ciu} when defining the double copy.  Note that we have indeed
        \begin{equation}
            \tilde P\tilde\mu_2\ =\ \frac{1}{\BBox}\{-,-\}_\rmL\otimes \{-,-\}_\rmR\ =\ P\mu_2
        \end{equation}
        on $\ker(\hat\sfb_-)$, as required for the equivalence of the two perturbative expansions. The kinematic operator $\sfd_\frK$ should be, when defined, the inverse of the propagator $\tilde P$. Note that for the differential operator $\hat \sfd=\hat \sfd_+$ of the tensor product, we have 
        \begin{equation}
            P\sfd_\frK\ =\ \frac{\sfb_\rmL\otimes\sfid+\sfid\otimes \sfb_\rmR}{2\BBox}(\sfd_\rmL\otimes\sfid+\sfid\otimes\sfd_\rmR)\ =\ \frac{\BBox\otimes \sfid+\sfid\otimes\BBox}{2\BBox}\ =\ \frac{\BBox}{\BBox}~,
        \end{equation}
        as required. Hence, the perturbative expansions of currents of the reduced kinematic dg Lie algebra $\frK^0$ of the restricted tensor product $\frB_\rmL\otimes^\frH\frB_\rmR$ indeed reproduces the expected result of a combination of the kinematic Lie algebras.
        
        To see that the cyclic structure is the correct one is a bit more subtle. It turns out that the differential and the Lie bracket in $\frK^0$ are such that they can be rescaled by a factor $\BBox^{-1}\hat \sfd_-$ to produce local expressions, modulo a few technical subtleties. Instead of presenting an abstract discussion, we simply refer to the concrete examples in \cref{sec:examples}.
        
        \paragraph{Tensoring by colour.}
        The above procedure allows us to produce a field theory or dg Lie algebra from two $\BVbox$-algebras. The inverse of colour-stripping, namely tensoring a dg commutative algebra by a Lie algebra also yields a field theory in the form of a dg Lie algebra. It will turn out that there is a special $\BVbox$-algebra, namely that of the biadjoint scalar field theory, for which both constructions are equivalent. Further details are found in \cref{rem:color_tensoring_by_BVbox}.
        
        \paragraph{Relation to our previous construction.} 
        In our previous work~\cite{Borsten:2021hua}, we considered the factorisation of the dg Lie algebra $\frL$ of a gauge field theory into three parts:
        \begin{equation}
            \frL\ \cong\ \frg\otimes(\frk\otimes_\tau\frScal)~,
        \end{equation}
        where $\frg$ is the gauge Lie algebra, $\frk$ is a kinematic vector space and $\frScal\coloneqq\scC^\infty(\IM^d)[-1]\oplus\scC^\infty(\IM^d)[-2]$ is the BV field space of a field theory of a single, real-valued scalar field. Moreover, $\otimes_\tau$ is a twisted tensor product, a generalisation of a semi-direct product, allowing $\frk$ to act on $\scC^\infty(\IM^d)$.
        
        In our new picture, the $\BVbox$-algebra $\frB$ is an algebraic enhancement of the dg commutative algebra $\frk \otimes_\tau\frScal$. Moreover, if $\frB$ carries an action of the Hopf algebra $\frH_{\IM^d}$, then the kernel of this action can naturally be associated with the space $T^*[-1]\frk$. 
        
        In~\cite{Borsten:2021hua}, we constructed the double copy by doubling the kinematic Lie algebra:
        \begin{equation}
            \frL_\text{double}\ \cong\ \frk\otimes_\tau(\frk \otimes_\tau \frScal)~,
        \end{equation}
        which makes intuitive sense. Here, we tensor together two copies of $\frB$, which results in an unwanted doubling of $\frScal$. This is eliminated by considering the restricted tensor product $\otimes^\frH$, reducing the functions to those on a single copy space-time $\IM^d$, and the kernel of $\hat\sfb_-$, reducing the quadrupled BV field space to the expected one.
        
        \paragraph{Syngamies via compactified space-time.}
        In the case of concrete field theories over Minkowski space $\IM^d$, we run into the usual analytical problems of field theories. For example, the metrics are really defined only for a subset of fields that do not include e.g.~asymptotically free fields. While inconsequential for concrete considerations, trying to resolve these issues leads to some interesting observations.
        
        A natural way to cure these is to compactify space-time from $\IM^d$ to the torus $\IM^d/\Lambda\IZ^d$ with size $\Lambda$ and work with the space $\scC$ of finite linear combinations of (possibly off-mass-shell) plane waves on the torus. Note that the Hopf algebra $\frH_{\IM^d}$ has a natural action on $\scC$ after compactification. Moreover, we can replace the restricted tensor product of \cref{app:restricted_tensor_product} by the ordinary tensor product over the Hopf algebra, because
        \begin{equation}
            \scC\otimes_{\frH_{\IM^d}}\scC\ \cong\ \scC~,
        \end{equation}
        as shown in \cref{thm:compactification-setting-justification}.
        
        Such a compactification is certainly useful since it cures all infrared divergences, but it raises also some conceptual issues: what does it mean to consider scattering amplitudes in a compact space and --- worse --- periodic time? The answer is that, formally, one can always define the scattering amplitudes via the homological perturbation lemma, and this, in turn, is equivalent to computing the scattering amplitudes on flat space subject to the condition that all incoming and outgoing momenta lie on the dual lattice to $\Lambda\IZ^d$. Thus, by setting the radii of the compactified torus appropriately, one can recover all scattering amplitudes.
        
        \subsection{Syngamies of theories with matter fields}\label{sec:syngamiesMatterTheories}
        
        Our above constructions readily extend to theories containing matter fields. In the following, we briefly explain the required constructions. The relevant theorems are more or less the same as for pure gauge theories, and we will omit the proofs if they parallel to those for the pure gauge theory case up to minor and evident changes. 
        
        In the pure gauge case, the syngamy was constructed from a tensor product of $\BVbox$-algebras. The evident generalisation for theories with fields in general representations of a gauge or flavour Lie algebra is to consider tensor products of $\BVbox$-modules.
        
        \paragraph{Tensor products of $\BVbox$-modules.}
        In the following, let $\frH$ be again a restrictedly tensorable cocommutative Hopf algebra. We then have the following result for the tensor product of $\BVbox$-modules.      
        
        \begin{proposition}\label{prop:tensorProductBVBoxModules}
            Given two $\BVbox$-algebras $\frB_\rmL=(\frB_\rmL,\sfd_\rmL,\sfm_{2\rmL},\sfb_\rmL)$ and $\frB_\rmR=(\frB_\rmR,\sfd_\rmR,\sfm_{2\rmR},\sfb_\rmR)$ with $\BBox_\rmL=\BBox_\rmR=\BBox\in\frH$ over $\frH$ and modules $V_\rmL=(V_\rmL,\sfd_{V_\rmL},\acton_{V_\rmL},\sfb_{V_{\rmL}})$ and $V_\rmR=(V_\rmR,\sfd_{V_\rmR},\acton_{V_\rmR},\sfb_{V_{\rmR}})$ over them respectively, the tuple $\hat V=(\hat V,\sfd_{\hat V},\acton_{\hat V},\sfb_{\hat V-})$ with 
            \begin{subequations}\label{eq:tensorProductBVBoxModulesOverH}
                \begin{equation}
                    \hat V\ \coloneqq\ V_\rmL\otimes^\frH V_\rmR
                \end{equation}
                and 
                \begin{equation}
                    \begin{aligned}
                        \sfd_{\hat V}(v_\rmL\otimes^\frH v_\rmR)\ &\coloneqq\ \sfd_{V_\rmL}v_\rmL\otimes v_\rmR+(-1)^{|v_\rmL|}v_\rmL\otimes\sfd_{V_\rmR} v_\rmR~,
                        \\
                        (\phi_\rmL\otimes^\frH\phi_\rmR)\acton_{\hat V}(v_\rmL\otimes^\frH v_\rmR)\ &\coloneqq\ (-1)^{|\phi_\rmR||v_\rmL|}(\phi_\rmL\acton_{V_\rmL}v_\rmL)\otimes(\phi_\rmR\acton_{V_\rmR}v_\rmR)~,
                        \\
                        \sfb_{\hat V-}(v_\rmL\otimes^\frH v_\rmR)\ &\coloneqq\ \sfb_{V_\rmL} v_\rmL\otimes v_\rmR-(-1)^{|v_\rmL|}v_\rmL\otimes\sfb_{V_\rmR} v_\rmR
                    \end{aligned}
                \end{equation}
                for all $v_\rmL\in V_L$, $v_\rmR\in V_R$, $\phi_\rmL\in\frB_\rmL$, and $\phi_\rmR\in\frB_\rmR$ forms a dg BV module over the dg BV algebra $\hat \frB\coloneqq\frB_\rmL\otimes^\frH\frB_\rmR$ defined in \cref{prop:tensorProductBVBoxAlgebrasOverH}.
                The extension of the derived bracket on $\hat\frB$ to $\hat V$ reads as
                \begin{equation}\label{eq:derived_bracket_in_tensor_product_module}
                    \begin{aligned}
                        \{\phi_{\rmL}\otimes^\frH \phi_{\rmR},v_{\rmL}\otimes^\frH v_{\rmR}\}
                        &=(-1)^{|\phi_{\rmR}||v_{\rmL}|}\{\phi_{\rmL},v_{\rmL}\}\otimes (\phi_{\rmR}\acton_{V_R} v_{\rmR})
                        \\
                        &\hspace{0.5cm}
                        -(-1)^{|\phi_{\rmR}||v_{\rmL}|+|\phi_{\rmL}|+|v_{\rmL}|}(\phi_{\rmL}\acton_{V_L} v_{\rmL})\otimes\{\phi_{\rmR},v_{\rmR}\}~.
                    \end{aligned}
                \end{equation}
            \end{subequations}
            
            Provided that both $\frB_\rmL$ and $\frB_\rmR$ come with $\frH$-linear metrics $\inner{-}{-}_\rmL$ and $\inner{-}{-}_\rmR$ of degrees $n_\rmL$ and $n_\rmR$, respectively, and both $V_\rmL$ and $V_\rmR$ come with $\frH$-linear metrics $\inner{-}{-}_{V_\rmL}$ and $\inner{-}{-}_{V_\rmR}$ of degrees $n_\rmL$ and $n_\rmR$, respectively,
            then
            \begin{equation}
                \inner{v_{1\rmL}\otimes v_{1\rmR}}{v_{2\rmL}\otimes v_{2\rmR}}_{\hat V}\ \coloneqq\ (-1)^{|v_{1\rmR}||v_{2\rmL}|+n_\rmR(|v_{1\rmL}|+|v_{2\rmL}|)}\inner{v_{1\rmL}}{v_{2\rmL}}_{V_\rmL}\inner{v_{1\rmR}}{v_{2\rmR}}_{V_\rmR}
            \end{equation}
            defines a $\frH$-bilinear metric for $\hat V$ of degree~$n_\rmL+n_\rmR$ for all $v_{1\rmL,2\rmL}\in V_\rmL$ and $v_{1\rmR,2\rmR}\in V_\rmR$.
        \end{proposition}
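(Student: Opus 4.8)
The plan is to mirror the proof of \cref{prop:tensorProductBVBoxAlgebrasOverH} essentially verbatim, replacing the algebra product $\sfm_2$ in one tensor slot by the module action $\acton$. First I would dispose of the purely homological data. That $(\hat V,\sfd_{\hat V})$ is a cochain complex with $\sfd_{\hat V}^2=0$ and that $\acton_{\hat V}$ makes $\hat V$ a graded module over the dg commutative algebra $\hat\frB$ compatible with $\hat\sfd$ follows from the corresponding properties of the two factors together with the Koszul signs, exactly as for the ordinary tensor product of dg modules; the only additional point is that all four operations $\sfd_{\hat V}$, $\acton_{\hat V}$, $\sfb_{\hat V-}$, and the metric preserve the $\frH$-invariant subspace $V_\rmL\otimes^\frH V_\rmR\subseteq V_\rmL\otimes V_\rmR$, which is precisely the content recalled in \cref{app:restricted_tensor_product}. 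Next, $\sfb_{\hat V-}^2=0$ is immediate from $\sfb_{V_\rmL}^2=\sfb_{V_\rmR}^2=0$ and the cancellation of cross terms forced by the relative minus sign in $\sfb_{\hat V-}$. The dg-BV condition $\BBox_{\hat V}\coloneqq[\sfd_{\hat V},\sfb_{\hat V-}]=0$ then follows as in the algebra case: a direct computation gives $[\sfd_{\hat V},\sfb_{\hat V-}](v_\rmL\otimes^\frH v_\rmR)=(\BBox v_\rmL)\otimes^\frH v_\rmR-v_\rmL\otimes^\frH(\BBox v_\rmR)$, which vanishes because $\BBox\in\frH$ acts identically on the two factors across the restricted tensor product, using $\BBox_{V_\rmL}=\BBox_{V_\rmR}=\BBox$.

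I would then derive the explicit form~\eqref{eq:derived_bracket_in_tensor_product_module} of the extended derived bracket by substituting the definitions of $\sfb_{\hat V-}$, $\acton_{\hat V}$, and $\hat\sfb_-$ into the module derived bracket~\eqref{eq:derived_bracket_module} and collecting signs. The four $\sfb$-insertions pair up so that, after cancellations identical to those producing~\eqref{eq:derivedBracketTensorProduct}, only the two mixed terms $\{\phi_\rmL,v_\rmL\}\otimes(\phi_\rmR\acton v_\rmR)$ and $(\phi_\rmL\acton v_\rmL)\otimes\{\phi_\rmR,v_\rmR\}$ survive, with the stated prefactors. This is routine sign bookkeeping, and I would relegate it to \cref{app:postponed_proofs}.

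The main work — and the only genuinely non-trivial step — is showing that $\sfb_{\hat V-}$ is a differential operator of second order with respect to $\acton_{\hat V}$. By the module analogue of \cref{prop:BVbox_is_pre_BVbox} (the proposition following \cref{def:BVbox_module}), this is equivalent to the module shifted Poisson identity, so I would compute the corresponding module Poissonator $\widehat{\sfPoiss}_{\hat V}(\phi_1,\phi_2,\hat v)$ for $\phi_{1,2}\in\hat\frB$ and $\hat v\in\hat V$. Paralleling the derivation of $\widehat{\sfPoiss}$ inside \cref{prop:tensorProductBVBoxAlgebrasOverH} via~\eqref{eq:PoissonatorJacobiator} and~\eqref{eq:derivationPoissonImpliesJacobi}, I expect it to decompose into one term built from the left-factor module Poissonator $\sfPoiss_{V_\rmL}$ tensored with iterated right actions and one term built from $\sfPoiss_{V_\rmR}$ tensored with iterated left actions, with no residual cross terms — once again the relative minus sign in $\sfb_{\hat V-}$ is what kills the would-be obstructions. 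Since $V_\rmL$ and $V_\rmR$ are $\BVbox$-modules, both factor Poissonators vanish, hence $\widehat{\sfPoiss}_{\hat V}=0$. The hard part is purely combinatorial: tracking the grading signs through the longer module second-order expansion, which — unlike the algebra case — mixes $\sfm_2$-type and $\acton$-type insertions and involves the product-factor operator $\hat\sfb_-$ in the term $\hat\sfb_-\hat\sfm_2(\phi_1,\phi_2)\acton_{\hat V}\hat v$, and then checking that every monomial not proportional to a factor Poissonator cancels. Finally, the metric claims — non-degeneracy, graded symmetry, degree $n_\rmL+n_\rmR$, and compatibility with $\sfd_{\hat V}$, $\acton_{\hat V}$, and $\sfb_{\hat V-}$ — follow by restricting the metric of the ordinary tensor product of modules to $\hat V$, exactly as for the $\BVbox$-algebra metric treated in \cref{app:postponed_proofs}.
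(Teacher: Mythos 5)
Your proposal is correct and matches the paper's approach: the paper's proof of this proposition is literally the one-line remark that it follows closely the proof of \cref{prop:tensorProductBVBoxAlgebrasOverH}, which is exactly the strategy you carry out (transplanting the Poissonator decomposition, the $[\sfd,\sfb_-]=0$ computation, and the metric restriction to the module setting). Your elaboration of the individual steps, including the cancellation forced by the relative sign in $\sfb_{\hat V-}$, is consistent with the corresponding computations in the algebra case.
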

        
        \begin{proof}
            The proof follows closely that of \cref{prop:tensorProductBVBoxAlgebrasOverH}.
        \end{proof}
        
        \paragraph{Syngamies.}
        We can straightforwardly generalise syngamies to gauge theories with matter as follows.
        
        \begin{definition}
            Let $\frB_\rmL=(\frB_\rmL,\sfd_\rmL,\sfm_{2\rmL},\sfb_\rmL)$ and $\frB_\rmR=(\frB_\rmR,\sfd_\rmR,\sfm_{2\rmR},\sfb_\rmR)$ be two $\BVbox$-algebras over $\frH$ with $\BBox_\rmL=\BBox_\rmR=\BBox\in\frH$, and let $\hat \frB=(\hat \frB,\sfd_{\hat \frB},\hat \sfm_2,\sfb_{\hat \frB-})$ be their tensor product over $\frH$ as defined in~\eqref{eq:tensorProductBVBoxAlgebrasOverH}. Let $V_\rmL=(V_\rmL,\sfd_{V_\rmL},\acton_{V_\rmL},\sfb_{V_\rmL})$ and $V_\rmR=(V_\rmR,\sfd_{V_\rmR},\acton_{V_\rmR},\sfb_{V_\rmR})$ be $\BVbox$-modules over $\frB_\rmL$ and $\frB_\rmR$, respectively, and let $\hat V=(\hat V,\sfd_{\hat V},\acton_{\hat V},\sfb_{\hat V-})$ be their tensor product over $\frH$ as defined in~\eqref{eq:tensorProductBVBoxModulesOverH}. The \uline{syngamy of the pairs ($\frB_\rmL,V_\rmL$) and ($\frB_\rmR,V_\rmR)$} is the restricted kinematic Lie algebra $\frKin^0(\hat \frB)$ as defined in \cref{def:restricted_kinematic_Lie_algebra}, together with the restricted kinematic Lie algebra module $\frMod^0(\hat V)$ over $\frKin^0(\hat \frB)$, defined in \cref{prop:restrictedKinematicLieModule}.
        \end{definition}
        
        \noindent
        By \cref{prop:restrictedKinematicLieModule}, we know that the syngamy is a dg Lie module over a dg Lie algebra. 
        
        To complete the syngamy, we have to endow $\frMod^0(\hat V)$ with a metric. Analogously to the case of $\BVbox$-algebras and in view of the tensor product~\eqref{eq:tensorProductBVBoxModulesOverH}, let us define
        \begin{equation}\label{eq:dpmModule}
            \sfd_{\hat V\pm}(v_\rmL\otimes v_\rmR)\ \coloneqq\ \sfd_{V_\rmL}v_\rmL\otimes v_\rmR\pm(-1)^{|v_\rmL|}v_\rmL\otimes\sfd_{V_\rmR}v_\rmR    
        \end{equation}
        for all $v_{\rmL,\rmR}\in V_{\rmL,\rmR}$; evidently, $\hat \sfd_+=\hat \sfd$. It is then easy to check that
        \begin{subequations}\label{eq:propertiesDpmModule}
            \begin{equation}
                [\sfd_{\hat V\pm},\sfb_{\hat V-}]\ =\ (1\mp1)\BBox
                \eand
                [\sfd_{\hat V+},\sfd_{\hat V-}]\ =\ 0
            \end{equation}
            and
            \begin{equation}
                \inner{\sfd_{\hat V\pm} v_1}{v_2}_{\hat V}\ =\ -(-1)^{|v_1|}\inner{v_1}{\sfd_{\hat V\pm}v_2}_{\hat V}
            \end{equation}
        \end{subequations}
        for all $v_{1,2}\in \hat V$. Next, we introduce the notion of compatible metrics for modules.
        
        \begin{definition}\label{def:compatible_metric_module}
            Let $(\hat V,\sfd_{\hat V},\acton_{\hat V},\sfb_{\hat V-})$ be the dg BV module over the dg BV algebra $\hat \frB=(\hat \frB,\sfd_{\hat \frB},\hat \sfm_2,\sfb_{\hat \frB-})$ defined in~\eqref{eq:tensorProductBVBoxModulesOverH} and let $(\frK^0,\sfd_{\frK^0},[-,-]_{\frK^0})$ and $(\frV^0,\sfd_{\frV^0},\acton_{\frV^0},\sfb_{\frV^0})$ be the associated syngamy. Suppose that $\frB$ has a metric $\inner{-}{-}_{\hat \frB}$ of degree~$-6$. We say that a metric $\inner{-}{-}_{\frV^0}$ on the dg Lie algebra module $V_0$ in the syngamy $(\frK^0,\frV^0)$ of degree~$-3$ is \uline{compatible} if
            \begin{equation}
                \inner{\BBox v_1[1]}{v_2[1]}_{\frV^0}\ =\ (-1)^{|\phi_1|}\inner{\sfd_{V-}v_1}{v_2}_{\hat V}
            \end{equation}
            for all $v_{1,2}[1]\in \frV^0=\ker(\sfb_{\hat V-})[1]$ with $\sfd_{\hat V-}$ as defined in~\eqref{eq:dpmModule}.
        \end{definition}
        
        \begin{proposition}
            Consider again the situation of \cref{def:compatible_metric_module} and assume that the actions of $\BBox$ on both the BV algebra and the BV module are invertible. Then,
            \begin{equation}
                \inner{v_1[1]}{v_2[1]}_{\frV^0}\ \coloneqq\ (-1)^{|v_1|}\inner{\BBox^{-1}\sfd_{V-}v_1}{v_2}_{\hat V}
            \end{equation}
            for all $v_{1,2}[1]\in \frV^0=\ker(\sfb_{\hat V-})[1]$ with $\sfd_{\hat V-}$ as defined in~\eqref{eq:dpm} is a compatible metric on the syngamy.
        \end{proposition}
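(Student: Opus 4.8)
This proposition is the module counterpart of \cref{prop:syngamyCyclicJustification}, and the plan is to mirror that proof step by step, replacing the $\BVbox$-algebra data by the $\BVbox$-module data of \cref{prop:tensorProductBVBoxModules}. The four things to check are that $\inner{-}{-}_{\frV^0}$ is non-degenerate, graded symmetric, compatible with the differential $\sfd_{\frV^0}=\sfd_{\hat V+}[1]$, and invariant under the action $\acton_{\frV^0}$ of $\frKin^0(\hat\frB)$. Throughout I would use the identities~\eqref{eq:propertiesDpmModule}, in particular $[\sfd_{\hat V-},\sfb_{\hat V-}]=2\BBox$ and $[\sfd_{\hat V+},\sfd_{\hat V-}]=0$, together with the graded-antisymmetry of $\sfd_{\hat V\pm}$ with respect to $\inner{-}{-}_{\hat V}$ recorded in~\eqref{eq:propertiesDpmModule}.

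Non-degeneracy follows exactly as in the algebra case: $[\sfd_{\hat V-},\sfb_{\hat V-}]=2\BBox$ gives $\ker(\sfd_{\hat V-})\cap\ker(\sfb_{\hat V-})\subseteq\ker(\BBox)$, so invertibility of $\BBox$ on $\hat V$ forces $\sfd_{\hat V-}$ to be injective on $\frV^0=\ker(\sfb_{\hat V-})[1]$, and combined with the non-degeneracy of $\inner{-}{-}_{\hat V}$ and invertibility of $\BBox^{-1}$ this yields non-degeneracy of $\inner{-}{-}_{\frV^0}$. Graded symmetry and compatibility with $\sfd_{\frV^0}$ are then formally identical to the corresponding computations in \cref{prop:syngamyCyclicJustification}: the former uses the graded symmetry of $\inner{-}{-}_{\hat V}$ and the sign rules of~\eqref{eq:propertiesDpmModule}, while the latter inserts $\BBox^{-1}\sfd_{\hat V-}$, commutes it past $\sfd_{\hat V+}$ using $[\sfd_{\hat V+},\sfd_{\hat V-}]=0$, and applies the antisymmetry of $\sfd_{\hat V+}$. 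No input beyond the module analogues of the algebra identities enters here.

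The substantive step is invariance of $\inner{-}{-}_{\frV^0}$ under the kinematic action, i.e.
\[
\inner{\phi[1]\acton_{\frV^0}v_1[1]}{v_2[1]}_{\frV^0}\ =\ \pm\,\inner{v_1[1]}{\phi[1]\acton_{\frV^0}v_2[1]}_{\frV^0}
\]
with the sign fixed by the degree conventions, for all $\phi[1]\in\frKin^0(\hat\frB)$ and $v_{1,2}[1]\in\frV^0$. Here I would unfold $\phi[1]\acton_{\frV^0}v[1]=(-1)^{|\phi|}\{\phi,v\}_{\hat V}[1]$ using the module derived bracket~\eqref{eq:derived_bracket_in_tensor_product_module}, and exploit that on the kernels $\ker(\sfb_{\hat V-})$ and $\ker(\hat\sfb_-)$ the derived bracket collapses to $\{\phi,v\}_{\hat V}=\sfb_{\hat V-}(\phi\acton_{\hat V}v)$, the module analogue of the corollary to \cref{prop:algebraRelationsDB}. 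Inserting the definition of $\inner{-}{-}_{\frV^0}$ and then using $\sfb_{\hat V-}\sfd_{\hat V-}=2\BBox$ on $\ker(\sfb_{\hat V-})$ (from~\eqref{eq:propertiesDpmModule}) to cancel the $\BBox^{-1}$ reproduces the factor $2$ seen in the algebra case; finally the $\BVbox$-algebra element $\phi$ is moved across the metric using the defining invariance of the metric $\BVbox$-module $\hat V$ and the self-adjointness of $\sfb_{\hat V-}$.

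I expect this third step to be the main obstacle. Unlike the algebra case, two structures now interact --- the action of the kinematic Lie algebra on the module, built from $\{\phi,v\}_{\hat V}$, and the module metric $\inner{-}{-}_{\hat V}$ --- and one must verify that it is precisely the second-order property of $\sfb_{\hat V-}$ on $\hat V$ (established in \cref{prop:tensorProductBVBoxModules}), together with the module cyclicity of $\inner{-}{-}_{\hat V}$, that renders the derived action metric-compatible. The real work is bookkeeping: tracking signs through the degree shifts $[1]$ and the tensor-product sign conventions of~\eqref{eq:tensorProductBVBoxModulesOverH}. Since every structural identity required is the module analogue of one already used for $\frK^0$, no genuinely new idea is needed, and the proof reduces to a careful repetition of the argument for \cref{prop:syngamyCyclicJustification}.
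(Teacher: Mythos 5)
Your proposal is correct and follows exactly the route the paper takes: the paper's own proof consists of the single remark that the statement is a minor variation of \cref{prop:syngamyCyclicJustification}, and your plan --- checking non-degeneracy via $[\sfd_{\hat V-},\sfb_{\hat V-}]=2\BBox$, then transcribing the symmetry, differential-compatibility, and action-invariance computations with the module derived bracket collapsing to $\sfb_{\hat V-}(\phi\acton_{\hat V}v)$ on the kernels --- is precisely that variation spelled out. The only caveat is that the bookkeeping you defer in the third step is genuinely all that remains, and it goes through as in the algebra case.
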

        
        \begin{proof}
            The proof is a minor variation of that of \cref{prop:syngamyCyclicJustification}.
        \end{proof}
        
        \section{Examples}\label{sec:examples}
        
        \subsection{Biadjoint scalar field theory}
        \label{ssec:biadjoint_scalar_field_theory}
        
        The simplest and archetypal example of a theory with colour--kinematics duality is certainly the theory of a biadjoint scalar field with evident cubic interaction, a theory that is frequently used as a toy model in the scattering amplitudes literature~\cite{Hodges:2011wm,Vaman:2010ez,Cachazo:2013iea,Monteiro:2013rya,Cachazo:2014xea,Monteiro:2014cda,Chiodaroli:2014xia,Naculich:2014naa,Luna:2015paa,Naculich:2015zha,Chiodaroli:2015rdg,Luna:2016due,White:2016jzc,Cheung:2016prv,Chiodaroli:2017ngp,Brown:2018wss}.
        
        \paragraph{Differential graded Lie algebra.}
        Consider two flavour metric Lie algebras $\frg$ and $\bar\frg$ with bases $\sfe_a$ and $\bar\sfe_{\bar a}$, structure constants $f_{ab}{}^c$ and $\bar f_{\bar a\bar b}{}^{\bar c}$ and metrics $g_{ab}$ and $\bar g_{\bar a\bar b}$, respectively. Classically, a biadjoint scalar field $\varphi$ is a $(\frg\otimes\bar\frg)$-valued function on $\IM^d$, and we write 
        \begin{equation}\label{eq:BAS_fields}
            \varphi\ =\ \sfe_a\otimes\bar\sfe_{\bar a}\otimes\varphi^{a\bar a}\ \in\ (\frg\otimes\bar\frg)\otimes\scC^\infty(\IM^d)~.
        \end{equation}
        We shall be interested in the theory with action functional
        \begin{equation}\label{eq:BAS_action_components}
            S^\text{biadj}\ \coloneqq\ \int\rmd^dx\Big\{\tfrac12\varphi^{a\bar a}g_{ab}g_{\bar a\bar b}\wave \varphi^{b\bar b}+\tfrac{1}{3!}\varphi^{a\bar a}g_{ab}g_{\bar a\bar b}f_{cd}{}^b\bar f_{\bar c\bar d}{}^{\bar b}\varphi^{c\bar c}\varphi^{d\bar d}\Big\}\,.
        \end{equation}
        
        The $L_\infty$-algebra corresponding to this field theory is the dg Lie algebra $\frL^\text{biadj}=\bigoplus_{p\in\IZ}\frL^\text{biadj}_p$ with underlying cochain complex
        \begin{subequations}\label{eq:L_infty_BAS}
            \begin{equation}\label{eq:biadjointComplex}
                \sfCh(\frL^\text{biadj})\ \coloneqq\ 
                \Big(
                \begin{tikzcd}[column sep=20pt]
                    *\arrow[r] & \underbrace{(\frg\otimes\bar\frg)\otimes\scC^\infty(\IM^d)}_{\eqqcolon\,\frL^\text{biadj}_1} \arrow[r,"\sfid_{\frg\otimes\bar\frg}\otimes\wave"] &[30pt] \underbrace{(\frg\otimes\bar\frg)\otimes\scC^\infty(\IM^d)}_{\eqqcolon\,\frL^\text{biadj}_2} \arrow[r] & * 
                \end{tikzcd}
                \Big)~,
            \end{equation}
            where $*$ denotes the trivial vector space. In particular, we have the field\footnote{in the sense of the BV formalism, i.e.~as opposed to an anti-field} $\varphi\in\frL^\text{biadj}_1$, the corresponding anti-fields $\varphi^+=\varphi^+_{a\bar a}\sfe^{a}\bar \sfe^{\bar a}\in\frL^\text{biadj}_2$ for $\sfe^a=g^{ab}\sfe_a$ and $\bar \sfe^{\bar a}=\bar g^{\bar a\bar b}\bar \sfe_{\bar a}$, and the only non-trivial component of the differential $\mu_1\coloneqq\sfid_{\frg\otimes\bar\frg}\otimes\wave$. The non-vanishing components of the cyclic inner product are
            \begin{equation}\label{eq:inner_prod_biadj}
                \inner{\varphi}{\varphi^+}\ \coloneqq\ \int\rmd^dx\,\varphi^{a\bar a}\varphi^+_{a\bar a}~.
            \end{equation}
            The interactions are encoded in the Lie bracket $\mu_2\colon\frL^\text{biadj}\times\frL^\text{biadj}\rightarrow \frL^\text{biadj}$, and the only non-trivial components are
            \begin{equation}\label{eq:binary_product_biadjoint}
                \mu_2(\varphi_1,\varphi_2)\ \coloneqq\ f_{ab}{}^c\sfe_c\otimes\bar f_{\bar a\bar b}{}^{\bar c}\bar\sfe_{\bar c}\otimes\varphi^{a\bar a}_1\varphi^{b\bar b}_2
            \end{equation}
        \end{subequations}
        for all $\varphi_{1,2}\in\frL^\text{biadj}_1$.
        
        \paragraph{$\BVbox$-algebra and colour--kinematics duality.}
        Regarding one of the two Lie algebras (say $\frg$) as colour, we may strip it off to form a $\BVbox$-algebra. This amounts to the factorisation
        \begin{equation}\label{eq:factorisation_C_infty_biadj}
            \frL^\text{biadj}\ \cong\ \frg\otimes\frB^\text{biadj}.
        \end{equation}
        Explicitly, $\frB^\text{biadj}$ has the underlying cochain complex
        \begin{subequations}\label{eq:C_bar_BAS}
            \begin{equation}
                \sfCh(\frB^\text{biadj})\ \coloneqq\
                \Big(
                \begin{tikzcd}[column sep=40pt]
                    *\arrow[r] & \underbrace{\bar\frg\otimes\scC^\infty(\IM^d)}_{\eqqcolon\,\frB^\text{biadj}_1} \arrow[r,"\sfid_{\bar\frg}\otimes\wave"] & \underbrace{\bar\frg\otimes\scC^\infty(\IM^d)}_{\eqqcolon\,\frB^\text{biadj}_2} \arrow[r] & * 
                \end{tikzcd}
                \Big)
            \end{equation}
            with $\varphi=\bar\sfe_{\bar a}\,\varphi^{\bar a}\in\frB^\text{biadj}_1$, $\varphi^+=\bar\sfe^{\bar a}\,\varphi^+_{\bar a}\in\frB^\text{biadj}_2$, and $\sfd\coloneqq\sfid_{\bar \frg}\otimes\wave$. Note that we continue to label colour-stripped fields by $\varphi$, slightly abusing notation. Furthermore, we have
            \begin{equation}
                \sfm_2(\varphi_1,\varphi_2)\ \coloneqq\ \bar f_{\bar a\bar b}{}^{\bar c}\bar\sfe_{\bar c}\otimes\varphi^{\bar a}_1\varphi^{\bar b}_2
                \eand
                \inner{\varphi}{\varphi^+}\ \coloneqq\ \int\rmd^dx\,\varphi^{\bar a}\varphi^+_{\bar a}~.
            \end{equation}
        \end{subequations}
        
        To extend $\bar\frB^\text{biadj}$ to a $\BVbox$-algebra, we need to endow it with an operator $\sfb$ such that $[\sfd,\sfb]=\wave$. The evident choice here is the shift isomorphism (denoted $[1]$).
        \begin{subequations}\label{eq:BVbox-biadj}
            \begin{equation}
                \sfb\ \coloneqq\ [1]\,:\,\frB_2^\text{biadj}\ \xrightarrow{~\cong~}\ \frB_1^\text{biadj}~.
            \end{equation}
            The derived bracket $\{-,-\}$ of~\eqref{eq:derived_bracket} is then 
            \begin{equation}\label{eq:biadj_Lie_bracket}
                \begin{gathered}
                    \{\varphi_1,\varphi_2\}\ =\ \sfb(\sfm_2(\varphi_1,\varphi_2))\ =\ \bar f_{\bar a\bar b}{}^{\bar c}~\bar\sfe_{\bar c}\otimes\varphi_1^{\bar a}\varphi_2^{\bar b}\in \frB^\text{biadj}_1~,
                    \\
                    \{\varphi_1,\varphi^+_2\}\ =\ \sfm_2(\varphi_1,\sfb\varphi^+_2)\ =\ \bar f_{\bar a\bar b}{}^{\bar c}g^{\bar b\bar d}~\bar\sfe_{\bar c}\otimes\varphi_1^{\bar a}\varphi^+_{2\bar d}\ =\ \{\varphi^+_2,\varphi_1\}\in \frB^\text{biadj}_2~.
                \end{gathered}
            \end{equation}
        \end{subequations}        
        It is then easy to check that all the remaining axioms are satisfied; in particular, $\sfb$ is of second order, which amounts to the following specialisation of~\eqref{eq:b_second_order}:
        \begin{equation}
            \begin{aligned}
                0\ =\ -\sfm_2(\varphi_1,\sfb(\sfm_2(\varphi_2,\varphi_3)))+\sfm_2(\varphi_2,\sfb(\sfm_2(\varphi_1,\varphi_3)))-\sfm_2(\varphi_3,\sfb(\sfm_2(\varphi_1,\varphi_2)))
            \end{aligned}
        \end{equation}
        for all $\varphi_{1,2,3}\in \frB^\text{biadj}_1$, a consequence of the Jacobi identity. We will denote the resulting $\BVbox$-algebra also by $\frB^{\bar\frg}$, to indicate the choice of Lie algebra $\bar \frg$. This $\BVbox$-algebra will play an important role as a replacement for the colour Lie algebra $\bar\frg$ later.
        
        According to \cref{cor:pure_gauge_CK_duality}, the existence of the $\BVbox$-algebra $\frB^{\bar\frg}$ proves that the biadjoint scalar field theory possesses CK duality on its currents. Because $\sfb$ is a shift isomorphism, all fields $\varphi$ are of the form $\varphi=\sfb\varphi^+$ for some anti-field $\varphi^+$, and hence CK-duality extends to the amplitudes.
        
        \paragraph{Syngamy.}
        We now follow the approach of \cref{sec:doubleCopySyngamies} and consider the syngamy of the two $\BVbox$-algebras $\frB^\frg$ and $\frB^{\bar\frg}$ for $\frg,\bar\frg$ some Lie algebras. To this end, we note that the $\BVbox$-algebra comes with a natural action of the Hopf algebra $\frH_{\IM^d}$ from \cref{ex:H-Box-Minkowski}, and it is easy to check that all operations are $\frH_{\IM^d}$-linear with respect to this action. 
        
        The restricted tensor product $\hat\frB\coloneqq\frB^\frg\otimes^\frH\frB^{\bar\frg}$ has then the underlying cochain complex
        \begin{equation}\label{eq:rough_chain_complex_biadj}
            \Big(
            \begin{tikzcd}[column sep=20pt]
                *\arrow[r] & \underbrace{\frg\otimes\bar\frg\otimes\scC^\infty(\IM^d)}_{\eqqcolon\,\hat \frB_2} \arrow[r] & \underbrace{\IR^2\otimes \frg\otimes\bar\frg\otimes\scC^\infty(\IM^d)}_{\eqqcolon\,\hat \frB_3} \arrow[r] & \underbrace{\frg\otimes\bar\frg\otimes\scC^\infty(\IM^d)}_{\eqqcolon\,\hat \frB_4} \arrow[r] & * 
            \end{tikzcd}
            \Big)~,
        \end{equation}
        and we have a corresponding kinematic Lie algebra $\frK$ concentrated in degrees $1,2,3$. We will be interested in the shifted Lie bracket $[-,-][1]$ on fields $\varphi_{1,2}\in \frB^\frg\otimes^\frH\frB^{\bar \frg}$, which reads as 
        \begin{equation}\label{eq:shifted_Lie_bracket_rough}
            [\varphi_1,\varphi_2][1]\ =\ \sfb\sfm_2(\varphi^{(1)}_1,\varphi^{(1)}_2)\otimes \sfm_2(\varphi_1^{(2)},\varphi_2^{(2)})+\sfm_2(\varphi^{(1)}_1,\varphi^{(1)}_2)\otimes \sfb\sfm_2(\varphi_1^{(2)},\varphi_2^{(2)})~,
        \end{equation}
        where we used again Sweedler notation $\varphi_{1,2}=\varphi_{1,2}^{(1)}\otimes \varphi_{1,2}^{(2)}$.
        
        We note that the cochain complex~\eqref{eq:rough_chain_complex_biadj} is split in half, into the kernel and cokernel of the operator 
        \begin{equation}
            \hat\sfb_-\ \coloneqq\ [1]\otimes\sfid-\sfid\otimes[1]~.
        \end{equation}
        In particular, the kernel is given by $\hat \frB_2$ as well as the symmetrised sum of the two copies of $\frg\otimes\bar\frg\otimes\scC^\infty(\IM^d)$ contained in $\hat \frB_3$.        
        
        With \cref{cor:kin_alg_is_dg_Lie_for_BV-algebra}, we note that the restricted kinematic Lie algebra $\frK^0=\frKin^0(\hat \frB)$, i.e.~$\frK$ restricted to the kernel, cf.~\cref{def:restricted_kinematic_Lie_algebra}, together with the differential $\hat\sfd[1]$ becomes a dg Lie algebra $\frK^0$ with underlying cochain complex
        \begin{equation}\label{eq:restricted_chain_complex_biadj}
            \sfCh(\frK^0)\ \coloneqq\ \Big(
            \begin{tikzcd}[column sep=20pt]
                *\arrow[r] & \underbrace{\frg\otimes\bar\frg\otimes\scC^\infty(\IM^d)}_{\eqqcolon\, \frK^0_1} \arrow[r,"\sfid_\frg\otimes \sfid_{\bar \frg}\otimes\wave"] &[30pt] \underbrace{\frg\otimes\bar\frg\otimes\scC^\infty(\IM^d)}_{\eqqcolon\,\frK^0_2} \arrow[r] & * 
            \end{tikzcd}
            \Big)~.
        \end{equation}
        Moreover, the product $\mu_2$ can be read off from~\eqref{eq:shifted_Lie_bracket_rough}, and its non-trivial components are given by 
        \begin{equation}
            \mu_2(\varphi_1,\varphi_2)\ \coloneqq\ \sfe_c\otimes\bar\sfe_{\bar c}\otimes\varphi^{a\bar a}_1\varphi^{b\bar b}_2f_{ab}{}^c\bar f_{\bar a\bar b}{}^{\bar c}
        \end{equation}
        for all $\varphi_{1,2}\in\frK^0_1$. 
        
        On fields that are not in the kernel of $\BBox=\wave$ (e.g.~Schwartz-type functions describing interacting fields), a metric can be defined by means of~\cref{prop:syngamyCyclicJustification}:
        \begin{equation}
            \inner{\varphi_1}{\varphi_2}_{\frK^0}\ \coloneqq \ \inner{\wave^{-1}\hat\sfd_-\varphi_1}{\varphi_2}_{\hat\frB}~.
        \end{equation}
        Because of the symmetry of $\inner{-}{-}_{\frK^0}$ established in~\cref{prop:syngamyCyclicJustification}, we can assume that $\varphi_1$ is a field, i.e.~and element in $\hat\frB_2[1]$, without loss of generality. In this case, $|\varphi_1^{(1)}|=1$ and hence 
        \begin{equation}
            \begin{aligned}
                \wave^{-1}(\hat \sfd_-\varphi_1)\ &=\ \wave^{-1}(\wave \varphi_1^{(1)}\otimes \varphi_1^{(2)}-\varphi_1^{(1)}\otimes\wave\varphi_1^{(2)})
                \\
                &=\ (\varphi_1^{(1)}[-1]\otimes\varphi_1^{(2)}-\varphi_1^{(1)}\otimes\varphi_1^{(2)}[-1])~,
            \end{aligned}
        \end{equation}
        where we used that $(\wave^{-1}\varphi_1^{(1)})\otimes \varphi_1^{(2)}=\varphi_1^{(1)}\otimes (\wave^{-1}\varphi_1^{(2)})$. The restriction to $\ker(\hat\sfb_-)\subseteq\hat\frB_2\oplus \hat \frB_3$ with $\frK^0$, together with the removal of the infinite volume factor along the constant directions (cf.~the discussion in \cref{ssec:syngamy_pure_gauge}), then leads to the expected inner product~\eqref{eq:inner_prod_biadj}.
        
        Altogether, we see that $\frK^0=\frL^\text{biadj}$ and, as expected, the resulting double copy is the biadjoint scalar theory with Lie algebras $\frg$ and $\bar \frg$. 
        
        \begin{remark}\label{rem:color_tensoring_by_BVbox}
            Note that, as predicted above, the role of the colour Lie algebras is played by the $\BVbox$-algebras $\frB^\frg$ and $\frB^{\bar \frg}$. In particular, constructing the syngamy of a $\BVbox$-algebra $\frB$ with the $\BVbox$-algebra $\frB^\frg$ produces the same field theory (in the form of a dg Lie algebra) as if we tensored the dg commutative algebra underlying $\frB$ with $\frg$. This relation is quite evident for field theories where the differential in $\frB$ is $\sfd=\wave$, such as biadjoint scalar and conventional rewritings of Yang--Mills theory, but it also extends to Chern--Simons theory, as we shall see in \cref{ssec:pure_Chern-Simons}.
        \end{remark}
        
        \subsection{Biadjoint scalar theory with bifundamental matter}\label{ssec:biadj_bifund_theory}
        
        The simplest example including matter fields is the biadjoint scalar theory coupled to a  bifundamental scalar, cf.~\cite{Naculich:2014naa, Naculich:2015zha, Anastasiou:2016csv, Brown:2018wss}, i.e.~a scalar field taking values in the (metric) fundamental representations\footnote{The choice of fundamental representation is just for concreteness sake; the theory straightforwardly generalises to arbitrary metric representations.} $R\otimes \bar R$ of the Lie algebras $\frg\otimes\bar\frg$.
        
        \paragraph{Differential graded Lie algebra.}
        Explicitly, we couple the biadjoint scalar field theory~\eqref{eq:BAS_action_components} to the action for a bifundamental scalar field
        \begin{equation}
            S^{\text{biadj-fun}}\ \coloneqq\ S^{\text{biadj}}+\int\rmd^dx\Big\{\tfrac12\psi^{i\bar\imath}\wave g_{ij}\bar g_{\bar \imath\bar\jmath}\psi^{j\bar\jmath}+\tfrac{1}{2}\psi^{i\bar\imath}g_{ij}\bar g_{\bar\imath\bar\jmath}T_{aj}{}^{i}\bar T_{\bar a\bar\jmath}{}^{\bar\imath}\varphi^{a\bar a}\psi^{j\bar\jmath}\Big\}\,,
        \end{equation}
        where
        \begin{equation}
            \psi\ =\ \sfe_i\otimes\sfe_{\bar \imath}\otimes\psi^{i\bar\imath}\ \in\ (R\otimes\bar R)\otimes\scC^\infty(\IM^d)~,
        \end{equation}
        and where we have introduced bases, $\sfe_i$ and $\sfe_{\bar\imath}$, metrics $g_{ij}$ and $g_{\bar \imath\bar \jmath}$ with respect to these bases, and structure constants, $T_{ a j}{}^{i}$ and $\bar T_{\bar a\bar\jmath}{}^{\bar\imath}$, describing the interactions, for $R$ and $\bar R$, respectively. 
        
        The underlying cochain complex of the dg Lie algebra $\frL^\text{biadj-fun}$ is that of $\frL^\text{biadj}$ enlarged to 
        \begin{equation}\label{eq:bi_adj_fun_cochain}
            \sfCh(\frL^\text{biadj-fun})\ \coloneqq\ 
            \left(
            \begin{tikzcd}[
                row sep=-3pt,column sep=12pt
                ]
                &(\frg\otimes\bar\frg)\otimes\scC^\infty(\IM^d) \arrow[r,"\sfid_{\frg\otimes\bar\frg}\otimes\wave"] &[25pt] (\frg\otimes\bar\frg)\otimes\scC^\infty(\IM^d) 
                \\
                *\arrow[r,shorten >=8ex]  & \oplus & \oplus \arrow[r,shorten <=8ex] & * 
                \\
                & (R\otimes\bar R)\otimes\scC^\infty(\IM^d)\arrow[r,"\sfid_{R\otimes\bar R}\otimes\wave"]  & (R \otimes\bar R)\otimes\scC^\infty(\IM^d)
            \end{tikzcd}
            \right),
        \end{equation}
        where the anti-fields $\varphi^+$ and $\psi^+$ belong to the degree shifted copies of $(\frg\otimes\bar\frg)\otimes\scC^\infty(\IM^d)$ and  $ (R\otimes\bar R)\otimes\scC^\infty(\IM^d)$, respectively. The fields $\varphi, \psi$ and anti-fields $\varphi^+, \psi^+$ have dg Lie algebra degree~$1$ and $2$ (and, thus, ghost degree~$0$ and $1$), respectively. 
        
        The interactions are encoded in the graded anti-symmetric Lie bracket 
        \begin{equation}
            \mu_2\,:\,\frL^\text{biadj-fun}\times\frL^\text{biadj-fun}\ \rightarrow\ \frL^\text{biadj-fun}~,
        \end{equation}
        which has non-trivial components 
        \begin{equation}\label{eq:binary_product_biadjoint_bifun}
            \begin{split}
                \mu_2(\varphi_1,\varphi_2)\ &\coloneqq\ \sfe_cf_{ab}{}^c\otimes\bar\sfe_{\bar c}\bar f_{\bar a\bar b}{}^{\bar c} \otimes \varphi^{a\bar a}_1\varphi^{b\bar b}_2~,
                \\
                \mu_2(\varphi,\psi)\ &\coloneqq\ \sfe_j  T_{ ai}{}^{j}\otimes\bar\sfe_{\bar \jmath} \bar T_{\bar a\bar \imath}{}^{\bar \jmath}\otimes \varphi^{a\bar a}\psi^{i\bar \imath}=\mu_2(\varphi, \psi)~,
                \\
                \mu_2(\psi_1,\psi_2)\ &\coloneqq\ \sfe_a  T^{ a}{}_{ij}\otimes\bar\sfe_{\bar a} \bar T^{\bar a}{}_{\bar \imath\bar \jmath}  \otimes \psi_{1}^{i\bar \imath}\psi_{2}^{j\bar \jmath}
            \end{split}
        \end{equation}
        for all $\varphi,\varphi_{1,2}$ and $\psi,\psi_{1,2}$ in the evident subspaces of $\frL^\text{biadj-fun}_1$. The assumption that $R,\bar R$ are metric implies the existence of a cyclic structure with non-vanishing components
        \begin{equation}\label{eq:inner_product_biadj-fun}
            \inner{\varphi+\psi}{\varphi^++\psi^+}\ \coloneqq\ \int\rmd^dx\,\left\{\varphi^{a\bar a}\varphi^+_{a\bar a} + \psi^{i\bar \jmath}\psi^+_{i\bar \jmath}\right\}
        \end{equation}
        for all $\varphi,\psi$ and $\varphi^+,\psi^+$ in the evident subspaces of $\frL^\text{biadj-fun}_1$ and $\frL^\text{biadj-fun}_2$, respectively. Altogether, $\frL^\text{biadj-fun}$ is a metric nilpotent dg Lie algebra.  
        
        \paragraph{Colour--flavour-stripping.}
        The next step is to perform a colour--flavour-stripping as explained in \cref{ssec:colour-flavour-stripping}. Without loss of generality, we can chose $\frg$ and $R$ to be the colour--flavour factors and we expect a factorisation of the dg Lie algebra $\frL^\text{biadj-fun}$ as follows:
        \begin{subequations}\label{eq:colour_flavour_stripping}
            \begin{equation}
                \frL^\text{biadj-fun}\ \cong\ \frg\otimes \frB^\text{biadj}~\oplus~R\otimes V^\text{bifun}
            \end{equation}
            with $\frB^\text{biadj}$ as defined in~\eqref{eq:C_bar_BAS} and $V^\text{bifun}=(V^{\text{bifun}},\sfd_{V^{\text{bifun}}},\acton_{V^{\text{bifun}}})$ is a (dg) module over $\frB^\text{biadj}$ with underlying cochain complex 
            \begin{equation}\label{eq:bi_adj_fun_cochain_cs}
                \sfCh(V^{\text{bifun}})\ \coloneqq\ 
                \left(
                \begin{tikzcd}[
                    row sep=-3pt
                    ]
                    *\arrow[r] & \bar R\otimes\scC^\infty(\IM^d) \arrow[r,"\sfid_{\bar R}\otimes\wave"] &  \bar R\otimes\scC^\infty(\IM^d) \arrow[r] & * 
                \end{tikzcd}
                \right).
            \end{equation}
            The action is defined as 
             \begin{equation}
                 \varphi\acton_{V^{\text{bifun}}}\psi\ \coloneqq\ \bar\sfe_{\bar \imath}T_{a\bar\jmath}{}^{\bar\imath}\otimes \varphi^a\psi^{\bar\jmath}~.
             \end{equation}
            A short computation then verifies the factorisation~\eqref{eq:colour_flavour_stripping}.
        \end{subequations}
        
        \paragraph{$\BVbox$-module structure and colour--kinematics duality.}
        We have already seen that the dg commutative algebra $\frB^\text{biadj}$ can be enriched to a $\BVbox$-algebra $\frB^{\bar g}$; it remains to enriched $V^\text{bifun}$ to a $\BVbox$-algebra module, which we will denote by the same letter. As in the case of the dg commutative algebra, also here the required additional operator $\sfb_{V^{\text{bifun}}}$ is given by the evident degree shift
        \begin{subequations}
            \begin{equation}
                \sfb_{V^{\text{bifun}}}\ \coloneqq\ [1]\,:\,V_2^\text{biadj}\ \xrightarrow{~\cong~}\ V_1^\text{biadj}~.
            \end{equation}
            The derived bracket $\{-,-\}_{V^{\text{bifun}}}\colon\frB^{\text{biadj}} \times V^{\text{bifun}}\rightarrow V^{\text{bifun}}$ as defined in~\eqref{eq:derived_bracket_module}
            reads as
            \begin{equation}\label{eq:derived_bracket_module_bifun}
                \begin{split}
                    \{\varphi,\psi\}_{V^{\text{bifun}}}\ &=\ (\bar\sfe_{\bar\imath}\bar T_{\bar a\bar\jmath}{}^{\bar\imath}\otimes\varphi^{\bar a}\psi^{\bar \jmath})[1]~,
                    \\
                    \{\varphi,\psi^+\}_{V^{\text{bifun}}}\ &=\ \bar\sfe_{\bar\imath}\bar T_{\bar a\bar\jmath}{}^{\bar\imath}\otimes\varphi^{\bar a}(\psi^{+\bar\jmath}[1])~,
                    \\
                    \{\varphi^+,\psi\}_{V^{\text{bifun}}}\ &=\ \bar\sfe_{\bar\imath} \bar T_{\bar a\bar\jmath}{}^{\bar\imath}\otimes(\varphi^{+\bar a}[1])\psi^{\bar\jmath}~,
                    \\
                    \{\varphi^+,\psi^+\}_{V^{\text{bifun}}}\ &=\ 0
                \end{split}
            \end{equation}
        \end{subequations}        
        for all $\varphi\in\frB^\text{biadj}_1$, $\varphi^+\in\frB^\text{biadj}_2$, $\psi\in V^\text{bifun}_1$, and $\psi^+\in V^\text{bifun}_2$. Together with the derived bracket of the biadjoint scalar theory, see~\eqref{eq:biadj_Lie_bracket}, it follows that $\{-,-\}_{V^{\text{bifun}}}$ satisfies the shifted Poisson identity \eqref{eq:derivedBracketModuleJacobi} and 
        \begin{equation}
            (V^{\text{bifun}},\sfd_{V^{\text{bifun}}},\acton_{V^{\text{bifun}}},\sfb_{V^{\text{bifun}}})
        \end{equation}
        is a $\text{BV}^{\Box}$-module over the $\BVbox$-algebra $\frB^\text{biadj}$.
        
        \paragraph{Double copy.}
        To illustrate syngamies involving matter fields, let us consider the syngamy of two copies of $(\frB^\text{biadj},V^\text{bifun})$ with Lie algebras and metric fundamental representations $(\frg,R)$ and $(\bar\frg,\bar R)$, respectively. The restricted tensor product of the two $\BVbox$-algebras is given in~\eqref{eq:rough_chain_complex_biadj}, and the restricted tensor product $\hat V$ of the $\BVbox$-modules similarly has underlying cochain complex 
        \begin{equation}\label{eq:rough_chain_complex_bifun}
            \Big(
            \begin{tikzcd}[column sep=17pt]
                *\arrow[r] & \underbrace{R\otimes\bar R\otimes\scC^\infty(\IM^d)}_{\eqqcolon\,\hat V_2} \arrow[r] & \underbrace{\IR^2\otimes R\otimes\bar R\otimes\scC^\infty(\IM^d)}_{\eqqcolon\,\hat V_3} \arrow[r] & \underbrace{R\otimes\bar R\otimes\scC^\infty(\IM^d)}_{\eqqcolon\,\hat V_4} \arrow[r] & * 
            \end{tikzcd}
            \Big)~,
        \end{equation}
        and by \cref{prop:kinematic_module}, there is a corresponding underlying module $\frV$ for the kinematic Lie algebra $\frK$ of $\hat \frB$ defined in~\eqref{eq:rough_chain_complex_biadj}. Again, the cochain complex~\eqref{eq:rough_chain_complex_bifun} is split in half into the kernel and cokernel of the operator 
        \begin{equation}
            \sfb_{\hat V-}\ \coloneqq\ [1]\otimes\sfid-\sfid\otimes[1]~,
        \end{equation}
        and $\ker(\sfb_{\hat V-})$ consist of $\hat \frB_2$ and a symmetrised sum of the two copies of $R\otimes\bar R\otimes\scC^\infty(\IM^d)$ in $\hat \frB_3$. Restricted to this kernel, $\frV$ becomes a dg module $\frV^0$ over the reduced kinematic dg Lie algebra $\frK^0$ of $\hat \frB$ by \cref{prop:restrictedKinematicLieModule}.
        
        The reduced kinematic dg Lie algebra $\frK^0$ and the dg module $\frV^0$ now combine into a single dg Lie algebra, and it is not hard to see that this dg Lie algebra is $\frL^\text{biadj-fun}$, the dg Lie algebra we started from. In particular, the double copy of the metric~\eqref{eq:inner_product_biadj-fun} is fully analogous to that of the metric in biadjoint scalar theory. Hence, the syngamy of two copies of $(\frB^\text{biadj},V^\text{bifun})$ yields a biadjoint scalar theory coupled to bifundamental matter.
        
        \subsection{The sesquiadjoint scalar and kinematic \texorpdfstring{$L_\infty$}{L-infinity}-algebras}
        
        In order to illustrate at least one case of a kinematic $L_\infty$-algebra (again, anticipating our future work~\cite{Borsten:2022aa}), we introduce a sesquiadjoint scalar field theory.
        
        \paragraph{Differential graded Lie algebra and colour-stripping.}
        The setup is almost identical to the biadjoint scalar, except that we replace $\bar\frg$ in $\frg\otimes\bar\frg$ with a vector space $W$ equipped with an anti-symmetric binary operation $[-,-]:W\times W\to W$ that does not (necessarily) fulfil the Jacobi identity.\footnote{Such products were considered, e.g., in~\cite[\S3]{Hohm:2017cey}.}
        
        Colour-stripping, we have a dg commutative algebra $\frC^\text{seqadj}$ with underlying cochain complex,           
        \begin{equation}
            \sfCh(  \frC^\text{seqadj})\ \coloneqq\ 
            \left(
            \begin{tikzcd}[
                row sep=-3pt
                ]
                *\arrow[r] & W\otimes\scC^\infty(\IM^d) \arrow[r,"\sfid_{W}\otimes\wave"] &  W \otimes\scC^\infty(\IM^d) \arrow[r] & * 
            \end{tikzcd}
            \right),
        \end{equation}
        and non-trivial graded symmetric product
        \begin{equation}
            \begin{gathered}
                \sfm_2\,:\,W\otimes\scC^\infty(\IM^d)\times W\otimes\scC^\infty(\IM^d)\ \rightarrow\ (W\otimes\scC^\infty(\IM^d))[-1]~,
                \\
                \sfm(\varphi_1,\varphi_2)\ \coloneqq\ \sfe_cf_{ab}{}^c\otimes(\varphi^a_1\varphi^b_2)~,
            \end{gathered}
        \end{equation}        
        where we have introduced a basis, $\sfe_a$, for $W$ and structure constants $f_{ab}{}^c$ for the binary operation $[-,-]$ that does not obey the Jacobi identity. 
        
        \paragraph{Kinematic $L_\infty$-algebra.}
        As before, the shift isomorphism 
        \begin{equation}
            \sfb\ \coloneqq\ [1]\,:\,\frC_2^\text{sesqadj}\ \xrightarrow{~\cong~}\ \frC_1^\text{sesqadj}~.
        \end{equation}
        satisfies $\sfd\sfb+\sfb\sfd=\wave$. The non-trivial higher-order differentials, as defined in~\eqref{eq:def_Phir} with $\delta=\sfb=[1]$, are given by  
        \begin{equation}
            \begin{aligned}
                \Phi^1_\sfb(\phi^+_1)\ &\coloneqq\ \phi^+_1[1]~,
                \\
                \Phi^2_\sfb(\phi_1,\phi_2)\ &\coloneqq\ \sfm(\phi_1,\phi_2)[1]~,
                \\
                \Phi^2_\sfb(\phi_1,\phi^+_2)\ &\coloneqq\ \sfm(\phi_1,\phi^+_2[1])~,
                \\
                \Phi^2_\sfb(\phi^+_1,\phi_2)\ &\coloneqq\ - \sfm(\phi^+_1[1], \phi_2)~,
                \\
                \Phi^{3}_\sfb(\phi_1,\phi_2,\phi_3)\ &\coloneqq\ \sfm(\phi_1[1],\sfm(\phi_2,\phi_3))-\sfm(\sfm (\phi_1,\phi_{2})[1],\phi_3)
                +\sfm(\phi_2,\sfm(\phi_1,\phi_3)[1])~.
            \end{aligned}
        \end{equation}
        By \cref{prop:derived_brackets_form_L_infty_algebra}, the higher products $\mu_i\coloneqq\Phi^i_\sfb$ define an $L_\infty$-algebra on the shifted cochain complex $\sfCh(\frC^\text{seqadj})[1]$. Here, $\mu_3$ (as always) describes the homotopy that encodes the failure of $\mu_2$ to satisfy the Jacobi identity, which in turn is due to the bracket $[-,-]$ not satisfying the Jacobi identity. This derived $L_\infty$-algebra is directly analogous to the derived Lie algebra of the kinematic Lie algebra. It is an example of the kinematic $L_\infty$-algebras described in \cref{ssec:kinematic_L_infty_algebras}. We stress that the homotopy Jacobi relations in this example are non-trivial.
        
        \paragraph{General setting.}
        Since there is always a graded commutative product $\sfm_2$, every perturbative Lagrangian BV theory has such a kinematic $L_\infty$-algebra (under the very weak assumption that there is a suitable $\sfb$). We plan to explore the significance of this observation further in future work. The most radical implication that one might envisage, is that every theory can be double-copied using the kinematic $L_\infty$-algebra structure. This seems (at least superficially) unlikely, and the standard double copy argument~\cite{Bern:2010yg} for scattering amplitudes is certainly not generalised in an obvious fashion.  
        
        In the above example, in particular, the differential $\Phi^1_\sfb$ has trivial cohomology, and hence the $L_\infty$-algebra of the Koszul hierarchy is quasi-isomorphic to the trivial one\footnote{This is in close analogy to the Lie or $L_\infty$-algebra of inner derivations of a Lie or $L_\infty$-algebra being contractible or quasi-isomorphically trivial.}. By contrast, the usual kinematic Lie algebra is non-trivial precisely because we can halve the field content and render the (cohomology of the) kinematic algebra non-trivial. This possibly suggests that generic kinematic $L_\infty$-algebras are not of use in the double copy.
        
        \subsection{Pure Chern--Simons theory}\label{ssec:pure_Chern-Simons}
        
        So far, we encountered scalar field theories which directly exhibited CK duality. In this example, we increase the complexity by introducing gauge symmetry while still maintaining manifest CK duality.       
        
        \paragraph{Differential graded Lie algebra.}
        Let $\frg$ be a metric Lie algebra with basis $\sfe_a$ relative to which we have structure constants $f_{ab}{}^c$ and a metric $g_{ab}$. Furthermore, let $\Omega^p(\IM^3)$ be the differential $p$-forms on $\IM^3$ with the exterior differential $\rmd\colon\Omega^p(\IM^3)\rightarrow\Omega^{p+1}(\IM^3)$ and let $\star:\Omega^p(\IM^3)\rightarrow \Omega^{3-p}(\IM^3)$ be the usual Hodge operator with respect to the Minkowski metric on $\IM^3$.
        
        The field content of Chern--Simons theory consists of the Chern--Simons gauge potential $A=\sfe_a\otimes A^a$ with $A^a\in\Omega^1(\IM^3)$ and its ghost $c=\sfe_a\otimes c^a$ with $c^a\in\Omega^0(\IM^3)$ paired with their anti-fields $A^+=\sfe_a\otimes A^{+a}$ with $A^{+a}\in\Omega^2(\IM^3)$ and its ghost $c^+=\sfe_a\otimes c^{+a}$ with $c^{+a}\in\Omega^3(\IM^3)$. In addition to this usual BV field content, we also add a Nakanishi--Lautrup field $n=\sfe_a\otimes n^a$ with $n^a\in \Omega^1(\IM^3)$ and an anti-ghost $\bar c=\sfe_a\otimes \bar c^a$ with $\bar c^a\in \Omega^1(\IM^3)$ together with the corresponding anti-fields $n^+$ and $\bar c^+$. After gauge fixing with the gauge-fixing fermion $\Psi=\int \left\{g_{ab}\bar c^a\wedge \star (\rmd^\dagger A^b-\tfrac12 n^b)\right\}$, the action functional looks as follows:\footnote{For the $L_\infty$-algebra before gauge-fixing, see e.g.~\cite{Borsten:2021hua}.}
        \begin{equation}\label{eq:CSBVAction}
            \begin{aligned}
                S^\text{CS}\ &\coloneqq\ \int\Big\{\tfrac12g_{ab}A^a\wedge\rmd A^b+\tfrac13g_{ab}f_{cd}{}^bA^a\wedge A^c\wedge A^d
                \\
                &\kern2cm-g_{ab}\bar c^a\wedge \star\rmd^\dagger(\nabla c)^b+\tfrac12 g_{ab}n^a\wedge \star n^b+g_{ab}n^a\wedge \star\rmd^\dagger A^b\Big\}\,.
            \end{aligned}
        \end{equation}
        The dg Lie algebra structure is readily read off, and we directly continue with colour-stripping.
        
        \paragraph{Colour-stripping and $\BVbox$-algebra structure.}
        All of the fields take values in the colour Lie algebra, and after colour-stripping, we obtain a dg commutative algebra $\frB^\text{CS}$, which comes with a natural operator $\sfb$, and has the following underlying  bidirectional complex, cf.~\eqref{eq:gauge-fixed-bidirectional-complex}:
        \begin{equation}\label{eq:CS_bidirectional_complex}
            \begin{tikzcd}[row sep=1cm,column sep=2.7cm]
                \stackrel{c}{\Omega^0(\IM^d)}  \arrow[rdd,shift left,"-\wave"] & \stackrel{A}{\Omega^1(\IM^d)} \arrow[rd,shift right,"-\rmd^\dagger"', pos=0.1] \arrow[r,shift left,"\rmd"] & \stackrel{A^+}{\Omega^2(\IM^d)} \arrow[l,shift left,"\rmd^\dagger"] \arrow[ld,shift left,"\star \rmd", pos=0.1] & \stackrel{c^+}{\Omega^4(\IM^d)} \arrow[ldd,shift left,"-\star"]
                \\
                & \stackrel{n}{\Omega^0(\IM^d)} \arrow[ru,shift left,"\star\rmd", pos=0.1] & \stackrel{n^+}{\Omega^0(\IM^d)} \arrow[lu,shift right,"-\rmd"', pos=0.1]
                \\
                \underbrace{\phantom{\stackrel{c}{\Omega^0(\IM^d)}}}_{\coloneqq\,\frB^\text{CS}_0} & \underbrace{\stackrel{\bar c^+}{\Omega^0(\IM^d)}}_{\coloneqq\,\frB^\text{CS}_1}  \arrow[luu,shift left,"-\sfid"]& \underbrace{\stackrel{\bar c}{\Omega^0(\IM^d)}}_{\coloneqq\,\frB^\text{CS}_2} \arrow[ruu,shift left,"-\wave\star"] &
                \underbrace{\phantom{\stackrel{c^+}{\Omega^0(\IM^d)}}}_{\coloneqq\,\frB^\text{CS}_3}
            \end{tikzcd}
        \end{equation}
        The binary products are given as follows:
        \begin{equation}
            \begin{aligned}
                \sfm_2\left(\begin{pmatrix} A_1 \\ n_1 \\ \bar c^+_1\end{pmatrix},\begin{pmatrix} A_2 \\ n_2 \\ \bar c^+_2\end{pmatrix}\right)\ &\coloneqq\ \begin{pmatrix}A_1\wedge A_2 \\ 0 \\ 0 \end{pmatrix}\ \in\ \frB_2^\text{CS}~,
                \\
                \sfm_2\left(\begin{pmatrix} A_1 \\ n_1 \\ \bar c^+_1\end{pmatrix},c_2\right)\ &\coloneqq\ \begin{pmatrix} 0 \\ 0 \\ \rmd^\dagger(A_1 c_2)\end{pmatrix}\ \in\ \frB_1^\text{CS}~,
                \\
                \sfm_2\left(c_1,\begin{pmatrix} A_2^+ \\ n_2^+ \\ \bar c_2\end{pmatrix}\right)\ &\coloneqq\ \begin{pmatrix} c\rmd \bar c\\ 0 \\ 0\end{pmatrix}\ \in\ \frB_2^\text{CS}~,
            \end{aligned}
        \end{equation}
        where the notation and positions of the components in the arguments and images in these expressions correspond to those of diagram~\eqref{eq:CS_bidirectional_complex}. We clearly see that the operator $\sfb$ implied by~\eqref{eq:CS_bidirectional_complex} is of second order with respect to these binary products, and we obtain indeed a $\BVbox$-algebra structure. Moreover, there is an evident metric with the following, non-vanishing components:
        \begin{equation}
            \begin{aligned}
                \inner{A}{A^+}\ &\coloneqq\ \int A\wedge A^{+}
                ~,~~~
                &\inner{c}{c^+}\ &\coloneqq\ \int c\wedge c^{+}~,
                \\
                \inner{n}{n^+}\ &\coloneqq\ \int n\wedge \star n^{+}
                ~,~~~
                &\inner{\bar c}{\bar c^+}\ &\coloneqq\ \int \bar c\wedge \star \bar c^{+}~.                
            \end{aligned}
        \end{equation}
        
        \paragraph{Colour--kinematics duality.}
        We recall that the tree-level amplitudes of Chern--Simons theory on $\IM^d$ are all trivial. However, following e.g.~\cite{Ben-Shahar:2021zww}, we can consider the homotopy transfer to harmonic forms\footnote{i.e.~amputated correlation functions with external legs being harmonic forms} on $\IM^d$, and it is the CK duality for this Feynman diagram expansion that the $\BVbox$-algebra $\frB^\text{CS}$ manifests. Moreover, we have $\BBox=[\sfd,\sfb]=\wave$, which is evident from the diagram~\eqref{eq:CS_bidirectional_complex}, so that the arising kinematic Lie algebra is indeed for the ordinary form of CK duality with propagator $\frac{1}{\wave}$. Note that here, we have full loop level CK-duality.
        
        \paragraph{Comments.}
        Before coming to the double copy, let us comment on a new feature in Chern--Simons theory. Contrary to previous theories, the $\sfb$-operator, concretely the component $\sfb|_{\frB^\text{CS}_2}$, is no longer simply a shift isomorphism. Therefore the kernel of $\sfb$  no longer cleanly cuts the BV field space into fields and anti-fields, and some parts of the anti-fields are left in $\ker(\sfb)$. These parts, however, are very small; they consists of exact and coexact anti-fields $A^+$ of the gauge potential (which on $\IM^3$ amounts to a harmonic scalar field) as well as constant Nakanishi--Lautrup anti-fields $n^+$. We can usually ignore this issue, as the common constraints on a quantum field theory such as locality etc.~allow us to truncate away subspaces that are not full $\scC^\infty(\IM^d)$-modules. If one feels uncomfortable about this truncation, one can also extend our notion of $\BVbox$-algebra to $\BVbox$-algebras with polarisations, i.e.~structures that compatibly split the field space into fields and complementing anti-fields, respecting in particular~\eqref{eq:fields_inbetween}. Because of the additional technicalities that do not add much in concrete discussions, we refrained from using these notions.
        
        \paragraph{Double copy.}
        With the above technicality out of the way, we can follow our usual prescription using the evident Hopf algebra $\frH_{\IM^3}$ generated by the translation operators on $\IM^3$, and consider the kernel of $\hat \sfb_-$, cf.~\eqref{eq:def_red_tensor_ops}. This leads to a BV field space with the fields, i.e.~the (truncated) elements of $\ker(\sfb_\rmL)\otimes^\frH\ker(\sfb_\rmR)\subseteq\ker(\hat \sfb_-)$ given by the direct sums of the spaces
        \begin{equation}
            \begin{tikzcd}[row sep=0cm, column sep=0.2cm]
                \stackrel{c_\rmL\otimes c_\rmR}{\Omega^0(\IM^3)} & \stackrel{c_\rmL\otimes A_\rmR}{\Omega^1(\IM^3)}\oplus \stackrel{A_\rmL\otimes c_\rmR}{\Omega^1(\IM^3)}
                & \stackrel{A_\rmL\otimes A_\rmR}{\Omega^1(\IM^3)\otimes\Omega^1(\IM^3)}
                \\
                & \stackrel{c_\rmL\otimes n_\rmR}{\Omega^0(\IM^3)}\oplus \stackrel{n_\rmL\otimes c_\rmR}{\Omega^0(\IM^3)}
                & \stackrel{A_\rmL\otimes n_\rmR}{\Omega^1(\IM^3)}\oplus \stackrel{n_\rmL\otimes A_\rmR}{\Omega^1(\IM^3)}
                \\
                & & \stackrel{c_\rmL\otimes\bar c_\rmR}{\Omega^0(\IM^3)}\oplus \stackrel{\bar c_\rmL\otimes c_\rmR}{\Omega^0(\IM^3)}
                & \stackrel{A_\rmL\otimes \bar c_\rmR}{\Omega^1(\IM^3)}\oplus \stackrel{\bar c_\rmL\otimes A_\rmR}{\Omega^1(\IM^3)}
                \\
                \underbrace{\phantom{\stackrel{c_\rmL\otimes c_\rmR}{\Omega^0(\IM^3)}}}_{\eqqcolon\,\frL^\text{CSCS}_{-1}}
                &
                \underbrace{\stackrel{c_\rmL\otimes A_\rmR}{\Omega^1(\IM^3)}\oplus \stackrel{A_\rmL\otimes c_\rmR}{\Omega^1(\IM^3)}}_{\eqqcolon\,\frL^\text{CSCS}_{0}}
                &
                \underbrace{~~~~~~~\stackrel{n_\rmL\otimes n_\rmR}{\Omega^0(\IM^3)}~~~~~~~}_{\eqqcolon\,\frL^\text{CSCS}_{1}}
                &
                \underbrace{\stackrel{n_\rmL\otimes \bar c_\rmR}{\Omega^0(\IM^3)}\oplus \stackrel{\bar c_\rmL\otimes n_\rmR}{\Omega^0(\IM^3)}}_{\eqqcolon\,\frL^\text{CSCS}_{2}}
                & \underbrace{\stackrel{\bar c_\rmL\otimes \bar c_\rmR}{\Omega^0(\IM^3)}}_{\eqqcolon\,\frL^\text{CSCS}_{3}}
            \end{tikzcd}
        \end{equation}
        where we have indicated the origin of the subspaces using the component notation of~\eqref{eq:CS_bidirectional_complex}, and we have also indicated the degree of the fields in the resulting double-copied dg Lie algebra $\frL^\text{CSCS}$. The corresponding anti-fields form a grade-shifted and flipped copy dual of this field space, and together they form the graded vector space of the dg Lie algebra $\frL^\text{CSCS}$.
        
        The differential and the product of the dg Lie algebra $\frL^\text{CSCS}$ are straightforwardly constructed, but the cyclic structure is a bit more complicated. For the propagating field components, i.e.~those components of fields that are not in the kernel of $\wave$, we can use~\cref{prop:syngamyCyclicJustification} to define this inner product. We can then continue the resulting expression to all fields by locality. 
        
        Altogether, the double copy leads to a rather unusual BV field theory, whose physical part was first presented in~\cite{Ben-Shahar:2021zww}. Explicitly, the kinetic term of the action for the physical fields given by the $(1,1)$-biforms $A_\rmL\otimes A_\rmR\in \Omega^1(\IM^3)\otimes \Omega^1(\IM^3)$ reads as 
        \begin{equation}\label{eq:CSCS_kin}
            \frac14\int\left\{(A_\rmL\otimes A_\rmR)\bullet\wave^{-1}\hat\sfd_-\mu_1(A_\rmL\otimes A_\rmR)\right\}\ =\ \frac12\int \left\{(A_\rmL\otimes A_\rmR)\bullet\frac{\rmd\otimes \rmd}{\wave}A_\rmL\otimes A_\rmR)\right\},
        \end{equation}
        where the product $\bullet:\Omega^{p_1}(\IM^3)\otimes\Omega^{q_1}(\IM^3)\times\Omega(\IM^3)^{p_1}\otimes\Omega^{q_2}(\IM^3)\rightarrow\Omega^{p_1+p_2}(\IM^3)\otimes\Omega^{q_1+q_2}(\IM^3)$ on biforms is defined as 
        \begin{equation}\label{eq:CSCS_int}
            (A_1\otimes B_1)\bullet(A_2 \otimes B_2)\ \coloneqq\ (A_1\wedge A_2)\otimes(B_1\wedge B_2)~.
        \end{equation}
        
        The interaction terms for the physical fields are given by 
        \begin{equation}
            \int\tfrac1{3!}(A_\rmL\otimes A_\rmR)\bullet(A_\rmL\otimes A_\rmR)\bullet(A_\rmL\otimes A_\rmR)~,
        \end{equation}
        and together,~\eqref{eq:CSCS_kin} and~\eqref{eq:CSCS_int} are the double-copied Chern--Simons action of~\cite{Ben-Shahar:2021zww} in the $(p,q)$-formalism of~\cite{deMedeiros:2002qpr,deMedeiros:2003osq}. A further study of this action is certainly warranted, particularly, since it will also appear in \cref{ssec:M2branes} in the context of M2-brane models.        
        
        We note that a useful outcome of our double copy construction is the full BV triangle required for studying biform theories.
        
        \subsection{Self--dual Yang--Mills theory and self--dual gravity}\label{sec:twistorSpaceSDYM}
        
        The field theories studied in the previous sections came with in a $\BVbox$-algebra in their original formulation. This is contrary to the case of Yang--Mills theory, where the action has to be rewritten in an equivalent form in order to manifest CK duality, cf.~\cite{Bern:2010yg,Tolotti:2013caa} and the detailed discussion in~\cite{Borsten:2021hua}. A theory that is in between both cases is self-dual Yang--Mills (SDYM) theory, which features CK duality on its currents~\cite{Monteiro:2011pc}. Presented in light-cone gauge, it is essentially a biadjoint scalar field theory, and therefore manifestly CK-dual. In the gauge-invariant form of the Chalmers--Siegel action~\cite{Chalmers:1996rq}, which contains an enlarged field content featuring also an anti-self-dual 2-form field, however, it does require an equivalent rewriting in order to manifest CK duality. As stated in the introduction, CK duality is ultimately a symmetry of the action and therefore we may expect an organisational principle that leads to a manifest formulation. 
        
        In~\cite{Borsten:2022vtg}, we showed that the twistor space $Z$, i.e.~the total space of the holomorphic vector bundle $\caO(1)\oplus \caO(1)$ over $\IC P^1$ can serve as such an organising principle. Explicitly, SDYM theory can be equivalently formulated as a holomorphic Chern--Simons theory on $Z$, and, as for ordinary Chern--Simons theories, there is a natural adjoint of the Dolbeault differential that is of second order with respect to the binary product, and hence an operator $\sfb$ that enhances the evident dg commutative algebra structure for holomorphic Chern--Simons theory on $Z$ to a $\BVbox$-algebra structure. Even better, we have $\BBox=\wave$, the d'Alembertian on space-time in this situation, so that the kinematic Lie algebra describes indeed ordinary CK duality on currents and, in the maximally supersymmetric case, even loop level amplitudes. An elegant example of the formalism presented in this paper can be found in~\cite{Borsten:2023paw}, where we consider an action equivalent to and reminiscent of the light-cone formulation of SDYM theory on twistor space, which elegantly double copies to an analogous formulation of self-dual gravity, also on twistor space. For all the technical details of the above, we refer to~\cite{Borsten:2022vtg} and~\cite{Borsten:2023paw}.
        
        Instead, let us briefly compare this result with that of~\cite{Bonezzi:2023pox}. In this paper, the authors considered the equations of motion and gauge transformations of SDYM theory  on space-time, together with its colour-stripped dg commutative algebra, in order to study the kinematic algebra in the absence of space-time gauge-fixing (as opposed to the light-cone gauge analysis of~\cite{Monteiro:2011pc}). As for Chern--Simons theory, there is a natural candidate for the $\sfb$-operator, namely $\sfb=\rmd^\dagger$, the usual Hodge dual of the de Rham differential. As it stands, this differential is not second order with respect to the binary product, as the latter is not just a wedge product of forms, but at least on fields, it contains a projection operator. Therefore, as observed in this paper, the derived bracket~\eqref{eq:derived_bracket} in this picture is not a Lie bracket, but as explained in \cref{ssec:kinematic_L_infty_algebras}, the binary bracket in a kinematic $L_\infty$-algebra. This is precisely what the authors of~\cite{Bonezzi:2023pox} observe to lowest order: there is a ternary operation, given by the expression from the Koszul hierarchy, so that the derived bracket satisfies the homotopy Jacobi identity of an $L_\infty$-algebra.
        
        The authors of~\cite{Bonezzi:2023pox}, however, obtain more. They show that the graded Poisson relation~\eqref{eq:BV_GB_Poisson} of the derived bracket~\eqref{eq:derived_bracket} is violated in a controlled way, and they compute the correction to this order. This leads to parts of a $\BVbox_\infty$-algebra~\cite{Reiterer:2019dys}, see also~\cite{Borsten:2022aa}. In this sense, CK duality is not manifested literally, but only `up to homotopy'. The usual strictification theorem for homotopy algebras applies, and hence one can rewrite the theory in an equivalent form that makes use of an ordinary $\BVbox$-algebra, and therefore manifests CK duality. We note that the 3-bracket inserted in~\cite{Bonezzi:2023pox} corresponds, after inserting a metric, and further an action principle, to a Tolotti--Weinzierl-type term that may be added to the action to manifest CK-duality to this order.
        
        We also note that our rewriting on twistor space directly produces such a rewriting. Twistor space $Z$ is diffeomorphic to the space\footnote{In the supersymmetric case, $\IR^{4}$ is replaced by $\IR^{4|2\caN}$.} $\IR^4\times \IC P^1$, and one can perform a mode expansion along $\IC P^1$. Some of these infinitely many modes correspond to physical fields on space-time, the rest will be the auxiliary fields that produce the Tolotti--Weinzierl terms\footnote{These are terms in the action that vanish due to the Jacobi identity of the colour algebra, cf.~\cite{Tolotti:2013caa} and also~\cite{Borsten:2021hua}.} in the action necessary for manifesting CK duality. The obtained action will hence be the usual first order formulation of SDYM theory given by the Chalmers--Siegel action~\cite{Chalmers:1996rq} plus additional trivial terms, which will become non-trivial after colour-stripping. Note that the twistor formulation allows for a choice of gauge, usually called space-time gauge, that directly leads to the Chalmers--Siegel action~\cite{Mason:2005zm,Boels:2006ir}, see also~\cite[\S5.2]{Wolf:2010av}.
        
        Altogether, we saw that twistor space can serve as an organising principle that naturally leads to CK-dual formulations of field theories. In the case of full Yang--Mills theory, one can use ambitwistor space, and while this description still yields a kinematic Lie algebra, the operator $\BBox$ is not the space-time d'Alembertian operator, so we only obtain a generalised form of CK duality. For this case, a more suitable organisational principle is found in pure spinor space, to which we turn next.
        
        \subsection{Pure spinor formulation of supersymmetric  Yang--Mills theory}\label{ssec:pure_spinors_SYM}
        
        Closely related to the twistor construction of self-dual Yang--Mills theory mentioned in the previous section is the pure spinor formulation of supersymmetric gauge theories. In particular, ten-dimensional supersymmetric Yang--Mills theory can be formulated as Chern--Simons type action on pure spinor space, providing a natural $\BVbox$-algebra structure. Contrary to the ambitwistor space construction of four-dimensional supersymmetric Yang--Mills theory in~\cite{Borsten:2022vtg}, however, there is a natural operator $\sfb$ that leads to $\BBox=\wave$, the d'Alembertian, so that conventional CK duality can be established~\cite{Ben-Shahar:2021doh} for amplitude currents. As explained in~\cite{Borsten:2023reb}, however, reducing the currents to tree-level numerators in this picture involves a diverging integral over the pure spinors. This can be fixed by an alternative choice of $\sfb$~\cite{Borsten:2023reb}, and we briefly review this construction.
        
        \paragraph{Pure spinor space.}
        For the ten-dimensional supersymmetric Yang--Mills theory, we start from the superspace
        \begin{equation}
            \hat\scM_{\text{10d}\,\caN=1}\ \coloneqq\ \IM^{10|16}\times (\IR^{2|1}\otimes \caS_{\text{10d}\,\text{MW}})~,
        \end{equation}
        where $\IM^{10|16}$ is the ten-dimensional $\caN=1$ Minkowski superspace and $\caS_{\text{10d}\,\text{MW}}$ is the space of Majorana--Weyl spinors in ten dimensions. Hence, $\IR^{2|1}\otimes\caS_{\text{10d}\,\text{MW}}$ is the $(32|16)$-dimensional superspace with coordinates\footnote{Note that $\rmd\bar\lambda_A$ is indeed common notation for a coordinate.} $(\lambda^A,\bar\lambda_A,\rmd\bar\lambda_A)$, which transform in the $\textbf{16}$, $\textbf{16}$, and $\overline{\textbf{16}}$ of $\sfSpin(1,9)$, respectively. The pure spinor space $\scM_{\text{10d}\,\caN=1}$ is obtained from this space as the quadric 
        \begin{equation}\label{eq:pure_spinor_quadrics_10d}
            \lambda^A\gamma^M_{AB}\lambda^B\ =\ \bar\lambda_A\gamma^{M\,AB}\bar\lambda_B\ =\ \bar\lambda_A\gamma^{M\,AB}\mathrm d\bar\lambda_B\ =\ 0~,
        \end{equation}
        where $\gamma^M_{AB}$ and $\gamma^{M\,AB}$ are the evident Clifford algebra generators. Operationally, we will work with fields on $\hat \scM_{\text{10d}\,\caN=1}$ and identify the fields on $\scM_{\text{10d}\,\caN=1}$ as a quotient of these by the ideal $\caI$ generated by the quadrics~\eqref{eq:pure_spinor_quadrics_10d}.
        
        The space $\hat \scM_{\text{10d}\,\caN=1}$ comes with a natural vector field $Q$, 
        \begin{equation}
            Q\ =\ \lambda^A D_A+\rmd \bar \lambda_A \parder{\bar\lambda_A}~,
        \end{equation}
        where the $D_A$ are the usual covariant superderivatives on $\IM^{10|16}$, satisfying
        \begin{equation}
            D_AD_B+D_BD_A\ =\ -2\gamma^M_{AB}\parder{x^M}~.
        \end{equation}
        This vector field $Q$ descends to a differential on the functions on $\scM_{\text{10d}\,\caN=1}$ due to~\eqref{eq:pure_spinor_quadrics_10d}; in particular, $\caI$ is a differential ideal.
        
        There is now a family of operators $\sfb$ such that 
        \begin{equation}\label{eq:b-relations}
            \sfb^2\ =\ 0
            \eand
            Q\sfb+\sfb Q\ =\ \wave
        \end{equation}
        with $\wave$ the d'Alembertian on $\IM^{10}$~\cite{Bjornsson:2010wm,Bjornsson:2010wu,Berkovits:2013pla,Jusinskas:2013sha,Cederwall:2022qfn}. Usually, a Lorentz-covariant choice 
        \begin{equation}\label{eq:def_b_Lcov_10}
            \sfb_\text{Lorentz}\ \coloneqq\ \frac{\bar\lambda_A\gamma^{M\,AB}D_B}{2(\lambda^A\bar\lambda_A)}\parder{x^M}+\cdots~,
        \end{equation}
        $M=0,\ldots,9$, is made, but this choice is less suitable for our purposes; instead, we work with the $\sfb$-operator of the $Y$-formalism~\cite{Matone:2002ft,Oda:2005sd,Oda:2007ak}, 
        \begin{equation}\label{eq:def_b_operator_10}
            \sfb\ \coloneqq\ -\frac{v_A\gamma^{M\,AB}D_B}{2\lambda^A v_A}\parder{x^M}~,
        \end{equation}
        where we have chosen a reference pure spinor $v$,  satisfying $v_A\gamma^{M\,AB}v_B=0$. Evidently, this operator is of second order, and it is straightforward to verify that the relations~\eqref{eq:b-relations} are satisfied. 
        
        We summarise the properties of all the objects introduced so far in \cref{tab:coordinatesOperators}.
        
        \begin{table}[ht]
            \vspace{15pt}
            \begin{center}
                \begin{tabular}{ccccc}
                    \toprule
                    & \multirow{2}{*}{$\sfSpin(1,9)$} & mass & Gra{\ss}mann & ghost
                    \\[-3pt]
                    & & dimension & degree & number
                    \\
                    \midrule
                    $x$ & $\mathbf{10}$ & $-1\phantom+$ & $0$ & $\phantom{+}0\phantom+$
                    \\
                    $\theta$ & $\mathbf{16}$ & $-\frac12\phantom+$ & $1$ & $\phantom{+}0\phantom+$
                    \\
                    $\lambda$ & $\mathbf{16}$& $-\frac12\phantom+$ & $0$ & $\phantom{+}1\phantom+$
                    \\
                    $\bar\lambda$ & $\overline{\mathbf{16}}$ & $\phantom{+}\frac12\phantom+$ & $0$ & $-1\phantom+$
                    \\
                    $\rmd\bar\lambda$ & $\overline{\mathbf{16}}$ & $\phantom{+}\frac12\phantom+$ & $1$ & $\phantom{+}0\phantom+$
                    \\
                    \midrule
                    $D$ & $\overline{\mathbf{16}}$ & $\phantom{+}\frac12\phantom+$ & $1$ & $\phantom{+}0\phantom+$
                    \\
                    $Q$ & $\mathbf{1}$ & $\phantom{+}0\phantom+$ & $1$ & $\phantom{+}1\phantom+$
                    \\
                    $\sfb$ & $\mathbf{1}$ & $\phantom{+}2\phantom+$ & $1$ & $-1\phantom+$
                    \\
                    \bottomrule
                \end{tabular}
                \caption{Properties of ten-dimensional coordinates and operators.}
                \label{tab:coordinatesOperators}
            \end{center}
        \end{table}
        
        \paragraph{Pure spinor action and Siegel gauge.} 
        There is now a simple, Chern--Simons type formulation of the BV action of ten-dimensional supersymmetric Yang--Mills theory~\cite{Berkovits:2001rb,Movshev:2003ib}. The field content is organised into a single scalar superfield $\Psi$ on $\scM_{\text{10d}\,\caN=1}$ of ghost number $1$, mass dimension $0$, and Gra{\ss}mann degree~$1$, which takes values in the metric gauge Lie algebra $(\frg,\inner{-}{-}_\frg)$. Together with the natural volume form $\Omega_{\text{10d}\,\caN=1}$ on pure spinor space $\scM_{\text{10d}\,\caN=1}$ that was given in~\cite{Berkovits:2005bt}, we can write down the action functional 
        \begin{equation}\label{eq:action_10d}
            S^{\text{10d}\,\caN=1}\ \coloneqq\ \int\Omega_{\text{10d}\,\caN=1}~\inner{\Psi}{Q\Psi+\tfrac13[\Psi,\Psi]}_\frg~.
        \end{equation}
        
        The underlying cochain complex of the pure spinor BV $L_\infty$-algebra is compactly encoded in the  space of smooth functions on the  pure spinor space, 
        \begin{equation}
            \sfCh(\frL^\text{psYM})\ \cong\ \scC^\infty(\frg\otimes \scM_{\text{10d}\,\caN=1})~.
        \end{equation}
        To recover the component (anti-)fields and identify the graded vector spaces to which they belong, one Taylor-expands the $\frg$-valued superfield $\Psi(x^M,\theta^A,\lambda^A,\bar\lambda_A,\rmd\bar\lambda_A)$ with respect to the $\lambda^A,\bar\lambda_A,\rmd\bar\lambda_A$ coordinates. 
        
        There is an evident dg Lie algebra structure on $\scC^\infty(\frg\otimes\scM_{\text{10d}\,\caN=1})$. The differential is given by $\sfid_\frg\otimes Q$ and 
        \begin{equation}
            \mu_2(\Psi_1,\Psi_2)\ \coloneqq\ [\Psi_1,\Psi_2]\ =\ f_{ab}{}^c\sfe_c\otimes\Psi^{a}_1\cdot\Psi^{b}_2~,
        \end{equation}
        where $-\cdot-$ is just the pointwise product on $\scC^\infty(\scM_{\text{10d}\,\caN=1})$. 
        
        In order to compute perturbative scattering amplitudes, cf.~\cite{Bjornsson:2010wm,Bjornsson:2010wu}, we can work in Siegel gauge,
        \begin{equation}\label{eq:defSiegelGauge}
            \sfb\Psi\ =\ 0~.
        \end{equation}
        Note that our choice~\eqref{eq:def_b_operator_10} of $\sfb$ imposes a form of axial gauge along $v$.
        
        The propagator in this gauge is simply $\frac{\sfb}{\wave}$, and, evidently, this is a generalisation of the propagator we encountered in the discussion of pure Chern--Simons theory in \cref{ssec:pure_Chern-Simons}.
        
        \paragraph{$\BVbox$-algebra structure and colour--kinematics duality.} 
        It is now rather evident that the metric dg commutative algebra induced by the action~\eqref{eq:action_10d} becomes a $\BVbox$-algebra 
        \begin{equation}
            \frB^\text{psSYM}\ \coloneqq\ (\scC^\infty(\scM_{\text{10d}\,\caN=1}),Q,-\cdot-,\sfb)
        \end{equation}
        with $\sfb$ given by~\eqref{eq:def_b_operator_10} from the $Y$-formalism. The only fact to check is that $\sfb$ is of second order with respect to the function product on pure spinor space $\scM_{\text{10d}\,\caN=1}$, but this is evident from the explicit expression for $\sfb$ in~\eqref{eq:def_b_operator_10}. Note that the pure spinor field already contains the Nakanishi--Lautrup field and anti-ghosts (as well as the corresponding anti-fields), so that it indeed packages up all the BV fields required for a gauge-fixed action, cf.~\cite{Cederwall:2010wf}. 
        
        By \cref{cor:pure_gauge_CK_duality}, we thus have a theory with manifest CK-dual parametrisation of its currents, and this observation had been made before in~\cite{Ben-Shahar:2021zww} for the commonly used, covariant $\sfb$-operator~\eqref{eq:def_b_Lcov_10}. Using the $\sfb$-operator~\eqref{eq:def_b_operator_10} of the $Y$-formalism, this result extends to the amplitudes, as we explain now, following the argument in~\cite{Borsten:2023reb}.
        
        Recall from the discussion in \cref{ssec:kinematic_Lie_algebras} that in order to convert a current into an amplitude, we have to remove the propagator on the outgoing leg and pair it off with another incoming, asymptotically free field. This latter pairing involves an integral over pure spinor space, which may lead to divergences. These divergences certainly cancel in the tree-level amplitudes, but they do not necessarily cancel in individual diagrams. This is a problem since we can only establish CK duality, if we can extract finite numerators of a CK-dual parametrisation of the scattering amplitudes. 
        
        The tree-level numerators can suffer from two types of divergences. Firstly, we have to account for the fact that pure spinor space\footnote{contrary e.g.~to the base of twistor space, which provides an alternative ordering principle for CK duality~\cite{Borsten:2022vtg}} is non-compact, and therefore we will encounter \uline{infrared-like} divergences from integrating over the unbounded $(\lambda,\bar\lambda)$-domains. These divergences are mostly harmless, and there is a well-known $Q$-invariant regularisation of the integral measure by a factor of the form
        \begin{equation}
            \rme^{-\epsilon\{Q,\chi\}}\ =\ \rme^{-\epsilon(\lambda^A\bar\lambda_A+\cdots)}~,
        \end{equation}
        where $\epsilon$ is a real positive constant and $\chi$ is a pure spinor field of ghost degree $-1$ which can be chosen a $\chi=-\bar\lambda_A\theta^A+\cdots$~\cite{Berkovits:2005bt,Marnelius:1990eq}, cf.~also~\cite{Cederwall:2022fwu}. This manifestly suppresses the would-be infrared divergences~\cite{Berkovits:2006vi} while preserving the kinematic Lie algebra (and hence CK duality) since the bracket is merely scaled. 

        Secondly, there are \uline{ultraviolet-like} divergences arising when $(\lambda,\bar\lambda)\to0$. Those are more difficult to deal with when trying to establish CK duality as we shall explain next. In particular, in the covariant non-minimal formalism~\eqref{eq:def_b_Lcov_10}, the scattering amplitude integrands will contain singularities of the form $\frac{1}{(\lambda^A\bar\lambda_A)^n}$ due to the propagator $\frac{\sfb_\text{Lorentz}}{\wave}$ and the Siegel gauge~\eqref{eq:defSiegelGauge}. However, the regulator~\cite{Berkovits:2006vi}
        \begin{equation}
            \sfb_{\text{Lorentz},\,\epsilon}\ \coloneqq\ \rme^{-\epsilon(w_A\bar w^A+\cdots)}\sfb_\text{Lorentz}
        \end{equation}
        will render these singularities harmless since $\sfb_{\text{Lorentz},\,\epsilon}$ is $Q$-cohomologuous to $\sfb_{\text{Lorentz},\,\epsilon=0}$. Here, $w_A,\bar w^A$ are conjugate to $\lambda^A,\bar\lambda_A$ and whilst this superficially spoils the second-orderness of the $\sfb_\text{Lorentz}$, all that is needed that the difference between this operator and the one that is used in the end is $Q$-exact. Moreover, to establish CK duality, the singular contributions ought to be integrals of $Q$-exact terms as such terms will ultimately drop out due to the gauge invariance of the total scattering amplitudes. This was made explicit in~\cite{Ben-Shahar:2021doh}, where it was inductively proven for supersymmetric Yang--Mills theory and illustrative examples at low points were given. 

        Importantly, this conclusion also applies to the $\sfb$-operator~\eqref{eq:def_b_operator_10} in the $Y$-formalism. In fact, we first note that
        \begin{equation}
            \sfb\Psi\ =\ \sfb_{\text{Lorentz},\,\epsilon=0}\Psi
        \end{equation}
        for all representatives $\Psi$ of the $Q$-cohomology~\cite{Oda:2007ak}. Consequently, since in the covariant non-minimal formalism all the singular contributions to the total scattering amplitudes are $Q$-exact, the same holds true in the $Y$-formalism. Hence, we can employ the $Y$-formalism to compute  scattering amplitudes since all the potential singularities will sum into a $Q$-exact terms which, in turn, drop out due to gauge invariance. This is all that is needed to establish CK duality in the $Y$-formalism. 
    
        It is important to realise that the preceding argument does not prevent individual Feynman diagrams, and hence their numerators, from having singular contributions. In the covariant non-minimal formalism using $\sfb_{\text{Lorentz},\,\epsilon=0}$, these singularities could spoil CK duality~\cite{Ben-Shahar:2021doh}. On the other hand, we obtain $Y$-formalism Siegel gauge physical states (unintegrated vertex operators) by starting with the non-singular representatives $\lambda^A\lambda^B\caA_{AB}$ of the antifield cohomology classes and applying $\sfb$ to them~\cite{Aisaka:2009yp},
        \begin{equation}
            \Psi\ =\ \sfb(\lambda^A\lambda^B\caA_{AB})\ =\ -\frac{v_A\gamma^{M\,AB}D_B}{2\lambda^Av_A}\parder{x^M}(\lambda^C\lambda^D\caA_{CD})~.
        \end{equation}
        This implies that the singularities of external states and Feynman diagrams are of the form $\frac{1}{(\lambda^Av_A)^n}$. Furthermore, the kinematic Jacobi identities hold order by order in $\frac{1}{\lambda^Av_A}$ but they need to be regulated. As before, in total scattering amplitudes, divergent terms from each diagram will either cancel or combine into $Q$-exact terms and thus drop out in the end. Nevertheless, it is desirable to drop the singular terms in each individual diagram, before summing into a $Q$-exact term, so as to regulate the individual numerators in a minimal-subtraction-like scheme. To do so, one may worry that $Q$ could change the degree of divergence which, in turn, would imply that finite terms from each diagram might be needed to construct the ultimate singular $Q$-exact term. Then, when minimally subtracting the singular terms in each diagram individually, these finite terms would also need to be dropped. In turn, this would change the finite part of the numerators and so, spoil CK duality. However, this cannot happen since the operator $Q$ will not affect the degree of singularity near $\lambda^Av_A=0$ as it is independent of $v$. Consequently, the terms in the numerators that must be discarded may be restricted to singular terms only. To be explicit, we split each Feynman diagram, $\gamma_i$, into three terms,
        \begin{subequations}
            \begin{equation}
                \gamma_i\ =\ \gamma_i^{0}+\gamma_i^{Q,\text{finite}}+\gamma_i^{Q,\text{singular}}~, 
            \end{equation}
            where the finite and singular terms, $\gamma_i^{Q,\text{finite}}$ and $\gamma_i^{Q,\text{singular}}$, contribute to the $Q$-exact part of the total amplitude integrand
            \begin{equation}
                I\ =\ I^0+Q\Lambda~, 
            \end{equation}
            where
            \begin{equation}
                Q\Lambda\ =\ \sum_i\left(\gamma_i^{Q,\text{finite}}+\gamma_i^{Q,\text{singular}}\right).
            \end{equation}
        \end{subequations}
        Since, as mention above, $Q$ preserves the degree of singularity near $\lambda^Av_A=0$, both sums $\sum_i\gamma_i^{Q,\text{finite}}$ and $\sum_i\gamma_i^{Q,\text{singular}}$ are separately $Q$-exact. This implies that we can drop $\gamma_i^{Q, \text{singular}}$ in each diagram separately, while preserving the total scattering amplitude. Since CK duality holds order by order in $\frac{1}{(\lambda^A v_A)^n}$, the resulting `minimally-subtracted' numerators obey the kinematic Jacobi identities. 
    
        In summary, we can truncate away the singular terms in the numerators without losing kinematic Jacobi identities, akin to minimal subtraction in dimensional regularisation. The minimally subtracted numerators provide a CK-dual parametrisation of the scattering amplitudes with finite numerators.
    
        Therefore, we have established all-order tree-level CK duality for ten-dimensional supersymmetric Yang--Mills theory. By dimensional reduction and embedding non-maximally supersymmetric Yang--Mills tree diagrams into maximal ones (cf.~\cite{Chiodaroli:2013upa}), this establishes tree-level CK-duality for all pure Yang--Mills theories with arbitrary amounts of supersymmetry in any dimension.

        Finally, let us remark that had we used the usual, covariant operator~\eqref{eq:def_b_Lcov_10}, our argument would not have worked. In this case, the ultraviolet divergences arise at the tip of the cone $\lambda^A\bar\lambda_A=0$ in pure spinor space, but $Q$ does change the degree of singularity near $\lambda^A\bar\lambda_A=0$ due to the derivative with respect to $\bar\lambda_A$. This leads to a potential mixing of singularities, and therefore CK duality is not guaranteed order by order. In this case, there is no subtraction scheme as for the $\sfb$-operator in the $Y$-formalism.        
        
        \paragraph{Double copy.}
        The $\BVbox$-algebra obtained above can be double-copied using our formalism in a straightforward manner. We choose to work with the evident cocommutative Hopf algebra $\frH_{\IM^{10}}$ to control the momentum dependence. Correspondingly, we use the restricted tensor product 
        \begin{equation}
            \hat\frB\ \coloneqq\ \frB^\text{psSYM}\otimes^{\frH_{\IM^{10}}}\frB^\text{psSYM}~.
        \end{equation}
        Upon factorising the pure spinor space for supersymmetric Yang--Mills theory as
        \begin{equation}
            \scM_{\text{10d}\,\caN=1}\ \coloneqq\ \IM^{10|16}\times \scM^\text{ps}_{\text{10d}\,\caN=1}~,
        \end{equation}
        we find that the graded vector space underlying $\hat\frB$ is simply
        \begin{equation}\label{eq:doubled_space_SYM}
            \scC^\infty(\IM^{10|32}\times \scM^\text{ps}_{\text{10d}\,\caN=1}\times\scM^\text{ps}_{\text{10d}\,\caN=1})~,
        \end{equation}
        and we note that both the odd superspace coordinates $\theta$ as well as all the auxiliary coordinates $\lambda^A$, $\bar\lambda_A$, and $\rmd\bar\lambda_A$ get doubled. In this larger space, we now have to consider the kernel of $\hat\sfb_-=\sfb\otimes \sfid-\sfid\otimes \sfb$, 
        \begin{equation}
            \ker(\hat\sfb_-)\ =\ \left\{\,f\in\scC^\infty\left(\IM^{10|32}\times \scM^\text{ps}_{\text{10d}\,\caN=1}\times\scM^\text{ps}_{\text{10d}\,\caN=1}\right)~\middle|~(\sfb\otimes \sfid) f=(\sfid\otimes \sfb) f\,\right\}~,
        \end{equation}
        which underlies the restricted kinematic Lie algebra $\frKin^0(\hat \frB)$. This turns out to be a metric dg Lie algebra, and the resulting action principle reads as 
        \begin{equation}\label{eq:ps_double_copy_SYM}
            S\ \coloneqq\ \int\Omega_{\text{10d}\,\caN=1}\wedge_{\IM^{10|16}}\Omega_{\text{10d}\,\caN=1}\inner{\Psi}{(Q\otimes \sfid+\sfid\otimes Q)\Psi+\tfrac13[\Psi,\Psi]}_{\frKin^0(\hat \frB)}~,
        \end{equation}
        where $\Omega_{\text{10d}\,\caN=1}\wedge_{\IM^{10}}\Omega_{\text{10d}\,\caN=1}$ denotes the evident integral on the space~\eqref{eq:doubled_space_SYM} (where we have again removed the infinite volume factor from the additional integral over the second copy of $\IM^{10}$, cf.~the discussion in \cref{ssec:syngamy_pure_gauge}).
        
        We regard our cubic double-copied action~\eqref{eq:ps_double_copy_SYM} as a rather exciting new result in the pure spinor formulation of supergravity. In eleven dimensions, the currently available action contains quartic terms in the pure spinor field~\cite{Cederwall:2009ez,Cederwall:2010tn}, see also~\cite{Grassi:2023yfe} and reference therein for more recent work using integral forms. In ten dimension, a pure spinor formulation of the vertex operators of closed superstrings was given in~\cite{Grassi:2004ih}, cf.~also~\cite{Chandia:2019klv}. These are precisely the double copy without the restriction to $\ker(\hat\sfb_-)$ (which would amount to imposing the section condition), and hence the field content is initially too large. In~\cite{Grassi:2004ih}, a different solution to this problem has been proposed, but this does not allow for the direct link between world-sheet ghost number and target-space ghost number that we observe in our prescription; also, it would lead to a non-cubic action. Hence, to our knowledge,~\eqref{eq:ps_double_copy_SYM} presents the first cubic form of a pure spinor action for ten-dimensional supergravity. Further study of this action is certainly warranted, in particular regarding the link to the pure spinor formulations of open and closed string, but this has to be left to future work.
        
        \subsection{Pure spinor formulation of M2-brane models}\label{ssec:M2branes}
        
        \paragraph{Pure spinor space.}
        The Bagger--Lambert--Gustavsson (BLG) M2-brane model~\cite{Bagger:2007jr,Gustavsson:2007vu} can also be formulated as a Chern--Simons--matter theory on pure spinor spaces~\cite{Cederwall:2008xu}.
        
        Here, we start from the space
        \begin{equation}
            \hat\scM_{\text{3d}\,\caN=8}\ \coloneqq\ \IM^{3|16}\times (\IR^{2|1}\otimes\caS_{\text{10d}\,\text{MW}})~,
        \end{equation}
        where $\IM^{3|16}$ is the three-dimensional $\caN=8$ Minkowski superspace and $\caS_{\text{10d}\,\text{MW}}$ again the space of Majorana--Weyl spinors in ten-dimensional, but now with indices reflecting the branching $\sfSpin(1,9)\rightarrow \sfSpin(1,2)\times \sfSpin(7)$. Explicitly, $\IR^{2|1}\otimes\caS_{\text{10d}\,\text{MW}}$ is coordinatised by $(\lambda^{\alpha i},\bar \lambda_{\alpha i},\rmd \bar \lambda_{\alpha i})$ with $\alpha=0,\ldots 2$ and $i=1,\ldots,8$, transforming in the $\mathbf{2}\otimes\mathbf{8}$, $\mathbf{2}\otimes\bar{ \mathbf{8}}$, and $\mathbf{2}\otimes\bar{\mathbf{8}}$ of $\sfSpin(1,2)\times\sfSpin(7)$. Note that indices in the $\mathbf{2}$ are raised and lowered as usual with $\eps_{\alpha\beta}$ and its inverse. Also, the R-symmetry group is enlarged from $\sfSpin(7)$ to $\sfSpin(8)$, and we use indices $m,n=1,\ldots,8$ for the vector representation $\mathbf{8_v}$ of $\sfSpin(8)$.
        
        The pure spinor space $\scM_{\text{3d}\,\caN=8}$ is then the quadric in $\hat \scM_{\text{3d}\,\caN=8}$ with the following relations:
        \begin{equation}\label{eq:pure_constraint_3d}
            \lambda^{\alpha i}\gamma_{\alpha\beta}^\mu\lambda_i^\beta\ =\ \bar\lambda^{\alpha i}\gamma_{\alpha\beta}^\mu\bar\lambda^\beta_i\ =\ \bar\lambda^{\alpha i}\gamma_{\alpha\beta}^\mu\rmd\bar\lambda^\beta_i\ =\ 0~,
        \end{equation}
        where $\gamma^\mu_{\alpha\beta}$ are the generators of the Clifford algebra of $\sfSpin(1,2)$. 
        
        Together with the supersymmetric covariant derivatives $D_{i\alpha}$ which satisfy the relations
        \begin{equation}
            \{D_{i\alpha},D_{j\beta}\}\ =\ \gamma_{\alpha\beta}^\mu \delta_{ij}\parder{x^\mu}~,
        \end{equation}
        we have a natural vector field $Q$ on $\hat\scM_{\text{3d}\,\caN=8}$,
        \begin{equation}\label{eq:Q-3d}
            Q\ \coloneqq\ \lambda^{\alpha i}D_{\alpha i}+\rmd \bar \lambda_{\alpha i}\parder{\bar \lambda_{\alpha i}}~,
        \end{equation}
        which descends to a differential on functions on $\scM_{\text{3d}\,\caN=8}$.
        
        Again, there is a family of operators $\sfb$ satisfying\footnote{albeit the covariant form has not been constructed so far}
        \begin{equation}\label{eq:b-relations-2}
            \sfb^2\ =\ 0
            \eand
            Q\sfb+\sfb Q\ =\ \wave~,
        \end{equation}
        and we choose to work again with the evident operator arising in the $Y$-formalism,
        \begin{equation}\label{eq:def_b_3}
            \sfb\ \coloneqq\ -\frac{v_{\alpha i}\gamma^{\mu\,\alpha\beta}\delta^{ij}D_{\beta j}}{2\lambda^{\alpha i} v_{\alpha i}}\parder{x^\mu}~,
        \end{equation}
        where $v$ is a reference pure spinor $v$ with $v_{\alpha i} \gamma^{\mu\,\alpha\beta}\delta^{ij}v_{\beta j}=0$. A short computation verifies~\eqref{eq:b-relations-2}. 
        
        We have summarised the properties of the above objects in \cref{tab:coordinatesOperators2}.
        
        \begin{table}[ht]
            \vspace{15pt}
            \begin{center}
                \begin{tabular}{ccccc}
                    \toprule
                    & \multirow{2}{*}{$\sfSL(2,\IR)\times\sfSpin(8)$} & mass & Gra{\ss}mann & ghost
                    \\[-5pt]
                    & & dimension & degree & number
                    \\
                    \midrule
                    $x$ & $\mathbf{(3,8_v)}$ & $-1\phantom+$ & $0$ & $\phantom{+}0\phantom+$
                    \\
                    $\theta$ & $\mathbf{(2,8_s)}$ & $-\frac12\phantom+$ & $1$ & $\phantom+0\phantom+$
                    \\
                    $\lambda$ & $\mathbf{(2,8_s)}$ & $-\frac12\phantom+$ & $0$ & $\phantom+1\phantom+$
                    \\
                    $\bar\lambda$ & $\mathbf{(2,8_c)}$ & $\phantom{+}\frac12\phantom+$ & $0$ & $-1\phantom+$
                    \\
                    $\rmd\bar\lambda$ & $\mathbf{(2,8_c)}$ & $\phantom{+}\frac12\phantom+$ & $1$ & $\phantom{+}0\phantom+$
                    \\
                    \midrule
                    $D$ & $\mathbf{(2,8_s)}$ & $\phantom{+}\frac12\phantom+$ & $1$ & $\phantom{+}0\phantom+$
                    \\
                    $Q$ & $\mathbf{(1,1)}$ & $\phantom{+}0\phantom+$ & $1$ & $\phantom{+}1\phantom+$\\
                    $\sfb$ & $\mathbf{(1,1)}$ & $\phantom{+}2\phantom+$ & $1$ & $-1\phantom+$
                    \\
                    \bottomrule
                \end{tabular}
                \caption{Properties of three-dimensional coordinates and operators.}
                \label{tab:coordinatesOperators2}
            \end{center}
        \end{table}        
        
        \paragraph{Gauge algebra.}
        Recall that the BLG model has an underlying metric 3-Lie algebra in the sense of~\cite{Filippov:1985aa}. Such a 3-Lie algebra can be seen as a Lie algebra with an orthogonal representation~\cite{deMedeiros:2009hf}. In the case of the BLG model, the Lie algebra is $\frg=\frsu(2)\oplus\frsu(2)$ and the orthogonal representation is Euclidean $\IR^4$. Concretely, we can identify $\frg\cong V\wedge V$ with $V\coloneqq\IR^4$, and with respect to the standard basis $\sfe_k$, $k=1,\ldots,4$ on $\IR^4$, we have a ternary bracket
        \begin{equation}
            [\sfe_{k_1},\sfe_{k_2},\sfe_{k_3}]_V\ \coloneqq\ \eps_{k_1k_2k_3k_4}\sfe_{k_4}
        \end{equation}
        with $\eps_{k_1k_2k_3k_4}$ the Levi-Civita symbol,
        and the metric
        \begin{equation}
            \inner{\sfe_{k_1}}{\sfe_{k_2}}_{V}\ \coloneqq\ \delta_{k_1k_2} 
        \end{equation}
        with $\delta_{k_1k_2}$ the Kronecker symbol. These define a metric Lie algebra $\frg$ by the relations
        \begin{equation}
            \begin{aligned}
                (\sfe_{k_1}\wedge\sfe_{k_2})\acton\sfe_{k_3}\ &=\ [\sfe_{k_1},\sfe_{k_2},\sfe_{k_3}]_V~,
                \\
                \inner{\sfe_{k_1}\wedge\sfe_{k_2}}{\sfe_{k_3}\wedge\sfe_{k_4}}_\frg\ &=\ \inner{\sfe_{k_3}}{[\sfe_{k_1},\sfe_{k_2},\sfe_{k_4}]_V}_V~,
            \end{aligned}
        \end{equation}
        and we find $\frg\cong\frsu(2)\oplus\frsu(2)$ as a Lie algebra with an indefinite metric of signature $(+,+,+,-,-,-)$.
        
        \paragraph{Field content and action.}
        For the BLG model, the formalism presented in~\cite{Cederwall:2008xu} uses two fields. Firstly, there is a scalar superfield $\Psi$ on $\scM_{\text{3d}\,\caN=8}$ of mass dimension~$0$, Gra{\ss}mann degree~$1$, and ghost number $1$ taking values in the metric Lie algebra $\frg$, which encodes the gauge sector.
        
        The matter sector is a bit more subtle. There is a (trivial) $\sfSpin(8)$-bundle over $\scM_{\text{3d}\,\caN=8}$, and we can consider the associated vector bundle $E$ for the vector representation $\mathbf{8}_v$. From its sheaf of sections, we construct the quotient sheaf\footnote{Note that the sections can have singularities in $\IR^{2|1}\otimes\caS_{\text{10d}\,\text{MW}}$.}
        \begin{equation}
            \scE_{\scM_{\text{3d}\,\caN=8}}\ \coloneqq \ \Gamma(E)/\caI_E~,
        \end{equation}
        where $\caI_E$ is the ideal generated by $\lambda^{\alpha i}\gamma^m_{ij}\vartheta^j_\alpha$ where $\vartheta^j_\alpha$ is an arbitrary function of ghost degree~$-1$ and $\gamma^m_{ij}$ are the $\sfSpin(8)$-factor of the Clifford algebra generators for $\sfSpin(1,2)\times\sfSpin(8)$. The matter fields $\Phi^m$ are now elements of $\scE_{\scM_{\text{3d}\,\caN=8}}$ with values in $V$. Operationally, we can regard them as sections of $E$ (with values in $V$) subject to the identification
        \begin{equation}
            \Phi^m\ \sim\ \Phi^m+\lambda^{\alpha i}\gamma^m_{ij}\vartheta^j_\alpha~.
        \end{equation}
        
        We note that there is a natural pairing on $\scE_{\scM_{\text{3d}\,\caN=8}}$ given by
        \begin{equation}
            g_{mn}\Phi^m\Phi^n
            \ewith
            g_{mn}\ \coloneqq\ \lambda^{\alpha i}\gamma_{mn\,ij}\lambda^j_{\alpha}~.
        \end{equation}
        
        The pure spinor superspace $\scM_{\text{3d}\,\caN=8}$ comes with a natural dimensionless volume form $\Omega_{\text{3d}\,\caN=8}$~\cite{Cederwall:2008xu}, and we can formulate the action
        \begin{equation}\label{eq:BLGAction}
            S^{\text{3d}\,\caN=8}\ \coloneqq\ \int\Omega_{\text{3d}\,\caN=8}\Big\{\inner{\Psi}{Q\Psi+\tfrac13[\Psi,\Psi]}_\frg+g_{mn}\inner{\Phi^m}{Q\Phi^n+\Psi\Phi^n}_V\Big\}\,.
        \end{equation}
        
        \paragraph{$\BVbox$-algebra and -module structure.}
        Our remaining constructions can now follow fully analogous to the case of supersymmetric Yang--Mills theory, except for the fact that we are dealing with a $\BVbox$-algebra module. The $\BVbox$-algebra itself is given by 
        \begin{equation}
            \frB^\text{psM2}\ \coloneqq\ (\scC^\infty(\scM_{\text{3d}\,\caN=8}),Q,-\cdot-,\sfb)
        \end{equation}
        with $-\cdot-$ the pointwise product and the $Y$-formalism $\sfb$-operator~\eqref{eq:def_b_3}, which is evidently of second order. The relevant module $V^\text{psM2}$ is given by 
        \begin{equation}
            V^\text{psM2}\ \coloneqq\ (\scE_{\text{3d}\,\caN=8},Q,-\cdot-,\sfb)~,
        \end{equation}
        where the actions of $Q$ and $\sfb$ are the evident ones, induced by the operators~\eqref{eq:Q-3d} and~\eqref{eq:def_b_3} on $\scE_{\text{3d}\,\caN=8}$, respectively, and $-\cdot-$ is again the pointwise product. The fact that $V^\text{psM2}$ is a module over $\frB^\text{psM2}$ is self-evident. 
        
        The $\BVbox$-algebra and -module structure $(\frB^\text{psM2},V^\text{psM2})$ guarantees CK duality on the field theories currents~\cite{Borsten:2023reb}. Moreover, the same arguments as for supersymmetric Yang--Mills theories lift this CK duality to the tree-level amplitudes. Singularities in the integrand are either IR-type singularities, which can be regulated in an evident form, or they are of the form $\frac{1}{\lambda^{\alpha i}v_{\alpha i}}$, and then, because of our use of the $Y$-formalism $\sfb$-operator, there is a minimal subtraction scheme allowing us to extract finite CK-dual numerators for the tree-level amplitudes of the M2-brane model. 
        
        While the pure-spinor-based proof of CK duality of the tree-level amplitudes of supersymmetric Yang--Mills theory was an alternative proof, this proof for tree-level CK duality in BLG models is the first; only partial results were available in the literature previously, cf.~\cite{Huang:2012wr,Huang:2013kca,Sivaramakrishnan:2014bpa,Ben-Shahar:2021zww}. The relation of our notion of CK duality, the conventional one for gauge--matter theory, and the quartic CK duality of~\cite{Huang:2012wr,Huang:2013kca,Sivaramakrishnan:2014bpa} is explained in~\cite{Borsten:2023reb}. 
        
        \paragraph{Double copy.}
        The $\BVbox$-algebra and -module structure $(\frB^\text{psM2},V^\text{psM2})$ can now be straightforwardly double-copied, following our general formalism specialised to the evident cocommutative Hopf algebra $\frH_{\IM^{3}}$. The restricted tensor product leads again to a $\BVbox$-algebra and -module with
        \begin{equation}
            \hat\frB\ \coloneqq\ \frB^\text{psSYM}\otimes^{\frH_{\IM^{3}}}\frB^\text{psSYM}
            \eand
            \hat V\ \coloneqq\ V^\text{psSYM}\otimes V^\text{psSYM}~,
        \end{equation}
        and using the factorisations
        \begin{equation}
            \scM_{\text{3d}\,\caN=8}\ \coloneqq\ \IR^{3|16}\times \scM^\text{ps}_{\text{3d}\,\caN=8}
            \eand
            \scE_{\text{3d}\,\caN=8}\ \coloneqq\ \scE_\text{3d}\otimes\scE^\text{ps}_{\text{3d}\,\caN=8}~,
        \end{equation}
        we find that the graded vector spaces underlying $\hat \frB$ and $\hat V$ read as
        \begin{equation}
            \scC^\infty\left(\IR^{3|16}\times\scM^\text{ps}_{\text{3d}\,\caN=8}\times \scM^\text{ps}_{\text{3d}\,\caN=8}\right)
            \eand
            \scE_\text{3d}\otimes\scE^\text{ps}_{\text{3d}\,\caN=8}\otimes\scE^\text{ps}_{\text{3d}\,\caN=8}~.
        \end{equation}
        As expected both the odd superspace coordinates $\theta$ as well as all the pure spinor auxiliary coordinates $\lambda^A$, $\bar\lambda_A$, and $\rmd\bar\lambda_A$ get doubled. This larger space carries an action of the operator $\hat\sfb_-$, 
        \begin{equation}
            \ker(\hat\sfb_-)\ =\ \left\{\,f\in\scC^\infty\left(\IR^{3|16}\times\scM^\text{ps}_{\text{3d}\,\caN=8}\times \scM^\text{ps}_{\text{3d}\,\caN=8}\right)\oplus \scE_\text{3d}~\middle|~\,(\sfb\otimes \sfid) f=(\sfid\otimes \sfb)\,\right\},
        \end{equation}
        to which the kinematic Lie algebra of $\hat\frB$ can be truncated. The result is another cubic action of the form~\eqref{eq:ps_double_copy_SYM}, which we would expect to describe $\caN=16$ supergravity in three dimensions, cf.~\cite{Bargheer:2012gv,Huang:2012wr}. Studying the resulting action in detail is, however, beyond the scope of this paper, and we leave it to future work.
        
        \paragraph{Comment on the ABJ(M) models.}
        Both the Aharony--Bergman--Jafferis--Maldacena (ABJM) model~\cite{Aharony:2008ug} and the Aharony--Bergman--Jafferis (ABJ) model~\cite{Aharony:2008gk} can also be formulated in the pure spinor formalism of~\cite{Cederwall:2008xu}. The pure spinor superspace with $m,n=1,\ldots,4$ for these theories is obtained from the pure spinor space of the BLG model, $\scM_{\text{3d}\,\caN=8}$, by truncating the R-symmetry $\sfSpin(8)$ to $\sfSpin(6)$. It not difficult to adjust the action principal for the BLG model to this situation.
        
        There is, however, a technical complication compared to the BLG model: the representation space $V$ in the underlying $\BVbox$-module is complex, as explained in~\cite{Borsten:2023reb}, and therefore there is no suitable symplectic metric on the underlying vector space. While this is not a fundamental issue for discussing CK duality, it significantly complicates all constructions. We therefore refrain from giving the details here; the $\BVbox$-algebra and -module structure can be found in our paper~\cite{Borsten:2023reb}.
        
        \appendix
        \addappheadtotoc
        \appendixpage 
        
        \appendices
        
        \section{Restricted tensor product of modules over bialgebras }\label{app:restricted_tensor_product}
        
        Throughout this section, we use the Sweedler notation~\eqref{eq:Sweedler_notation}, and we fix a bialgebra $\frH$ over a field $\IK$ of arbitrary characteristic; see \cref{def:HopfAlgebra}. Furthermore, we view $\IK$ as the canonical $\frH$-module in which $\frH$ acts via the counit $\epsilon\,:\,\frH\rightarrow\IK$. 
        
        \begin{definition}
            Let $V$ and $W$ be $\frH$-modules. We call the subset
            \begin{equation}\label{eq:reducedTensorProduct}
                V\otimes^\frH W\ \coloneqq\ \bigcap_{\chi\in\frH}\ker\big((\chi\otimes\unit-\unit\otimes\chi)\acton\big)
            \end{equation}
            of $V\otimes W$ the \uline{restricted tensor product} of $V$ and $W$.
        \end{definition}
        
        The restricted tensor product forms an $\frH$-module under the following condition.
        \begin{definition}\label{def:restrictedly_tensorable}
            A bialgebra $\frH$ is \uline{restrictedly tensorable} if the left ideal of the unital associative algebra $\frH\otimes\frH$ generated by the subset
            \begin{equation}
                \Gamma\ \coloneqq\ \{\chi\otimes\unit-\unit\otimes\chi \mid \chi\in\frH\}
            \end{equation}
            is also a two-sided ideal.
        \end{definition}

        \begin{lemma}
            Let $\frH$ be a restrictedly tensorable bialgebra, and let $V$ and $W$ be $\frH$-modules. The restricted tensor product $V\otimes^\frH W$ is an $\frH$-submodule of $V\otimes W$.
        \end{lemma}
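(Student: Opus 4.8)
The plan is to recast the definition of $V\otimes^\frH W$ in purely ideal-theoretic terms and then read off closure from the two-sidedness hypothesis. The starting observation is that $V\otimes W$ carries two relevant actions. On the one hand there is the \emph{coproduct} action under scrutiny, $\chi\acton(v\otimes w)=(\chi^{(1)}\acton v)\otimes(\chi^{(2)}\acton w)$; on the other hand the associative unital algebra $\frH\otimes\frH$ acts \emph{externally} by $(\chi_1\otimes\chi_2)\acton(v\otimes w)=(\chi_1\acton v)\otimes(\chi_2\acton w)$, making $V\otimes W$ an honest $\frH\otimes\frH$-module. The elements $\chi\otimes\unit-\unit\otimes\chi$ generating $\Gamma$ are elements of $\frH\otimes\frH$ acting externally, so by definition $V\otimes^\frH W$ is precisely the set of $\xi\in V\otimes W$ annihilated externally by every element of $\Gamma$.

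First I would upgrade ``annihilated by $\Gamma$'' to ``annihilated by $J$'', where $J$ is the left ideal of $\frH\otimes\frH$ generated by $\Gamma$. This is immediate: if $\gamma\acton\xi=0$ for all $\gamma\in\Gamma$, then for any $a\in\frH\otimes\frH$ associativity of the external action gives $(a\gamma)\acton\xi=a\acton(\gamma\acton\xi)=0$; the converse is trivial since $\Gamma\subseteq J$. Hence
\[
V\otimes^\frH W=\{\xi\in V\otimes W \mid j\acton\xi=0 \text{ for all } j\in J\},
\]
so the restricted tensor product is exactly the external annihilator of $J$.

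The key step is to note that the coproduct action of $\chi\in\frH$ coincides with the external action of the element $\Delta(\chi)=\chi^{(1)}\otimes\chi^{(2)}\in\frH\otimes\frH$, i.e.\ $\chi\acton\xi=\Delta(\chi)\acton\xi$. Taking $\xi\in V\otimes^\frH W$ and $\chi\in\frH$, I would then compute, for arbitrary $j\in J$,
\[
j\acton(\chi\acton\xi)=j\acton(\Delta(\chi)\acton\xi)=(j\,\Delta(\chi))\acton\xi,
\]
again using associativity of the $\frH\otimes\frH$-action. This is where restricted tensorability enters decisively: because $J$ is a \emph{two-sided} ideal, $j\,\Delta(\chi)\in J$, and therefore $(j\,\Delta(\chi))\acton\xi=0$. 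As $j\in J$ was arbitrary, $\chi\acton\xi$ again annihilates $J$, that is $\chi\acton\xi\in V\otimes^\frH W$. Closure under addition and scalar multiplication is automatic, since $V\otimes^\frH W$ is an intersection of kernels of $\IK$-linear maps, so this exhibits it as an $\frH$-submodule.

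The main obstacle is conceptual rather than computational: one must play the two module structures off against each other, characterising $V\otimes^\frH W$ through the external $\frH\otimes\frH$-structure while phrasing the desired closure through the internal coproduct action. Once the external action is recognised as a genuine module over the associative algebra $\frH\otimes\frH$, the entire proof collapses to the single containment $j\,\Delta(\chi)\in J$, for which two-sidedness is exactly what is required --- which is precisely what restricted tensorability was designed to supply. As a final sanity check I would verify that neither cocommutativity nor the antipode is invoked anywhere, confirming that the result holds at the full generality of a bialgebra, as stated.
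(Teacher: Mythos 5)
Your proof is correct and rests on exactly the same mechanism as the paper's: identifying $V\otimes^\frH W$ with the annihilator of the left ideal $J$ generated by $\Gamma$ under the external $\frH\otimes\frH$-action, and using two-sidedness to conclude that $\Gamma\,\Delta(\chi)\subseteq J$ (the paper writes the product $(\chi_1\otimes\unit-\unit\otimes\chi_1)\Delta(\chi_2)$ explicitly as a left combination of generators of $\Gamma$, while you phrase it as the containment $j\,\Delta(\chi)\in J$, but these are the same argument). No gaps.
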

        
        \begin{proof}
            It suffices to see that, for arbitrary $u\in V\otimes^\frH W$ and $\chi_1,\chi_2\in\frH$, we have
            \begin{equation}
                (\chi_1\otimes\unit-\unit\otimes\chi_1)\Delta(\chi_2)u\ =\ 0~.
            \end{equation}
            Restricted tensorability implies that $(\chi_1\otimes\unit-\unit\otimes\chi_1)\Delta(\chi_2)$ is an element in the two-sided ideal left- and right-generated by $\Gamma$, and we can write this element as
            \begin{equation}
                (\chi_1\otimes\unit-\unit\otimes\chi_1)\Delta(\chi_2)
                \ =\ \sum_{i=1}^NX_i(\chi_{1,i}\otimes\unit-\unit\otimes\chi_{1,i})
            \end{equation}
            for some finite $N$ and $X^{(1)},\dotsc,X^{(N)}\in\frH\otimes\frH$ and $\chi_{1,1},\dotsc,\chi_{1,N}\in\frH$. It is now clear that the latter element of $\frH\otimes \frH$ annihilates all $u\in V\otimes^\frH W$.
        \end{proof}
        Examples of restrictedly tensorable bialgebras important to the discussion of CK duality and the double copy are primitively generated bialgebras. These are all bialgebras generated by a set of differential operators labelling momenta, together with their coproducts, a typical situation in a physical theory.
        
        Recall that an element $\chi$ in a bialgebra is \uline{primitive} if $\Delta(\chi)=\chi\otimes\unit+\unit\otimes\chi$, and a bialgebra is \uline{primitively generated} if it is generated as a unital associative algebra by its set of primitive elements; over a field of characteristic zero, it is a standard fact that a primitively generated Hopf algebra is isomorphic to the universal enveloping algebra of the Lie algebra of its primitive elements.
        \begin{lemma}
            Every primitively generated bialgebra $\frH$ is restrictedly tensorable.
        \end{lemma}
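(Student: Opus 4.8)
The plan is to prove the stronger statement that the left ideal $L\coloneqq(\frH\otimes\frH)\Gamma$ generated by $\Gamma$ is closed under right multiplication, which is exactly what two-sidedness means. The first step is a reduction to generators. Since closure under right multiplication is stable under products — if $Lg_1\subseteq L$ and $Lg_2\subseteq L$ then $Lg_1g_2=(Lg_1)g_2\subseteq Lg_2\subseteq L$ — and stable under $\IK$-linear combinations and under the unit, it suffices to verify $Lg\subseteq L$ for $g$ ranging over a set of algebra generators of $\frH\otimes\frH$. This is precisely where the hypothesis that $\frH$ is primitively generated enters decisively: $\frH\otimes\frH$ is then generated as a unital associative algebra by the elements $p\otimes\unit$ and $\unit\otimes p$ with $p$ primitive. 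Moreover, because $L$ is already a left ideal, $Lg=(\frH\otimes\frH)\Gamma g\subseteq L$ as soon as $\Gamma g\subseteq L$; so the whole statement reduces to proving $\gamma\cdot(p\otimes\unit)\in L$ and $\gamma\cdot(\unit\otimes p)\in L$ for every generator $\gamma=\chi\otimes\unit-\unit\otimes\chi$ of $\Gamma$ and every primitive $p$.

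The second step is the core computation. I would expand, for instance, $\gamma\cdot(p\otimes\unit)=\chi p\otimes\unit-p\otimes\chi$ and reduce it modulo $L$ using the defining relations. Writing $\gamma_\xi\coloneqq\xi\otimes\unit-\unit\otimes\xi$, the memberships $(\chi\otimes\unit)\gamma_p,\ (\unit\otimes p)\gamma_\chi,\ \gamma_{\chi p},\ \gamma_{p\chi}\in L$ give the congruences $\chi p\otimes\unit\equiv\unit\otimes\chi p$ and $p\otimes\chi\equiv p\chi\otimes\unit\equiv\unit\otimes p\chi\pmod L$, and collecting terms yields $\gamma\cdot(p\otimes\unit)\equiv\unit\otimes[\chi,p]\pmod L$; the symmetric computation gives $\gamma\cdot(\unit\otimes p)\equiv-\unit\otimes[\chi,p]\pmod L$. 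Thus everything hinges on the single claim $\unit\otimes[\chi,p]\in L$.

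The key observation — and what I expect to be the main obstacle to spot, even though it is short once found — is that this last claim follows by computing $\chi p\otimes\unit$ modulo $L$ in two different ways. On one hand $\chi p\otimes\unit\equiv\unit\otimes\chi p$ directly from $\gamma_{\chi p}\in L$; on the other hand $\chi p\otimes\unit\equiv\chi\otimes p\equiv\unit\otimes p\chi$ using $(\chi\otimes\unit)\gamma_p\in L$ and $(\unit\otimes p)\gamma_\chi\in L$. Subtracting the two expressions gives $\unit\otimes(\chi p-p\chi)=\unit\otimes[\chi,p]\in L$. Notably this argument uses neither primitivity of $p$ nor any feature of the characteristic, so it is valid for $\frH$ over $\IK$ of arbitrary characteristic, consistent with the standing assumptions; primitivity is needed only in the first step to secure the generating set. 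Assembling the three steps shows that $L$ is a two-sided ideal, i.e.\ that $\frH$ is restrictedly tensorable.
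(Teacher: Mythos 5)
Your proof is correct, and it takes a genuinely different route from the paper's. Writing $L$ for the left ideal generated by $\Gamma$ and $\gamma_\psi\coloneqq\psi\otimes\unit-\unit\otimes\psi$, the paper reduces the claim to $\gamma_\chi\Delta(\phi)\in L$, trades this for the commutator condition $[\gamma_\chi,\Delta(\phi)]\in L$, and inducts on the number of primitive factors of $\phi$ via the Leibniz rule; both the coproduct and primitivity enter essentially through the base case $[\gamma_\chi,\Delta(p)]=\gamma_{[\chi,p]}\in\Gamma$. You never touch the coproduct: you reduce to right multiplication by the algebra generators $p\otimes\unit$ and $\unit\otimes p$ of $\frH\otimes\frH$, and everything rests on the exact identity
\[
\unit\otimes[\chi,p]\ =\ -\gamma_{\chi p}+(\chi\otimes\unit)\gamma_p+(\unit\otimes p)\gamma_\chi\ \in\ L~,
\]
which holds for arbitrary $\chi,p\in\frH$ (I have checked it, and the rest of your reduction is sound). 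Each approach buys something. Yours verifies the definition of restricted tensorability literally, i.e.\ closure of $L$ under right multiplication by all of $\frH\otimes\frH$, whereas the paper's argument as written only checks right multiplication by elements of $\Delta(\frH)$ --- which is exactly what the subsequent submodule lemma consumes, but is a priori weaker than the stated definition. Moreover, as you observe, your computation nowhere uses primitivity of $p$; since $\{a\otimes\unit,\unit\otimes a\mid a\in\frH\}$ generates $\frH\otimes\frH$ for any $\frH$, your argument in fact proves that every bialgebra --- indeed every unital associative algebra, since the definition only involves the multiplication --- is restrictedly tensorable, a strictly stronger statement; the paper's proof stays closer to the Hopf-algebraic structure that motivates the definition but does not yield this. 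One cosmetic slip: the congruence $p\otimes\chi\equiv p\chi\otimes\unit\pmod L$ follows from $(p\otimes\unit)\gamma_\chi\in L$ rather than from the listed $(\unit\otimes p)\gamma_\chi\in L$ (the latter gives $\chi\otimes p\equiv\unit\otimes p\chi$, which you use in the key observation); this does not affect the argument.
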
\begin{proof}
            It suffices to show that, for every $\chi,\phi\in\frH$, we have
            \begin{equation}
                (\chi\otimes\unit-\unit\otimes\chi)\Delta(\phi)\ \in\ \frI~,
            \end{equation}
            where $\frI$ is the left ideal generated by the subset $\{\psi\otimes\unit-\unit\otimes\psi|\psi\in\frH\}$. This is evidently equivalent to showing that
            \begin{equation}\label{eq:induction_hypothesis}
                [\chi\otimes\unit-\unit\otimes\chi,\Delta(\phi)]\ \in\ \frI~,
            \end{equation}
            since $\Delta(\phi)(\chi\otimes\unit-\unit\otimes\chi)\in\frI$.
            
            By the assumption of primitive-generatedness, we may assume that $\phi$ is a linear combination of products of primitive elements.
            We proceed by induction. First, the base case: suppose that $\phi$ is primitive. Then it is immediate that
            \begin{equation}
                [\chi\otimes\unit-\unit\otimes\chi,\Delta(\phi)]
                \ =\ [\chi,\phi]\otimes\unit-\unit\otimes[\chi,\phi]\ \in\ \frI~.
            \end{equation}
            Next, suppose that we have shown~\eqref{eq:induction_hypothesis} in the case where $\phi$ is a linear combination of products of at most $n-1$ primitive elements. Now, suppose that $\phi=\phi_1\phi_2$ with $\phi_1$ primitive and $\phi_2$ a linear combination of at most $n-1$ primitive elements. Let
            \begin{equation}
                [\chi\otimes\unit-\unit\otimes\chi,\Delta(\phi_1)]
                \ =\ 
                \chi'\otimes\unit-\unit\otimes\chi'~.
			\end{equation}
            Then
            \begin{equation}
            \begin{aligned}
                &[\chi\otimes\unit-\unit\otimes\chi,\Delta(\phi)]\\
                &\kern.5cm\ =\
                [\chi\otimes\unit-\unit\otimes\chi,\Delta(\phi_1)\Delta(\phi_2)]\\
                &\kern.5cm\ =\
                [\chi\otimes\unit-\unit\otimes\chi,\Delta(\phi_1)]\Delta(\phi_2)+\Delta(\phi_1)[\chi\otimes\unit-\unit\otimes\chi,\Delta(\phi_2)]\\
                &\kern.5cm\ =\ (\chi'\otimes\unit-\unit\otimes\chi')\Delta(\phi_2)+\Delta(\phi_1)[\chi\otimes\unit-\unit\otimes\chi,\Delta(\phi_2)]\\
                &\kern.5cm\ =\ [\chi'\otimes\unit-\unit\otimes\chi',\Delta(\phi_2)]+\Delta(\phi_2)(\chi'\otimes\unit-\unit\otimes\chi')+\Delta(\phi_1)[\chi\otimes\unit-\unit\otimes\chi,\Delta(\phi_2)]~,
            \end{aligned}
            \end{equation}
            and now all three terms manifestly belong to the left ideal $\frI$.
        \end{proof}        
        \noindent
        The restricted tensor product is not, in general, symmetric nor associative up to isomorphism. Our construction generalises the familiar concept of the \uline{module of invariants}.
        
        \begin{proposition}
            For any $\frH$-module $V$, the restricted tensor product $V\otimes^\frH\IK$ is canonically isomorphic to $V^\frH\coloneqq\bigcap_{\chi\in\frH}\ker(\chi-\epsilon(\chi))$, which is called the \uline{module of invariants}.\footnote{Compare this to the well-known result that the module of coinvariants $V_\frH$ is given by $V_\frH\cong V\otimes_\frH\IK$.}
        \end{proposition}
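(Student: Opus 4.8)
The plan is to show that the canonical isomorphism of $\IK$-modules $\iota\colon V\otimes_\IK\IK\xrightarrow{\cong}V$, $v\otimes c\mapsto cv$, restricts to an isomorphism between $V\otimes^\frH\IK$ and $V^\frH$. The entire content of the statement is the observation that, because $\frH$ acts on $\IK$ through the counit, the operator defining the restricted tensor product collapses to the operator cutting out the invariants.

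First I would fix $\chi\in\frH$ and compute how the operator $(\chi\otimes\unit-\unit\otimes\chi)\acton$ on $V\otimes\IK$ looks under $\iota$. Since $\unit\acton c=\epsilon(\unit)c=c$ and $\chi\acton c=\epsilon(\chi)c$ on $\IK$, one finds on a simple tensor
\begin{equation}
    (\chi\otimes\unit-\unit\otimes\chi)\acton(v\otimes c)\ =\ (\chi\acton v)\otimes c-\epsilon(\chi)\,(v\otimes c)~.
\end{equation}
Applying $\iota$ and using its $\IK$-linearity gives
\begin{equation}
    \iota\big((\chi\otimes\unit-\unit\otimes\chi)\acton(v\otimes c)\big)\ =\ c\,(\chi\acton v)-c\,\epsilon(\chi)\,v\ =\ (\chi-\epsilon(\chi))\acton\iota(v\otimes c)~,
\end{equation}
so that $\iota$ intertwines $(\chi\otimes\unit-\unit\otimes\chi)\acton$ with $(\chi-\epsilon(\chi))\acton$.

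Consequently $\iota$ carries $\ker\big((\chi\otimes\unit-\unit\otimes\chi)\acton\big)$ isomorphically onto $\ker\big((\chi-\epsilon(\chi))\acton\big)$ for each $\chi\in\frH$. Taking the intersection over all $\chi\in\frH$ and using that $\iota$ is a bijection, I would conclude that $\iota$ restricts to an isomorphism $V\otimes^\frH\IK\cong V^\frH$, which is the desired statement.

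There is no genuine obstacle here; the only point deserving care — and the reason I would phrase the result as a \emph{canonical} isomorphism rather than a mere bijection — is to verify that $\iota$ is itself a morphism of $\frH$-modules, so that whenever $\frH$ is restrictedly tensorable (and $V\otimes^\frH\IK$ is therefore an $\frH$-submodule of $V\otimes\IK$) the identification is one of $\frH$-modules. This compatibility is precisely the statement that the $\frH$-action on $V\otimes\IK$ built from the coproduct $\Delta$ descends, via counitality $(\id_\frH\otimes\,\epsilon)\Delta=\id_\frH=(\epsilon\otimes\id_\frH)\Delta$, to the original action on $V$; I would check this on a simple tensor $v\otimes c$ in Sweedler notation, where $\chi\acton(v\otimes c)=(\chi^{(1)}\acton v)\otimes\epsilon(\chi^{(2)})c$ maps under $\iota$ to $c\,(\epsilon(\chi^{(2)})\chi^{(1)})\acton v=c\,(\chi\acton v)=\chi\acton\iota(v\otimes c)$.
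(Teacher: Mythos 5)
Your proof is correct and is essentially the paper's own argument: the paper simply says the result follows by unwinding the definition of the restricted tensor product under $V\otimes\IK\cong V$, which is exactly the computation you carry out explicitly. The added verification that the canonical isomorphism intertwines the $\frH$-actions via counitality is a reasonable (if optional) refinement of the same approach.
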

        
        \begin{proof}
            It is simply a matter of unwinding the definition~\eqref{eq:reducedTensorProduct} to see that $V\otimes^\frH\IK\subseteq V\otimes\IK\cong V$ is given by $V^\frH$.
        \end{proof}
        
        \noindent 
        Thus, the restricted tensor product $\otimes^\frH$ is, in some sense, the dual of the tensor product $\otimes_\frH$ over the Hopf algebra $\frH$: $V\otimes^\frH W$ is a submodule of $V\otimes W$, whereas $V\otimes_\frH W$ is a quotient of $V\otimes W$.
        
        \begin{proposition}
            Suppose that $V$ and $W$ are $\frH$-modules equipped with $\frH$-linear maps $f\colon V^{\otimes n}\rightarrow V$ and $g\colon W^{\otimes n}\rightarrow W$. Then, the $\frH$-linear map $f\otimes g\colon(V\otimes W)^{\otimes n}\rightarrow V\otimes W$ restricts to an $\frH$-linear map $f\otimes^\frH g\colon(V\otimes^\frH W)^{\otimes n}\rightarrow V\otimes^\frH W$.
        \end{proposition}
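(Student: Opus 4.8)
The plan is to prove the substantive content of the statement, namely that $f\otimes g$ sends the subset $(V\otimes^\frH W)^{\otimes n}$ into $V\otimes^\frH W$; once this inclusion is known, $\frH$-linearity of the restricted map is automatic, since $V\otimes^\frH W$ is an $\frH$-submodule of $V\otimes W$ (by the submodule lemma above, using restricted tensorability), so that the restriction of the $\frH$-linear map $f\otimes g$ to the submodule $(V\otimes^\frH W)^{\otimes n}$ landing inside $V\otimes^\frH W$ is again $\frH$-linear. Notably, the inclusion itself will require nothing beyond the $\frH$-linearity of $f$ and $g$ and the defining property of the restricted tensor product.

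First I would fix $u_1,\dotsc,u_n\in V\otimes^\frH W\subseteq V\otimes W$ and write each $u_i=v_i\otimes w_i$ in sumless notation, so that the defining condition~\eqref{eq:reducedTensorProduct} reads $\psi\acton v_i\otimes w_i=v_i\otimes\psi\acton w_i$ for every $\psi\in\frH$ and every $i$. Setting $\Xi\coloneqq(f\otimes g)(u_1\otimes\dotsb\otimes u_n)=f(v_1,\dotsc,v_n)\otimes g(w_1,\dotsc,w_n)$, the goal is to show $(\chi\otimes\unit)\acton\Xi=(\unit\otimes\chi)\acton\Xi$ for all $\chi\in\frH$. Using the iterated coproduct $\Delta^{(n-1)}(\chi)=\chi^{(1)}\otimes\dotsb\otimes\chi^{(n)}$ together with the $\frH$-linearity of $f$ and of $g$, I would rewrite the two sides as images under $f\otimes g$:
\[
(\chi\otimes\unit)\acton\Xi=(f\otimes g)(\Theta)\qquad\text{and}\qquad(\unit\otimes\chi)\acton\Xi=(f\otimes g)(\Theta')~,
\]
where $\Theta$ is obtained from $u_1\otimes\dotsb\otimes u_n$ by acting with $\chi^{(i)}$ on the $V$-leg of the $i$-th factor, and $\Theta'$ by acting with $\chi^{(i)}$ on the $W$-leg of the $i$-th factor. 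Thus the whole statement reduces to the single identity $\Theta=\Theta'$ in $(V\otimes W)^{\otimes n}$.

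The heart of the argument is then this identity, which I would establish by a telescoping over the $n$ tensor factors: interpolating between $\Theta$ and $\Theta'$ by transferring $\chi^{(k)}$ from the $V$-slot to the $W$-slot of the $k$-th factor one value of $k$ at a time, each consecutive difference has the shape $\bigl(\bigotimes_{i<k}\cdots\bigr)\otimes\bigl[(\chi^{(k)}\otimes\unit-\unit\otimes\chi^{(k)})\acton u_k\bigr]\otimes\bigl(\bigotimes_{i>k}\cdots\bigr)$, and the central bracket vanishes precisely because $u_k\in V\otimes^\frH W$. Telescoping yields $\Theta=\Theta'$, hence $(\chi\otimes\unit)\acton\Xi=(\unit\otimes\chi)\acton\Xi$ and therefore $\Xi\in V\otimes^\frH W$, as desired.

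I expect the only delicate point to be the bookkeeping in the telescoping step: one must justify applying the balance relation for $u_k$ inside the single coproduct sum, factor by factor, for each value of the summed Sweedler component $\chi^{(k)}$. This is legitimate because the map $\psi\mapsto(\psi\otimes\unit-\unit\otimes\psi)\acton u_k$ is identically zero on all of $\frH$, so it annihilates $\chi^{(k)}$ irrespective of how that leg is correlated with the others; writing $\Delta^{(n-1)}(\chi)$ as an honest finite sum makes this transparent. I would also remark that in the graded setting the reshuffling of legs implicit in $f\otimes g$ produces Koszul signs, but these are identical on the two sides of $\Theta=\Theta'$ and so play no role in the conclusion.
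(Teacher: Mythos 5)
Your proposal is correct and follows essentially the same route as the paper's proof: use $\frH$-linearity of $f$ and $g$ to rewrite the action of $\chi$ on $f(\dotsc)\otimes g(\dotsc)$ in terms of the Sweedler components $\chi^{(i)}$ acting on the individual factors, then use the defining balance relation of $V\otimes^\frH W$ to transfer each $\chi^{(i)}$ from the $V$-leg to the $W$-leg (the paper spells this out for $n=2$ and performs the transfer in a single step). Your explicit telescoping, together with the observation that $\psi\mapsto(\psi\otimes\unit-\unit\otimes\psi)\acton u_k$ vanishes identically on $\frH$ and hence termwise on the iterated coproduct, is just a slightly more careful rendering of that same step for general $n$ and for $u_k$ not necessarily an elementary tensor.
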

        
        \begin{proof}
            For clarity of exposition, we spell out the proof only for $n=2$; the other cases generalise straightforwardly. Given $u^{(1)}_{1,2}\otimes u^{(2)}_{1,2}\in V\otimes W$, then
            \begin{equation}\label{eq:fTensorg}
                (f\otimes g)\big(u^{(1)}_1\otimes u^{(2)}_1,u^{(1)}_2\otimes u^{(2)}_2\big)\ =\ f\big(u^{(1)}_1,u^{(1)}_2\big)\otimes g\big(u^{(2)}_1,u^{(2)}_2\big)\,.
            \end{equation}
            Suppose now that $u^{(1)}_i\otimes u^{(2)}_i\in V\otimes^\frH W\subseteq V\otimes W$ for $i=1,2$ and let $\chi\in\frH$. Then, 
            \begin{equation}
                \begin{aligned}
                    \big(\chi f\big(u^{(1)}_1,u^{(1)}_2\big)\big)\otimes g\big(u^{(2)}_1,u^{(2)}_2\big)\ &=\ f\big(\chi^{(1)}u^{(1)}_1,\chi^{(2)}u^{(1)}_2\big)\otimes g\big(u^{(2)}_1,u^{(2)}_2\big)
                    \\
                    &=\ (f\otimes g)\big(\chi^{(1)}u^{(1)}_1\otimes u^{(2)}_1,\chi^{(2)}u^{(1)}_2\otimes u^{(2)}_2\big)
                    \\
                    &=\ (f\otimes g)\big(u^{(1)}_1\otimes\chi^{(1)}u^{(2)}_1,u^{(1)}_2\otimes\chi^{(2)}u^{(2)}_2\big)
                    \\
                    &=\ f\big(u^{(1)}_1,u^{(1)}_2\big)\otimes g\big(\chi^{(1)}u^{(2)}_1,\chi^{(2)}u^{(2)}_2\big)
                    \\
                    &=\ f\big(u^{(1)}_1,u^{(1)}_2\big)\otimes\big(\chi g\big(u^{(2)}_1,u^{(2)}_2\big)\big)\,,
                \end{aligned}
            \end{equation}
            where in the second and fourth steps we have used~\eqref{eq:fTensorg}, and in third step the assumption $\chi^{(i)}u^{(1)}_i\otimes u^{(2)}_i=u^{(1)}_i\otimes\chi^{(i)}u^{(2)}_i$ for $i=1,2$. Hence, $f\otimes g\colon(V\otimes W)^2\rightarrow V\otimes W$ in~\eqref{eq:fTensorg} restricts to a map $f\otimes^\frH g\colon(V\otimes^\frH W)^2\rightarrow V\otimes^\frH W$.
        \end{proof}
        
        \noindent 
        This proposition now implies that given two $\frH$-algebras, that is, $\frH$-modules $V$ equipped with $\frH$-linear $n$-ary algebraic operations $V^{\otimes n}\rightarrow V$, their restricted tensor naturally inherits a corresponding algebra structure.
        
        \section{Analytical settings via convolutions}\label{app:compactification-setting}

        As briefly remarked in the introduction, in the case where the Hopf algebra is commutative, to construct the double copy, instead of working with the restricted tensor product $\otimes^\frH$ of \cref{app:restricted_tensor_product}, we can instead work with a tensor product $\otimes_\frH$ over the commutative Hopf algebra, which corresponds to the convolutional double copy of~\cite{Anastasiou:2014qba,Anastasiou:2018rdx, Monteiro:2021ztt,Luna:2022dxo}. This approach runs into analytical difficulties because plane wave states cannot be convolved (or, equivalently, delta functions in momentum space cannot be squared). One can circumvent this either by compactifying space--time as in \cref{ssec:compactification-setting} or by complicating the notion of Hopf algebras as in \cref{ssec:generalised-hopf-setting}. 
        
        \subsection{Analytical setting via compactification}\label{ssec:compactification-setting}

        In this section, we provide a proof of the statement that compactifying space-time provides an analytical setting using the tensor product over the Hopf algebra. In the following, the metric signature is irrelevant. We compactify $\IM^d$ to $\IM^d/\Lambda\IZ^d$; without loss of generality, we may work with units where $\Lambda=1$.
        
        Let $\scS$ be the subspace of $\scC^\infty(\IM^d/\IZ^d,\IC)$ consisting of finite linear combinations of plane waves, i.e.~smooth functions whose Fourier series' supports are finite sets. This is dense inside $L^2(\IM^d/\IZ^d,\IC)$ in the $L^2$-norm topology as well as inside $\scC^\infty(\IM^d/\IZ^d,\IC)$ in the Fr{\'e}chet topology, since Fourier series of smooth functions converge pointwise and hence, uniformly.
        
        \begin{proposition}\label{thm:compactification-setting-justification}
            As modules over $\frH_{\IM^d}$ of differential operators with constant coefficients discussed in \cref{ex:H-Box-Minkowski}, we have
            \begin{subequations}
                \begin{equation}
                    \scS\otimes_{\IC[\partial_\mu]}\scS\ \cong\ \scS
                \end{equation}
                by means of the convolution
                \begin{equation}\label{eq:convolution}
                    f\ =\ f^{(1)}\otimes f^{(2)}\ \mapsto\ f^{(1)}\star f^{(2)}
                \end{equation}
                for all $f\in\scS\otimes_{\IC[\partial_\mu]}\scS$.
            \end{subequations}
        \end{proposition}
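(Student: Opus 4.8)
The plan is to exploit that $\scS$ consists of simultaneous eigenvectors of the commuting operators $\partial_\mu$, which diagonalises every object in sight. Writing the plane-wave basis as $e_k(x)=\rme^{2\pi\mathrm{i}\,k\cdot x}$ with $k$ ranging over the dual lattice $\IZ^d$ (taking $\Lambda=1$), one has $\partial_\mu e_k=2\pi\mathrm{i}\,k_\mu e_k$, so as a module over $\IC[\partial_\mu]$ we may decompose $\scS=\bigoplus_{k\in\IZ^d}\IC e_k$, where $\IC e_k$ is the one-dimensional character module on which $\partial_\mu$ acts by the scalar $2\pi\mathrm{i}\,k_\mu$. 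Since $\IC[\partial_\mu]$ is commutative, the balanced tensor product $\scS\otimes_{\IC[\partial_\mu]}\scS$ carries the standard module structure of a tensor product over a commutative ring, namely the single-sided action $\chi\acton(v\otimes w)=(\chi v)\otimes w=v\otimes(\chi w)$, the two expressions agreeing by the defining relation of $\otimes_{\IC[\partial_\mu]}$.

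First I would compute the underlying vector space. The relations defining $\otimes_{\IC[\partial_\mu]}$ are homogeneous with respect to the bigrading of $\scS\otimes\scS$ by pairs $(k,k')$, because each generator $\chi e_k\otimes e_{k'}-e_k\otimes\chi e_{k'}$ is a scalar multiple of $e_k\otimes e_{k'}$. Testing with $\chi=\partial_\mu$ gives
\begin{equation}
\partial_\mu e_k\otimes e_{k'}-e_k\otimes\partial_\mu e_{k'}\ =\ 2\pi\mathrm{i}\,(k_\mu-k'_\mu)\,e_k\otimes e_{k'}~,
\end{equation}
which forces $e_k\otimes e_{k'}=0$ in the quotient whenever $k\neq k'$, while imposing no condition when $k=k'$. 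Since every relation lies in an off-diagonal graded piece, no diagonal element is ever killed, and I conclude $\scS\otimes_{\IC[\partial_\mu]}\scS=\bigoplus_{k\in\IZ^d}\IC\,(e_k\otimes e_k)$, with basis the diagonal elements $e_k\otimes e_k$.

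Next I would verify that the convolution descends and is $\frH$-linear, then evaluate it on this basis. The key identity is that, for functions on the torus, $\partial_\mu(f\star g)=(\partial_\mu f)\star g=f\star(\partial_\mu g)$: the second equality shows that $\star\colon\scS\otimes\scS\to\scS$ annihilates the defining relations and hence factors through a linear map $\scS\otimes_{\IC[\partial_\mu]}\scS\to\scS$, and the full chain shows that this map intertwines the single-sided action above with the action on $\scS$. A direct computation of the torus convolution at unit volume gives $e_k\star e_{k'}=\delta_{k,k'}\,e_k$, so the induced map carries the basis $\{e_k\otimes e_k\}_{k\in\IZ^d}$ bijectively onto the basis $\{e_k\}_{k\in\IZ^d}$ of $\scS$, and is therefore an isomorphism of $\IC[\partial_\mu]$-modules, as claimed.

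The one genuinely delicate point is conceptual rather than computational: one must equip $\scS\otimes_{\IC[\partial_\mu]}\scS$ with the correct $\frH$-module structure. For the balanced tensor product the relevant action is the single-sided one — well-defined precisely because $\IC[\partial_\mu]$ is commutative (equivalently, primitively generated) — and it is this action, not the coproduct action used for the restricted tensor product $\otimes^\frH$ in \cref{app:restricted_tensor_product}, that convolution respects. Indeed the coproduct action would assign $e_k\otimes e_k$ the doubled momentum $4\pi\mathrm{i}\,k_\mu$, whereas convolution correctly preserves momentum. Keeping these two module structures apart is the only place where care is needed; everything else is the eigenvalue bookkeeping above.
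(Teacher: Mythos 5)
Your proof is correct, but it takes a genuinely different route from the paper's. The paper never computes the tensor product explicitly: it proves the balancing identity $f\otimes(g\star h)=(f\star g)\otimes h$ by writing each of $f,g,h$ as a constant-coefficient differential operator applied to the convolutional idempotent $\delta_K$ (via polynomial interpolation in Fourier space) and sliding those operators across $\otimes_{\IC[\partial_\mu]}$; injectivity of the convolution map then follows formally, and surjectivity from $f=f\star\delta_{\operatorname{supp}(\hat f)}$. You instead diagonalise everything: $\scS=\bigoplus_{k}\IC e_k$ as a sum of one-dimensional character modules, the balancing relations span exactly the off-diagonal graded pieces, so $\scS\otimes_{\IC[\partial_\mu]}\scS=\bigoplus_k\IC\,(e_k\otimes e_k)$, and $e_k\star e_{k'}=\delta_{k,k'}\,e_k$ matches bases. (Equivalently, $(R/\mathfrak{m}_k)\otimes_R(R/\mathfrak{m}_{k'})\cong R/(\mathfrak{m}_k+\mathfrak{m}_{k'})$, which vanishes for $k\neq k'$ since the ideal then contains a nonzero scalar.) Your argument is more elementary and more informative for this particular $\scS$: it exhibits the tensor product rather than merely the isomorphism, and your closing remark correctly pins down that the relevant $\frH$-module structure on the balanced product is the single-sided one (which convolution intertwines, since $\partial_\mu(f\star g)=(\partial_\mu f)\star g$), as opposed to the coproduct action underlying $\otimes^\frH$ --- a point the paper's proof leaves implicit. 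What the paper's method buys is portability: the idempotent-plus-differential-operator device is precisely what survives in the non-compact setting of \cref{thm:generalised-hopf-setting-justification}, where there is no plane-wave eigenbasis and your eigenvalue bookkeeping is unavailable.
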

        
        \begin{proof}
            Let us first show injectivity of~\eqref{eq:convolution}. Suppose that $f,g,h\in\scS$. We wish to show that
            \begin{equation}\label{eq:tensorConvolution}
                f\otimes(g\star h)\ =\ (f\star g)\otimes h
            \end{equation}
            in the tensor product $\scS\otimes_{\IC[\partial_\mu]}\scS$. If this holds, then from $f_1\star g_1=f_2\star g_2$, we get $f_1\otimes g_1=f_1\otimes(g_1\star\id_\scC)=(f_1\star g_1)\otimes\id_\scC=(f_2\star g_2)\otimes\id_\scC=f_2\otimes g_2$, that is injectivity of~\eqref{eq:convolution}. To verify~\eqref{eq:tensorConvolution}, let $K\coloneqq\operatorname{supp}(\hat f)\cup\operatorname{supp}(\hat g)\cup\operatorname{supp}(\hat h)\subseteq\IZ^d$ be the union of the supports of the Fourier transforms $\hat f$, $\hat g$, and $\hat h$ of all three functions $f$, $g$, and $h$; let $\delta_K\in\scS$ be an approximation of the Dirac comb on $K$, namely
            \begin{equation}
                \delta_K(x)\ \coloneqq\ \sum_{k\in K}\rme^{\rmi k\cdot x}~.
            \end{equation}
            It is a convolutional idempotent, that is, $\delta_K\star\delta_K=\delta_K$.
            By multivariate polynomial interpolation in Fourier space, we can find (not necessarily unique) differential operators $D_f,D_g,D_h\in\IC[\partial_1,\dotsc,\partial_d]$ such that $f=D_f\delta_K$, $g=D_g\delta_K$, and $h=D_h\delta_K$. Then it is clear that, inside $\scS\otimes_{\IC[\partial_\mu]}\scS$, we have
            \begin{equation}
                \begin{aligned}
                    f\otimes(g\star h)\ &=\ D_f\delta_K\otimes(D_g\delta_K\star D_h\delta_K)
                    \\
                    &=\ D_f\delta_K\otimes D_gD_h\delta_K
                    \\
                    &=\ D_fD_g\delta_K\otimes D_h\delta_K
                    \\
                    &=\ (D_f\delta_K\star D_g\delta_K)\otimes D_h\delta_K
                    \\
                    &=\ (f\star g)\otimes h~.
                \end{aligned}
            \end{equation}
            
            Having shown injectivity of~\eqref{eq:convolution}, surjectivity is now straightforward: for any $f\in\scS$ we have $f=f\star\delta_{\operatorname{supp}(\hat f)}$.
        \end{proof}
        
        \subsection{Analytical setting via generalisations of Hopf algebras}\label{ssec:generalised-hopf-setting}

        In this section, we provide an analytical setting for the double copy using the tensor product over the Hopf algebra where compactification of space-time is not needed, at the cost of having to work with an algebra of pseudo-differential operators that does not form a Hopf algebra any more. \footnote{We thank an anonymous user on MathOverflow for their help.}
        
        The physical metric signature is irrelevant for the following analytical considerations, but for analytical considerations it is convenient to use an auxiliary positive-definite metric on space-time $\IR^d$. We emphasise that this does not really pertain to the physics but is only used in the course of the mathematical proofs.
        
        Since we are going beyond the usual setting of Hopf algebras, we spell out precisely what we achieve.

        \begin{definition}\label{def:convenientAnalyticalSetting}
            A \uline{convenient analytical setting} consists of
            \begin{enumerate}[(i)]\itemsep-4pt
                \item a function space $\scC\subseteq\scC^\infty(\IR^d,\IC)$ and
                \item a space of (pseudo-)differential operators $\scD\subseteq\IC[\![\partial_1,\dotsc,\partial_d]\!]$
            \end{enumerate}
            such that
            \begin{enumerate}[(i)]\itemsep-4pt
                \item $\scC$ is closed under pointwise products; thus, it forms a non-unital commutative associative algebra;
                \item $\scD$ is closed under composition and contains $1$; thus it forms a unital commutative ring;
                \item $\scD$ contains $\IR[\partial_\mu]=\frH_{\IR^d}$ as a subring;
                \item $\scD\scC\subseteq\scC$; thus, $\scC$ is a module over $\scD$;
                \item an analytic Leibniz rule holds in the sense that, for $D=\sum_{I\in\IN^d}c_I\partial_I\in\scD$ and $f,g\in\scC$, we have
                \begin{equation}
                    D(f\cdot g)\ =\ \sum_{I,I'\in\IN^d}\binom{I+I'}Ic_{I+I'}(\partial_If)\cdot(\partial_{I'}g)
                \end{equation}
                in the topology of pointwise convergence on some neighbourhood of the origin in Fourier space;
                \item $\scC\otimes_\scD\scC=\scC$;
                \item $\scC$ is dense inside $\scC^\infty(\IR^d,\IC)$ with respect to the Fr\'{e}chet space topology (i.e.\ topology of uniform convergence on compact sets).
            \end{enumerate}
        \end{definition}
        
        \noindent
        We further define the following tube domain of the real hyperplane:
        \begin{equation}
            \IR^d_\epsilon\ \coloneqq\ \{x+\rmi y\,|\,x,y\in\IR^d\text{ and }\|y\|<\epsilon\}\ \subseteq\ \IC^d~.
        \end{equation}
        Define $\scC_0$ to be the space of functions $f\colon\IR^d\to\IC$ such that there exist $\epsilon,\delta>0$ such that $f$ extends analytically to $\IR^d_\epsilon$ with
        \begin{equation}
            |f(x+\rmi y)|\ =\ \scO(\rme^{-\delta\|x\|})~.
        \end{equation}
        Define $\scC$ as
        \begin{equation}\label{eq:generalised-hopf-function-space}
            \scC\ \coloneqq\ \{f\in\scC^\infty(\IR^d,\IC)\,|\,\partial_If\in\scC_0\}~,
        \end{equation}
        i.e.~the space of functions whose arbitrary-order derivatives lie in $\scC_0$.
        
        \begin{lemma}
            If $f\in\scC_0$ is holomorphic on $\IR^d_\epsilon$ and $|f(x+\rmi y)|\leq C\rme^{-\delta\|x\|}$, then $\hat f$ is holomorphic on $\IR^d_\delta$ and $|\hat f(\xi+\rmi\eta)|\le C'\rme^{-\epsilon'\|\xi\|}/(\delta-\|\eta\|)^d$ for some constant $C'$. In particular, $\hat f\in\scC_0$ also; thus, the Fourier transform is an involution of $\scC_0$.
        \end{lemma}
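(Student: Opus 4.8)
The plan is to prove this as a Paley--Wiener type statement, in which the exponential decay of $f$ in the real directions is traded for holomorphy of $\hat f$ in a tube, while the holomorphy of $f$ in the imaginary directions is traded for exponential decay of $\hat f$. Since $f\in\scC_0$ satisfies $|f(x)|\le C\rme^{-\delta\|x\|}$, it lies in $L^1(\IR^d,\IC)$, so $\hat f(\xi)\coloneqq\int_{\IR^d}f(x)\rme^{-\rmi\xi\cdot x}\,\rmd^dx$ is well defined. First I would establish holomorphy of $\hat f$ on $\IR^d_\delta$: for $\zeta=\xi+\rmi\eta$ with $\|\eta\|<\delta$ the integrand is bounded by $C\rme^{-(\delta-\|\eta\|)\|x\|}$, which stays integrable after multiplication by any polynomial in $x$, so differentiation under the integral sign (or Morera together with Fubini) shows that $\zeta\mapsto\hat f(\zeta)$ is holomorphic on the tube $\|\eta\|<\delta$. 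Note that this step uses only the decay of $f$ on $\IR^d$, not its analyticity.

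Next I would extract the exponential decay of $\hat f$ in the real directions by shifting the contour of integration into the tube of holomorphy of $f$. For fixed $\zeta=\xi+\rmi\eta$ with $\|\eta\|<\delta$ and any fixed $c\in\IR^d$ with $\|c\|<\epsilon$, the plan is to prove the contour-shift identity
\[
\hat f(\xi+\rmi\eta)\ =\ \int_{\IR^d}f(x+\rmi c)\,\rme^{-\rmi(\xi+\rmi\eta)\cdot(x+\rmi c)}\,\rmd^dx~.
\]
Choosing $c\coloneqq-\epsilon'\,\xi/\|\xi\|$ with $\epsilon'<\epsilon$ (and taking no shift when $\xi=0$), a direct computation gives $|\rme^{-\rmi(\xi+\rmi\eta)\cdot(x+\rmi c)}|=\rme^{\xi\cdot c+\eta\cdot x}=\rme^{-\epsilon'\|\xi\|}\rme^{\eta\cdot x}$, so the integrand is bounded by $C\rme^{-\epsilon'\|\xi\|}\rme^{-(\delta-\|\eta\|)\|x\|}$. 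Integrating and using $\int_{\IR^d}\rme^{-a\|x\|}\,\rmd^dx=\kappa_d\,a^{-d}$ with $a=\delta-\|\eta\|$ then yields exactly
\[
|\hat f(\xi+\rmi\eta)|\ \le\ \frac{C'\,\rme^{-\epsilon'\|\xi\|}}{(\delta-\|\eta\|)^d}~,
\]
which is the claimed bound.

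The main obstacle is justifying the contour-shift identity in $d$ variables, and I would carry it out one coordinate at a time. Fixing the remaining coordinates real and applying Cauchy's theorem to the holomorphic integrand on a rectangle $[-R,R]\times[0,c_j]$ in the $j$-th complex variable, it remains to show that the two vertical sides vanish as $R\to\infty$. On such a side the factor $f$ is bounded by $C\rme^{-\delta R}$, since the real part has norm at least $R$, while the exponential weight grows at most like $\rme^{|\eta_j|R}$; as $|\eta_j|\le\|\eta\|<\delta$, the side contributions decay like $\rme^{-(\delta-|\eta_j|)R}\to0$. Iterating over $j$ stays inside $\IR^d_\epsilon$ because the partial shifts have norm at most $\|c\|<\epsilon$, and the uniform decay hypothesis guarantees integrability at each stage; this is precisely where the analyticity of $f$ on the full tube and the constraint $\|\eta\|<\delta$ both enter. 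Finally, restricting to $\|\eta\|<\delta/2$ bounds the denominator below by $(\delta/2)^d$, giving $|\hat f(\xi+\rmi\eta)|\le\tilde C\rme^{-\epsilon'\|\xi\|}$ on $\IR^d_{\delta/2}$, so that $\hat f\in\scC_0$ with analyticity width $\delta/2$ and decay rate $\epsilon'$, the roles of $(\epsilon,\delta)$ being simply exchanged for $(\delta,\epsilon')$. Since both $f$ and $\hat f$ are then integrable with integrable transforms, Fourier inversion applies and shows that the Fourier transform maps $\scC_0$ into itself and squares to the reflection $x\mapsto-x$, i.e.\ it is an involution of $\scC_0$ up to the standard reflection and normalisation.
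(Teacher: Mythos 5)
Your proposal is correct and follows essentially the same route as the paper: holomorphy of $\hat f$ on $\IR^d_\delta$ from the exponential decay of $f$ by differentiating under the integral, and the decay bound by shifting the contour axis-by-axis via Cauchy's theorem with the choice $y=-\epsilon'\xi/\|\xi\|$, yielding the factor $\rme^{-\epsilon'\|\xi\|}/(\delta-\|\eta\|)^d$. The only difference is that you spell out the vanishing of the vertical sides of the rectangles in the contour shift, which the paper leaves implicit.
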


        \begin{proof}
            To check holomorphicity of $\hat f$ on $\IR^d_\epsilon$, it suffices to check that the integral
            \begin{equation}
                \hat f(\xi+\rmi\eta)\ =\ \int_{\IR^d}\rmd^dx\,f(x)\,\rme^{-\rmi(\xi+\rmi\eta)\cdot x}
            \end{equation}
            converges as long as $\|\eta\|<\delta$ so that we can take derivatives under the integral sign. But this is clear since $|f(x)|=\scO(\rme^{-\delta\|x\|})$.
            
            Furthermore, for arbitrary $y\in\IR^d$ with $\|y\|<\epsilon$, we can use Cauchy's integral theorem axis-by-axis to obtain the estimate
            \begin{equation}
                \begin{aligned}
                    |\hat f(\xi+\rmi\eta)|\ &=\ \left|\int_{\IR^d}\rmd^dx\,f(x)\,\rme^{-\rmi(\xi+\rmi\eta)\cdot x}\right|
                    \\
                    &=\ \left|\int_{\IR^d}\rmd^dx\,f(x+\rmi y)\,\rme^{-\rmi(\xi+\rmi\eta)\cdot(x+\rmi y)}\right|
                    \\
                    &\leq\ \int_{\IR^d}\rmd^dx\,C\rme^{-(\delta-\|\eta\|)\|x\|}\rme^{\xi\cdot y}
                    \\
                    &\leq\ \frac{C'}{(\delta-\|\eta\|)^d}\,\rme^{\xi\cdot y}~,
                \end{aligned}
            \end{equation}
            where $C'$ is a constant depending on $d$ only. By choosing $y=-\epsilon'\xi/\|\xi\|$ for arbitrary $0<\epsilon'<\epsilon$, we obtain $|\hat f(\xi+\rmi\eta)|\leq C'\rme^{-\epsilon'\|\xi\|}/(\delta-\|\eta\|)^d$. Taking the limit $\epsilon'\to\epsilon$, we obtain $|\hat f(\xi+\rmi\eta)\leq C'\rme^{-\epsilon'\|\xi\|}/(\delta-\|\eta\|)^d$.
        \end{proof}
        
        \begin{lemma}
            $\scC$ forms a non-unital subalgebra of $\scC^\infty(\IR^d,\IC)$.
        \end{lemma}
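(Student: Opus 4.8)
The plan is to reduce the claim to a closure property of the auxiliary space $\scC_0$ and then propagate it through the Leibniz rule to the full space $\scC$. Since $\scC$ is by definition a subset of $\scC^\infty(\IR^d,\IC)$, and since sums and scalar multiples manifestly preserve the defining condition $\partial_If\in\scC_0$ (as $\partial_I$ is linear), the only nontrivial point is closure under pointwise products. Accordingly, the first step I would carry out is to establish the auxiliary lemma that $\scC_0$ is itself a nonunital commutative algebra, i.e.\ closed under pointwise products and finite linear combinations.

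To prove product-closure for $\scC_0$, I would take $f,g\in\scC_0$ with analyticity radii $\epsilon_f,\epsilon_g$ and decay rates $\delta_f,\delta_g$, and set $\epsilon\coloneqq\min(\epsilon_f,\epsilon_g)$ and $\delta\coloneqq\min(\delta_f,\delta_g)$. The holomorphic extensions of $f$ and $g$ are both defined on the tube $\IR^d_\epsilon$, so their pointwise product is holomorphic there and, by uniqueness of analytic continuation, coincides with the analytic extension of $fg$. The elementary estimate
\begin{equation}
    |f(x+\rmi y)g(x+\rmi y)|\ \leq\ C_fC_g\,\rme^{-(\delta_f+\delta_g)\|x\|}\ \leq\ C_fC_g\,\rme^{-\delta\|x\|}
\end{equation}
then supplies the required exponential decay, so $fg\in\scC_0$. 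Closure under linear combinations follows the same way, using $\IR^d_\epsilon$ as a common domain and the triangle inequality together with the weaker rate $\delta$ for the decay bound.

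With this lemma in hand, I would conclude by invoking the ordinary finite Leibniz rule: for $f,g\in\scC$ and any multi-index $I$,
\begin{equation}
    \partial_I(fg)\ =\ \sum_{J+K=I}\binom{I}{J}(\partial_Jf)(\partial_Kg)~.
\end{equation}
Because membership in $\scC$ means $\partial_Jf,\partial_Kg\in\scC_0$ for all multi-indices, each summand $(\partial_Jf)(\partial_Kg)$ lies in $\scC_0$ by the product-closure lemma, and the finite linear combination again lies in $\scC_0$. Hence $\partial_I(fg)\in\scC_0$ for every $I$, so $fg\in\scC$. Combined with the obvious closure under addition and scalar multiplication, this shows $\scC$ is a nonunital subalgebra of $\scC^\infty(\IR^d,\IC)$.

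I do not anticipate a serious obstacle. The only point requiring a little care is ensuring that the holomorphic extensions of the two factors live on a common tube and that their product is genuinely the analytic continuation of $fg$ rather than some spurious branch; passing to $\min(\epsilon_f,\epsilon_g)$ and applying the identity theorem handles this. The bookkeeping of decay rates — noting that products improve the decay to $\delta_f+\delta_g$ while sums only retain the weaker $\min(\delta_f,\delta_g)$ — is routine.
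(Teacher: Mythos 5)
Your proposal is correct and follows essentially the same route as the paper: reduce to closure under pointwise products, apply the finite Leibniz rule, and conclude because each summand $(\partial_{I'}f)(\partial_{I''}g)$ lies in $\scC_0$. The paper leaves the product-closure of $\scC_0$ implicit, whereas you spell out the common tube and the decay estimate — a harmless and welcome addition, not a different argument.
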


        \begin{proof}
            It is clear that $\scC$ is closed under sums and scalar multiplication. The only non-trivial thing to prove is closure under pointwise product. Let $f,g\in\scC$. Then
            \begin{equation}
                \partial_I(f\cdot g)\ =\ \sum_{\substack{I',I''\in\IN^d\\I'+I''=I}}\binom I{I'}(\partial_{I'}f)(\partial_{I''}g)\ \in\ \scC_0~.
            \end{equation}
            Hence $f\cdot g\in\scC_0$.
        \end{proof}
        
        Now, define $\scD_0$ to be the space of pseudo-differential operators of the form $p(\partial)$ where $p\in\scP_0$, where $\scP_0$ is the class of functions $p\colon\IR^d\to\IC$ such that there exists an $\epsilon>0$ such that $p$ extends analytically to $\IR^d_\epsilon$ and that, on this tube domain, for every $\delta>0$, there exists a $C_\delta>0$ such that
        \begin{equation}
            |p(x+\rmi y)|\ \leq\ C_\delta\rme^{\delta\|x\|}~.
        \end{equation}
        Define the ring $\scD$ as
        \begin{equation}\label{eq:generalised-hopf-definition}
            \scD\ \coloneqq\ \Bigg\{\sum_{i=1}^np_iq_i\,\Bigg|\,n\in\IN,\;p_1,\dotsc,p_n\in\scD_0,\;q_1,\dotsc,q_n\in\frH_{\IR^d}\Bigg\}\,,
        \end{equation}
        that is, the ring of pseudo-differential operators generated by $\scD_0$ and $\frH_{\IR^d}=\IR[\partial_1,\dotsc,\partial_d]$.
        
        \begin{lemma}\label{lem:C0-module}
            $\scC_0$ is a $\scD_0$-module.
        \end{lemma}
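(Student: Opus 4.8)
The plan is to reduce the lemma to its only nonformal content, namely that each operator $p(\partial)\in\scD_0$ maps $\scC_0$ into itself; once this closure is in hand, the remaining module axioms are immediate. Indeed, $\scP_0$ is a commutative unital ring: it is closed under sums (a sum of sub-exponentially growing functions again grows sub-exponentially) and under products (from $|p(x+\rmi y)\,q(x+\rmi y)|\le C_\delta C'_{\delta'}\rme^{(\delta+\delta')\|x\|}$, taking $\delta=\delta'$ small keeps the growth sub-exponential), and it contains $1$; moreover $p\mapsto p(\partial)$ is a ring homomorphism onto constant-coefficient operators. So I would first note that $\scC_0$ is a $\IC$-vector space (a sum of two elements inherits the smaller tube radius and the smaller decay rate) and then concentrate entirely on closure.

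For closure I would pass to the Fourier side. By the preceding lemma the Fourier transform restricts to an involution of $\scC_0$, so for $f\in\scC_0$ one has $\hat f\in\scC_0$. The structural input is that $p(\partial)$ is intertwined by the Fourier transform with multiplication by the symbol $\sigma_p$ of $p(\partial)$: on the exponential eigenfunctions $p(\partial)$ acts by a scalar $\sigma_p(\xi)$, and a function in $\scC_0$ is precisely a superposition of such exponentials weighted by its tube-holomorphic, exponentially decaying Fourier transform. The point is that $\sigma_p$ inherits from $p\in\scP_0$ both holomorphicity on a tube $\IR^d_{\epsilon_1}$ about the real subspace and the bound $|\sigma_p(x+\rmi y)|\le C_\delta\rme^{\delta\|x\|}$ valid for every $\delta>0$.

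The heart of the argument is then a one-line growth-versus-decay estimate: for $\sigma_p\in\scP_0$ and $g\in\scC_0$ holomorphic on $\IR^d_{\epsilon_2}$ with $|g(x+\rmi y)|\le C\rme^{-\delta_g\|x\|}$, the product $\sigma_p\cdot g$ is holomorphic on $\IR^d_{\min(\epsilon_1,\epsilon_2)}$ and obeys $|\sigma_p(x+\rmi y)\,g(x+\rmi y)|\le C\,C_\delta\,\rme^{(\delta-\delta_g)\|x\|}$; choosing $\delta=\tfrac12\delta_g$ gives genuine exponential decay, so $\sigma_p\cdot g\in\scC_0$. Applying this with $g=\hat f$ and then inverting the Fourier transform (again an involution of $\scC_0$) yields $p(\partial)f=\mathcal F^{-1}(\sigma_p\,\hat f)\in\scC_0$, which is the desired closure.

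I expect the main obstacle to be the rigorous justification of the symbol intertwining rather than the final estimate. This is essentially a Paley--Wiener matter: one must check that the a priori infinite-order operator $p(\partial)$ is well defined on $\scC_0$, that differentiation and summation may be carried out under the Fourier integral, and that the resulting multiplier is exactly $\sigma_p$ with the advertised tube of holomorphicity. The delicate step is reconciling the domain on which the symbol is holomorphic with the complex tube on which $\hat f$ lives; I anticipate that a contour shift within the common tube, legitimate because $\hat f$ decays exponentially along horizontal lines there, renders the superposition integral absolutely convergent and licenses the interchange. Once this correspondence is established, the rest is the elementary bookkeeping recorded above.
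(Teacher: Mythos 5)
Your proposal is correct and follows essentially the same route as the paper's proof: pass to Fourier space via the involution property of the Fourier transform on $\scC_0$, multiply $\hat f$ by the symbol of $p(\partial)$, and observe that sub-exponential growth ($\rme^{\delta\|x\|}$ for every $\delta>0$) times genuine exponential decay still decays exponentially on the intersection of the two tubes. The paper states this in three lines without dwelling on the Paley--Wiener justification of the symbol calculus that you flag as the delicate point.
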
\begin{proof}
            Let $f\in\scC_0$ and $D=p(\partial)$ with $p\in\scP_0$. Then $\hat f$ is analytic on $\IR^d_\epsilon$ and $|\hat f(\xi+\rmi\eta)|\leq C\rme^{-\delta\|\xi\|}$ for some $C,\epsilon,\delta>0$. Similarly, $p$ is analytic on $\IR^d_{\epsilon'}$. 
            
            Then, in Fourier space, the pointwise product $\hat fp$ is analytic on $\IR^d_{\min\{\epsilon,\epsilon'\}}$ and $|\hat fp|=\scO(\rme^{-\delta'\|\xi\|})$ for any $\delta'<\delta$. So $\hat fp\in\scC_0$, and hence $Df\in\scC_0$.
        \end{proof}
        
        \begin{lemma}\label{lem:C-module}
            $\scC$ is a $\scD$-module.
        \end{lemma}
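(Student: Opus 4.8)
The plan is to exploit the defining feature of $\scC$ in~\eqref{eq:generalised-hopf-function-space}, namely that $f\in\scC$ precisely when $\partial_If\in\scC_0$ for \emph{every} multi-index $I$, together with the fact that every operator in $\scD$ is a constant-coefficient Fourier multiplier and hence commutes with each $\partial_I$. First I would reduce the claim to closure under the two families of generators appearing in~\eqref{eq:generalised-hopf-definition}. Since an arbitrary $D\in\scD$ has the form $D=\sum_{i=1}^np_iq_i$ with $p_i\in\scD_0$ and $q_i\in\frH_{\IR^d}$, and the $\scD$-action is by operator composition, it suffices to prove that $\frH_{\IR^d}\scC\subseteq\scC$ and $\scD_0\scC\subseteq\scC$; closure under sums and under the composites $p_iq_i$ then follows from $(p_iq_i)f=p_i(q_if)$.

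For the polynomial part, let $q\in\frH_{\IR^d}=\IR[\partial_1,\dotsc,\partial_d]$ and $f\in\scC$. Because $q$ has constant coefficients, $\partial_I(qf)=q(\partial_If)$, and expanding $q$ exhibits this as a finite linear combination of derivatives $\partial_Jf$, each of which lies in $\scC_0$ by the definition of $\scC$. As $\scC_0$ is a vector space, the combination lies in $\scC_0$; since $I$ was arbitrary, $qf\in\scC$. For the pseudodifferential part, let $p\in\scD_0$ and $g\in\scC$. Again $p$ is a Fourier multiplier and commutes with $\partial_I$, so $\partial_I(pg)=p(\partial_Ig)$. Now $\partial_Ig\in\scC_0$ by definition of $\scC$, and Lemma~\ref{lem:C0-module} gives $p(\partial_Ig)\in\scC_0$. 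Hence every derivative of $pg$ lies in $\scC_0$, i.e.\ $pg\in\scC$. Combining the two cases yields $Df=\sum_ip_i(q_if)\in\scC$ for all $D\in\scD$ and $f\in\scC$, which is the assertion.

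I expect no serious obstacle: the analytic content has been front-loaded into the construction of $\scC$ and into Lemma~\ref{lem:C0-module}, so that the module property reduces to bookkeeping with multi-indices. The only point deserving care is the commutation $\partial_I\circ D=D\circ\partial_I$ for $D\in\scD$; this is transparent in Fourier space, where both $\partial_I$ and $D$ act by pointwise multiplication (by $(\rmi\xi)^I$ and by the symbol of $D$, respectively). One should confirm that these Fourier-space manipulations are legitimate on $\scC$, which they are, thanks to the tube-domain analyticity and exponential decay underlying $\scC_0$ that were established in the preceding lemmata.
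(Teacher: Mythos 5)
Your proof is correct and follows essentially the same route as the paper: reduce to the two generating families $\frH_{\IR^d}$ and $\scD_0$, note that constant-coefficient operators commute with $\partial_I$, and invoke \cref{lem:C0-module} to conclude $\partial_I(Df)=D(\partial_If)\in\scC_0$. The paper's version is just terser, declaring closure under $\IR[\partial_1,\dotsc,\partial_d]$ to be clear by construction where you spell it out.
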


        \begin{proof}
            It is clear that $\scC$ is closed under the action of $\IR[\partial_1,\dotsc,\partial_d]$ by construction. It remains to show that $\scC$ is a $\scD_0$-module.
            
            Let $f\in\scC$ and $D\in\scD_0$ and $I\in\IN^d$. It suffices to show that $\partial_IDf\in\scC_0$. But since $\partial_If\in\scC_0$, so $\partial_IDf=D(\partial_If)\in\scC_0$ (using \cref{lem:C0-module}).
        \end{proof}
        
        \begin{lemma}\label{lem:C-closed-under-pointwise-product}
            $\scC\star\scC=\scC$, where $\star$ denotes convolution.
        \end{lemma}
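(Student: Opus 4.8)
The plan is to pass to the Fourier picture, where the convolution $\star$ becomes the pointwise product and the assertion turns into a statement about factorising functions. First I would record that the space $\scC$ of~\eqref{eq:generalised-hopf-function-space} in fact coincides with $\scC_0$. The inclusion $\scC\subseteq\scC_0$ is immediate (take the zeroth-order derivative), while for the converse one takes $f\in\scC_0$, notes that all holomorphic derivatives $\partial_I f$ are again holomorphic on the tube $\IR^d_\epsilon$, and bounds them by Cauchy's estimate on polydiscs of radius $r<\epsilon/2$ centred at points of $\IR^d_{\epsilon/2}$; since the supremum of $|f|$ over such a polydisc is $\scO(\rme^{-\delta(\|x\|-r)})=\scO(\rme^{-\delta\|x\|})$ in the real part $x=\mathrm{Re}\,z$, each $\partial_I f$ lies in $\scC_0$, whence $\scC_0\subseteq\scC$. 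Combined with the preceding lemma that the Fourier transform is an involution of $\scC_0$, this shows the Fourier transform is an involution of $\scC$, and the convolution theorem $\widehat{g\star h}=\hat g\,\hat h$ (up to the standard normalisation) then identifies $\scC\star\scC$ with the space of pointwise products $\widehat\scC\cdot\widehat\scC=\scC\cdot\scC$.

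The inclusion $\scC\star\scC\subseteq\scC$ is now immediate from the earlier lemma that $\scC$ is a nonunital subalgebra under the pointwise product: if $g,h\in\scC$ then $\hat g,\hat h\in\scC$, hence $\hat g\hat h\in\scC$, and undoing the transform gives $g\star h\in\scC$.

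For the reverse inclusion $\scC\subseteq\scC\star\scC$ I must, given $f\in\scC$, exhibit $g,h\in\scC$ with $f=g\star h$, i.e.\ factor $\hat f=\hat g\,\hat h$ inside $\scC$. Using the decay estimate from the Fourier-involution lemma, $\hat f$ is holomorphic on some tube $\IR^d_\delta$ with $|\hat f(\xi+\rmi\eta)|=\scO(\rme^{-\epsilon'\|\xi\|})$. I would set $\hat h(\xi)\coloneqq\rme^{-(\epsilon'/2)\sqrt{1+\xi\cdot\xi}}$, with $\xi\cdot\xi\coloneqq\sum_j\xi_j^2$ the holomorphic quadratic form and the principal branch of the square root, and then $\hat g\coloneqq\hat f/\hat h$. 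On a tube of radius $<\min\{\delta,1\}$ the quantity $1+\xi\cdot\xi$ never hits $(-\infty,0]$, so $\hat h$ is single-valued, holomorphic, and nowhere vanishing; consequently the possible zeros of $\hat f$ cause no trouble in the quotient. Since $\sqrt{1+\|\xi\|^2}\le\|\xi\|+1$ on the reals, one has $|\hat h(\xi)|\le\rme^{-(\epsilon'/2)\|\xi\|}$ and $|\hat g(\xi)|=|\hat f(\xi)|\,\rme^{(\epsilon'/2)\sqrt{1+\|\xi\|^2}}=\scO(\rme^{-(\epsilon'/2)\|\xi\|})$, so both factors lie in $\scC_0=\scC$ and undoing the transform yields $f=g\star h$.

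I expect the main obstacle to be the analytic bookkeeping in this last step rather than any conceptual difficulty: one must verify that the principal branch of $\sqrt{1+\xi\cdot\xi}$ stays single-valued and holomorphic on a tube of positive radius, and that the estimate $\mathrm{Re}\sqrt{1+(\xi+\rmi\eta)\cdot(\xi+\rmi\eta)}=\|\xi\|+\scO(1)$ holds uniformly on that tube, so that the exponential growth of $1/\hat h$ exactly offsets the decay of $\hat f$ while the imaginary-direction contributions remain bounded. Once these tube estimates are in place, $\hat g$ and $\hat h$ genuinely belong to $\scC_0$, the two inclusions combine to give $\scC\star\scC=\scC$, and reading the same factorisation in the other direction also secures property~(vi) of \cref{def:convenientAnalyticalSetting}.
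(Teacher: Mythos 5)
Your argument is correct, and at the architectural level it is the same as the paper's: pass to Fourier space to turn convolution into pointwise multiplication, get $\scC\star\scC\subseteq\scC$ from closure of $\scC$ under pointwise products, and get the reverse inclusion by dividing out a nowhere-vanishing function of controlled growth. Two details differ. First, your opening observation that $\scC=\scC_0$ is right (Cauchy's estimate on polydiscs of radius comparable to $\epsilon/2$ gives $\partial_If\in\scC_0$ with tube radius $\epsilon/2$ and the \emph{same} decay rate $\delta$, and membership in $\scC_0$ is existential in $\epsilon$), and it is a genuine streamlining the paper does not make explicit: the paper instead certifies membership in $\scC$ by checking $p\cdot(f/u_\epsilon)\in\scC_0$ for every polynomial $p$ and translating multiplication by polynomials into derivatives on the Fourier side. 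Second, for the inclusion $\scC\subseteq\scC\star\scC$ the paper factorises $f=(f/u_\epsilon)\cdot u_\epsilon$ on the physical side with $u_\epsilon(z)=\big(\prod_{i}\cosh(\epsilon z_i)+\rmi\big)^{-1}$ and then Fourier transforms, whereas you factorise $\hat f=\hat g\,\hat h$ directly on the Fourier side with $\hat h=\rme^{-(\epsilon'/2)\sqrt{1+\xi\cdot\xi}}$. The two moves are interchangeable under the Fourier involution, but your witness has the advantage that non-vanishing is automatic (it is the exponential of a holomorphic function), whereas the paper's $u_\epsilon$ needs an argument bound on $\prod_i\cosh(\epsilon z_i)$ to exclude zeros of the denominator on a thin enough tube. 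The tube estimates you flag do go through: on $\IR^d_\rho$ with $\rho<1$ one has $\operatorname{Re}\bigl(1+\zeta\cdot\zeta\bigr)\geq 1-\rho^2>0$ (so the principal branch is single-valued), and $\|\xi\|\leq\operatorname{Re}\sqrt{1+\zeta\cdot\zeta}\leq 1+\|\xi\|+\rho$, which gives both the decay of $\hat h$ and the compensated decay of $\hat g=\hat f/\hat h$. The only overreach is the closing remark: property~(vi) of \cref{def:convenientAnalyticalSetting} concerns the tensor product over $\scD$ and additionally requires the observation that $\scC$ is a $\scD$-submodule of $\scD$, as in the paper's proof of \cref{thm:generalised-hopf-setting-justification}; it does not follow from this lemma alone.
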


        \begin{proof}
            It is clear that $\scC_0\cdot\scC_0\subseteq\scC_0$. Since the Fourier transform is bijective on $\scC_0$, thus $\scC_0\star\scC_0\subseteq\scC_0$.
            
            Now, suppose that $f,g\in\scC$. Then $f\star g\in\scC_0$, and for any multi-index $I\in\IN^d$, we have $\partial_I(f\star g)=(\partial_If)\star g\subset\scC_0$. Hence $f\star g\in\scC$. Thus $\scC\star\scC\subseteq\scC$.
            
            It remains to show that $\scC\star\scC\supseteq\scC$. Given any $f\in\scC$, then we have
            \begin{subequations}
                \begin{equation}
                    f/u_\epsilon\cdot u_\epsilon\ =\ f~,
                \end{equation}
                where
                \begin{equation}
                    u_\epsilon(x+\rmi y)\ =\ \frac1{\prod_{i=1}^d\cosh(\epsilon(x_i+\rmi y_i))+\rmi}~.
                \end{equation}
            \end{subequations}
            Now, clearly $u_\epsilon\in\scC_0$, so the same holds for the Fourier transform $\hat u_\epsilon\in\scC_0$. Furthermore, for any polynomial $q$, clearly $pu_\epsilon\in\scC_0$ as well. Hence $\hat u_\epsilon\in\scC$. Similarly, if $|f|\leq C\rme^{-\delta\|x\|}$, then $\epsilon<\delta$ ensures that $pf/u_\epsilon\in\scC_0$ for any polynomial $p$; hence $\widehat{f/u_\epsilon}\in\scC_0$. Thus, $\hat f=\hat u_\epsilon\star\widehat{f/u_\epsilon}$, so that $\scC\star\scC\supseteq\scC$.
        \end{proof}
        
        \begin{theorem}\label{thm:generalised-hopf-setting-justification}
            $(\scC,\scD)$ is a convenient analytical setting.
        \end{theorem}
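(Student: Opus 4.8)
The plan is to verify the seven conditions of \cref{def:convenientAnalyticalSetting} in turn, leaning on the lemmas already proved in this appendix. Conditions (i) and (iv) are precisely the content of the two lemmas stating that $\scC$ is a nonunital subalgebra of $\scC^\infty(\IR^d,\IC)$ and that $\scC$ is a $\scD$-module (\cref{lem:C-module}). Conditions (ii) and (iii) are immediate from the definition~\eqref{eq:generalised-hopf-definition}: every element of $\scD$ is a constant-coefficient (pseudo-)differential operator, i.e.\ a function of $\partial_1,\dotsc,\partial_d$, so any two elements commute; moreover $\scP_0$ is closed under pointwise products (a product of two analytic symbols of sub-exponential growth is again one), so $\scD$ is closed under composition, contains $1\in\frH_{\IR^d}$, and by construction contains $\frH_{\IR^d}=\IR[\partial_\mu]$ as a subring.

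For the analytic Leibniz rule (v), I would begin from the ordinary finite Leibniz rule $\partial_J(f\cdot g)=\sum_{I+I'=J}\binom JI(\partial_If)(\partial_{I'}g)$, insert it into $D(f\cdot g)=\sum_Jc_J\partial_J(f\cdot g)$, and reorganise the resulting double sum into $\sum_{I,I'}\binom{I+I'}{I}c_{I+I'}(\partial_If)(\partial_{I'}g)$. The only genuine content is convergence: after passing to Fourier space, where $D$ acts as multiplication by its symbol and the reorganised series is exactly the Taylor expansion of that symbol about the origin, the rearrangement is legitimate on the neighbourhood of the origin where the symbol's defining power series converges, with the exponential decay of $\hat f$ and $\hat g$ dominating the terms.

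The heart of the proof is condition (vi), $\scC\otimes_\scD\scC=\scC$, which I would establish exactly as in \cref{thm:compactification-setting-justification} by showing that the convolution map $f^{(1)}\otimes f^{(2)}\mapsto f^{(1)}\star f^{(2)}$ is an isomorphism. Surjectivity is immediate from \cref{lem:C-closed-under-pointwise-product}, which gives $\scC\star\scC=\scC$. For injectivity I would transport the whole statement to Fourier space, where convolution becomes the pointwise product and every operator of $\scD$ becomes multiplication by an analytic symbol. The key device is the nowhere-vanishing divisor $u_\epsilon$ of \cref{lem:C-closed-under-pointwise-product}: for $\epsilon$ smaller than the exponential decay rates of the finitely many factors entering a given tensor, the ratio $\hat f/u_\epsilon$ lies in $\scP_0$ (since $1/u_\epsilon=\prod_i\cosh(\epsilon z_i)+\rmi$ is entire), so multiplication by it is an operator of $\scD$. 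Sliding this operator across the tensor lets me rewrite $\hat f\otimes\hat g=u_\epsilon\otimes(\hat f\hat g/u_\epsilon)$ with a single common seed $u_\epsilon$; since the right-hand side depends only on the pointwise product $\hat f\hat g$, any finite sum in the kernel of the convolution map collapses to $u_\epsilon\otimes 0=0$, giving injectivity. The main obstacle I anticipate is precisely the bookkeeping of decay rates and analyticity widths ensuring that each ratio really lands in $\scP_0$ and that a single $u_\epsilon$ serves all terms of a given finite tensor simultaneously.

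Finally, for density (vii) I would show $\scC^\infty_c(\IR^d,\IC)\subseteq\overline{\scC}$ in the Fr\'echet topology and invoke the standard density of $\scC^\infty_c$ in $\scC^\infty(\IR^d,\IC)$. Given $\phi\in\scC^\infty_c(\IR^d,\IC)$, its Gaussian mollifications $\phi\star G_t$ converge to $\phi$ uniformly together with all derivatives on compact sets, and each $\phi\star G_t$ has Fourier transform $\hat\phi\,\widehat{G_t}$, the product of an entire function (Paley--Wiener, as $\phi$ has compact support) with a Gaussian, hence analytic on all of $\IC^d$ and of Gaussian — in particular exponential — decay, so that $\phi\star G_t\in\scC$. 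Combining the four parts then yields that $(\scC,\scD)$ is a convenient analytical setting.
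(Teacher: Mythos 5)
Your verification of conditions (i)--(v) and (vii) matches the paper's own proof essentially step for step: (i) and (iv) are the preceding lemmas, (ii) and (iii) are immediate from the definition of $\scD$, (v) is the rearranged Taylor expansion of the symbol in Fourier space, and (vii) is Gaussian mollification of compactly supported functions. Where you genuinely diverge is condition (vi). The paper disposes of it in two sentences: since $\scC\subseteq\scC_0\subseteq\scP_0$, every element of $\scC$ is itself the symbol of an operator in $\scD$, so $\scC$ is a submodule of $\scD$ as a module over itself, whence $\scC\otimes_\scD\scC\subseteq\scD\otimes_\scD\scC\cong\scC$, and \cref{lem:C-closed-under-pointwise-product} upgrades this to equality. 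You instead transplant the injectivity mechanism of \cref{thm:compactification-setting-justification}, replacing the Dirac-comb idempotent $\delta_K$ by the decaying seed $u_\epsilon$ and writing $\hat f\otimes\hat g=u_\epsilon\otimes(\hat f\hat g/u_\epsilon)$ for $\epsilon$ below the decay rates of all factors appearing in a given finite tensor. Both routes reach the same conclusion, but yours is arguably the more careful one: tensoring over $\scD$ is only right exact, so the paper's inclusion does not by itself certify that the convolution map $\scC\otimes_\scD\scC\to\scC$ is injective, whereas your collapse of any kernel element to $u_\epsilon\otimes 0$ does exactly that. The price is the bookkeeping you already flag --- one must check that $\hat f/u_\epsilon$ and all its polynomial multiples remain admissible symbols, i.e.\ that the growth $\rme^{\epsilon\|x\|}$ of $1/u_\epsilon$ is dominated by the exponential decay of $\hat f$, and that a single $\epsilon$ serves all summands --- but for a finite tensor this is harmless, and it buys you an explicit isomorphism rather than a bare equality of images.
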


        \begin{proof}
            The numbering follows \cref{def:convenientAnalyticalSetting}.

            (i) is clear by construction. (ii) is also clear by construction, since composition amounts to pointwise products in Fourier space. (iii) is also clear by construction.
            
            (iv) was shown in \cref{lem:C-module}. (v) is clear by analyticity in Fourier space. As for (vi): it is clear that $\scC$ is a submodule of $\scD$ considered as a module over itself (since $\scC\subseteq\scC_0\subseteq\scP_0$). So $\scC\otimes_\scD\scC\subseteq\scC$. \Cref{lem:C-closed-under-pointwise-product} then implies that $\scC\otimes_\scD\scC=\scC$.
            
            (vii) It is clear that smooth functions with compact support are dense inside $\scC^\infty(\IR^d,\IC)$ (e.g.\ multiply by bump functions $\psi_m$ supported at $[-m-1,m+1])$ that are $1$ on $[-m,m]$). Suppose that $f$ is smooth with compact support. Then $f$ is the limit of convolutions $f\star\psi_m$ where $\psi_m=\prod_{i=1}^dm\rme^{-\pi m^2z_i^2}$ is a family of analytic functions with exponential falloff approximating the Dirac delta.
        \end{proof}
        
        \paragraph{Symmetric monoidal category.}
        Since $\scD$ is no longer a Hopf algebra, the category of arbitrary modules over $\scD$ no longer has a well defined tensor product (i.e.~does not form a symmetric monoidal category); in particular, double copy of arbitrary $\scD$-modules is not guaranteed to work. Instead, we single out a particular subcategory of the category of all $\scD$-modules that is closed under the tensor product.
        
        Consider the category $\operatorname{Mod}_{\scD,\mathrm{nice}}$ whose objects are $\sfD$-modules of the form
        \begin{equation}
            \bigoplus_{i=1}^K\overbrace{\scC\otimes_\IR\scC\otimes_\IR\dotsb\otimes_\IR\scC}^{n_i}~,
        \end{equation}
        where $n_i\in\IN$ and $K$ is a non-negative integer or $\infty$. These all have a canonical action of $\scD$ on them by virtue of the `infinitary Leibniz rule' defining an `infinitary coproduct'.
        
        Consider the full subcategory of the category of chain complexes of $\sfD$-modules consisting of those whose degreewise components all belong to $\operatorname{Mod}_{\sfD,\mathrm{nice}}$. This forms a symmetric monoidal category equipped with $\otimes_\IR$. In particular, we can define operads over this category.

        \section{Proofs by direct computation}\label{app:postponed_proofs}
        
        In this section, we collect mostly straightforward computational proofs omitted from the body of the paper.
        
        \paragraph{\cref{prop:colour-flavour-stripping}.} 
        It is clear, cf.~e.g.~the review in~\cite[Section 6]{Borsten:2021hua}, that $\frg\otimes \frC$ is a dg Lie algebra, and that $R\otimes V$ is a dg vector space with an action $\frg\otimes\frC\curvearrowright R\otimes V$. It is also well-known that a Lie algebra and a representation can be packaged into a Lie algebra with Lie bracket the semi-direct product. This extends to the differential graded setting. It remains to show that the given inner product is indeed cyclic, i.e.
        \begin{equation}
            \inner{\ell_1}{\mu_2(\ell_2,\ell_3)}_\frL\ =\ 
            (-1)^{|\ell_1|\,|\ell_2|+|\ell_1|\,|\ell_3|+|\ell_2|\,|\ell_3|}\inner{\ell_3}{\mu_2(\ell_1,\ell_2)}_\frL~.
        \end{equation}
        This is well-known to be the case for $\ell_1,\ell_2,\ell_3\in\frg\otimes\frC$. For $\ell_1,\ell_2,\ell_3\in R\otimes V$, both sides of the relation are trivial, and for $\ell_1\in\frg\otimes\frC$, $\ell_2,\ell_3\in R\otimes V$ (as well as cyclic permutations), cyclicity is ensured by~\eqref{eq:special_cyclicity}. Because of the lack of pairing between $R\otimes V$ and $\frg\otimes \frC$, both sides of the identity also vanish for $\ell_1\in R\otimes V$ and $\ell_2,\ell_3\in \frg\otimes \frC$. 
        
        \paragraph{\cref{prop:kinematic_module}.} 
        By direct computation, from~\cref{def:kinematicLieAlgebraPreBVBox} and Equations~\eqref{eq:kinematic_module_action} and~\eqref{eq:derivedBracketModuleJacobi}, we have
        \begin{equation}
            \begin{aligned}
                &[\phi_1[1],\phi_2[1]]_\frK\acton_\frV v[1]
                \\
                &\kern.5cm=\ (-1)^{|\phi_1|}\{\phi_1,\phi_2\}_\frB[1]\acton_\frV v[1]
                \\
                &\kern.5cm=\ (-1)^{|\phi_2|+1}\{\{\phi_1,\phi_2\}_\frB,v\}_{V}[1]
                \\
                &\kern.5cm=\ (-1)^{|\phi_1|+|\phi_2|}\left(\{\phi_1,\{\phi_2,v\}_{V}\}_{V}-(-1)^{(|\phi_1|+1)(|\phi_2|+1)}\{\phi_2,\{\phi_1,v\}_{V}\}_{V}\right)[1]
                \\
                &\kern.5cm=\ \phi_1[1]\acton_\frV(\phi_2[1]\acton_\frV v[1])-(-1)^{(|\phi_1|+1)(|\phi_2|+1)}\phi_2[1]\acton_\frV(\phi_1[1]\acton_\frV v[1])
            \end{aligned}
        \end{equation}
        for all $\phi_1,\phi_2\in\frB$ and $v\in V$, hence $(\frV,\acton_\frV)$ is a graded (left) module over the kinematic Lie algebra $(\frK,[-,-]_\frK)$.

        \paragraph{\cref{prop:BVbox_is_pre_BVbox}.} 
        Using the definition~\eqref{eq:derived_bracket} of the derived bracket and the associativity $\sfm_2(\sfm_2(\phi_1,\phi_2),\phi_3)=\sfm_2(\phi_1,\sfm_2(\phi_2,\phi_3))$ for all $\phi_{1,2,3}\in\frB$ of $\sfm_2$, it is easy to see that~\eqref{eq:BV_GB_Poisson} is, in fact, equivalent to~\eqref{eq:b_second_order}.
            
        To establish the shifted Jacobi identity~\eqref{eq:BV_GB_Jacobi}, we follow~\cite[Proposition 1.2]{Getzler:1994yd}. In particular, set
        \begin{equation}\label{eq:PoissonatorJacobiator}
            \begin{aligned}
                \sfPoiss(\phi_1,\phi_2,\phi_3)\ &\coloneqq\ \{\phi_1,\sfm_2(\phi_2,\phi_3)\}-\sfm_2(\{\phi_1,\phi_2\},\phi_3)
                \\
                &\kern1.5cm-(-1)^{(|\phi_1|+1)|\phi_2|}\sfm_2(\phi_2,\{\phi_1,\phi_3\})~,
                \\
                \sfJac(\phi_1,\phi_2,\phi_3)\ &\coloneqq\ \{\phi_1,\{\phi_2,\phi_3\}\}-(-1)^{|\phi_1|+1}\{\{\phi_1,\phi_2\},\phi_3\}
                \\
                &\kern1.5cm-(-1)^{(|\phi_1|+1)(|\phi_2|+1)}\{\phi_2,\{\phi_1,\phi_3\}\}~,
            \end{aligned}
        \end{equation}
        which we call the \uline{Poissonator} and the \uline{Jacobiator}, respectively. Then,
        \begin{equation}\label{eq:derivationPoissonImpliesJacobi}
            \begin{aligned} 
                &\sfJac(\phi_1,\phi_2,\phi_3)-\{\phi_1,\{\phi_2,\phi_3\}\}
                \\
                &\kern1cm=\ -(-1)^{|\phi_1|+1}\big[\sfb(\sfm_2(\{\phi_1,\phi_2\},\phi_3))-\sfm_2(\sfb(\{\phi_1,\phi_2\}),\phi_3)
                \\
                &\kern4cm-(-1)^{|\phi_1|+|\phi_2|+1}\sfm_2(\{\phi_1,\phi_2\},\sfb\phi_3)\big]
                \\
                &\kern2cm-(-1)^{(|\phi_1|+1)(|\phi_2|+1)}\big[\sfb(\sfm_2(\phi_2,\{\phi_1,\phi_3\}))-\sfm_2(\sfb\phi_2,\{\phi_1,\phi_3\}))
                \\
                &\kern4cm-(-1)^{|\phi_2|}\sfm_2(\phi_2,\sfb(\{\phi_1,\phi_3\}))\big]
                \\
                &\kern1cm=\ (-1)^{|\phi_1|+1}\sfb\big(\sfPoiss(\phi_1,\phi_2,\phi_3)-\{\phi_1,\sfm_2(\phi_2,\phi_3)\}\big)
                \\
                &\kern2cm-(-1)^{|\phi_1|+1}\big[\sfm_2(\{\sfb\phi_1,\phi_2\},\phi_3)+(-1)^{|\phi_1|}\sfm_2(\{\phi_1,\sfb\phi_2\},\phi_3)
                \\
                &\kern4cm-(-1)^{|\phi_1|+|\phi_2|+1}\sfm_2(\{\phi_1,\phi_2\},\sfb\phi_3)\big]
                \\
                &\kern2cm+(-1)^{(|\phi_1|+1)(|\phi_2|+1)}\big[\sfm_2(\sfb\phi_2,\{\phi_1,\phi_3\}))-(-1)^{|\phi_2|}\sfm_2(\phi_2,\{\sfb\phi_1,\phi_3\})
                \\
                &\kern4cm-(-1)^{|\phi_1|+|\phi_2|}\sfm_2(\phi_2,\{\phi_1,\sfb\phi_3\})\big]
                \\
                &\kern1cm=\ (-1)^{|\phi_1|+1}\sfb\big(\sfPoiss(\phi_1,\phi_2,\phi_3)-\{\phi_1,\sfm_2(\phi_2,\phi_3)\}\big)
                \\
                &\kern2cm+(-1)^{|\phi_1|+1}\big[\sfPoiss(\sfb\phi_1,\phi_2,\phi_3)-\{\sfb\phi_1,\sfm_2(\phi_2,\phi_3)\}\big]
                \\
                &\kern2cm-\big[\sfPoiss(\phi_1,\sfb\phi_2,\phi_3)-\{\phi_1,\sfm_2(\sfb\phi_2,\phi_3)\}\big]
                \\
                &\kern2cm-(-1)^{|\phi_2|}\big[\sfPoiss(\phi_1,\phi_2,\sfb\phi_3)-\{\phi_1,\sfm_2(\phi_2,\sfb\phi_3)\}\big]
                \\
                &\kern1cm=\ (-1)^{|\phi_1|+1}\big[\sfb(\sfPoiss(\phi_1,\phi_2,\phi_3))+\sfPoiss(\sfb\phi_1,\phi_2,\phi_3)
                \\
                &\kern2cm+(-1)^{|\phi_1|}\sfPoiss(\phi_1,\sfb\phi_2,\phi_3)+(-1)^{|\phi_1|+|\phi_2|}\sfPoiss(\phi_1,\phi_2,\sfb\phi_3)\big]
                \\
                &\kern2cm-\{\phi_1,\{\phi_2,\phi_3\}\}~,
            \end{aligned}
        \end{equation}
        where we have repeatedly made use of the definition~\eqref{eq:derived_bracket} of the derived bracket and the fact that $\sfb$ is a derivation for the derived bracket as shown in \cref{prop:algebraRelationsDB}. Hence,
        \begin{equation}\label{eq:JacobiPoissonRelation}
            \begin{aligned}
                \sfJac(\phi_1,\phi_2,\phi_3)\ &=\ (-1)^{|\phi_1|+1}\big[\sfb(\sfPoiss(\phi_1,\phi_2,\phi_3))+\sfPoiss(\sfb\phi_1,\phi_2,\phi_3)
                \\
                &\kern1cm+(-1)^{|\phi_1|}\sfPoiss(\phi_1,\sfb\phi_2,\phi_3)+(-1)^{|\phi_1|+|\phi_2|}\sfPoiss(\phi_1,\phi_2,\sfb\phi_3)\big]~.
            \end{aligned}
        \end{equation}
        So the shifted Poisson identity~\eqref{eq:BV_GB_Poisson} implies the shifted Jacobi identity~\eqref{eq:BV_GB_Jacobi}.
        
        \paragraph{\cref{prop:restrictedKinematicLieModule}.} 
        To show that $\frMod^0(V)\coloneqq(\ker\sfb_V)[1]$ is a module over the dg Lie algebra $\frKin^0(\frB)\coloneqq (\ker\sfb_\frB)[1]$, it suffices to show for every $\phi\in\ker\sfb_\frB$ and $v\in\ker\sfb_V$ that $\phi[1]\acton_\frV v[1]=(-1)^{|\phi|}\{\phi,v\}_V[1]$ is an element of $(\ker\sfb_V)[1]$, i.e.~$\sfb_V\{\phi,v\}_V=0$:
        \begin{equation}
			\begin{aligned}
                \sfb_V\{\phi,v\}_V\ &=\ \sfb_V\left(\sfb_V(\phi\acton_\frV v)-(\sfb_\frB \phi)\acton_\frV v - (-1)^{|\phi|}\phi\acton_\frV (\sfb_V v)\right)\ =\ \sfb_V^2(\phi\acton_\frV v)
                \\
                \ &=\ 0~.
			\end{aligned}
		\end{equation}
        
        \paragraph{Cyclicity in the tensor product of $\BVbox$-algebras.}
        Consider the tensor product of two $\BVbox$-algebras $\frB_\rmL$ and $\frB_\rmR$ as defined in~\eqref{eq:ordinary_tensor_product}. We now verify the properties of the metric. Firstly, we have 
        \begin{equation}
            \begin{aligned}
                &\inner{\phi_{2\rmL}\otimes\phi_{2\rmR}}{\phi_{1\rmL}\otimes\phi_{1\rmR}}
                \\
                &\kern.5cm=\ (-1)^{|\phi_{2\rmR}||\phi_{1\rmL}|+n_\rmR(|\phi_{1\rmL}|+|\phi_{2\rmL}|)}\inner{\phi_{2\rmL}}{\phi_{1\rmL}}_\rmL\inner{\phi_{2\rmR}}{\phi_{1\rmR}}_\rmR
                \\
                &\kern.5cm=\ (-1)^{(|\phi_{1\rmL}|+|\phi_{1\rmR}|)(|\phi_{2\rmL}|+|\phi_{2\rmR}|)+|\phi_{1\rmR}||\phi_{2\rmL}|+n_\rmR(|\phi_{1\rmL}|+|\phi_{2\rmL}|)}\inner{\phi_{1\rmL}}{\phi_{2\rmL}}_\rmL\inner{\phi_{1\rmR}}{\phi_{2\rmR}}_\rmR
                \\
                &\kern.5cm=\ (-1)^{(|\phi_{1\rmL}|+|\phi_{1\rmR}|)(|\phi_{2\rmL}|+|\phi_{2\rmR}|)}\inner{\phi_{1\rmL}\otimes\phi_{1\rmR}}{\phi_{2\rmL}\otimes\phi_{2\rmR}}
            \end{aligned}
        \end{equation}
        for all $\phi_{1\rmL},\phi_{2\rmL}\in\frB_\rmL$ and $\phi_{1\rmR},\phi_{2\rmR}\in \frB_\rmR$, establishing graded symmetry. Next, we verify the axioms~\eqref{eq:axiomsMetric}. In particular, using the definition of $\hat\sfd$ from~\eqref{eq:ordinary_tensor_product}, we find
        \begin{equation}
            \begin{aligned}
                &\inner{\hat\sfd(\phi_{1\rmL}\otimes\phi_{1\rmR})}{\phi_{2\rmL}\otimes\phi_{2\rmR}}
                \\
                &\kern.5cm=\ \inner{\sfd_\rmL\phi_{1\rmL}\otimes\phi_{1\rmR}}{\phi_{2\rmL}\otimes\phi_{2\rmR}}+(-1)^{|\phi_{1\rmL}|}\inner{\phi_{1\rmL}\otimes\sfd_\rmR\phi_{1\rmR}}{\phi_{2\rmL}\otimes\phi_{2\rmR}}
                \\
                &\kern.5cm=\ (-1)^{|\phi_{1\rmR}||\phi_{2\rmL}|+n_\rmR(|\phi_{1\rmL}|+|\phi_{2\rmL}|+1)}\inner{\sfd_\rmL\phi_{1\rmL}}{\phi_{2\rmL}}_\rmL\inner{\phi_{1\rmR}}{\phi_{2\rmR}}_\rmR
                \\
                &\kern1.5cm+(-1)^{|\phi_{1\rmL}|+(|\phi_{1\rmR}|+1)|\phi_{2\rmL}|+n_\rmR(|\phi_{1\rmL}|+|\phi_{2\rmL}|)}\inner{\phi_{1\rmL}}{\phi_{2\rmL}}_\rmL\inner{\sfd_\rmR\phi_{1\rmR}}{\phi_{2\rmR}}_\rmR
                \\
                &\kern.5cm=\ -(-1)^{|\phi_{1\rmL}|+|\phi_{1\rmR}||\phi_{2\rmL}|+n_\rmR(|\phi_{1\rmL}|+|\phi_{2\rmL}|+1)}\inner{\phi_{1\rmL}}{\sfd_\rmL\phi_{2\rmL}}_\rmL\inner{\phi_{1\rmR}}{\phi_{2\rmR}}_\rmR
                \\
                &\kern1.5cm-(-1)^{|\phi_{1\rmL}|+|\phi_{1\rmR}|+(|\phi_{1\rmR}|+1)|\phi_{2\rmL}|+n_\rmR(|\phi_{1\rmL}|+|\phi_{2\rmL}|)}\inner{\phi_{1\rmL}}{\phi_{2\rmL}}_\rmL\inner{\phi_{1\rmR}}{\sfd_\rmR\phi_{2\rmR}}_\rmR
                \\
                &\kern.5cm=\ -(-1)^{|\phi_{1\rmL}|+|\phi_{1\rmR}|}\inner{\phi_{1\rmL}\otimes\phi_{1\rmR}}{\sfd_\rmL\phi_{2\rmL}\otimes\phi_{2\rmR}}
                \\
                &\kern1.5cm-(-1)^{|\phi_{1\rmL}|+|\phi_{1\rmR}|+|\phi_{2\rmL}|}\inner{\phi_{1\rmL}\otimes\phi_{1\rmR}}{\phi_{2\rmL}\otimes\sfd_\rmR\phi_{2\rmR}}
                \\
                &\kern.5cm=\ -(-1)^{|\phi_{1\rmL}|+|\phi_{1\rmR}|}\inner{\phi_{1\rmL}\otimes\phi_{1\rmR}}{\hat \sfd(\phi_{2\rmL}\otimes\phi_{2\rmR})}
            \end{aligned}
        \end{equation}
        again for all $\phi_{1\rmL},\phi_{2\rmL}\in \frB_\rmL$ and $\phi_{1\rmR},\phi_{2\rmR}\in \frB_\rmR$, which verifies the first relation in~\eqref{eq:axiomsMetric}. A similar calculation for $\hat\sfb$ establishes the last relation in~\eqref{eq:axiomsMetric}. It remains to verify the second relation in~\eqref{eq:axiomsMetric}. Using the definition of $\sfm_2$ from~\eqref{eq:ordinary_tensor_product}, we find 
        \begin{equation}
            \begin{aligned}
                &\inner{\hat \sfm_2(\phi_{1\rmL}\otimes\phi_{1\rmR},\phi_{2\rmL}\otimes\phi_{2\rmR})}{\phi_{3\rmL}\otimes\phi_{3\rmR}}
                \\
                &\kern.5cm=\ (-1)^{|\phi_{1\rmR}||\phi_{2\rmL}|}\inner{\sfm_{2\rmL}(\phi_{1\rmL},\phi_{2\rmL})\otimes\sfm_{2\rmR}(\phi_{1\rmR},\phi_{2\rmR})}{\phi_{3\rmL}\otimes\phi_{3\rmR}}
                \\
                &\kern.5cm=\ (-1)^{|\phi_{1\rmR}||\phi_{2\rmL}|+(|\phi_{1\rmR}|+|\phi_{2\rmR}|)|\phi_{3\rmL}|+n_\rmR(|\phi_{1\rmL}|+|\phi_{2\rmL}|+|\phi_{3\rmL}|)}
                \\
                &\kern1.5cm\times\inner{\sfm_{2\rmL}(\phi_{1\rmL},\phi_{2\rmL})}{\phi_{3\rmL}}_\rmL\inner{\sfm_{2\rmR}(\phi_{1\rmR},\phi_{2\rmR})}{\phi_{3\rmR}}_\rmR
                \\
                &\kern.5cm=\ (-1)^{|\phi_{1\rmR}||\phi_{2\rmL}|+(|\phi_{1\rmR}|+|\phi_{2\rmR}|)|\phi_{3\rmL}|+|\phi_{1\rmL}||\phi_{2\rmL}|+|\phi_{1\rmR}||\phi_{2\rmR}|+n_\rmR(|\phi_{1\rmL}|+|\phi_{2\rmL}|+|\phi_{3\rmL}|)}
                \\
                &\kern1.5cm\times\inner{\phi_{2\rmL}}{\sfm_{2\rmL}(\phi_{1\rmL},\phi_{3\rmL})}_\rmL\inner{\phi_{2\rmR}}{\sfm_{2\rmR}(\phi_{1\rmR},\phi_{3\rmR})}_\rmR
                \\
                &\kern.5cm=\ (-1)^{(|\phi_{1\rmL}|+|\phi_{1\rmR}|)(|\phi_{2\rmL}|+|\phi_{2\rmR}|)+|\phi_{1\rmR}||\phi_{3\rmL}|}
                \\
                &\kern1.5cm\times\inner{\phi_{2\rmL}\otimes\phi_{2\rmR}}{\sfm_{2\rmL}(\phi_{1\rmL},\phi_{3\rmL})\otimes\sfm_{2\rmR}(\phi_{1\rmR},\phi_{3\rmR})}
                \\
                &\kern.5cm=\ (-1)^{(|\phi_{1\rmL}|+|\phi_{1\rmR}|)(|\phi_{2\rmL}|+|\phi_{2\rmR}|)}\inner{\phi_{2\rmL}\otimes\phi_{2\rmR}}{\hat\sfm_2(\phi_{1\rmL}\otimes\phi_{1\rmR},\phi_{3\rmL}\otimes\phi_{3\rmR})}~.
            \end{aligned}
        \end{equation}
        
    \end{body}
    

\begin{thebibliography}{100}

\bibitem{Batalin:1977pb}
I.~A.~Batalin and G.~A.~Vilkovisky,
{\em {Relativistic $S$-matrix of dynamical systems with boson and fermion
  constraints},}
\href{https://dx.doi.org/10.1016/0370-2693(77)90553-6}{Phys. Lett. {\bf 69B}
  (1977) 309}.

\bibitem{Batalin:1981jr}
I.~A.~Batalin and G.~A.~Vilkovisky,
{\em {Gauge algebra and quantization},}
\href{https://dx.doi.org/10.1016/0370-2693(81)90205-7}{Phys. Lett. B {\bf 102}
  (1981)~27}.

\bibitem{Batalin:1984jr}
I.~A.~Batalin and G.~A.~Vilkovisky,
{\em {Quantization of gauge theories with linearly dependent generators},}
\href{https://dx.doi.org/10.1103/PhysRevD.28.2567}{Phys. Rev. D {\bf 28}
  (1983) 2567}.

\bibitem{Batalin:1984ss}
I.~A.~Batalin and G.~A.~Vilkovisky,
{\em {Closure of the gauge algebra, generalized Lie equations and Feynman
  rules},}
\href{https://dx.doi.org/10.1016/0550-3213(84)90227-X}{Nucl. Phys. B {\bf 234}
  (1984) 106}.

\bibitem{Batalin:1985qj}
I.~A.~Batalin and G.~A.~Vilkovisky,
{\em {Existence theorem for gauge algebra},}
\href{https://dx.doi.org/10.1063/1.526780}{J. Math. Phys. {\bf 26}  (1985)
  172}.

\bibitem{Schwarz:1992nx}
A.~Schwarz,
{\em {Geometry of Batalin--Vilkovisky quantization},}
\href{https://dx.doi.org/10.1007/BF02097392}{Commun. Math. Phys. {\bf 155}
  (1993) 249} [{\tt
  \href{https://www.arxiv.org/abs/hep-th/9205088}{hep-th/9205088}}].

\bibitem{Jurco:2018sby}
B.~Jur\v{c}o, L.~Raspollini, C.~Saemann, and M.~Wolf,
{\em {$L_\infty$-algebras of classical field theories and the
  {B}atalin--{V}ilkovisky formalism},}
\href{https://dx.doi.org/10.1002/prop.201900025}{Fortsch. Phys. {\bf 67}
  (2019) 1900025} [{\tt \href{https://www.arxiv.org/abs/1809.09899}{1809.09899
  [hep-th]}}].

\bibitem{Hohm:2017pnh}
O.~Hohm and B.~Zwiebach,
{\em {$L_{\infty}$ algebras and field theory},}
\href{https://dx.doi.org/10.1002/prop.201700014}{Fortsch. Phys. {\bf 65}
  (2017) 1700014} [{\tt \href{https://www.arxiv.org/abs/1701.08824}{1701.08824
  [hep-th]}}].

\bibitem{Jurco:2019bvp}
B.~Jur\v{c}o, T.~Macrelli, L.~Raspollini, C.~Saemann, and M.~Wolf,
{\em {$L_\infty$-algebras, the BV formalism, and classical fields},}
in: ``Higher Structures in M-Theory,'' proceedings of the
  \href{http://www.maths.dur.ac.uk/lms/109/index.html}{LMS/EPSRC Durham
  Symposium}, 12--18 August 2018
[\href{https://dx.doi.org/10.1002/prop.201910025}{doi}]
[{\tt \href{https://www.arxiv.org/abs/1903.02887}{1903.02887 [hep-th]}}].

\bibitem{Arvanitakis:2019ald}
A.~S.~Arvanitakis,
{\em The $L_\infty$-algebra of the S-matrix,}
\href{https://dx.doi.org/10.1007/JHEP07(2019)115}{JHEP {\bf 1907}  (2019) 115}
  [{\tt \href{https://www.arxiv.org/abs/1903.05643}{1903.05643 [hep-th]}}].

\bibitem{Macrelli:2019afx}
T.~Macrelli, C.~Saemann, and M.~Wolf,
{\em {Scattering amplitude recursion relations in {Batalin--Vilkovisky}
  quantizable theories},}
\href{https://dx.doi.org/10.1103/PhysRevD.100.045017}{Phys. Rev. D {\bf 100}
  (2019) 045017} [{\tt \href{https://www.arxiv.org/abs/1903.05713}{1903.05713
  [hep-th]}}].

\bibitem{Jurco:2019yfd}
B.~Jur\v{c}o, T.~Macrelli, C.~Saemann, and M.~Wolf,
{\em Loop amplitudes and quantum homotopy algebras,}
\href{https://dx.doi.org/10.1007/JHEP07(2020)003}{JHEP {\bf 2007}  (2020) 003}
  [{\tt \href{https://www.arxiv.org/abs/1912.06695}{1912.06695 [hep-th]}}].

\bibitem{Jurco:2020yyu}
B.~Jur\v{c}o, H.~Kim, T.~Macrelli, C.~Saemann, and M.~Wolf,
{\em Perturbative quantum field theory and homotopy algebras,}
\href{https://dx.doi.org/10.22323/1.376.0199}{PoS {\bf CORFU2019}  (2020) 199}
  [{\tt \href{https://www.arxiv.org/abs/2002.11168}{2002.11168 [hep-th]}}].

\bibitem{Costello:2011aa}
K.~Costello,
{\em Renormalization and effective field theory,} American Mathematical
  Society, Providence, Rhode Island, 2011
[\href{https://dx.doi.org/10.1090/surv/170}{doi}].

\bibitem{Costello:2016vjw}
K.~Costello and O.~Gwilliam,
{\em {Factorization algebras in quantum field theory, vol.~I},} Cambridge
  University Press, 2016
[\href{https://dx.doi.org/10.1017/9781316678626}{doi}].

\bibitem{Costello:2021jvx}
K.~Costello and O.~Gwilliam,
{\em {Factorization algebras in quantum field theory, vol.~II},} Cambridge
  University Press, 2021
[\href{https://dx.doi.org/10.1017/9781316678664}{doi}].

\bibitem{Bern:2008qj}
Z.~Bern, J.~J.~M.~Carrasco, and H.~Johansson,
{\em New relations for gauge-theory amplitudes,}
\href{https://dx.doi.org/10.1103/PhysRevD.78.085011}{Phys. Rev. D {\bf 78}
  (2008) 085011} [{\tt \href{https://www.arxiv.org/abs/0805.3993}{0805.3993
  [hep-ph]}}].

\bibitem{Bern:2010ue}
Z.~Bern, J.~J.~M.~Carrasco, and H.~Johansson,
{\em Perturbative quantum gravity as a double copy of gauge theory,}
\href{https://dx.doi.org/10.1103/PhysRevLett.105.061602}{Phys. Rev. Lett. {\bf
  105}  (2010) 061602} [{\tt
  \href{https://www.arxiv.org/abs/1004.0476}{1004.0476 [hep-th]}}].

\bibitem{Bern:2010yg}
Z.~Bern, T.~Dennen, Y.-t.~Huang, and M.~Kiermaier,
{\em Gravity as the square of gauge theory,}
\href{https://dx.doi.org/10.1103/PhysRevD.82.065003}{Phys. Rev. D {\bf 82}
  (2010) 065003} [{\tt \href{https://www.arxiv.org/abs/1004.0693}{1004.0693
  [hep-th]}}].

\bibitem{Monteiro:2011pc}
R.~Monteiro and D.~O'Connell,
{\em The kinematic algebra from the self-dual sector,}
\href{https://dx.doi.org/10.1007/JHEP07(2011)007}{JHEP {\bf 1107}  (2011) 007}
  [{\tt \href{https://www.arxiv.org/abs/1105.2565}{1105.2565 [hep-th]}}].

\bibitem{Bjerrum-Bohr:2012kaa}
N.~E.~J.~Bjerrum-Bohr, P.~H.~Damgaard, R.~Monteiro, and D.~O'Connell,
{\em Algebras for amplitudes,}
\href{https://dx.doi.org/10.1007/JHEP06(2012)061}{JHEP {\bf 1206}  (2012) 061}
  [{\tt \href{https://www.arxiv.org/abs/1203.0944}{1203.0944 [hep-th]}}].

\bibitem{Monteiro:2013rya}
R.~Monteiro and D.~O'Connell,
{\em The kinematic algebras from the scattering equations,}
\href{https://dx.doi.org/10.1007/JHEP03(2014)110}{JHEP {\bf 1403}  (2014) 110}
  [{\tt \href{https://www.arxiv.org/abs/1311.1151}{1311.1151 [hep-th]}}].

\bibitem{Hohm:2011dz}
O.~Hohm,
{\em On factorizations in perturbative quantum gravity,}
\href{https://dx.doi.org/10.1007/JHEP04(2011)103}{JHEP {\bf 1104}  (2011) 103}
  [{\tt \href{https://www.arxiv.org/abs/1103.0032}{1103.0032 [hep-th]}}].

\bibitem{Tolotti:2013caa}
M.~Tolotti and S.~Weinzierl,
{\em Construction of an effective {Y}ang--{M}ills {L}agrangian with manifest
  {B}{C}{J} duality,}
\href{https://dx.doi.org/10.1007/JHEP07(2013)111}{JHEP {\bf 1307}  (2013) 111}
  [{\tt \href{https://www.arxiv.org/abs/1306.2975}{1306.2975 [hep-th]}}].

\bibitem{Cardoso:2016ngt}
G.~L.~Cardoso, S.~Nagy, and S.~Nampuri,
{\em A double copy for ${\cal N}=2$ supergravity: a linearised tale told
  on-shell,}
\href{https://dx.doi.org/10.1007/JHEP10(2016)127}{JHEP {\bf 1610}  (2016) 127}
  [{\tt \href{https://www.arxiv.org/abs/1609.05022}{1609.05022 [hep-th]}}].

\bibitem{Luna:2016hge}
A.~Luna, R.~Monteiro, I.~Nicholson, A.~Ochirov, D.~O'Connell, N.~Westerberg,
  and C.~D.~White,
{\em Perturbative spacetimes from {Y}ang--{M}ills theory,}
\href{https://dx.doi.org/10.1007/JHEP04(2017)069}{JHEP {\bf 1704}  (2017) 069}
  [{\tt \href{https://www.arxiv.org/abs/1611.07508}{1611.07508 [hep-th]}}].

\bibitem{Borsten:2020xbt}
L.~Borsten and S.~Nagy,
{\em The pure {B}{R}{S}{T} {E}instein--{H}ilbert {L}agrangian from the
  double-copy to cubic order,}
\href{https://dx.doi.org/10.1007/JHEP07(2020)093}{JHEP {\bf 2007}  (2020) 093}
  [{\tt \href{https://www.arxiv.org/abs/2004.14945}{2004.14945 [hep-th]}}].

\bibitem{Borsten:2020zgj}
L.~Borsten, B.~Jur\v{c}o, H.~Kim, T.~Macrelli, C.~Saemann, and M.~Wolf,
{\em {BRST}--{L}agrangian double copy of {Y}ang--{M}ills theory,}
\href{https://dx.doi.org/10.1103/PhysRevLett.126.191601}{Phys. Rev. Lett. {\bf
  126}  (2021) 191601} [{\tt
  \href{https://www.arxiv.org/abs/2007.13803}{2007.13803 [hep-th]}}].

\bibitem{Borsten:2021hua}
L.~Borsten, B.~Jur\v{c}o, H.~Kim, T.~Macrelli, C.~Saemann, and M.~Wolf,
{\em {Double copy from homotopy algebras},}
\href{https://dx.doi.org/10.1002/prop.202100075}{Fortsch. Phys. {\bf 69}
  (2021) 2100075} [{\tt \href{https://www.arxiv.org/abs/2102.11390}{2102.11390
  [hep-th]}}].

\bibitem{Borsten:2021gyl}
L.~Borsten, B.~Jurco, H.~Kim, T.~Macrelli, C.~Saemann, and M.~Wolf,
{\em Tree-level color--kinematics duality implies loop-level color--kinematics
  duality up to counterterms,}
\href{https://dx.doi.org/10.1016/j.nuclphysb.2023.116144}{Nucl. Phys. B {\bf
  989}  (2023) 116144} [{\tt
  \href{https://www.arxiv.org/abs/2108.03030}{2108.03030 [hep-th]}}].

\bibitem{Ben-Shahar:2021doh}
M.~Ben-Shahar and M.~Guillen,
{\em 10D super-{Y}ang--{M}ills scattering amplitudes from its pure spinor
  action,}
\href{https://dx.doi.org/10.1007/JHEP12(2021)014}{JHEP {\bf 2112}  (2021) 014}
  [{\tt \href{https://www.arxiv.org/abs/2108.11708}{2108.11708 [hep-th]}}].

\bibitem{Ben-Shahar:2021zww}
M.~Ben-Shahar and H.~Johansson,
{\em Off-shell color--kinematics duality for {C}hern--{S}imons,}
\href{https://dx.doi.org/10.1007/JHEP08(2022)035}{JHEP {\bf 2208}  (2022) 035}
  [{\tt \href{https://www.arxiv.org/abs/2112.11452}{2112.11452 [hep-th]}}].

\bibitem{Escudero:2022zdz}
V.~Guar\'in~Escudero, C.~Lopez-Arcos, and A.~Quintero~V\'elez,
{\em Homotopy double copy and the Kawai--Lewellen--Tye relations for the
  non-abelian and tensor Navier--Stokes equations,}
\href{https://dx.doi.org/10.1063/5.0119508}{J. Math. Phys. {\bf 64}  (2023)
  032304} [{\tt \href{https://www.arxiv.org/abs/2201.06047}{2201.06047
  [math-ph]}}].

\bibitem{Borsten:2022vtg}
L.~Borsten, B.~Jur\v{c}o, H.~Kim, T.~Macrelli, C.~Saemann, and M.~Wolf,
{\em Kinematic Lie algebras from twistor spaces,}
\href{https://dx.doi.org/10.1103/PhysRevLett.131.041603}{Phys. Rev. Lett. {\bf
  131}  (2023) 041603} [{\tt
  \href{https://www.arxiv.org/abs/2211.13261}{2211.13261 [hep-th]}}].

\bibitem{Ben-Shahar:2022ixa}
M.~Ben-Shahar, L.~Garozzo, and H.~Johansson,
{\em Lagrangians manifesting color--kinematics duality in the NMHV sector of
  {Y}ang--{M}ills,}
\href{https://dx.doi.org/10.1007/JHEP08(2023)222}{JHEP {\bf 2308}  (2023) 222}
  [{\tt \href{https://www.arxiv.org/abs/2301.00233}{2301.00233 [hep-th]}}].

\bibitem{Lipstein:2023pih}
A.~Lipstein and S.~Nagy,
{\em Self-dual gravity and color/kinematics duality in AdS$_4$,}
\href{https://dx.doi.org/10.1103/PhysRevLett.131.081501}{Phys. Rev. Lett. {\bf
  131}  (2023) 081501} [{\tt
  \href{https://www.arxiv.org/abs/2304.07141}{2304.07141 [hep-th]}}].

\bibitem{Reiterer:2019dys}
M.~Reiterer,
{\em A homotopy {BV} algebra for {Y}ang--{M}ills and color--kinematics,}
{\tt \href{https://www.arxiv.org/abs/1912.03110}{1912.03110 [math-ph]}}.

\bibitem{Borsten:2023reb}
L.~Borsten, B.~Jur\v{c}o, H.~Kim, T.~Macrelli, C.~Saemann, and M.~Wolf,
{\em Tree-level color--kinematics duality from pure spinor actions,}
\href{https://dx.doi.org/10.1103/PhysRevD.108.126012}{Phys. Rev. D {\bf 108}
  (2023) 126012} [{\tt \href{https://www.arxiv.org/abs/2303.13596}{2303.13596
  [hep-th]}}].

\bibitem{Carrasco:2015iwa}
J.~J.~M.~Carrasco,
{\em Gauge and gravity amplitude relations,}
in: ``Theoretical Advanced Study Institute in Elementary Particle Physics:
  Journeys Through the Precision Frontier: Amplitudes for Colliders''
[\href{https://dx.doi.org/10.1142/9789814678766_0011}{doi}]
[{\tt \href{https://www.arxiv.org/abs/1506.00974}{1506.00974 [hep-th]}}].

\bibitem{Borsten:2020bgv}
L.~Borsten,
{\em Gravity as the square of gauge theory: a review,}
\href{https://dx.doi.org/10.1007/s40766-020-00003-6}{Riv. Nuovo Cim. {\bf 43}
  (2020)~97}.

\bibitem{Bern:2019prr}
Z.~Bern, J.~J.~Carrasco, M.~Chiodaroli, H.~Johansson, and R.~Roiban,
{\em The duality between color and kinematics and its applications,}
{\tt \href{https://www.arxiv.org/abs/1909.01358}{1909.01358 [hep-th]}}.

\bibitem{Adamo:2022dcm}
T.~Adamo, J.~J.~M.~Carrasco, M.~Carrillo-Gonz\'{a}lez, M.~Chiodaroli,
  H.~Elvang, H.~Johansson, D.~O'Connell, R.~Roiban, and O.~Schlotterer,
{\em Snowmass white paper: The double copy and its applications,}
{\tt \href{https://www.arxiv.org/abs/2204.06547}{2204.06547 [hep-th]}}.

\bibitem{Bern:2022wqg}
Z.~Bern, J.~J.~M.~Carrasco, M.~Chiodaroli, H.~Johansson, and R.~Roiban,
{\em The {SAGEX} review on scattering amplitudes, chapter 2: {A}n invitation to
  color--kinematics duality and the double copy,}
{\tt \href{https://www.arxiv.org/abs/2203.13013}{2203.13013 [hep-th]}}.

\bibitem{Bonezzi:2022bse}
R.~Bonezzi, C.~Chiaffrino, F.~D\'{i}az-Jaramillo, and O.~Hohm,
{\em Gauge invariant double copy of {Y}ang--{M}ills theory: the quartic
  theory,}
\href{https://dx.doi.org/10.1103/PhysRevD.107.126015}{Phys. Rev. D {\bf 107}
  (2023) 126015} [{\tt \href{https://www.arxiv.org/abs/2212.04513}{2212.04513
  [hep-th]}}].

\bibitem{Anastasiou:2014qba}
A.~Anastasiou, L.~Borsten, M.~J.~Duff, L.~J.~Hughes, and S.~Nagy,
{\em {Y}ang--{M}ills origin of gravitational symmetries,}
\href{https://dx.doi.org/10.1103/PhysRevLett.113.231606}{Phys. Rev. Lett. {\bf
  113}  (2014) 231606} [{\tt
  \href{https://www.arxiv.org/abs/1408.4434}{1408.4434 [hep-th]}}].

\bibitem{Anastasiou:2018rdx}
A.~Anastasiou, L.~Borsten, M.~J.~Duff, S.~Nagy, and M.~Zoccali,
{\em Gravity as gauge theory squared: a ghost story,}
\href{https://dx.doi.org/10.1103/PhysRevLett.121.211601}{Phys. Rev. Lett. {\bf
  121}  (2018) 211601} [{\tt
  \href{https://www.arxiv.org/abs/1807.02486}{1807.02486 [hep-th]}}].

\bibitem{Monteiro:2021ztt}
R.~Monteiro, S.~Nagy, D.~O'Connell, D.~Peinador~Veiga, and M.~Sergola,
{\em NS-NS spacetimes from amplitudes,}
\href{https://dx.doi.org/10.1007/JHEP06(2022)021}{JHEP {\bf 2206}  (2022) 021}
  [{\tt \href{https://www.arxiv.org/abs/2112.08336}{2112.08336 [hep-th]}}].

\bibitem{Luna:2022dxo}
A.~Luna, N.~Moynihan, and C.~D.~White,
{\em Why is the Weyl double copy local in position space?,}
\href{https://dx.doi.org/10.1007/JHEP12(2022)046}{JHEP {\bf 2212}  (2022) 046}
  [{\tt \href{https://www.arxiv.org/abs/2208.08548}{2208.08548 [hep-th]}}].

\bibitem{Bonezzi:2023pox}
R.~Bonezzi, F.~Diaz-Jaramillo, and S.~Nagy,
{\em Gauge independent kinematic algebra of self-dual {Y}ang--{M}ills,}
\href{https://dx.doi.org/10.1103/PhysRevD.108.065007}{Phys. Rev. D {\bf 108}
  (2023) 065007} [{\tt \href{https://www.arxiv.org/abs/2306.08558}{2306.08558
  [hep-th]}}].

\bibitem{Bonezzi:2023ciu}
R.~Bonezzi, C.~Chiaffrino, F.~D\'{i}az-Jaramillo, and O.~Hohm,
{\em Gravity = {Y}ang--{M}ills,}
\href{https://dx.doi.org/10.3390/sym15112062}{Symmetry {\bf 15}  (2023) 2062}
  [{\tt \href{https://www.arxiv.org/abs/2306.14788}{2306.14788 [hep-th]}}].

\bibitem{Zeitlin:2014xma}
A.~M.~Zeitlin,
{\em {Beltrami--Courant differentials and $G_{\infty}$-algebras},}
\href{https://dx.doi.org/10.4310/ATMP.2015.v19.n6.a3}{Adv. Theor. Math. Phys.
  {\bf 19}  (2014) 1249} [{\tt
  \href{https://www.arxiv.org/abs/1404.3069}{1404.3069 [math.QA]}}].

\bibitem{Szabo:2023cmv}
R.~J.~Szabo and G.~Trojani,
{\em Homotopy double copy of noncommutative gauge theories,}
\href{https://dx.doi.org/10.3390/sym15081543}{Symmetry {\bf 15}  (2023) 1543}
  [{\tt \href{https://www.arxiv.org/abs/2306.12175}{2306.12175 [hep-th]}}].

\bibitem{Zeitlin:2008cc}
A.~M.~Zeitlin,
{\em {Conformal field theory and algebraic structure of gauge theory},}
\href{https://dx.doi.org/10.1007/JHEP03(2010)056}{JHEP {\bf 1003}  (2008) 056}
  [{\tt \href{https://www.arxiv.org/abs/0812.1840}{0812.1840 [hep-th]}}].

\bibitem{Zeitlin:2009hc}
A.~M.~Zeitlin,
{\em {$\beta$-$\gamma$ systems and the deformations of the BRST operator},}
\href{https://dx.doi.org/10.1088/1751-8113/42/35/355401}{J. Phys. A {\bf 42}
  (2009) 355401} [{\tt \href{https://www.arxiv.org/abs/0904.2234}{0904.2234
  [hep-th]}}].

\bibitem{Zeitlin:2009tj}
A.~M.~Zeitlin,
{\em {Quasiclassical Lian--Zuckerman homotopy algebras, Courant algebroids and
  gauge theory},}
\href{https://dx.doi.org/10.1007/s00220-011-1206-0}{Commun. Math. Phys. {\bf
  303}  (2011) 331} [{\tt \href{https://www.arxiv.org/abs/0910.3652}{0910.3652
  [math.QA]}}].

\bibitem{Kim:2021amp}
H.~Kim,
{\em Tree BCJ implies loop BCJ for almost any field theory,} poster presented
  at `Amplitudes 2021’, Prague, Czech Republic, August, 2021, available
  \href{https://www.slideshare.net/HyungrokKim4/tree-bcj-implies-loop-bcj-for-almost-any-field-theory}{online}.

\bibitem{Borsten:2022srni}
L.~Borsten,
{\em From gluons to gravitons via homotopy algebras,} talk given at `The 42th
  Winter School Geometry And Physics’, Srni, Czech Republic, January, 2022,
  available
  \href{https://conference.math.muni.cz/srni/files/archiv/2022/plenary%20lectures/Borsten_Srni_Lecture3_Live.pdf}{online}.

\bibitem{Borsten:2022young}
L.~Borsten,
{\em Colour--kinematics duality and the double copy: a Lagrangian and homotopy
  algebraic perspective,} talk given at `Rebuilding the Tower of Babel:
  Bringing Together the Various Languages of Color--Kinematics Duality’,
  Mainz Institute for Theoretical Physics, Johannes Gutenberg University,
  April, 2022, available
  \href{https://indico.mitp.uni-mainz.de/event/296/contributions/3652/attachments/2814/3205/Borsten_Youngstars_2022_Live.pdf}{online}.

\bibitem{Macrelli:2022bay}
T.~Macrelli,
{\em Homotopy algebras and colour-kinematics duality,} talk given during the
  Workshop on Higher Structures in Quantum Field and String Theory,
  Bayrischzell, May, 2022, available
  \href{http://hep.itp.tuwien.ac.at/~miw/bzell2022/talks/Macrelli_BZell.pdf}{online}.

\bibitem{Macrelli2022sm}
T.~Macrelli,
{\em Homotopy algebras, gauge theories, and gravity,} talk given during
  SwissMAP/GeNeZiss meeting, Bern, May, 2022, available
  \href{http://www.blau.itp.unibe.ch/geneziss2.html}{online}.

\bibitem{Macrelli:2022ich}
T.~Macrelli,
{\em Colour--kinematics duality, double copy, and homotopy algebras,} talk
  given during the International Conference on High Energy Physics (ICHEP),
  Bologna, July, 2022, available
  \href{https://agenda.infn.it/event/28874/contributions/169988/attachments/93970/128382/ICHEP.pdf}{online}.

\bibitem{Hodges:2011wm}
A.~Hodges,
{\em New expressions for gravitational scattering amplitudes,}
\href{https://dx.doi.org/10.1007/JHEP07(2013)075}{JHEP {\bf 1307}  (2013) 075}
  [{\tt \href{https://www.arxiv.org/abs/1108.2227}{1108.2227 [hep-th]}}].

\bibitem{Vaman:2010ez}
D.~Vaman and Y.-P.~Yao,
{\em Constraints and generalized gauge transformations on tree-level gluon and
  graviton amplitudes,}
\href{https://dx.doi.org/10.1007/JHEP11(2010)028}{JHEP {\bf 1011}  (2010) 028}
  [{\tt \href{https://www.arxiv.org/abs/1007.3475}{1007.3475 [hep-th]}}].

\bibitem{Cachazo:2013iea}
F.~Cachazo, S.~He, and E.~Y.~Yuan,
{\em Scattering of massless particles: scalars, gluons and gravitons,}
\href{https://dx.doi.org/10.1007/JHEP07(2014)033}{JHEP {\bf 1407}  (2014) 033}
  [{\tt \href{https://www.arxiv.org/abs/1309.0885}{1309.0885 [hep-th]}}].

\bibitem{Cachazo:2014xea}
F.~Cachazo, S.~He, and E.~Y.~Yuan,
{\em Scattering equations and matrices: from {E}instein to {Y}ang--{M}ills,
  {DBI} and {NLSM},}
\href{https://dx.doi.org/10.1007/JHEP07(2015)149}{JHEP {\bf 1507}  (2015) 149}
  [{\tt \href{https://www.arxiv.org/abs/1412.3479}{1412.3479 [hep-th]}}].

\bibitem{Monteiro:2014cda}
R.~Monteiro, D.~O'Connell, and C.~D.~White,
{\em Black holes and the double copy,}
\href{https://dx.doi.org/10.1007/JHEP12(2014)056}{JHEP {\bf 1412}  (2014) 056}
  [{\tt \href{https://www.arxiv.org/abs/1410.0239}{1410.0239 [hep-th]}}].

\bibitem{Chiodaroli:2014xia}
M.~Chiodaroli, M.~G{\"u}nayd{\i}n, H.~Johansson, and R.~Roiban,
{\em Scattering amplitudes in $\caN=2$ {M}axwell--{E}instein and
  {Y}ang--{M}ills/{E}instein supergravity,}
\href{https://dx.doi.org/10.1007/JHEP01(2015)081}{JHEP {\bf 1501}  (2015) 081}
  [{\tt \href{https://www.arxiv.org/abs/1408.0764}{1408.0764 [hep-th]}}].

\bibitem{Naculich:2014naa}
S.~G.~Naculich,
{\em Scattering equations and BCJ relations for gauge and gravitational
  amplitudes with massive scalar particles,}
\href{https://dx.doi.org/10.1007/JHEP09(2014)029}{JHEP {\bf 1409}  (2014) 029}
  [{\tt \href{https://www.arxiv.org/abs/1407.7836}{1407.7836 [hep-th]}}].

\bibitem{Luna:2015paa}
A.~Luna, R.~Monteiro, D.~O'Connell, and C.~D.~White,
{\em The classical double copy for {T}aub--{NUT} spacetime,}
\href{https://dx.doi.org/10.1016/j.physletb.2015.09.021}{Phys. Lett. B {\bf
  750}  (2015) 272} [{\tt
  \href{https://www.arxiv.org/abs/1507.01869}{1507.01869 [hep-th]}}].

\bibitem{Naculich:2015zha}
S.~G.~Naculich,
{\em CHY representations for gauge theory and gravity amplitudes with up to
  three massive particles,}
\href{https://dx.doi.org/10.1007/JHEP05(2015)050}{JHEP {\bf 1505}  (2015) 050}
  [{\tt \href{https://www.arxiv.org/abs/1501.03500}{1501.03500 [hep-th]}}].

\bibitem{Chiodaroli:2015rdg}
M.~Chiodaroli, M.~G{\"u}nayd{\i}n, H.~Johansson, and R.~Roiban,
{\em Spontaneously broken {Y}ang--{M}ills--{E}instein supergravities as double
  copies,}
\href{https://dx.doi.org/10.1007/JHEP06(2017)064}{JHEP {\bf 1706}  (2017) 064}
  [{\tt \href{https://www.arxiv.org/abs/1511.01740}{1511.01740 [hep-th]}}].

\bibitem{Luna:2016due}
A.~Luna, R.~Monteiro, I.~Nicholson, D.~O'Connell, and C.~D.~White,
{\em The double copy: {B}remsstrahlung and accelerating black holes,}
\href{https://dx.doi.org/10.1007/JHEP06(2016)023}{JHEP {\bf 1606}  (2016) 023}
  [{\tt \href{https://www.arxiv.org/abs/1603.05737}{1603.05737 [hep-th]}}].

\bibitem{White:2016jzc}
C.~D.~White,
{\em Exact solutions for the biadjoint scalar field,}
\href{https://dx.doi.org/10.1016/j.physletb.2016.10.052}{Phys. Lett. B {\bf
  763}  (2016) 365} [{\tt
  \href{https://www.arxiv.org/abs/1606.04724}{1606.04724 [hep-th]}}].

\bibitem{Cheung:2016prv}
C.~Cheung and C.-H.~Shen,
{\em Symmetry for flavor--kinematics duality from an action,}
\href{https://dx.doi.org/10.1103/PhysRevLett.118.121601}{Phys. Rev. Lett. {\bf
  118}  (2017) 121601} [{\tt
  \href{https://www.arxiv.org/abs/1612.00868}{1612.00868 [hep-th]}}].

\bibitem{Chiodaroli:2017ngp}
M.~Chiodaroli, M.~G{\"u}nayd{\i}n, H.~Johansson, and R.~Roiban,
{\em Explicit formulae for {Y}ang--{M}ills--{E}instein amplitudes from the
  double copy,}
\href{https://dx.doi.org/10.1007/JHEP07(2017)002}{JHEP {\bf 1707}  (2017) 002}
  [{\tt \href{https://www.arxiv.org/abs/1703.00421}{1703.00421 [hep-th]}}].

\bibitem{Brown:2018wss}
R.~W.~Brown and S.~G.~Naculich,
{\em KLT-type relations for QCD and bicolor amplitudes from color-factor
  symmetry,}
\href{https://dx.doi.org/10.1007/JHEP03(2018)057}{JHEP {\bf 1803}  (2018) 057}
  [{\tt \href{https://www.arxiv.org/abs/1802.01620}{1802.01620 [hep-th]}}].

\bibitem{Cederwall:2008vd}
M.~Cederwall,
{\em {$N=8$ superfield formulation of the {B}agger--{L}ambert--{G}ustavsson
  model},}
\href{https://dx.doi.org/10.1088/1126-6708/2008/09/116}{JHEP {\bf 0809}  (2008)
  116} [{\tt \href{https://www.arxiv.org/abs/0808.3242}{0808.3242 [hep-th]}}].

\bibitem{Cederwall:2008xu}
M.~Cederwall,
{\em {Superfield actions for N$=8$ and N$=6$ conformal theories in three
  dimensions},}
\href{https://dx.doi.org/10.1088/1126-6708/2008/10/070}{JHEP {\bf 0810}  (2008)
  070} [{\tt \href{https://www.arxiv.org/abs/0809.0318}{0809.0318 [hep-th]}}].

\bibitem{Diaz-Jaramillo:2021wtl}
F.~D\'{i}az-Jaramillo, O.~Hohm, and J.~Plefka,
{\em Double field theory as the double copy of {Y}ang--{M}ills,}
\href{https://dx.doi.org/10.1103/PhysRevD.105.045012}{Phys. Rev. D {\bf 105}
  (2022) 045012} [{\tt \href{https://www.arxiv.org/abs/2109.01153}{2109.01153
  [hep-th]}}].

\bibitem{Bonezzi:2022yuh}
R.~Bonezzi, F.~D\'{i}az-Jaramillo, and O.~Hohm,
{\em The gauge structure of double field theory follows from {Y}ang--{M}ills
  Theory,}
\href{https://dx.doi.org/10.1103/PhysRevD.106.026004}{Phys. Rev. D {\bf 106}
  (2022) 026004} [{\tt \href{https://www.arxiv.org/abs/2203.07397}{2203.07397
  [hep-th]}}].

\bibitem{Bonezzi:2023ced}
R.~Bonezzi, C.~Chiaffrino, F.~Diaz-Jaramillo, and O.~Hohm,
{\em Weakly constrained double field theory: the quartic theory,}
{\tt \href{https://www.arxiv.org/abs/2306.00609}{2306.00609 [hep-th]}}.

\bibitem{Lee:2018gxc}
K.~Lee,
{\em Kerr--Schild double field theory and classical double copy,}
\href{https://dx.doi.org/10.1007/JHEP10(2018)027}{JHEP {\bf 1810}  (2018) 027}
  [{\tt \href{https://www.arxiv.org/abs/1807.08443}{1807.08443 [hep-th]}}].

\bibitem{Cho:2019ype}
W.~Cho and K.~Lee,
{\em Heterotic Kerr--Schild double field theory and classical double copy,}
\href{https://dx.doi.org/10.1007/JHEP07(2019)030}{JHEP {\bf 1907}  (2019) 030}
  [{\tt \href{https://www.arxiv.org/abs/1904.11650}{1904.11650 [hep-th]}}].

\bibitem{Kim:2019jwm}
K.~Kim, K.~Lee, R.~Monteiro, I.~Nicholson, and D.~Peinador~Veiga,
{\em The classical double copy of a point charge,}
\href{https://dx.doi.org/10.1007/JHEP02(2020)046}{JHEP {\bf 2002}  (2020) 046}
  [{\tt \href{https://www.arxiv.org/abs/1912.02177}{1912.02177 [hep-th]}}].

\bibitem{Angus:2021zhy}
S.~Angus, K.~Cho, and K.~Lee,
{\em The classical double copy for half-maximal supergravities and T-duality,}
\href{https://dx.doi.org/10.1007/JHEP10(2021)211}{JHEP {\bf 2110}  (2021) 211}
  [{\tt \href{https://www.arxiv.org/abs/2105.12857}{2105.12857 [hep-th]}}].

\bibitem{Cho:2021nim}
K.~Cho, K.~Kim, and K.~Lee,
{\em The off-shell recursion for gravity and the classical double copy for
  currents,}
\href{https://dx.doi.org/10.1007/JHEP01(2022)186}{JHEP {\bf 2201}  (2022) 186}
  [{\tt \href{https://www.arxiv.org/abs/2109.06392}{2109.06392 [hep-th]}}].

\bibitem{Borsten:2022aa}
L.~Borsten, B.~Jur\v{c}o, H.~Kim, T.~Macrelli, C.~Saemann, and M.~Wolf,
{\em The homotopy algebraic interpretation of colour--kinematics duality,}
to appear.

\bibitem{BjerrumBohr:2010hn}
N.~E.~J.~Bjerrum-Bohr, P.~H.~Damgaard, T.~S{\o{}}ndergaard, and P.~Vanhove,
{\em The momentum kernel of gauge and gravity theories,}
\href{https://dx.doi.org/10.1007/JHEP01(2011)001}{JHEP {\bf 1101}  (2011) 001}
  [{\tt \href{https://www.arxiv.org/abs/1010.3933}{1010.3933 [hep-th]}}].

\bibitem{BjerrumBohr:2009rd}
N.~E.~J.~Bjerrum-Bohr, P.~H.~Damgaard, and P.~Vanhove,
{\em Minimal basis for gauge theory amplitudes,}
\href{https://dx.doi.org/10.1103/PhysRevLett.103.161602}{Phys. Rev. Lett. {\bf
  103}  (2009) 161602} [{\tt
  \href{https://www.arxiv.org/abs/0907.1425}{0907.1425 [hep-th]}}].

\bibitem{Stieberger:2009hq}
S.~Stieberger,
{\em {Open \& closed vs.~pure open string disk amplitudes},}
{\tt \href{https://www.arxiv.org/abs/0907.2211}{0907.2211 [hep-th]}}.

\bibitem{Mafra:2011kj}
C.~R.~Mafra, O.~Schlotterer, and S.~Stieberger,
{\em Explicit {B}{C}{J} numerators from pure spinors,}
\href{https://dx.doi.org/10.1007/JHEP07(2011)092}{JHEP {\bf 1107}  (2011) 092}
  [{\tt \href{https://www.arxiv.org/abs/1104.5224}{1104.5224 [hep-th]}}].

\bibitem{Broedel:2013tta}
J.~Broedel, O.~Schlotterer, and S.~Stieberger,
{\em Polylogarithms, multiple zeta values and superstring amplitudes,}
\href{https://dx.doi.org/10.1002/prop.201300019}{Fortsch. Phys. {\bf 61}
  (2013) 812} [{\tt \href{https://www.arxiv.org/abs/1304.7267}{1304.7267
  [hep-th]}}].

\bibitem{Du:2016tbc}
Y.-J.~Du and C.-H.~Fu,
{\em Explicit BCJ numerators of nonlinear sigma model,}
\href{https://dx.doi.org/10.1007/JHEP09(2016)174}{JHEP {\bf 1609}  (2016) 174}
  [{\tt \href{https://www.arxiv.org/abs/1606.05846}{1606.05846 [hep-th]}}].

\bibitem{Mizera:2019blq}
S.~Mizera,
{\em Kinematic {J}acobi identity is a residue theorem: geometry of
  color--kinematics duality for gauge and gravity amplitudes,}
\href{https://dx.doi.org/10.1103/PhysRevLett.124.141601}{Phys. Rev. Lett. {\bf
  124}  (2020) 141601} [{\tt
  \href{https://www.arxiv.org/abs/1912.03397}{1912.03397 [hep-th]}}].

\bibitem{Bern:2015ooa}
Z.~Bern, S.~Davies, and J.~Nohle,
{\em Double-copy constructions and unitarity cuts,}
\href{https://dx.doi.org/10.1103/PhysRevD.93.105015}{Phys. Rev. D {\bf 93}
  (2016) 105015} [{\tt \href{https://www.arxiv.org/abs/1510.03448}{1510.03448
  [hep-th]}}].

\bibitem{Berends:1987me}
F.~A.~Berends and W.~T.~Giele,
{\em {Recursive calculations for processes with $n$ gluons},}
\href{https://dx.doi.org/10.1016/0550-3213(88)90442-7}{Nucl. Phys. B {\bf 306}
  (1988) 759}.

\bibitem{Johansson:2014zca}
H.~Johansson and A.~Ochirov,
{\em Pure gravities via color--kinematics duality for fundamental matter,}
\href{https://dx.doi.org/10.1007/JHEP11(2015)046}{JHEP {\bf 1511}  (2015) 046}
  [{\tt \href{https://www.arxiv.org/abs/1407.4772}{1407.4772 [hep-th]}}].

\bibitem{Johansson:2015oia}
H.~Johansson and A.~Ochirov,
{\em {Color--kinematics duality for {QCD} amplitudes},}
\href{https://dx.doi.org/10.1007/JHEP01(2016)170}{JHEP {\bf 1601}  (2016) 170}
  [{\tt \href{https://www.arxiv.org/abs/1507.00332}{1507.00332 [hep-ph]}}].

\bibitem{Chiodaroli:2015wal}
M.~Chiodaroli, M.~G{\"u}nayd{\i}n, H.~Johansson, and R.~Roiban,
{\em Complete construction of magical, symmetric, and homogeneous $\caN=2$
  supergravities as double copies of gauge theories,}
\href{https://dx.doi.org/10.1103/PhysRevLett.117.011603}{Phys. Rev. Lett. {\bf
  117}  (2016) 011603} [{\tt
  \href{https://www.arxiv.org/abs/1512.09130}{1512.09130 [hep-th]}}].

\bibitem{Anastasiou:2016csv}
A.~Anastasiou, L.~Borsten, M.~J.~Duff, M.~J.~Hughes, A.~Marrani, S.~Nagy, and
  M.~Zoccali,
{\em Twin supergravities from {Y}ang--{M}ills theory squared,}
\href{https://dx.doi.org/10.1103/PhysRevD.96.026013}{Phys. Rev. D {\bf 96}
  (2017) 026013} [{\tt \href{https://www.arxiv.org/abs/1610.07192}{1610.07192
  [hep-th]}}].

\bibitem{Chiodaroli:2017ehv}
M.~Chiodaroli, M.~G{\"u}nayd{\i}n, H.~Johansson, and R.~Roiban,
{\em Gauged supergravities and spontaneous supersymmetry breaking from the
  double copy construction,}
\href{https://dx.doi.org/10.1103/PhysRevLett.120.171601}{Phys. Rev. Lett. {\bf
  120}  (2018) 171601} [{\tt
  \href{https://www.arxiv.org/abs/1710.08796}{1710.08796 [hep-th]}}].

\bibitem{Anastasiou:2017nsz}
A.~Anastasiou, L.~Borsten, M.~J.~Duff, A.~Marrani, S.~Nagy, and M.~Zoccali,
{\em Are all supergravity theories {Y}ang--{M}ills squared?,}
\href{https://dx.doi.org/10.1016/j.nuclphysb.2018.07.023}{Nucl. Phys. B {\bf
  934}  (2018) 606} [{\tt
  \href{https://www.arxiv.org/abs/1707.03234}{1707.03234 [hep-th]}}].

\bibitem{Chiodaroli:2018dbu}
M.~Chiodaroli, M.~G{\"u}nayd{\i}n, H.~Johansson, and R.~Roiban,
{\em {Non-Abelian gauged supergravities as double copies},}
\href{https://dx.doi.org/10.1007/JHEP06(2019)099}{JHEP {\bf 1906}  (2019) 099}
  [{\tt \href{https://www.arxiv.org/abs/1812.10434}{1812.10434 [hep-th]}}].

\bibitem{Ben-Shahar:2018uie}
M.~Ben-Shahar and M.~Chiodaroli,
{\em One-loop amplitudes for $\caN=2$ homogeneous supergravities,}
\href{https://dx.doi.org/10.1007/JHEP03(2019)153}{JHEP {\bf 1903}  (2019) 153}
  [{\tt \href{https://www.arxiv.org/abs/1812.00402}{1812.00402 [hep-th]}}].

\bibitem{Johansson:2019dnu}
H.~Johansson and A.~Ochirov,
{\em Double copy for massive quantum particles with spin,}
\href{https://dx.doi.org/10.1007/JHEP09(2019)040}{JHEP {\bf 1909}  (2019) 040}
  [{\tt \href{https://www.arxiv.org/abs/1906.12292}{1906.12292 [hep-th]}}].

\bibitem{Doubek:2017naz}
M.~Doubek, B.~Jur{\v c}o, and J.~Pulmann,
{\em {Quantum $L_\infty$ algebras and the homological perturbation lemma},}
\href{https://dx.doi.org/10.1007/s00220-019-03375-x}{Commun. Math. Phys. {\bf
  367}  (2019) 215} [{\tt
  \href{https://www.arxiv.org/abs/1712.02696}{1712.02696 [math-ph]}}].

\bibitem{Arvanitakis:2020rrk}
A.~S.~Arvanitakis, O.~Hohm, C.~Hull, and V.~Lekeu,
{\em Homotopy transfer and effective field theory I: Tree-level,}
\href{https://dx.doi.org/10.1002/prop.202200003}{Fortsch. Phys. {\bf 70}
  (2022) 2200003} [{\tt \href{https://www.arxiv.org/abs/2007.07942}{2007.07942
  [hep-th]}}].

\bibitem{Markl:0002130}
M.~Markl,
{\em Ideal perturbation lemma,}
\href{https://dx.doi.org/10.1081/agb-100106814}{Commun. Algebra {\bf 29}
  (2001) 5209} [{\tt
  \href{https://www.arxiv.org/abs/math.AT/0002130}{math.AT/0002130}}].

\bibitem{Kajiura:2001ng}
H.~Kajiura,
{\em {Homotopy algebra morphism and geometry of classical string field
  theories},}
\href{https://dx.doi.org/10.1016/S0550-3213(02)00174-8}{Nucl. Phys. B {\bf 630}
   (2002) 361} [{\tt
  \href{https://www.arxiv.org/abs/hep-th/0112228}{hep-th/0112228}}].

\bibitem{Kajiura:2003ax}
H.~Kajiura,
{\em Noncommutative homotopy algebras associated with open strings,}
\href{https://dx.doi.org/10.1142/S0129055X07002912}{Rev. Math. Phys. {\bf 19}
  (2007)~1} [{\tt
  \href{https://www.arxiv.org/abs/math.QA/0306332}{math.QA/0306332}}].

\bibitem{Nutzi:2018vkl}
A.~N{\"u}tzi and M.~Reiterer,
{\em {Amplitudes in YM and GR as a minimal model and recursive
  characterization},}
\href{https://dx.doi.org/10.1007/s00220-022-04339-4}{Commun. Math. Phys. {\bf
  392}  (2022) 427} [{\tt
  \href{https://www.arxiv.org/abs/1812.06454}{1812.06454 [math-ph]}}].

\bibitem{0821843621}
M.~Markl, S.~Shnider, and J.~Stasheff,
{\em Operads in algebra, topology and physics,} Mathematical Surveys and
  Monographs, American Mathematical Society, 2002
[\href{https://dx.doi.org/10.1090/surv/096}{doi}].

\bibitem{Loday:2012aa}
J.-L.~Loday and B.~Vallette,
{\em Algebraic operads,} Springer, 2012, available
  \href{http://irma.math.unistra.fr/~loday/PAPERS/LodayVallette.pdf}{online}
[\href{https://dx.doi.org/10.1007/978-3-642-30362-3}{doi}].

\bibitem{Akman:1995tm}
F.~Akman,
{\em {On some generalizations of {B}atalin--{V}ilkovsky algebras},}
\href{https://dx.doi.org/10.1016/s0022-4049(96)00036-9}{J. Pure Appl. Alg. {\bf
  120}  (1997) 105} [{\tt
  \href{https://www.arxiv.org/abs/q-alg/9506027}{q-alg/9506027}}].

\bibitem{koszul1985crochet}
J.-L.~Koszul,
{\em Crochet de {S}chouten-{N}ijenhuis et cohomologie,}
\href{http://www.numdam.org/item?id=AST_1985__S131__257_0}{Ast{\'e}risque {\bf
  131}  (1985) 257}.

\bibitem{Borsten:2022ouu}
L.~Borsten, H.~Kim, B.~Jur\v{c}o, T.~Macrelli, C.~Saemann, and M.~Wolf,
{\em Colour--kinematics duality, double copy, and homotopy algebras,}
\href{https://dx.doi.org/10.22323/1.414.0426}{PoS {\bf ICHEP2022}  (2022) 426}
  [{\tt \href{https://www.arxiv.org/abs/2211.16405}{2211.16405 [hep-th]}}].

\bibitem{Filippov:1985aa}
V.~T.~Filippov,
{\em $n$-Lie algebras,}
\href{https://dx.doi.org/10.1007/BF00969110}{Sib. Math. J. {\bf 26}  (1985)
  879–891}.

\bibitem{Huang:2012wr}
Y.-t.~Huang and H.~Johansson,
{\em Equivalent $D=3$ supergravity amplitudes from double copies of
  three-algebra and two-algebra gauge theories,}
\href{https://dx.doi.org/10.1103/PhysRevLett.110.171601}{Phys. Rev. Lett. {\bf
  110}  (2013) 171601} [{\tt
  \href{https://www.arxiv.org/abs/1210.2255}{1210.2255 [hep-th]}}].

\bibitem{Huang:2013kca}
Y.-t.~Huang, H.~Johansson, and S.~Lee,
{\em {On three-algebra and bi-fundamental matter amplitudes and integrability
  of supergravity},}
\href{https://dx.doi.org/10.1007/JHEP11(2013)050}{JHEP {\bf 1311}  (2013) 050}
  [{\tt \href{https://www.arxiv.org/abs/1307.2222}{1307.2222 [hep-th]}}].

\bibitem{Sivaramakrishnan:2014bpa}
A.~Sivaramakrishnan,
{\em Color--kinematic duality in {ABJM} theory without amplitude relations,}
\href{https://dx.doi.org/10.1142/S0217751X17500026}{Int. J. Mod. Phys. A {\bf
  32}  (2017) 1750002} [{\tt
  \href{https://www.arxiv.org/abs/1402.1821}{1402.1821 [hep-th]}}].

\bibitem{Gomis:1994he}
J.~Gomis, J.~Par\'is, and S.~Samuel,
{\em Antibracket, antifields and gauge-theory quantization,}
\href{https://dx.doi.org/10.1016/0370-1573(94)00112-G}{Phys. Rept. {\bf 259}
  (1995)~1} [{\tt
  \href{https://www.arxiv.org/abs/hep-th/9412228}{hep-th/9412228}}].

\bibitem{Getzler:1010.5859}
E.~Getzler,
{\em Higher derived brackets,}
{\tt \href{https://www.arxiv.org/abs/1010.5859}{1010.5859 [math-ph]}}.

\bibitem{Deser:2018oyg}
A.~Deser and C.~Saemann,
{\em {Derived brackets and symmetries in generalized geometry and double field
  theory},}
\href{https://dx.doi.org/10.22323/1.318.0141}{PoS {\bf CORFU2017}  (2018) 141}
  [{\tt \href{https://www.arxiv.org/abs/1803.01659}{1803.01659 [hep-th]}}].

\bibitem{Borsten:2021ljb}
L.~Borsten, H.~Kim, and C.~Saemann,
{\em $EL_\infty$-algebras, generalized geometry, and tensor hierarchies,}
{\tt \href{https://www.arxiv.org/abs/2106.00108}{2106.00108 [hep-th]}}.

\bibitem{Bering:1996kw}
K.~Bering, P.~H.~Damgaard, and J.~Alfaro,
{\em Algebra of higher antibrackets,}
\href{https://dx.doi.org/10.1016/0550-3213(96)00401-4}{Nucl. Phys. B {\bf 478}
  (1996) 459} [{\tt
  \href{https://www.arxiv.org/abs/hep-th/9604027}{hep-th/9604027}}].

\bibitem{Bering:2006eb}
K.~Bering,
{\em Non-commutative Batalin--Vilkovisky algebras, homotopy Lie algebras and
  the Courant bracket,}
\href{https://dx.doi.org/10.1007/s00220-007-0278-3}{Commun. Math. Phys. {\bf
  274}  (2007) 297} [{\tt
  \href{https://www.arxiv.org/abs/hep-th/0603116}{hep-th/0603116}}].

\bibitem{Markl:2013pca}
M.~Markl,
{\em On the origin of higher braces and higher-order derivations,}
\href{https://dx.doi.org/10.1007/s40062-014-0079-2}{J. Homotop. Rel. Struct.
  {\bf 10}  (2015) 637} [{\tt
  \href{https://www.arxiv.org/abs/1309.7744}{1309.7744 [math.KT]}}].

\bibitem{Markl:2014kna}
M.~Markl,
{\em {Higher braces via formal (non)commutative geometry},}
{in: Proceedings, 33th Workshop on Geometric Methods in Physics (XXXIII WGMP):
  Bialowieza, Poland, June 29-July 5, 2014}
[\href{https://dx.doi.org/10.1007/978-3-319-18212-4_4}{doi}]
[{\tt \href{https://www.arxiv.org/abs/1411.6964}{1411.6964 [math.AT]}}].

\bibitem{Dutsch:2019wxk}
M.~D{\"u}tsch,
{\em {from classical field theory to perturbative quantum field theory},}
  Birkh{\"a}user, 2019
[\href{https://dx.doi.org/10.1007/978-3-030-04738-2}{doi}].

\bibitem{Bershadsky:1993cx}
M.~Bershadsky, S.~Cecotti, H.~Ooguri, and C.~Vafa,
{\em Kodaira--Spencer theory of gravity and exact results for quantum string
  amplitudes,}
\href{https://dx.doi.org/10.1007/BF02099774}{Commun. Math. Phys. {\bf 165}
  (1994) 311} [{\tt
  \href{https://www.arxiv.org/abs/hep-th/9309140}{hep-th/9309140}}].

\bibitem{Hohm:2017cey}
O.~Hohm, V.~Kupriyanov, D.~L{\"u}st, and M.~Traube,
{\em {Constructions of L$_{\infty}$ algebras and their field theory
  realizations},}
\href{https://dx.doi.org/10.1155/2018/9282905}{Adv. Math. Phys. {\bf 2018}
  (2018) 9282905} [{\tt \href{https://www.arxiv.org/abs/1709.10004}{1709.10004
  [math-ph]}}].

\bibitem{deMedeiros:2002qpr}
P.~de~Medeiros and C.~Hull,
{\em Exotic tensor gauge theory and duality,}
\href{https://dx.doi.org/10.1007/s00220-003-0810-z}{Commun. Math. Phys. {\bf
  235}  (2003) 255} [{\tt
  \href{https://www.arxiv.org/abs/hep-th/0208155}{hep-th/0208155}}].

\bibitem{deMedeiros:2003osq}
P.~de~Medeiros and C.~M.~Hull,
{\em Geometric second order field equations for general tensor gauge fields,}
\href{https://dx.doi.org/10.1088/1126-6708/2003/05/019}{JHEP {\bf 0305}  (2003)
  019} [{\tt \href{https://www.arxiv.org/abs/hep-th/0303036}{hep-th/0303036}}].

\bibitem{Chalmers:1996rq}
G.~Chalmers and W.~Siegel,
{\em The self-dual sector of {QCD} amplitudes,}
\href{https://dx.doi.org/10.1103/PhysRevD.54.7628}{Phys. Rev. D {\bf 54}
  (1996) 7628} [{\tt
  \href{https://www.arxiv.org/abs/hep-th/9606061}{hep-th/9606061}}].

\bibitem{Borsten:2023paw}
L.~Borsten, B.~Jur\v{c}o, H.~Kim, T.~Macrelli, C.~Saemann, and M.~Wolf,
{\em Double-copying self-dual {Y}ang--{M}ills theory to self-dual gravity on
  twistor space,}
\href{https://dx.doi.org/10.1007/JHEP11(2023)172}{JHEP {\bf 2311}  (2023) 172}
  [{\tt \href{https://www.arxiv.org/abs/2307.10383}{2307.10383 [hep-th]}}].

\bibitem{Mason:2005zm}
L.~J.~Mason,
{\em Twistor actions for non-self-dual fields; a new foundation for
  twistor-string theory,}
\href{https://dx.doi.org/10.1088/1126-6708/2005/10/009}{JHEP {\bf 0510}  (2005)
  009} [{\tt \href{https://www.arxiv.org/abs/hep-th/0507269}{hep-th/0507269}}].

\bibitem{Boels:2006ir}
R.~Boels, L.~Mason, and D.~Skinner,
{\em Supersymmetric gauge theories in twistor space,}
\href{https://dx.doi.org/10.1088/1126-6708/2007/02/014}{JHEP {\bf 0702}  (2007)
  014} [{\tt \href{https://www.arxiv.org/abs/hep-th/0604040}{hep-th/0604040}}].

\bibitem{Wolf:2010av}
M.~Wolf,
{\em {A first course on twistors, integrability and gluon scattering
  amplitudes},}
\href{https://dx.doi.org/10.1088/1751-8113/43/39/393001}{J. Phys. A {\bf 43}
  (2010) 393001} [{\tt \href{https://www.arxiv.org/abs/1001.3871}{1001.3871
  [hep-th]}}].

\bibitem{Bjornsson:2010wm}
J.~Bj{\"{o}}rnsson and M.~B.~Green,
{\em $5$ loops in $24/5$ dimensions,}
\href{https://dx.doi.org/10.1007/JHEP08(2010)132}{JHEP {\bf 1008}  (2010) 132}
  [{\tt \href{https://www.arxiv.org/abs/1004.2692}{1004.2692 [hep-th]}}].

\bibitem{Bjornsson:2010wu}
J.~Bj{\"{o}}rnsson,
{\em Multi-loop amplitudes in maximally supersymmetric pure spinor field
  theory,}
\href{https://dx.doi.org/10.1007/JHEP01(2011)002}{JHEP {\bf 1101}  (2011) 002}
  [{\tt \href{https://www.arxiv.org/abs/1009.5906}{1009.5906 [hep-th]}}].

\bibitem{Berkovits:2013pla}
N.~Berkovits,
{\em Dynamical twisting and the b ghost in the pure spinor formalism,}
\href{https://dx.doi.org/10.1007/JHEP06(2013)091}{JHEP {\bf 1306}  (2013) 091}
  [{\tt \href{https://www.arxiv.org/abs/1305.0693}{1305.0693 [hep-th]}}].

\bibitem{Jusinskas:2013sha}
R.~Lipinski~Jusinskas,
{\em Notes on the pure spinor $b$ ghost,}
\href{https://dx.doi.org/10.1007/JHEP07(2013)142}{JHEP {\bf 1307}  (2013) 142}
  [{\tt \href{https://www.arxiv.org/abs/1306.1963}{1306.1963 [hep-th]}}].

\bibitem{Cederwall:2022qfn}
M.~Cederwall,
{\em A minimal b ghost,}
\href{https://dx.doi.org/10.1002/prop.202300056}{Fortsch. Phys. {\bf 71}
  (2023) 2300056} [{\tt \href{https://www.arxiv.org/abs/2212.14417}{2212.14417
  [hep-th]}}].

\bibitem{Matone:2002ft}
M.~Matone, L.~Mazzucato, I.~Oda, D.~Sorokin, and M.~Tonin,
{\em The superembedding origin of the Berkovits pure spinor covariant
  quantization of superstrings,}
\href{https://dx.doi.org/10.1016/S0550-3213(02)00562-X}{Nucl. Phys. B {\bf 639}
   (2002) 182} [{\tt
  \href{https://www.arxiv.org/abs/hep-th/0206104}{hep-th/0206104}}].

\bibitem{Oda:2005sd}
I.~Oda and M.~Tonin,
{\em Y-formalism in pure spinor quantization of superstrings,}
\href{https://dx.doi.org/10.1016/j.nuclphysb.2005.08.031}{Nucl. Phys. B {\bf
  727}  (2005) 176} [{\tt
  \href{https://www.arxiv.org/abs/hep-th/0505277}{hep-th/0505277}}].

\bibitem{Oda:2007ak}
I.~Oda and M.~Tonin,
{\em Y-formalism and $b$ ghost in the non-minimal pure spinor formalism of
  superstrings,}
\href{https://dx.doi.org/10.1016/j.nuclphysb.2007.04.032}{Nucl. Phys. B {\bf
  779}  (2007)~63} [{\tt \href{https://www.arxiv.org/abs/0704.1219}{0704.1219
  [hep-th]}}].

\bibitem{Berkovits:2001rb}
N.~Berkovits,
{\em Covariant quantization of the superparticle using pure spinors,}
\href{https://dx.doi.org/10.1088/1126-6708/2001/09/016}{JHEP {\bf 0109}  (2001)
  016} [{\tt \href{https://www.arxiv.org/abs/hep-th/0105050}{hep-th/0105050}}].

\bibitem{Movshev:2003ib}
M.~Movshev and A.~Schwarz,
{\em On maximally supersymmetric Yang--Mills theories,}
\href{https://dx.doi.org/10.1016/j.nuclphysb.2003.12.033}{Nucl. Phys. B {\bf
  681}  (2004) 324} [{\tt
  \href{https://www.arxiv.org/abs/hep-th/0311132}{hep-th/0311132}}].

\bibitem{Berkovits:2005bt}
N.~J.~Berkovits,
{\em Pure spinor formalism as an $N=2$ topological string,}
\href{https://dx.doi.org/10.1088/1126-6708/2005/10/089}{JHEP {\bf 0510}  (2005)
  089} [{\tt \href{https://www.arxiv.org/abs/hep-th/0509120}{hep-th/0509120}}].

\bibitem{Cederwall:2010wf}
M.~Cederwall,
{\em From supergeometry to pure spinors,}
in: ``Proceedings of the 6th Mathematical Physics Meeting: Summer School and
  Conference on Modern Mathematical Physics, September 14--23, 2010, Belgrade,
  Serbia,'' eds. B.~Dragovich and Z.~Raki\'{c}, 2011
[{\tt \href{https://www.arxiv.org/abs/1012.3334}{1012.3334 [hep-th]}}].

\bibitem{Marnelius:1990eq}
R.~Marnelius and M.~{\"O}gren,
{\em {New inner products for physical states in BRST quantization},}
\href{https://dx.doi.org/10.1016/0550-3213(91)90098-I}{Nucl. Phys. B {\bf 351}
  (1991) 474}.

\bibitem{Cederwall:2022fwu}
M.~Cederwall,
{\em Pure spinors in classical and quantum supergravity,}
in: ``Springer Handbook of Quantum Gravity,'' eds. C. Bambi, L. Modesto, and I.
  Shapiro, Springer 2023
[{\tt \href{https://www.arxiv.org/abs/2210.06141}{2210.06141 [hep-th]}}].

\bibitem{Berkovits:2006vi}
N.~J.~Berkovits and N.~A.~Nekrasov,
{\em Multiloop superstring amplitudes from non-minimal pure spinor formalism,}
\href{https://dx.doi.org/10.1088/1126-6708/2006/12/029}{JHEP {\bf 0612}  (2006)
  029} [{\tt \href{https://www.arxiv.org/abs/hep-th/0609012}{hep-th/0609012}}].

\bibitem{Aisaka:2009yp}
Y.~Aisaka and N.~J.~Berkovits,
{\em Pure spinor vertex operators in Siegel gauge and loop amplitude
  regularization,}
\href{https://dx.doi.org/10.1088/1126-6708/2009/07/062}{JHEP {\bf 0907}  (2009)
  062} [{\tt \href{https://www.arxiv.org/abs/0903.3443}{0903.3443 [hep-th]}}].

\bibitem{Chiodaroli:2013upa}
M.~Chiodaroli, Q.~Jin, and R.~Roiban,
{\em Color/kinematics duality for general abelian orbifolds of $\caN=4$ super
  {Y}ang--{M}ills theory,}
\href{https://dx.doi.org/10.1007/JHEP01(2014)152}{JHEP {\bf 1401}  (2014) 152}
  [{\tt \href{https://www.arxiv.org/abs/1311.3600}{1311.3600 [hep-th]}}].

\bibitem{Cederwall:2009ez}
M.~Cederwall,
{\em {Towards a manifestly supersymmetric action for 11-dimensional
  supergravity},}
\href{https://dx.doi.org/10.1007/JHEP01(2010)117}{JHEP {\bf 1001}  (2010) 117}
  [{\tt \href{https://www.arxiv.org/abs/0912.1814}{0912.1814 [hep-th]}}].

\bibitem{Cederwall:2010tn}
M.~Cederwall,
{\em {D=11 supergravity with manifest supersymmetry},}
\href{https://dx.doi.org/10.1142/S0217732310034407}{Mod. Phys. Lett. A {\bf 25}
   (2010) 3201} [{\tt \href{https://www.arxiv.org/abs/1001.0112}{1001.0112
  [hep-th]}}].

\bibitem{Grassi:2023yfe}
P.~A.~Grassi,
{\em Remarks on the integral form of D=11 supergravity,}
\href{https://dx.doi.org/10.1088/1742-6596/2531/1/012010}{J. Phys. Conf. Ser.
  {\bf 2531}  (2023) 012010} [{\tt
  \href{https://www.arxiv.org/abs/2304.01743}{2304.01743 [hep-th]}}].

\bibitem{Grassi:2004ih}
P.~A.~Grassi and L.~Tamassia,
{\em Vertex operators for closed superstrings,}
\href{https://dx.doi.org/10.1088/1126-6708/2004/07/071}{JHEP {\bf 0407}  (2004)
  071} [{\tt \href{https://www.arxiv.org/abs/hep-th/0405072}{hep-th/0405072}}].

\bibitem{Chandia:2019klv}
O.~Chandia,
{\em The non-minimal type II pure spinor string in a curved background,}
{\tt \href{https://www.arxiv.org/abs/1910.04791}{1910.04791 [hep-th]}}.

\bibitem{Bagger:2007jr}
J.~Bagger and N.~D.~Lambert,
{\em Gauge symmetry and supersymmetry of multiple M2-branes,}
\href{https://dx.doi.org/10.1103/PhysRevD.77.065008}{Phys. Rev. D {\bf 77}
  (2008) 065008} [{\tt \href{https://www.arxiv.org/abs/0711.0955}{0711.0955
  [hep-th]}}].

\bibitem{Gustavsson:2007vu}
A.~Gustavsson,
{\em Algebraic structures on parallel M2-branes,}
\href{https://dx.doi.org/10.1016/j.nuclphysb.2008.11.014}{Nucl. Phys. B {\bf
  811}  (2009)~66} [{\tt \href{https://www.arxiv.org/abs/0709.1260}{0709.1260
  [hep-th]}}].

\bibitem{deMedeiros:2009hf}
P.~de~Medeiros, J.~M.~Figueroa-O'Farrill, E.~M\'endez-Escobar, and P.~Ritter,
{\em {Metric 3-Lie algebras for unitary Bagger--Lambert theories},}
\href{https://dx.doi.org/10.1088/1126-6708/2009/04/037}{JHEP {\bf 0904}  (2009)
  037} [{\tt \href{https://www.arxiv.org/abs/0902.4674}{0902.4674 [hep-th]}}].

\bibitem{Bargheer:2012gv}
T.~Bargheer, S.~He, and T.~McLoughlin,
{\em New relations for three-dimensional supersymmetric scattering amplitudes,}
\href{https://dx.doi.org/10.1103/PhysRevLett.108.231601}{Phys. Rev. Lett. {\bf
  108}  (2012) 231601} [{\tt
  \href{https://www.arxiv.org/abs/1203.0562}{1203.0562 [hep-th]}}].

\bibitem{Aharony:2008ug}
O.~Aharony, O.~Bergman, D.~L.~Jafferis, and J.~M.~Maldacena,
{\em {$\caN=6$ superconformal Chern--Simons-matter theories, M2-branes and
  their gravity duals},}
\href{https://dx.doi.org/10.1088/1126-6708/2008/10/091}{JHEP {\bf 0810}  (2008)
  091} [{\tt \href{https://www.arxiv.org/abs/0806.1218}{0806.1218 [hep-th]}}].

\bibitem{Aharony:2008gk}
O.~Aharony, O.~Bergman, and D.~L.~Jafferis,
{\em {Fractional M2-branes},}
\href{https://dx.doi.org/10.1088/1126-6708/2008/11/043}{JHEP {\bf 0811}  (2008)
  043} [{\tt \href{https://www.arxiv.org/abs/0807.4924}{0807.4924 [hep-th]}}].

\bibitem{Getzler:1994yd}
E.~Getzler,
{\em {Batalin--Vilkovisky algebras and two-dimensional topological field
  theories},}
\href{https://dx.doi.org/10.1007/BF02102639}{Commun. Math. Phys. {\bf 159}
  (1994) 265} [{\tt
  \href{https://www.arxiv.org/abs/hep-th/9212043}{hep-th/9212043}}].

\end{thebibliography}
\end{document}